\def\bn{\begin{enumerate}} \def\en{\end{enumerate}}
  \def\ss{\smallskip}
\newtheorem{theorem}{Theorem}
\newtheorem{remark}{Remark}
\title{A Direct Sampling-Based Deep Learning Approach \\for Inverse Medium Scattering Problems}
\date{}
\author{Jianfeng Ning \thanks{School of Mathematics and Statistics, Wuhan University, Wuhan, China. ({ningjf@whu.edu.cn}).} \and Fuqun Han \thanks{Department of Mathematics, The Chinese University of Hong Kong, Shatin, N.T., Hong Kong.  ({fqhan@math.cuhk.edu.hk}).}  \and Jun Zou \thanks{Department of Mathematics, The Chinese University of Hong Kong, Shatin, N.T., Hong Kong. The work of this author was substantially supported by Hong Kong RGC General Research Fund (Projects 14306921 and 14306719). ({zou@math.cuhk.edu.hk}).}}
\begin{document}
	\maketitle
	\begin{abstract}
		In this work, we focus on the inverse medium scattering problem (IMSP), which aims to recover unknown scatterers based on measured scattered data. Motivated by the efficient direct sampling method (DSM) introduced in \cite{ito2012direct}, we propose a novel direct sampling-based deep learning approach (DSM-DL)  for reconstructing inhomogeneous scatterers. In particular, we use the U-Net neural network to learn the relation between the index functions and the true contrasts. Our proposed DSM-DL is computationally efficient, robust to noise, easy to implement, and able to naturally incorporate multiple measured data to achieve high-quality reconstructions. Some representative tests are carried out with  varying numbers of incident waves and different noise levels to evaluate the performance of the proposed method. The results demonstrate the promising benefits of combining deep learning techniques with the DSM for IMSP.
	\end{abstract}
 
	\section{Introduction}
	Inverse medium scattering problems (IMSP) aim to recover the geometry and physical properties of unknown scatterers in the domain of interest from the measured scattered fields. This inverse problem has wide practical applications such as biomedical imaging, geophysical exploration, remote sensing, nondestructive evaluation, and so on (see, e.g., \cite{buchanan2004marine,kirsch2011introduction,zhdanov2002geophysical}). Suppose that a bounded domain of interest $\Omega$ is occupied by some inhomogeneous medium scatterers in the homogeneous background space $\mathbb{R}^{N}$ ($N=2$ or $3$). With the incident field $u^{i}$, the total field $u=u^{i}+u^s$ satisfies the following Helmholtz equation \cite{colton1998inverse}
	\begin{equation}
		\Delta u +k^{2}\varepsilon_{r}(x)u(x)=0 \quad \text{in}\quad \mathbb{R}^{N},
	\end{equation}
	where $k$ is the wavenumber,  and the scattered field can be expressed asymptotically as \cite{colton1998inverse}:
	\begin{equation}
		u^{s}(x) = \frac{\exp(\mathbf{i}k|x|)}{|x|^{(N-1)/2}}(u^{\infty}(\hat{x}) + O(1/|x|)), \quad |x|\rightarrow\infty,
		\label{asymptic_us}
	\end{equation}
	where $\hat{x}=\frac{x}{\Vert x\Vert_{2}}\in \mathbb{S}^{N-1}$ and $u^{\infty}$ is called the far-field pattern of $u^{s}$. The induced current density $I(x)$ is defined as $I(x)=\eta(x)u(x)$ with the contrast $\eta(x)=\varepsilon_{r}(x)-1$, where $\varepsilon_{r}(x)=1$ in the homogeneous medium. Then the scattered field $u^s$ on the measurement surface $S$ satisfies \cite{colton1998inverse}
	\begin{equation}
		u^{s}(x)= k^{2}\int_{\Omega}G(x,y)I(y)dy, \quad x \in S,
		\label{data equation}
	\end{equation}
	where $G(x,y)$ is the free-space Green's function for the scattering problem which is given by
	\begin{equation}
		G(x,y)=\left\{
		\begin{aligned}
			\frac{\mathbf{i}}{4}H_{0}^{(1)}(k|x-y|), \quad N=2,\\
			\frac{\exp(-\mathbf{i}k|x-y|)}{4\pi|x-y|}, \quad N=3,
		\end{aligned}
		\right.
	\end{equation}
	where the function $H_{0}^{(1)}$ refers to the Hankel function of the first kind with order $0$. 
	The IMSP is then to recover $\varepsilon_{r}$ from noisy measurements of the scattered field $u^{s}$ or the far-field pattern $u^{\infty}$, corresponding to some incident fields.
	
The IMSP is known to be severely ill-posed and highly non-linear, especially very challenging in its numerical solutions 
when the measurement data is available only at a limited aperture. Various numerical algorithms have been developed for solving inverse medium scattering problems, 
such as iterative methods like recursive linearization methods \cite{bao2015inverse}, contrast source-type inversion methods \cite{van1997contrast,van2021forward}, iteratively regularized Gauss-Newton method\cite{langer2010investigation}, least-squares methods \cite{ito2022least}, and subspace optimization methods \cite{chen2009subspace}, which can obtain satisfactory reconstructions while they are generally very time-consuming. Non-iterative methods such as reverse time migrations \cite{chen2013reverse}, singular sources methods \cite{potthast2005singular}, factorization methods \cite{kirsch2002music,qu2019locating} and sampling-type methods \cite{cheney2001linear,cakoni2011linear,potthast2006survey,harris2020orthogonality,ito2012direct} are fast 
while their reconstructions may not be accurate. We refer to \cite{colton1998inverse, cakoni2014qualitative, xudong_book_2018} 
for more extensive reviews on both theoretical analyses and numerical algorithms on inverse scattering problems.
	
	In recent years, many deep learning methods have already been applied to solve inverse scattering problems. Using conventional convolutional neural networks to directly approximate the relationship between the near scattered fields and the true contrasts was shown to be only feasible for simple scatterers \cite{wei2018deep}, hence, incorporating deep learning methods with conventional numerical methods for inverse scattering problems has become a popular direction. By exploiting the inherent low-rank property of the scattering problems, a sophisticated network architecture called SwitchNet was developed in \cite{khoo2019switchnet} for both forward and inverse scattering problems. The backpropagation, and dominant current CNN schemes were proposed and compared in \cite{wei2018deep} for solving inverse medium scattering problems. A Two-step enhanced deep learning approach invented in \cite{yao2019two} consists of an initial guess step and a refinement step by introducing two convolutional neural networks. An iterative reconstruction algorithm called Learned Combined Algorithm was proposed in \cite{li2022reconstruction} to reconstruct the inhomogeneous media from the far-field data by repeatedly applying the Landweber approach, the iteratively regularized Gauss-Newton method, and the deep neural projector. In \cite{gao2022artificial}, by employing a few parameters to represent the star-shaped scatterers, the authors designed a fully connected neural network for the inverse obstacle problem of recovering an impenetrable scatterer from measured scattered data with only a single incident field. For more deep learning-based approaches to inverse scattering problems, we refer to the review article \cite{chen2020review} and references therein for some recent developments. On the other hand, some deep learning-based approaches have also been developed for solving general inverse problems. A partially learned approach was proposed in \cite{adler2017solving}, where the “gradient” component in an iterative scheme is learned using a convolutional network. The NETT \cite{li2020nett} combines deep learning with a Tikhonov regularization scheme, where a regularizer is defined and learned by a deep neural network. Several sophisticated neural operators, e.g.,\cite{lu2021learning,li2020fourier,tripura2022wavelet}, have been mainly designed to learn forward operators, and their extensions for solving inverse problems by using reversed input-output, as well as applying Tikhonov regularization to a trained forward model were studied in a recent survey \cite{tanyu2022deep}. A novel and influential framework called Physics-informed neural network (PINN) was invented in \cite{raissi2019physics} for solving forward and inverse problems involving nonlinear PDEs, where the solutions are approximated by neural networks and the loss function is to enforce the neural networks to satisfy the strong PDE equations.  
We refer to the surveys \cite{arridge2019solving,mccann2017convolutional} for more deep learning concepts on various inverse problems. 
	
	Among the sampling-type methods, the direct sampling method (DSM) proposed in \cite{ito2012direct} can directly reconstruct the shapes and locations of unknown scatterers using one or a few incident waves, which is computationally efficient, easy to implement, and highly stable to large noise. The method is also independent of a priori knowledge of the geometry and physical properties of the unknown scatterers. The main idea of the DSM is to design an index function that takes large values for points inside the inhomogeneous scatterers, thus providing a shape estimate.  The DSMs have also been developed for some other important inverse problems, such as inverse scattering problems using far-field data \cite{li2013direct}, inverse electromagnetic scattering \cite{ito2013direct,chow2022direct}, electrical impedance tomography \cite{chow2014direct}, inverse elastic scattering problems \cite{ji2018direct}, diffusive optical tomography \cite{chow2015direct}, inversion of the radon transform \cite{chow2021direct}. However, the DSMs have some limitations. Firstly, like most non-iterative methods, they can not provide very accurate geometric reconstructions, especially for complicated scatterers. Secondly, they are mainly focused on cases of one or two incident waves, which may limit their applications to more complicated cases. Thirdly, the DSMs can only estimate the locations and shapes of the scatterers, while it is not able to estimate physical properties (e.g., the permittivity and conductivity), which is also partly due to very limited measurement data used by DSMs. To address these limitations, we combine the deep learning method with the DSM proposed in \cite{ito2012direct}, in which the network is used to approximate the relation between the index functions and the true contrasts. The numerical results show that this new approach can fully make use of the measurement data from multiple incident fields so that it can robustly get better reconstruction and even recover the physical properties of complicated scatterers. The trained networks also have some generosity to recover some scatterers with geometries and physical properties that are quite different from those used for the training data. In addition, the proposed DSM-DL also inherits the advantages of the classical DSM, i.e., it is computationally efficient, robust to noise, and it can provide reasonable reconstruction with one or two incident waves.
	
	Combining deep learning and sampling-type methods for solving inverse scattering problems was also studied recently in \cite{le2022sampling} by using the orthogonality sampling method. However, it only considered one incident wave and is mainly focused on recovering the shapes and locations of the scatterers. In addition, it requires both the Cauchy data and the scattered field data. As a comparison, our proposed method can make use of a general number of incident waves, with only scattered field data or far-field pattern data available, to recover more complicated scatterers. As we shall see from the numerical results, it can also recover the physical properties of the scatterers. The deep learning methods were also combined with sampling methods to tackle electrical impedance tomography \cite{guo2021construct} and diffusive optical tomography \cite{jiang2021learn}, in which the Cauchy difference functions are the inputs of the neural network and computation of several forward problems is required, while in our method, the inputs of the neural network are the index functions that can be computed with simple scalar product. We shall point out that the general framework of the proposed DSM-DL can also be directly applied to many other important inverse problems where the DSMs have been developed. 
	
	The rest of this paper is organized as follows. In Section \ref{secDSM}, we review the original DSM from \cite{ito2012direct} for inverse medium scattering problems and present some analysis of properties of index functions. The development of the proposed DSM-DL and the architecture of the neural network, as well as several strategies for data augmentations are introduced in Section \ref{secDSMDL}. Several representative numerical experiments are carried out to validate the advantages of the DSM-DL in Section \ref{secNumeri}, and some concluding remarks are given in Section \ref{secConcl}.

	\section{Direct sampling method}\label{secDSM}
	
	The direct sampling method\cite{ito2012direct} is a very simple technique for estimating the shapes and locations of the unknown medium scatterers when the measurement data are available from limited incident fields without any iteration. Specifically, denote $B(0,R)$ as the open ball with radius $R$ in $\mathbb{R}^2$ or $\mathbb{R}^3$ and let $S=\partial B(0,R)$ be the circular curve in $\mathbb{R}^{2}$ or spherical surface in $\mathbb{R}^{3}$, and $\Omega$ be the bounded domain of interest that is occupied by some inhomogeneous medium inclusions, the direct sampling method employs an index function of the form:
	\begin{equation}
 		\Phi(x):=|\langle u^{s},G(x,\cdot)\rangle_{L^{2}(S)}|
		\label{indexfunction},
	\end{equation}
	for any point $x\in \Omega$ or its normalized form
	\begin{equation}
		\hat{\Phi}(x):=\dfrac{|\langle u^{s},G(x,\cdot)\rangle_{L^{2}(S)}|}{\Vert u^{s}\Vert_{L^{2}(S)}\Vert G(x,\cdot) \Vert_{L^{2}(S)} }.
		\label{norIndex}
	\end{equation}
	If $\Phi(x)$ attains an extreme value at a point $x$, the point lies most likely within the support of the scatterers, whereas if $\Phi(x)$ takes small values, the point $x$ likely lies outside the support. The detailed derivation of the index function can be found in \cite{ito2012direct}.  An example of the true contrast and the computed index function is presented in Fig.\,\ref{fig:IndFun}. The computation of index function $\Phi$ only involves evaluating the scalar product of the measured data with the Green's function; thus the computation is very cheap. Furthermore, the highly noisy data is expected to be represented by high Fourier modes, while the Green's function $G(x,y)$ is a very smooth function on the circular curve/spherical surface and is dominated by the low-frequency Fourier modes. Thus, the noises are roughly orthogonal to the Green's function $G(x,y)$ and have little contribution to the index function. Consequently, the DSM is expected to be very robust to highly noisy data. 
	\begin{figure}[htp]
		\centering
		\includegraphics[width=0.65\linewidth]{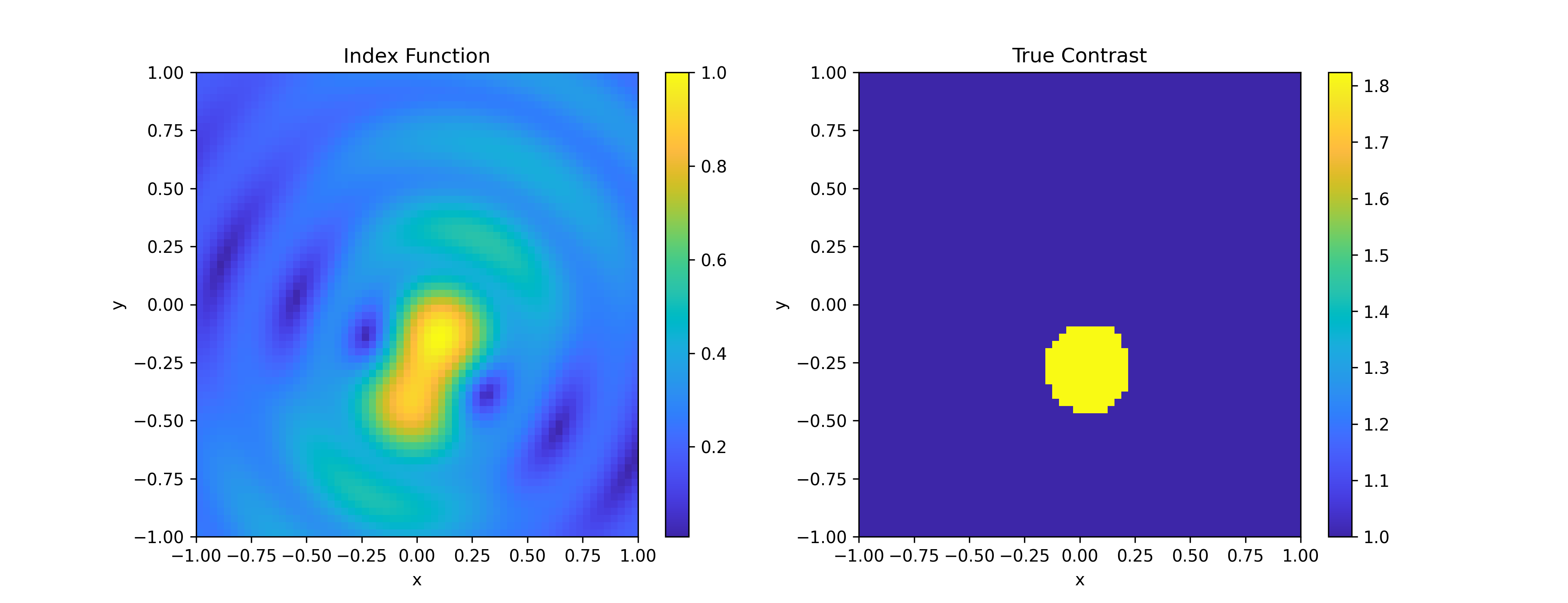}
		\caption{An example of the index functions by DSM (left) and the true contrast (right)}
		\label{fig:IndFun}
	\end{figure}

	We are now going to present some properties of the index functions, which can be regarded as some motivation of 
	our subsequent strategy to combine the deep learning strategy with the DSM.
	\begin{theorem}
		Let $S=\partial B(0,R)$ be the circular curve in $\mathbb{R}^{2}$ or the spherical surface in $\mathbb{R}^{3}$, and  $\Omega$ be an open domain that is contained in the interior of $S$. Define the operator $\mathcal{T}: L^{2}(S)\rightarrow C^{\infty}(\Omega)$ as
		\begin{equation}
			\phi(x) = \mathcal{T}(u)(x):=\langle u,G(x,\cdot)\rangle_{L^{2}(S)}=\int_{S}u(y)\overline{G(x,y)}dy \quad x \in \Omega,
			\label{operator}
		\end{equation}
		then $\mathcal{T}$ is an injective operator.
		\label{theorem}
	\end{theorem}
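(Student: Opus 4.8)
The plan is to view $\phi = \mathcal{T}(u)$ as a single-layer potential and to prove injectivity by showing that $\phi \equiv 0$ on $\Omega$ forces $u = 0$ on $S$. First I would record that, for fixed $y \in S$, the conjugated kernel $\overline{G(x,y)}$ is, as a function of $x$, a fundamental solution of the Helmholtz equation $\Delta \phi + k^2 \phi = 0$ away from $y$; since $\Omega$ (indeed the whole open ball $B(0,R)$) is disjoint from $S$, the kernel is real-analytic in $x$ there. Differentiating under the integral sign then shows that $\phi(x) = \int_S u(y)\overline{G(x,y)}\,dy$ is itself a real-analytic solution of the Helmholtz equation on $B(0,R)$, which in passing yields the asserted smoothness $\phi \in C^{\infty}(\Omega)$.

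Next I would assume $\mathcal{T}(u) = 0$, i.e. $\phi \equiv 0$ on $\Omega$. Because $\phi$ is real-analytic on the connected open set $B(0,R)$ and vanishes on the nonempty open subset $\Omega$, unique continuation gives $\phi \equiv 0$ on all of $B(0,R)$. I would then regard $\phi$ as the restriction of the single-layer potential $v(x) = \int_S u(y)\overline{G(x,y)}\,dy$, which is defined on $\mathbb{R}^{N}\setminus S$, coincides with $\phi$ inside $B(0,R)$, and also solves the Helmholtz equation in the unbounded exterior.

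The concluding step uses classical potential theory. The single-layer potential is continuous across $S$, so the vanishing interior trace $v_-|_S = 0$ forces the exterior trace $v_+|_S = 0$. Since $\overline{G(x,y)}$ satisfies the appropriate radiation condition uniformly in $y \in S$, so does $v$; the exterior Dirichlet problem with homogeneous boundary data is uniquely solvable under that condition (after conjugating, if necessary, to reduce to the standard Sommerfeld case), whence $v \equiv 0$ in $\mathbb{R}^{N}\setminus \overline{B(0,R)}$. With $v$ vanishing on both sides of $S$, the jump relation for the normal derivative of the single-layer potential, $\partial_\nu v_-|_S - \partial_\nu v_+|_S = u$, yields $u = 0$ in $L^{2}(S)$, establishing injectivity.

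I expect the main obstacle to be the potential-theoretic bookkeeping in the last step: pinning down the precise radiation condition satisfied by the conjugated kernel (which is outgoing in three dimensions but incoming for the Hankel kernel in two dimensions) so that exterior uniqueness genuinely applies, and justifying the continuity and normal-derivative jump relations for a density $u$ that is only assumed to lie in $L^{2}(S)$. I would emphasize that no interior eigenvalue hypothesis on $k$ is required, since the interior vanishing of $v$ is obtained directly by analytic continuation from $\Omega$ rather than from uniqueness for an interior boundary value problem.
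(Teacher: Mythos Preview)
Your argument is correct, and its opening coincides with the paper's: both of you observe that $\phi=\mathcal{T}(u)$ is real-analytic on $B(0,R)$ and use analytic continuation from $\Omega$ to conclude that $\phi$ vanishes on the entire ball. From that point the two proofs diverge. The paper simply passes to the limit $x\to S$, obtaining $\int_S u(y)\overline{G(x,y)}\,dy=0$ for all $x\in S$, and then cites injectivity of the single-layer boundary operator on circles and spheres (Ammari, \emph{Layer Potential Techniques}) to deduce $u=0$. You instead unpack that cited fact: you extend the single-layer potential to the exterior, use continuity across $S$ together with exterior uniqueness under the appropriate radiation condition to get $v\equiv 0$ outside as well, and then read off $u=0$ from the normal-derivative jump relation. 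Your route is longer but self-contained, and your closing remark is on point: because you already have $v\equiv 0$ in the interior from analytic continuation (rather than from an interior Dirichlet problem), no hypothesis that $k^2$ avoid interior Dirichlet eigenvalues is needed---a restriction that the boundary-operator injectivity result invoked by the paper would, in its general form, carry. The technical caveats you flag (the conjugated kernel satisfying an incoming rather than outgoing condition in two dimensions, and the validity of the trace and jump relations for $L^2$ densities) are genuine but standard, and handling them as you indicate is routine.
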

	\begin{proof}
		For $u_{1}, u_{2}\in L^2(S)$, suppose that $\mathcal{T}(u_{1})=\mathcal{T}(u_{2})$, i.e.,
		\begin{equation}
			\int_{S}u_{1}(y)\overline{G(x,y)}dy=\int_{S}u_{2}(y)\overline{G(x,y)}dy\,, \quad x \in \Omega.
		\end{equation}
		Then since $G(x,y)$ is analytic for $x\ne y$, we have
		\begin{equation}
			\int_{S}u_{1}(y)\overline{G(x,y)}dy=\int_{S}u_{2}(y)\overline{G(x,y)}dy\,, \quad x \in B(0,R),
		\end{equation}
		thus by continuity of the function $\mathcal{T}(u)$ in $\overline{B(0,R)}$, we have
		\begin{equation}
			\int_{S}u_{1}(y)\overline{G(x,y)}dy=\int_{S}u_{2}(y)\overline{G(x,y)}dy\,, \quad x \in S.
		\end{equation}
		We recall that the above single layer potential is an injective operator if $S$ is a circular curve or a spherical surface \cite{ammari2009layer}, we then have $u_{1}=u_{2}$.
	\end{proof}

The above theorem shows that the output $\mathcal{T}(u)$ does not lose any information of $u$. Next, we study the relation between the index functions associated with the far-field and near-field patterns.
	
For the reconstruction using the far-field pattern, the DSM with the index function \eqref{norIndex} for the near-field pattern can be generalized to a DSM with the following index function for the far-field pattern \cite{li2013direct}:
	\begin{equation}
		\Phi^{\infty}(x):=|\langle u^{\infty},G^{\infty}(x,\cdot)\rangle_{L^{2}(\mathbb{S}^{N-1})}|,
		\label{Ind_Far}
	\end{equation}
	or its normalized form
	\begin{equation}
		\hat{\Phi}^{\infty}(x):=\dfrac{|\langle u^{\infty},G^{\infty}(x,\cdot)\rangle_{L^{2}(\mathbb{S}^{N-1})}|}{\Vert u^{\infty}\Vert_{L^{2}(\mathbb{S}^{N-1})}\Vert G^{\infty}(x,\cdot) \Vert_{L^{2}(\mathbb{S}^{N-1})}},
		\label{Ind_Far_Norm}
	\end{equation}
	where $u^{\infty}$ is the far-field data and $G^{\infty}(x,\hat{y})$ is given by\cite{colton1998inverse}
	\begin{equation}
		G^{\infty}(x,\hat{y})=\left\{
		\begin{aligned}
			&\frac{\exp(\mathbf{i}\pi/4)}{\sqrt{8k\pi}}\exp(-\mathbf{i}kx\cdot \hat{y}), \quad &\text{in}\quad \mathbb{R}^2, \\
			&\frac{1}{4\pi}\exp(-\mathbf{i}kx\cdot \hat{y}), \quad &\text{in}\quad \mathbb{R}^3.
		\end{aligned}\right.
	\end{equation}
By the asymptotic behavior of the scattered field (\ref{asymptic_us}) and the following asymptotic behavior of the Green's function:
   \begin{equation}
   	  G(x,y)  =\frac{\exp(\mathbf{i}k|y|)}{|y|^{(N-1)/2}}(G^{\infty}(x,\hat{y}) + O(1/|y|)), \quad |y|\rightarrow\infty,
   \end{equation}
   we can derive that, when $S = \partial B(0,R)$, the index function will be
   \begin{equation}
   	\begin{split} 
   	\Phi^{R}(x)&:=\bigg|\int_{S}u^{s}(y)\overline{G(x,y)}dy\bigg| \\
   	&=\bigg|\frac{\exp(\mathbf{i}2kR)}{R^{(N-1)}}\int_{S}u^{\infty}(\hat{y})\overline{G^{\infty}(x,\hat{y})}dy\bigg| + O\bigg(\frac{1}{R}\bigg)\\
   	&=C_N\Phi^{\infty}(x)+O\bigg(\frac{1}{R}\bigg),
   	\end{split} 	 
   \label{conver_Ind}
   \end{equation}
where the constant $C_N$ only depends on the dimension $N$. Thus, the index function for the far-field data is the limit of the index function for the scattered field as the radius $R$ tends to infinity.
	
However, despite the effectiveness and robustness of the DSM, we note that it also has the potential to be improved and generalized in a few aspects. Firstly, it focuses mainly on the reconstruction with one or two incident waves which may limit its applications to more complicated cases. Studying efficient strategies to make use of multiple data fully is important since data from more incident waves usually contain more inherent information on the distribution of the scatterers which can improve the robustness of the reconstruction. Secondly, the direct sampling method can only roughly predict the shapes and locations of the scatterers and is unable to recover the physical properties which are essential in many applications.
	
\section{Direct sampling-based deep learning approach (DSM-DL)}\label{secDSMDL}
To obtain reconstruction with high accuracy and strong robustness, one can either design more effective algorithms or make use of more measurement data. The latter is usually necessary for scatterers with complicated shapes. Suppose there are $N_{i}$ incident waves, then we can  compute $N_{i}$ index functions $\Phi_{p}(x)$ defined in \eqref{indexfunction} with $p=1,2,\cdots,N_{i}$. Based on Theorem \ref{theorem} it is reasonable to expect that more index functions contain more information about the scatterers. To fully utilize the multiple index functions from many incidences, we should better understand the relation between the index functions and the true contrasts. For fixed $N_{i}$ incident fields $\{u_{p}^{i}\}_{p=1}^{N_{i}}$, we define an operator $\mathcal{A}$ as 
\begin{equation}
		\mathcal{A}(\varepsilon_{r}) :=\{\Phi_{p}\}_{p=1}^{N_{i}}  \in [C(\Omega)]^{N_i}, \quad \varepsilon_{r}\in W,
\end{equation}
where $W \subset L^2(\Omega)$ is the set of all possible $\varepsilon_{r}$, i.e., $\varepsilon_{r}(x)\geq 0 $ for $x\in \Omega$. The above operator is well-defined and well-posed as it is simply a combination of the forward scattering problem and the convolution in \eqref{indexfunction}. Now, we are ready to introduce a new inverse problem that we would like to solve: 
\textit{For fixed incident fields $\{u_{p}^{i}\}_{p=1}^{N_{i}}$ with corresponding index functions $\{\Phi_{p}\}_{p=1}^{N_{i}}$ computed by \eqref{indexfunction}, we aim at recovering the true contrast $\varepsilon_{r}$.}
In other words, our goal is to find an operator $\mathcal{A}_{inv}$ such that
\begin{equation}
	\mathcal{A}_{inv}(\{\Phi_{p}\}_{p=1}^{N_{i}}) \approx \varepsilon_{r} \quad : \quad [C(D)]^{N_{i}}\rightarrow W.
\end{equation}
    However, it is extremely difficult to theoretically derive an explicit form that can systematically and effectively incorporate multiple index functions to approximate the true contrasts. Although some simple strategies such as taking the maximum, average, or product of all index functions can be applied, they may not always give satisfactory results. In this work, we employ a deep learning method to approximate the relationship between the index functions and the true contrasts which is considered as a black-box. Next, we introduce our new direct sampling-based deep learning approach (DSM-DL), for which the $N_{i}$ index functions $\Phi_{p}(x)$ for $p=1,2,\cdots,N_{i}$ defined in \eqref{indexfunction} are the inputs of the neural network while the outputs are the approximate contrast $\varepsilon_{r}$, i.e., 
	\begin{equation}
		\begin{split}
			\varepsilon_{r}&\approx \mathcal{G}_{\Theta}(\Phi_{1},\Phi_{2},\cdots,\Phi_{N_{i}}) \quad : \quad [C(\Omega)]^{N_{i}}\rightarrow W,\\
			& = \mathcal{G}_{\Theta}(|\mathcal{T}(u^{s}_{1})|,|\mathcal{T}(u^{s}_{2})|,\cdots,|\mathcal{T}(u^{s}_{N_{i}})|) \quad : \quad [C(S)]^{N_{i}}\rightarrow W,\\
			& = \mathcal{G}_{\Theta}\circ \mathcal{DSM}(u_{1}^{s},u_{2}^{s},\cdots,u^{s}_{N_{i}})\quad : \quad [C(S)]^{N_{i}}\rightarrow W, 
		\end{split}		
	\end{equation} 
	where $u_{1}^{s},u_{2}^{s},\cdots,u_{N_{i}}^{s}$ are the scattered fields corresponding to $N_{i}$ incident fields, $\mathcal{G}_{\Theta}$ denotes the neural network, $\mathcal{DSM}$ refers to the output of the classical DSM, and $\Theta$ denotes the free parameters to be learned in the network. By introducing a metric $Loss(x,y)$ to measure the distance between any two images $x$ and $y$, we can iteratively update the free parameters $\Theta$ as follows:
	\begin{equation}
		\Theta\leftarrow\Theta-\frac{1}{|\mathcal{I}|}\sum_{n\in \mathcal{I}}\tau\nabla_{\Theta}Loss\bigg(\mathcal{G}_\Theta(\{\Phi_{p,n}\}_{p=1}^{N_{i}}),\varepsilon_{r,n}\bigg),
  \label{eq:loss_func}
	\end{equation} 
where $\tau$ is the learning rate and $\mathcal{I}$ denote the index set for the training data. We summarize the DSM-DL scheme in Algorithm \ref{AlgorithmDLDSM}. The explicit loss function will be given in equation (\ref{loss_func}) in the numerical experiments.
    
    Specifically, if we discretize the index functions and the true contrasts on $N_{d}\times N_{d}$ grid points in $\Omega$ for 2D problems, then the input to the neural network is a tensor with the dimension $N_{i}\times N_{d}\times N_{d}$, while the output of the neural network is a tensor with the dimension $1\times N_{d}\times N_{d}$, i.e., the index functions are put into different input channels of the network. We choose the well-known U-Net \cite{ronneberger2015u} as the neural network in our numerical experiments, the detailed descriptions of the U-Net and the reasons for choosing it will be given in subsection\,\ref*{unet}. Overall, the DSM-DL has the architecture as presented in Fig.\,\ref{fig:arc-DLDSM}, which consists of a DSM and a neural network that takes the index functions as the inputs and outputs an approximation of the true contrast.
    
    In the DSM-DL scheme, we do not use the normalized index function as it requires additional computation and does not provide better reconstructions based on our numerical experience. Instead, we scale all the inputs by multiplying a constant $2/C$ to improve the network stability in the numerical experiments, where $C=\max\{\Phi_{p,n}(x)|\,x\in\Omega\,;\,\, p=1,2,\cdots,N_i\, ;\,\,n\in\mathcal{I}\}$.

    \begin{remark}
    	Based on Theorem \ref{theorem}, it is also reasonable to use the phased index functions as the inputs of the neural network which we have observed can indeed yield some satisfactory results. However, our numerical experiments indicate that using index functions in absolute-valued form can often achieve better reconstructions. 
    \end{remark}
    \begin{remark}
    	The average index function $\Phi_{ave}:=\frac{1}{N_i}\sum_{p=1}^{N_i}\Phi_{p}$ may provide better reconstruction estimates than the single incident field index function, and one might suggest using it as the input to the neural network. However, the average index function is likely to lose some of the information of the scattered fields as the average operation is not injective. And our numerical experience shows that using multi-channel input $\{\Phi_{p}\}_{p=1}^{N_i}$ can significantly outperform using $\Phi_{ave}$ as input.
    \end{remark}
   \begin{algorithm}[t]
    	\caption{Direct sampling-based deep learning approach for IMSP (Training)}
    	\label{AlgorithmDLDSM}
    	\KwIn{ \\
    		 \quad$\bullet$ Fixed $N_i$ and incident fields $\{u_{p}^{i}\}_{p=1}^{N_i}$.\\
   		\quad$\bullet$ $N$ true contrasts$\{\varepsilon_{r,n}\}_{n=1}^{N}$ and their corresponding scattered fields $\{u_{p,n}^{s}, p=1,\cdots,N_{i}\}_{n=1}^{N}$.\\
   		\quad$\bullet$ The Green's function $\{G(x,y),x\in \Omega, y\in S\}$.\\
   		\quad$\bullet$ Batch number: $L$; Number of epochs: $Q$; Learning rates: $\{\tau_{q}\}_{q=1}^{Q}$.
   	    }
   		Compute the index functions $\Phi_{p,n}(x)=|\langle u^{s}_{p,n},G(x,\cdot)\rangle_{L^{2}(S)}|$, $x\in \Omega; n=1,\cdots,N;p=1,\cdots,N_{i}$.  
   		\\Introduce a loss function $Loss$.
   		\\
   		Initialize the network $\mathcal{G}_{\Theta}$.\\
   		\For{$q=1,2,\cdots,Q$}{
   			\For{$l=1,2,\cdots,L$}{
   				Update $\Theta\leftarrow\Theta-\frac{1}{|\mathcal{I}_l|}\sum_{n\in \mathcal{I}_l}\tau_q\nabla_{\Theta}Loss\big(\mathcal{G}_\Theta(\{\Phi_{p,n}\}_{p=1}^{N_{i}}),\varepsilon_{r,n}\big)$. 
   				\\
   				\tcc{ $\mathcal{I}_{l}$ refers to the index set in the $l_{th}$ batch, $\tau_{q}$ refers to learning rate in the $q_{th}$ epoch.}
   			}
   		}
   \end{algorithm}
	\begin{figure}[htp]
		\centering
		\includegraphics[width=1.0\linewidth]{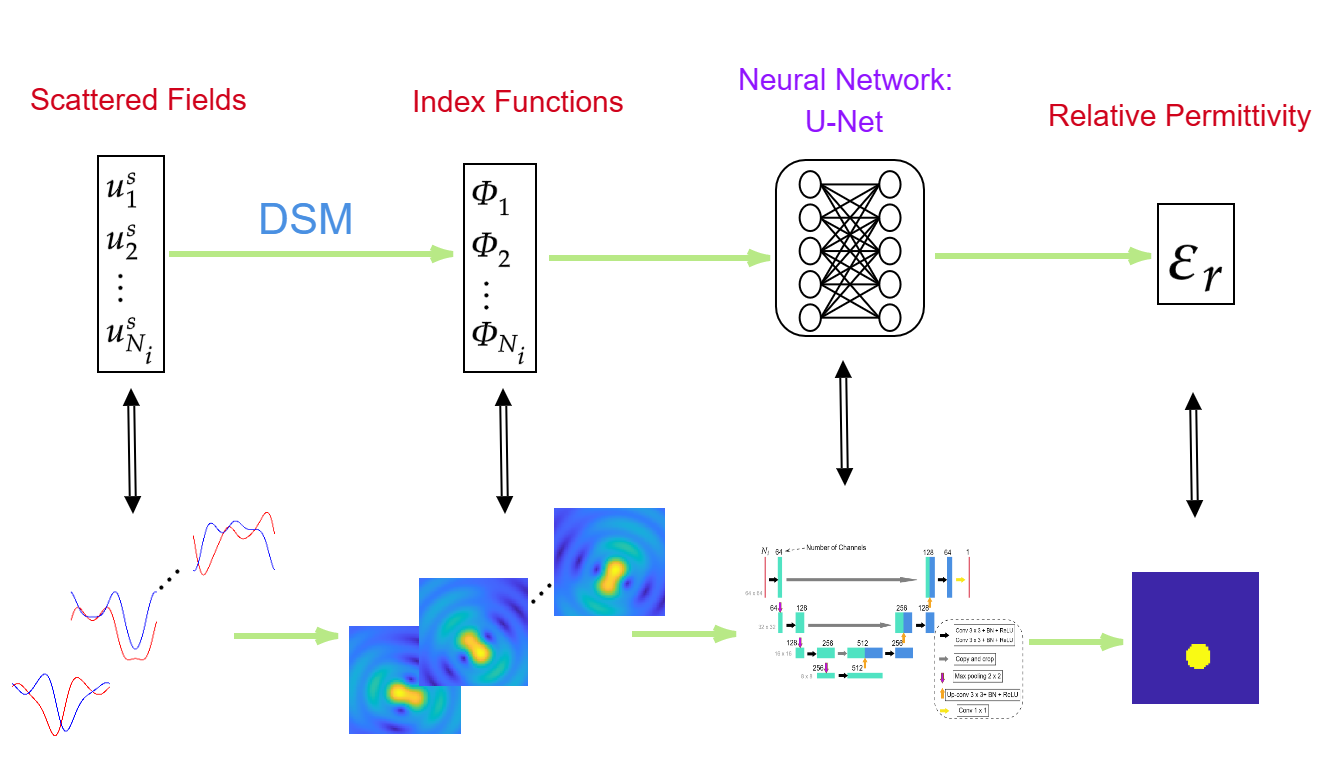}
		\caption{The architecture of the direct sampling-based deep learning approach}
		\label{fig:arc-DLDSM}
	\end{figure}

\subsection{Data augmentations based on the properties of index functions}
\subsubsection{Rotation and symmetry properties of index functions}

We are now going to present some nice properties of the index functions, including the rotation and symmetry properties,
which can be used not only for data augmentations \cite{shorten2019survey}, i.e., increasing the amount of data from the existing observation data, but also provide a technique to help improve the reconstruction quality. For simplicity, we now focus on the two-dimensional problems with incident fields as plane waves, the following discussion can be directly applied to three-dimensional problems and cases with point source incident fields. Firstly, for any $\phi\in [0,2\pi)$, we define the rotation map $R_{\phi}$ on $\mathbb{R}^{2}$ as
\begin{equation}
	R_{\phi}(x)=R_{\phi}(r\cos(\alpha),r\sin(\alpha)):=(r\cos(\alpha+\phi),r\sin(\alpha+\phi)), \quad x=(r\cos(\alpha),r\sin(\alpha))\in \mathbb{R}^{2},
\end{equation}
and the rotation operator $\mathcal{R}_{\phi}$ on $L^{2}(\mathbb{R}^{2})$ as
\begin{equation}
	\mathcal{R}_{\phi}(f)(x)=f(R_{-\phi}(x)), \quad f\in L^{2}(R^{2}).
\end{equation}
Furthermore, for an incident direction $d=(\cos(\theta_d),\sin(\theta_d))$ and a point $x=(r\cos(\alpha),r\sin(\alpha))\in \mathbb{R}^{2} $, we define $S_d(x)$ as the symmetric point of $x$ with respect to the incident direction $d$ (which should be understood as a line across the origin), i.e.,
\begin{equation}
    S_d(x) :=(r\cos(2\theta_d-\alpha),r\sin(2\theta_d-\alpha)), \quad x=(r\cos(\alpha),r\sin(\alpha))\in \mathbb{R}^{2}.
\end{equation}
And the symmetry operator on $L^2(\mathbb{R}^2)$ is defined as $\mathcal{S}_d(f)(x):= f(S_d(x))$ for any $x\in \mathbb{R}^2$ and $f\in L^2(\mathbb{R}^2)$. 

Suppose that $\Phi(x;\varepsilon_{r},d)$ is the index function corresponding to the incident field $e^{\mathbf{i}kx\cdot d}$ and the inhomogeneous media $\varepsilon_{r}$. We emphasize that for a fixed incident direction $d$, if $\varepsilon_{r_2}=\mathcal{R}_{\phi}(\varepsilon_{r_1})$ (or $\varepsilon_{r_2}=\mathcal{S}_{q}(\varepsilon_{r_1})$ with $q\ne d$), we usually can not derive the identity $\Phi(x;\mathcal{R}_{\phi}(\varepsilon_{r}),d)=\mathcal{R}_{\phi}(\Phi(x;\varepsilon_{r},d))$ (or $\Phi(x;\mathcal{S}_{q}(\varepsilon_{r}),d)=\mathcal{S}_{q}(\Phi(x;\varepsilon_{r},d))$) for their corresponding index functions. Thus, unlike problems such as image denoising, we can not apply arbitrary rotation and symmetry operators on existing data to increase the amount of observation data. Instead, we have the following important results. 
\begin{theorem}
	Suppose that $\Phi(x;\varepsilon_{r},d)$ is the index function corresponding to the incident field $e^{\mathbf{i}kx\cdot d}$ and the inhomogeneous medium $\varepsilon_{r}$. Then it holds that
	\begin{equation}
		\Phi(x;\mathcal{S}_{d}(\varepsilon_{r}),d)=\mathcal{S}_{d}(\Phi(x;\varepsilon_{r},d)),\quad x\in \mathbb{R}^2,
		\label{sym_equa}
	\end{equation}
	and for any $\phi\in [0,2\pi)$,
	\begin{equation}
		\Phi(x;\mathcal{R}_{\phi}(\varepsilon_{r}),d)= \mathcal{R}_{\phi}(\Phi(x;\varepsilon_{r},R_{-\phi}(d))),
		\quad x\in \mathbb{R}^2.
		\label{roata_qua}
	\end{equation}
	\label{thm_Rota}
\end{theorem}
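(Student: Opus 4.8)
The plan is to reduce the entire statement to a single transformation law for the scattered field under the relevant geometric operation, and then to push that law through the scalar product that defines the index function. Throughout I would write $u^s(\cdot\,;\varepsilon_r,d)$ for the scattered field associated with the medium $\varepsilon_r$ and the plane wave $e^{\mathbf{i}kx\cdot d}$, so that $\Phi(x;\varepsilon_r,d)=\bigl|\int_S u^s(y;\varepsilon_r,d)\overline{G(x,y)}\,dy\bigr|$. The guiding observation is that both $R_\phi$ and $S_d$ are orthogonal transformations of $\mathbb{R}^2$, and that the Helmholtz operator $\Delta+k^2\varepsilon_r$ together with the Sommerfeld radiation condition is invariant under such isometries; the only difference between the two target formulas then comes from how the incident plane wave responds to the transformation.

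First I would establish the rotation law. Let $u$ denote the total field for the medium $\mathcal{R}_\phi(\varepsilon_r)$ and direction $d$, and set $v(y):=u(R_\phi(y))$. Since $\Delta$ commutes with rotations and $\mathcal{R}_\phi(\varepsilon_r)(R_\phi(y))=\varepsilon_r(y)$, the field $v$ solves $\Delta v+k^2\varepsilon_r v=0$; moreover its incident part becomes $e^{\mathbf{i}kR_\phi(y)\cdot d}=e^{\mathbf{i}ky\cdot R_{-\phi}(d)}$, using that rotations preserve the inner product. By well-posedness of the scattering problem, $v$ must equal the total field for medium $\varepsilon_r$ and direction $R_{-\phi}(d)$, whence
\[
u^s(x;\mathcal{R}_\phi(\varepsilon_r),d)=u^s(R_{-\phi}(x);\varepsilon_r,R_{-\phi}(d)).
\]
Substituting this into the index function, performing the change of variables $z=R_{-\phi}(y)$ (which maps $S$ onto itself and preserves the surface measure, since $S$ is centered at the origin), and using $|x-R_\phi(z)|=|R_{-\phi}(x)-z|$ so that $G(x,R_\phi(z))=G(R_{-\phi}(x),z)$, the integral collapses to $\Phi(R_{-\phi}(x);\varepsilon_r,R_{-\phi}(d))$, which is precisely $\mathcal{R}_\phi(\Phi(\cdot\,;\varepsilon_r,R_{-\phi}(d)))(x)$ by the definition of $\mathcal{R}_\phi$. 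This yields \eqref{roata_qua}.

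The symmetry identity \eqref{sym_equa} would follow along the same lines, with $S_d$ taking the role of $R_{-\phi}$; the crucial simplification is that $d$ lies on the reflection axis, so $S_d(y)\cdot d=y\cdot d$ and hence the incident wave is left invariant, $e^{\mathbf{i}kS_d(y)\cdot d}=e^{\mathbf{i}ky\cdot d}$. This is exactly why the direction argument stays $d$ on both sides of \eqref{sym_equa}, whereas it must rotate in \eqref{roata_qua}. Using the involution property $S_d\circ S_d=\mathrm{id}$ in the change of variables together with $G(x,S_d(z))=G(S_d(x),z)$, one arrives at $\Phi(x;\mathcal{S}_d(\varepsilon_r),d)=\Phi(S_d(x);\varepsilon_r,d)=\mathcal{S}_d(\Phi(\cdot\,;\varepsilon_r,d))(x)$.

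The step I expect to be the main obstacle is the rigorous justification of the scattered-field transformation law itself: one must verify that the rotated or reflected total field still satisfies the Sommerfeld radiation condition, so that its radiating part is genuinely a scattered field, and then invoke uniqueness of the forward scattering solution to identify it with the scattering solution of the transformed configuration. Once this invariance is secured, the remaining manipulations — the measure-preserving change of variables on $S$ and the isometry identity for $G$ — are routine, as they rely only on $S$ being a sphere centered at the origin and on $G(x,y)$ depending on $x$ and $y$ solely through $|x-y|$.
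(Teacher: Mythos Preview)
Your proposal is correct and follows essentially the same approach as the paper: both rely on the rotational (and reflective) invariance of the scattering problem, observing that illuminating $\mathcal{R}_\phi(\varepsilon_r)$ from direction $d$ is equivalent, after a coordinate change, to illuminating $\varepsilon_r$ from direction $R_{-\phi}(d)$. The paper asserts the resulting identity $\Phi(x;\mathcal{R}_\phi(\varepsilon_r),d)=\Phi(R_{-\phi}(x);\varepsilon_r,R_{-\phi}(d))$ directly in one sentence, whereas you supply the underlying details---the transformation law for $u^s$ via uniqueness of the forward problem, and the measure-preserving change of variables on $S$ together with $G(x,y)=G(|x-y|)$---but the substance is the same.
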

\begin{proof}
	We only prove the identity (\refeq{roata_qua}), as the argument is the same for identity (\ref{sym_equa}). 
	By considering the rotation operator to the coordinate system, we note that applying a plane wave with direction $d$ to the medium $\mathcal{R}_{\phi}(\varepsilon_{r})$ is equivalent to applying a plane wave with direction $R_{-\phi}(d)$ to the medium $\varepsilon_{r}$, and we have the following relation:
	\begin{equation}
		\Phi(x;\mathcal{R}_{\phi}(\varepsilon_{r}),d)= \Phi(R_{-\phi}(x);\varepsilon_{r},R_{-\phi}(d)),
		\quad x\in \mathbb{R}^2.
	\end{equation}
Thus by definition of $\mathcal{R}_{\phi}$, we have
\begin{equation}
	\Phi(x;\mathcal{R}_{\phi}(\varepsilon_{r}),d)= \mathcal{R}_{\phi}(\Phi(x;\varepsilon_{r},R_{-\phi}(d))),
	\quad x\in \mathbb{R}^2.
\end{equation}

\end{proof}

\subsubsection{Data augmentations in DSM-DL}
\label{sec_Data_aug}
We now present our strategies for data augmentations based on Theorem \refeq{thm_Rota}. Let us first consider the case of $N_{i}=1$, i.e., only a single incident field is used. The identity (\ref{roata_qua}) in Theorem \ref{thm_Rota} shows that if we know the index function $\Phi(x;\varepsilon_{r},d)$ for the medium $\varepsilon_{r}$ with incident direction $d$, we can obtain the index function $\Phi(x;\mathcal{S}_{d}(\varepsilon_{r}),d)$ for the medium $\mathcal{S}_{d}(\varepsilon_{r})$. Thus, we can apply symmetry operator $\mathcal{S}_{d}$ on the medium $\varepsilon_{r}$ to increase the amount of observation data.

For $N_i>1$, suppose that the $N_i$ incident fields $e^{\mathbf{i}x\cdot d_i},i=1,2,\cdots,N_i$ are equally distributed with $d_i=(\cos(\theta_i), \sin(\theta_i))$, $\theta_i = 2(i-1)\pi/N_i$, $i=1,2,\cdots ,N_i$. For a given $\phi_j= 2\pi j/N_i, j=1,2,\cdots,N_i-1$, we have
\begin{equation}
	\begin{split}
		\Phi(x;\mathcal{R}_{\phi_j}(\varepsilon_{r}),d_i)&=\mathcal{R}_{\phi}(\Phi(x;\varepsilon_{r},R_{-\phi_j}(d_i)))\\
		&=\mathcal{R}_{\phi}(\Phi(x;\varepsilon_{r},d_{i-j})),
		\quad x\in \mathbb{R}^2.
	\end{split}
\label{equa26}
\end{equation} 
Thus, if we know the index functions \{$\Phi(x;\varepsilon_{r},d_{i})| i=1,2,\cdots,N_i$\} for the medium $\varepsilon_{r}$, we can obtain the index functions $\{\Phi(x;\mathcal{R}_{\phi_j}(\varepsilon_{r}),d_{i})| i=1,2,\cdots,N_i\}$ for the medium $\mathcal{R}_{\phi_j}(\varepsilon_{r}), j=1,2,\cdots,N_{i}-1$, which means that we can get $N_i-1$ new data.

The rotation and symmetry properties of the index functions can not only be used for data augmentations but can also provide a new way to compute the approximation of the true contrasts. Note that the input of the neural network $\mathcal{G}_{\Theta}$ has the following form:
\begin{equation}
    \mathbf{\Phi}^{\mathbf{d}}_{\varepsilon_{r}}:=[\Phi(x,\varepsilon_{r},d_1),\Phi(x,\varepsilon_{r},d_2),\cdots,\Phi(x,\varepsilon_{r},d_{N_i})], \quad \mathbf{d}=[d_1,d_2,\cdots,d_{N_i}],
\end{equation}
and 
\begin{equation}
	\varepsilon_{r}\approx\mathcal{G}_{\Theta}(\mathbf{\Phi}^{\mathbf{d}}_{\varepsilon_{r}}).
	\label{no_average}
\end{equation}
Thus
\begin{equation}
	\varepsilon_{r}\approx \mathcal{R}_{- \phi_{j}}\bigg(\mathcal{G}_{\Theta}(\mathbf{\Phi}^{\mathbf{d}}_{\mathcal{R}_{\phi_j}(\varepsilon_{r})})\bigg) \quad \text{for}\quad \phi_j=\frac{2\pi j}{N_i} \quad  j\in \mathbb{N}\,.
\end{equation}
Then by taking an average over the reconstruction with respect to different $\phi_j$, we have
\begin{equation}
	\varepsilon_{r}\approx\frac{1}{N_i}\sum_{j=1}^{N_i}\mathcal{R}_{- \phi_{j}}\bigg(\mathcal{G}_{\Theta}(\mathbf{\Phi}^{\mathbf{d}}_{\mathcal{R}_{\phi_j}(\varepsilon_{r})})\bigg).
	\label{average}
\end{equation}
where $\mathbf{\Phi}^{\mathbf{d}}_{\mathcal{R}_{\phi_j}(\varepsilon_{r})}$ can be computed from $\mathbf{\Phi}^{\mathbf{d}}_{\varepsilon_{r}}$ by equation (\ref{equa26}). 

From our numerical experience, we have observed that 
the averaging approximation (\ref{average}) can generally produce more accurate 
reconstructions of the true medium $\varepsilon_{r}$ than the scheme (\ref{no_average}).

\subsection{U-Net and advantages of the DSM-DL algorithm}\label{unet}
The U-Net\cite{ronneberger2015u} is a popular convolutional neural network that was originally designed for biomedical image segmentation. The architecture of the U-Net that we shall use in our DSM-DL algorithm is shown in Fig.\,\ref{fig:UNet}. We can see that the architecture of the U-Net consists of two paths, forming a U-shaped structure. The left-hand contracting path is designed to extract feature maps at different levels, whereas the right-hand expansive path reconstructs the image by deconvolutions of the corresponding features. The main compositions in the U-Net consist of convolutional layers, batch normalization, ReLU, max pooling, and skip connections. We now provide some more detailed descriptions of these components, from which we can also see the advantages and motivations for using U-Net as the network in our DSM-DL 
algorithm. 
 \begin{itemize}
\item The convolutional layers process the data only for its receptive field with convolution operations. Two important main features of the convolutional layers are weight sharing and local operation. Compared with fully connected layers, the convolutional layers have fewer free parameters and are more efficient in terms of memory and complexity. Furthermore, the convolutional layers take into account the spatial relationships between individual features, making them very appropriate for data with a gridded topology. Since the size of inputs (index functions) in the DSM-DL scheme is usually large, especially for the cases of multiple incidence fields and 3D problems, and the index functions are close to the true contrasts, which indicates that their relations are likely to be (mainly) local. Therefore, the convolutional layers would be a very appropriate choice for the DSM-DL scheme.
    \item 
 The batch normalization proposed in\cite{ioffe2015batch} is a method used to accelerate and stabilize the training of the deep neural networks through normalization of the layers' inputs by re-scaling and re-centering. The batch normalization transfers the layer's input $X$ to $Y$ as
\begin{equation}
	Y=\alpha\frac{X-E[X]}{\sqrt{\text{Var}[X] + \epsilon}}+\beta. 
\end{equation}  
The mean $E[X]$ and standard deviation $\text{Var}[X]$ above are calculated for each dimension, while $\alpha$ and $\beta$ are free parameters to be learned.

\item The activation functions enable the neural networks to approximate nonlinear mappings. The ReLU activation function (rectified linear unit) is defined as the piecewise linear function
\begin{equation}
	f(x)=\max(0,x).
\end{equation}
Since the true contrasts are images that are usually piece-wise constants, and the relative permittivity of the homogeneous background is equal to 1, thus the ReLU is very suitable for the proposed DSM-DL.

\item The max poolings reduce the spatial information while they can significantly increase the effective receptive field of the neural network, and the up-sampling convolutions can increase the resolution of the outputs.

\item Skip connections are able to transfer features from the contracting path to the expansive path so that the lost spatial information during down-sampling can be recovered, which is very suited to the DSM-DL since the index functions and the true contrasts have similar shapes. In addition, the skip connections can also stabilize the training and convergence as it alleviates the vanishing gradient problem. 
\end{itemize}

\begin{figure}[htp]
	\centering
	\includegraphics[width=1.0\linewidth]{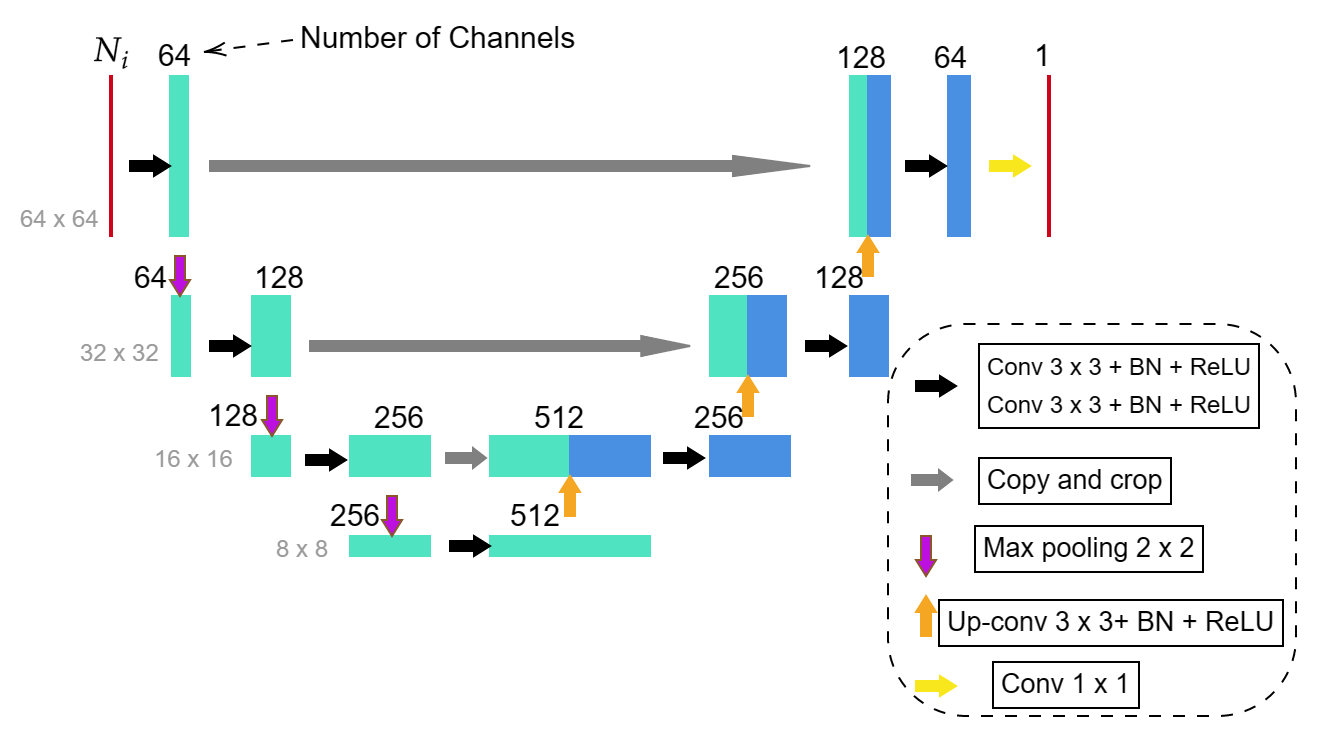}
	\caption{The architecture of the U-Net used in the numerical experiments}
	\label{fig:UNet}
\end{figure}
	
\ss 
Before we present our numerical results, we shall list main advantages of the proposed DSM-DL due to the DSM and the structure of the network that we use. 
\begin{itemize}
	\item The implementation is simple and rather straightforward. The evaluation of the index functions is very cheap and naturally parallel. The forward pass of the neural network is also highly parallel since the main compositions of the U-Net are CNNs. In our numerical experiments, each test is completed within 1 second, after the network is trained.
 
	\item The DSM-DL is highly robust to noise. This is mainly because the index functions (the inputs of the neural network) are robust to noise, i.e., the noise is smoothed in the DSM process.
 
	\item The neural network is easy to train since the index functions and the true contrasts which are the input and output of the network have similar shapes and are defined in the same domain. In this sense, the forward pass of the neural network can be considered as a refinement process. 

    \item Data from multiple incident fields can be fully incorporated. From the numerical results, we can see that our method can be applied to the case of an arbitrary number of incident fields, and using data from more incident fields can significantly improve the accuracy, robustness, and generalization capacity.

    \item As we shall see from the upcoming numerical tests for reconstructing Latin letters “D”, “S” and “M”, the number of the receivers and the radius of the measurement surface for the testing data are not necessary to be consistent with those for the training data. This is due to the convergence property (\ref{conver_Ind}) of the index functions.
 
    \item The DSM-DL is applicable and robust for the reconstruction with limited aperture data. 
Our numerical results indicate that the DSM-DL can still work satisfactorily if the scattered fields are 
only available on part of $S$.
\end{itemize}
	
	\section{Numerical results} \label{secNumeri}
	In this section, we present numerical results using a circle dataset and a modified MNIST dataset for inverse medium scattering problems in 2D to illustrate the accuracy and robustness of the DSM-DL scheme in reconstructing some unknown scatterers from near-field scattered fields. The proposed scheme can be directly applied to 3D problems and far-field problems by using the index functions (\ref{Ind_Far}). 
	
	\subsection{Numerical setup}
	In the numerical tests, we consider a square sampling region $\Omega=[-1,1]^2$ and discretize it into $64\times64$ pixels. We take $N_{i}$ sources and $N_r=32$ receivers that are equally distributed on a circle of radius $3$ centered at the origin and will consider $N_i=1, 2, 4, 8$ and $16$. The scattered fields are generated numerically using the method of moments. The noisy measured scattered field is then generated pointwise by the formula
	\begin{equation}
		u^{s}_{\delta}(x)=u^{s}(x)+\delta\Vert u^{s}\Vert_{2}\frac{\zeta_{r}(x)+i\zeta_{i}(x)}{\sqrt{2}} \quad \text{for} \quad x \in S,
	\end{equation}
    where $\delta$ refers to the relative noise level,
    $\zeta_{r}(x)$ and $\zeta_{i}(x)$ refer to the real and imaginary parts of the noise which follow the standard normal distribution.
  The wavelength is $\lambda=0.75$ and $k = 2\pi/\lambda$. To use the data augmentation strategy that we discussed in section \ref{sec_Data_aug}, we consider the following loss function to train the neural networks:
	\begin{equation}
		\begin{split}	Loss =& \frac{1}{2}\bigg(\Vert \mathcal{G}_\Theta(\mathbf{\Phi}_{\varepsilon_{r}})-\varepsilon_{r}\Vert_{2}^{2} + \alpha_{1} \text{TV}(\mathcal{G}_\Theta(\mathbf{\Phi}_{\varepsilon_{r}}))  + \alpha_{2}(1-\text{SSIM}(\mathcal{G}_\Theta(\mathbf{\Phi}_{\varepsilon_{r}}),\varepsilon_{r}))\\
		&+\Vert \mathcal{G}_\Theta(\mathbf{\Phi}_{\mathcal{R}_{\pi}(\varepsilon_{r})})-\mathcal{R}_{\pi}(\varepsilon_{r})\Vert_{2}^{2} + \alpha_{1} \text{TV}(\mathcal{G}_\Theta(\mathbf{\Phi}_{\mathcal{R}_{\pi}(\varepsilon_{r})}))  + \alpha_{2}(1-\text{SSIM}(\mathcal{G}_\Theta(\mathbf{\Phi}_{\mathcal{R}_{\pi}(\varepsilon_{r})}),\mathcal{R}_{\pi}(\varepsilon_{r})))\bigg)
		\end{split}
	\label{loss_func}
	\end{equation} 
for $N_i>1$, where $\mathbf{\Phi}_{\mathcal{R}_{\pi}(\varepsilon_{r})}$ can be computed from $\mathbf{\Phi}_{\varepsilon_{r}}$ by (\ref{equa26}), and the rotation operator $\mathcal{R}_\pi$ should be replaced by the symmetry operator $\mathcal{S}_{d_1}$ for $N_i=1$. We only use $\mathcal{R}_\pi$ and $\mathcal{S}_{d_1}$ for data augmentations in the numerical experiments, so that the amount of augmentation is the same for all $N_i$, although we can increase more data for $N_i>2$. The TV refers to the total variation which is commonly used as the regularizer in image reconstruction, and SSIM \cite{wang2003multiscale} is the structural similarity function which is a common metric for measuring the perceptual distance between two images and was also added to the loss function in \cite{huang2020deep}. $\alpha_{1},\alpha_{2}$ are the corresponding weights, and we set $\alpha_{1}=0.01, \alpha_{2}=0.05$ for the circle example, $\alpha_{1}=0.05, \alpha_{2}=0.05$ for the MNIST example. The loss function is computed averagely on the training data for each iteration, and the Adam optimizer is used to update the learnable parameters. $5\%$ Gaussian noise is imposed on the scattered field for the training data, while we test the trained network with different high noise levels. We choose the network which achieves the smallest validation error during training as the final model. In the training stage, a server with GeForce GTX 1080Ti and 256 GiB of system RAM is used. It takes about 0.5 hours for the training of each example, and it takes less than 1.0 seconds for reconstructing the scatterers using the trained networks. 
    
    To reconstruct the true medium $\varepsilon_r$, we employ the following average form:
    \begin{equation}
    	\tilde{\varepsilon}_{r}=\frac{1}{2}\bigg(\mathcal{G}_\Theta(\mathbf{\Phi}_{\varepsilon_{r}})+\mathcal{R}_{\pi}(\mathcal{G}_\Theta(\mathbf{\Phi}_{\mathcal{R}_{\pi}(\varepsilon_{r})}))\bigg)\,.
    \end{equation} 
    To quantify the quality of the reconstruction, we compute the relative L2 error via
    \begin{equation}
    	R_{e} = \frac{\Vert\tilde{\varepsilon}_{r}-\varepsilon_{r} \Vert_2}{\Vert \varepsilon_{r}\Vert_2}.
    \end{equation}
	
	\subsection{Circle dataset example}
	In the first example, we simulate the scatters by employing a dataset of circles. The number of circles in each image is randomly set between 1 and 3, the radius is taken from the uniform distribution $U(0.15,0.4)$, and the relative permittivity is taken from $U(1.5,2.0)$. The overlapping is allowed between circles. We use $3000$ images as the training data, $200$ images as the validation data, and $200$ images as the testing data to compute the testing error. The batch size is taken as $6$ and we use a total of $30$ epochs to train the neural networks. As we use the loss function (\ref*{loss_func}), we actually have $6000$ training data and the batch size is $12$. The learning rate starts at $0.001$ and decreases by a factor of $0.5$ every $3$ epochs. The trained networks are then used to test the following tests.
	
	\subsubsection{Tests with testing data from the circle dataset}
	
	In Fig.\,\ref{tab:fig-Circle}, reconstructed images of four tests from the testing data with noise levels $\delta=15\%$ and $\delta=40\%$ are presented. From the numerical results, we can see that when the number of incidences is small, the trained networks can provide very satisfactory results if the circles are well separated, while it can only estimate the locations and the rough shape of scatterers if they are overlapped. This is reasonable as the interaction between scatterers is strong if they are very close to each other. The reconstructions can be further improved if we use more incident fields, and the trained networks with $N_{i}=8$ or $16$ can obtain very high-quality results for all tests. The average relative error and SSIM for the testing data are presented in Table\,\ref{tab:error_circle} and Fig.\,\ref{fig:Error_Circle}. From Fig.\,\ref{fig:Error_Circle}, we can intuitively observe the impacts of the noise level and the number of incidences on the performance. An important finding is that as $N_{i}$ increases, the error gaps between different noise levels become smaller, which means that multiple measurement data can make the reconstruction more robust to noise.
		\begin{figure}[htp]\small
		\begin{center}
			\begin{tblr}
				{colspec = {X[-1]X[c]X[c,h]X[c,h]X[c,h]X[c,h]X[c,h]},
					stretch = 0,
					rowsep = 0pt,}
				Noise Level& Ground truth& $N_{i}$=1& $N_{i}$=2 &$N_{i}$=4&$N_{i}$=8& $N_{i}$=16\\
				15\%&\SetCell[r=2]{c}\includegraphics[width=0.15\textwidth]{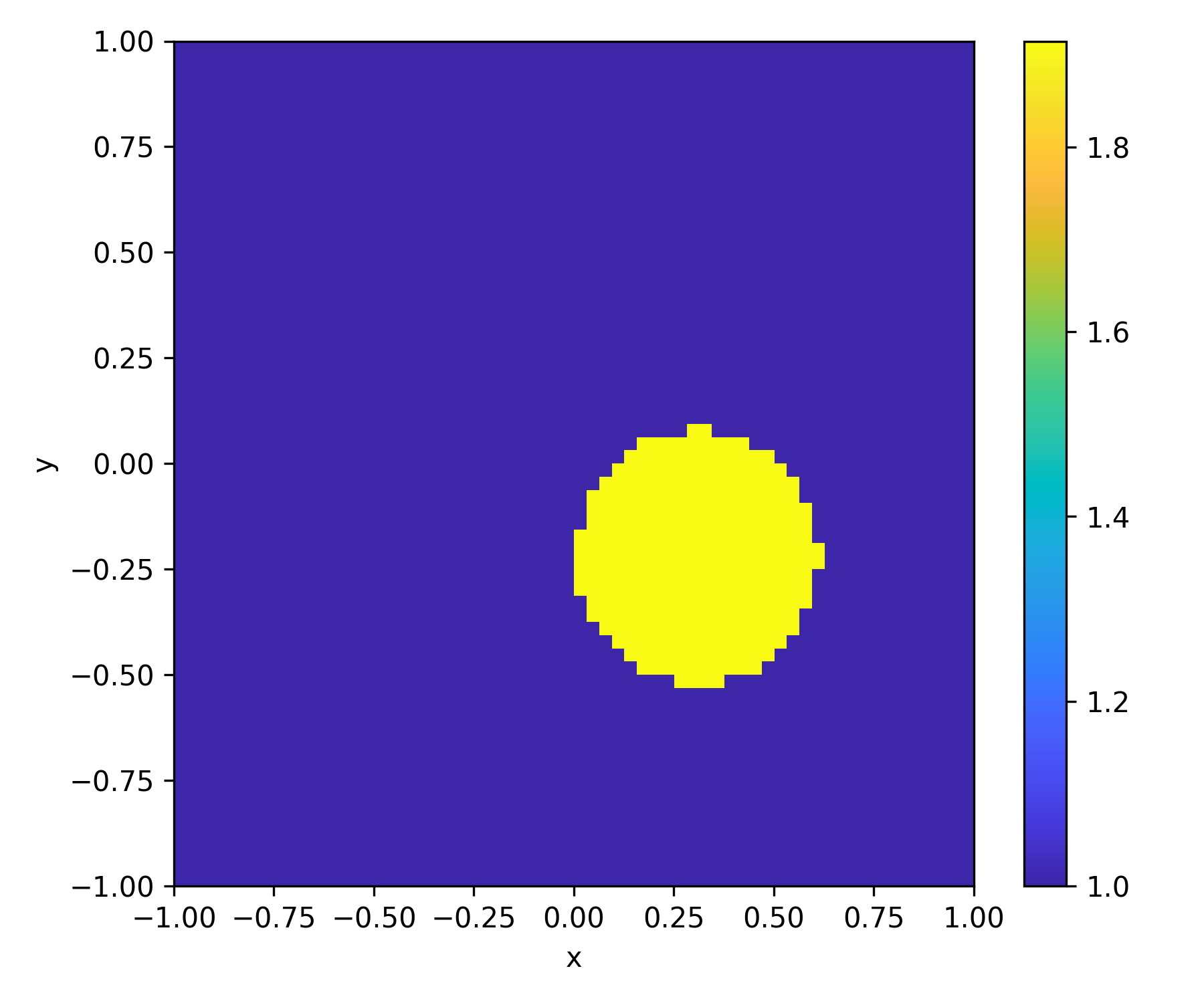}&
				\includegraphics[width=0.15\textwidth]{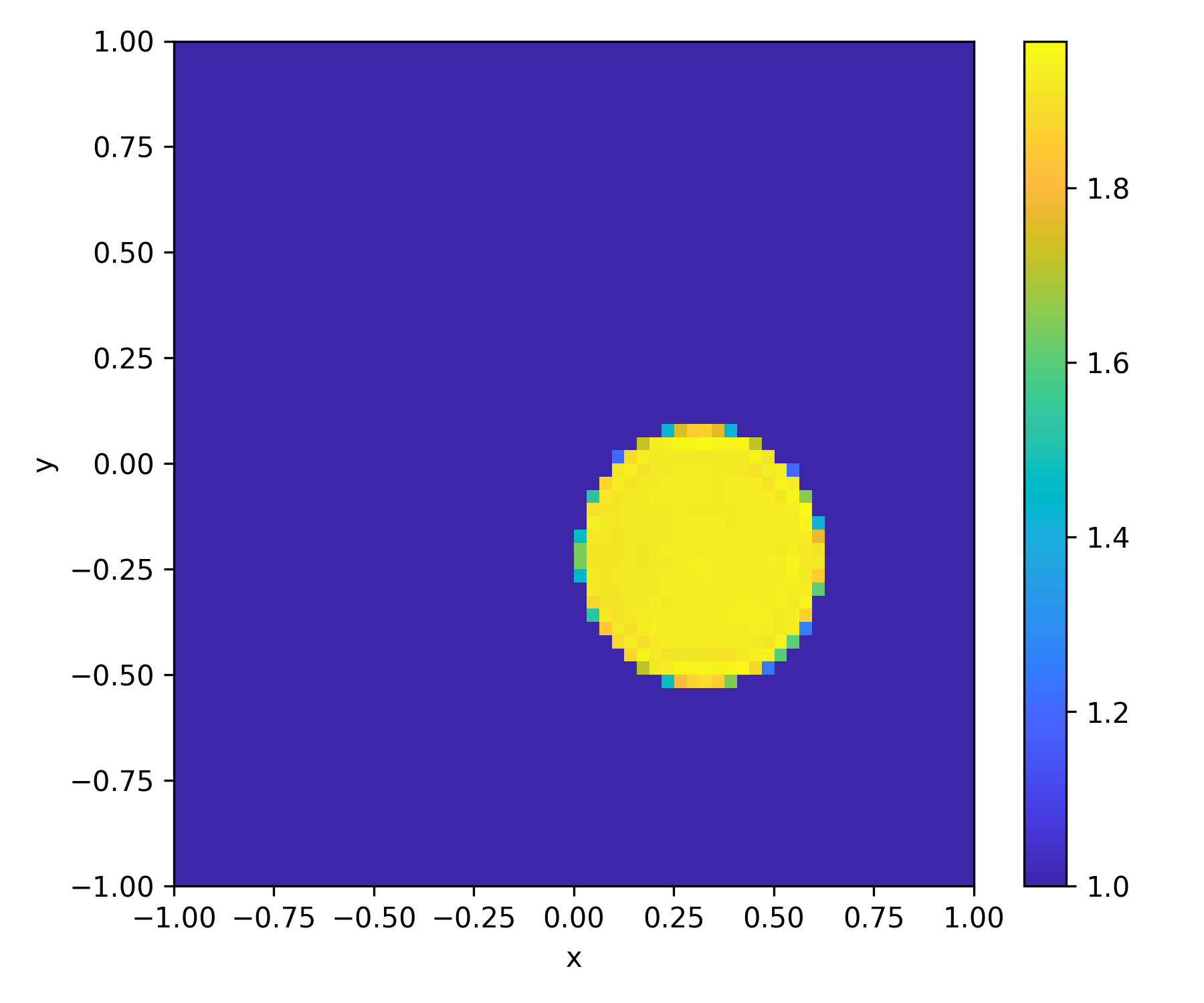}&
				\includegraphics[width=0.15\textwidth]{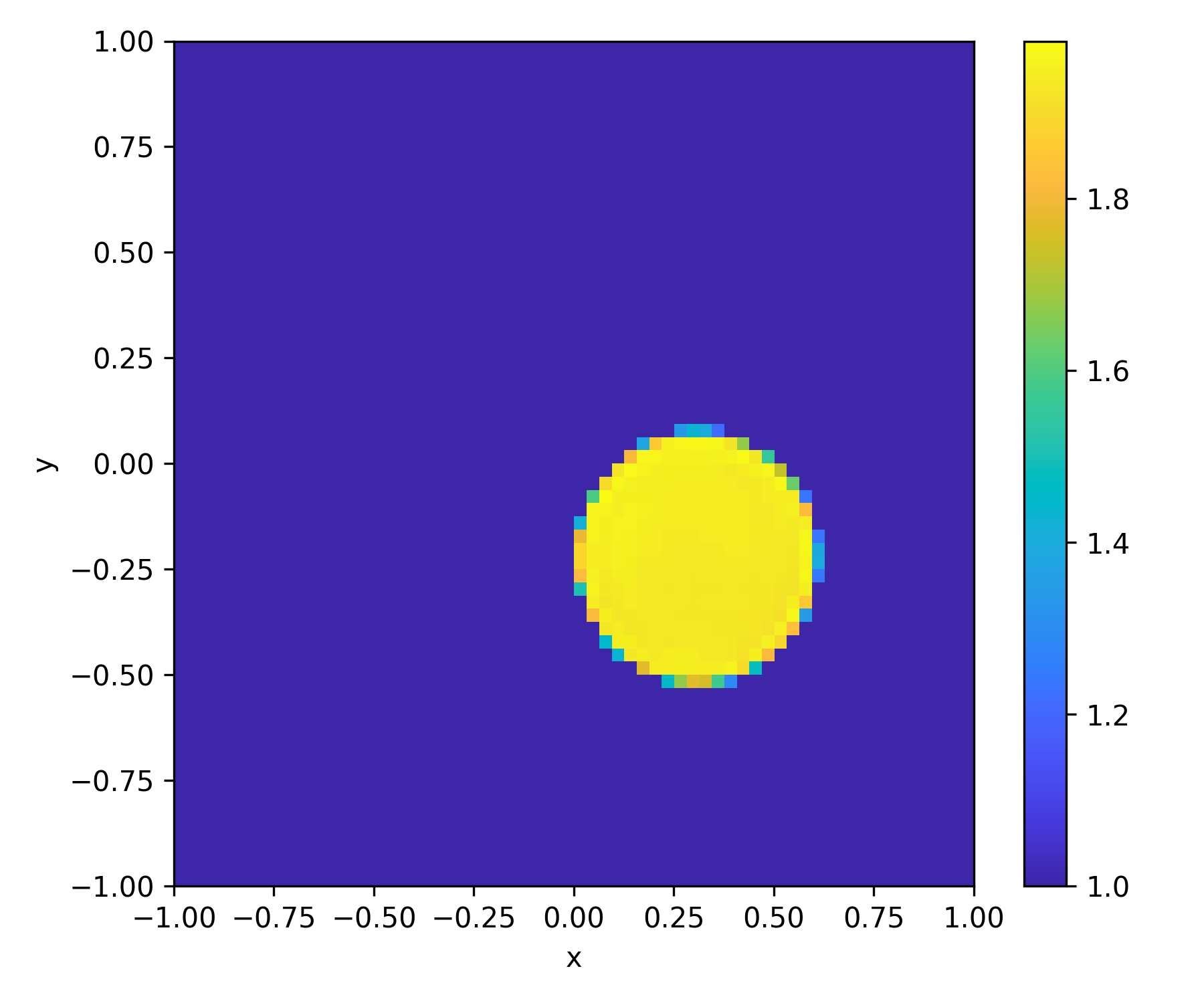}&
				\includegraphics[width=0.15\textwidth]{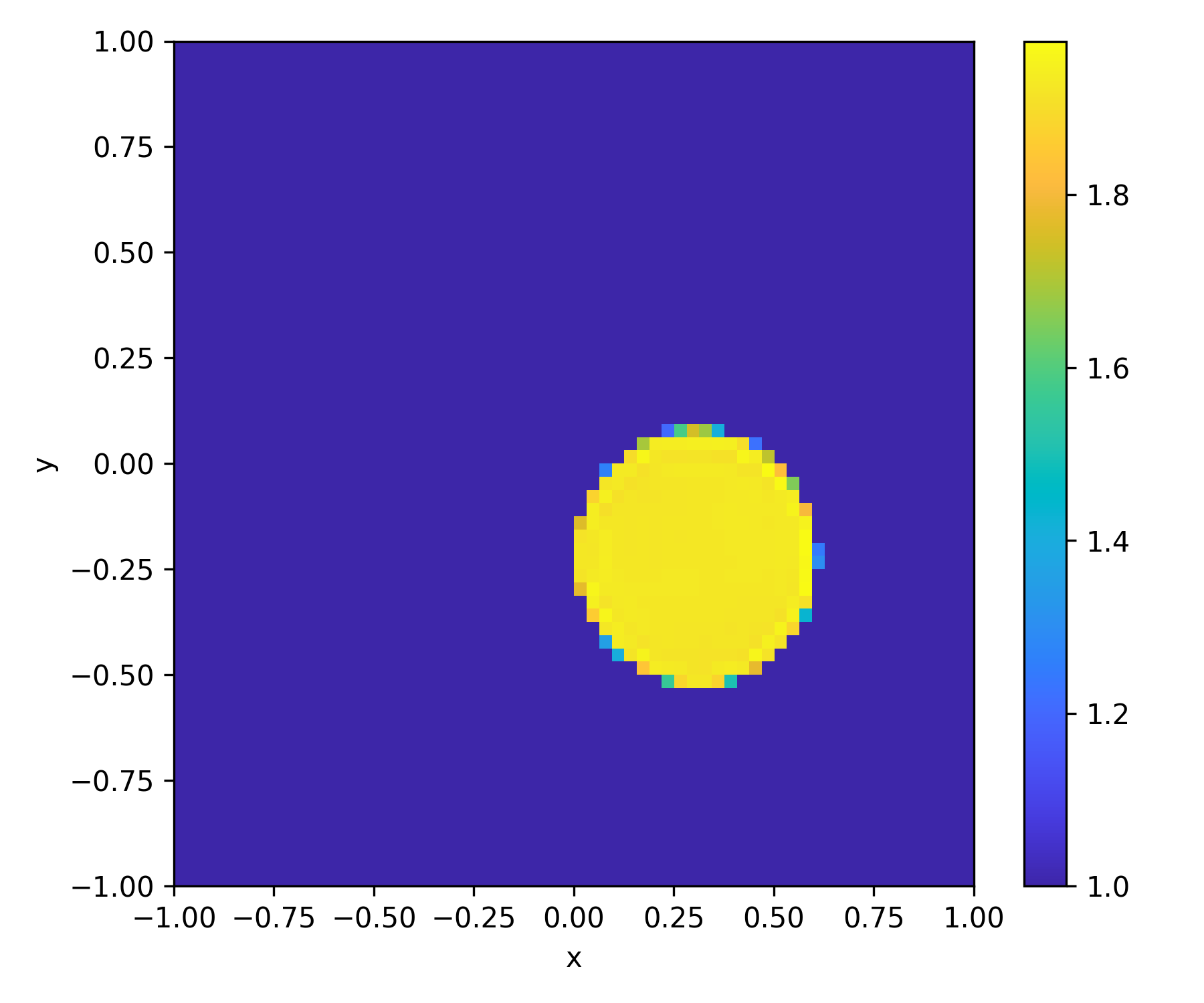}&
				\includegraphics[width=0.15\textwidth]{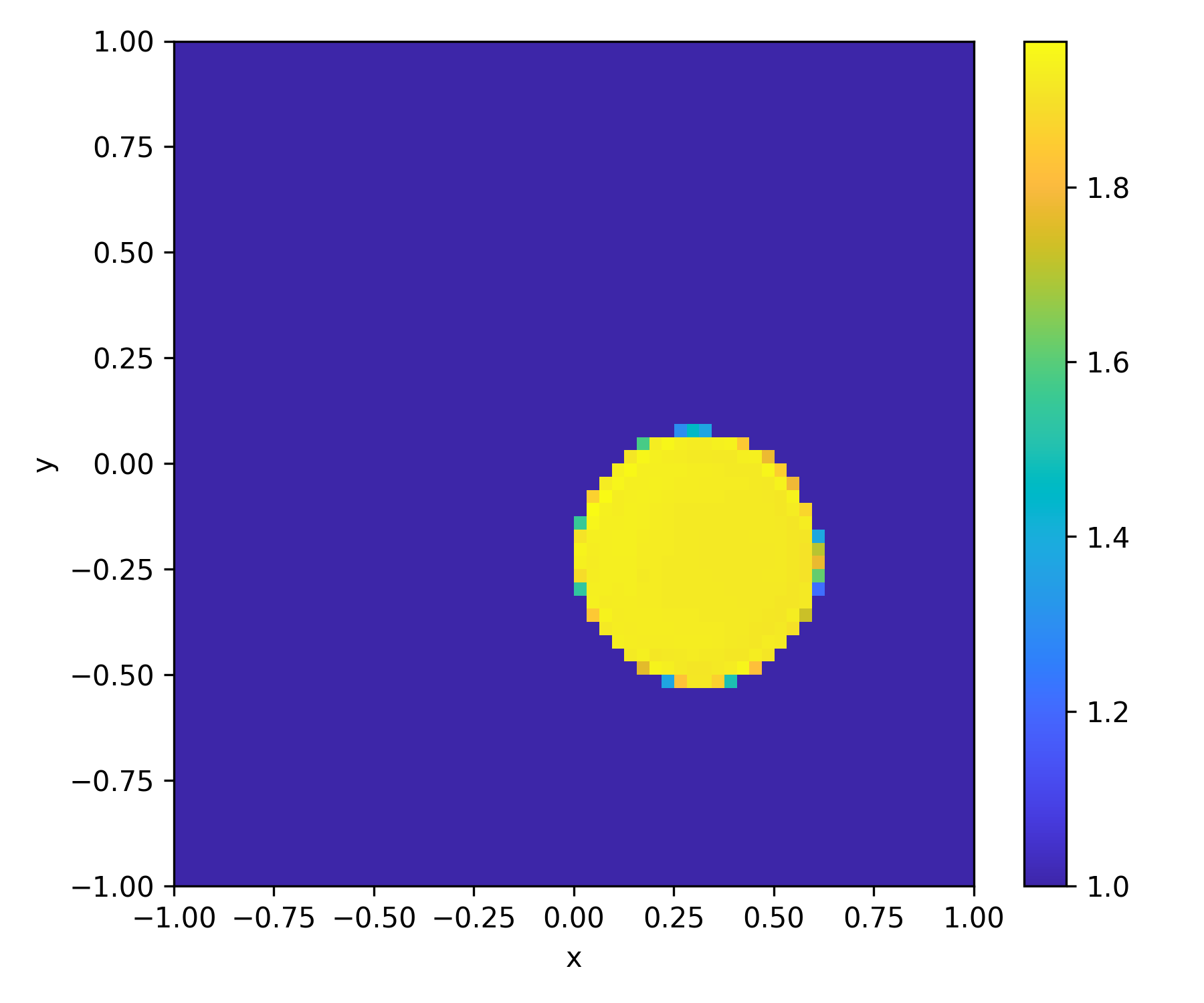}&
				\includegraphics[width=0.15\textwidth]{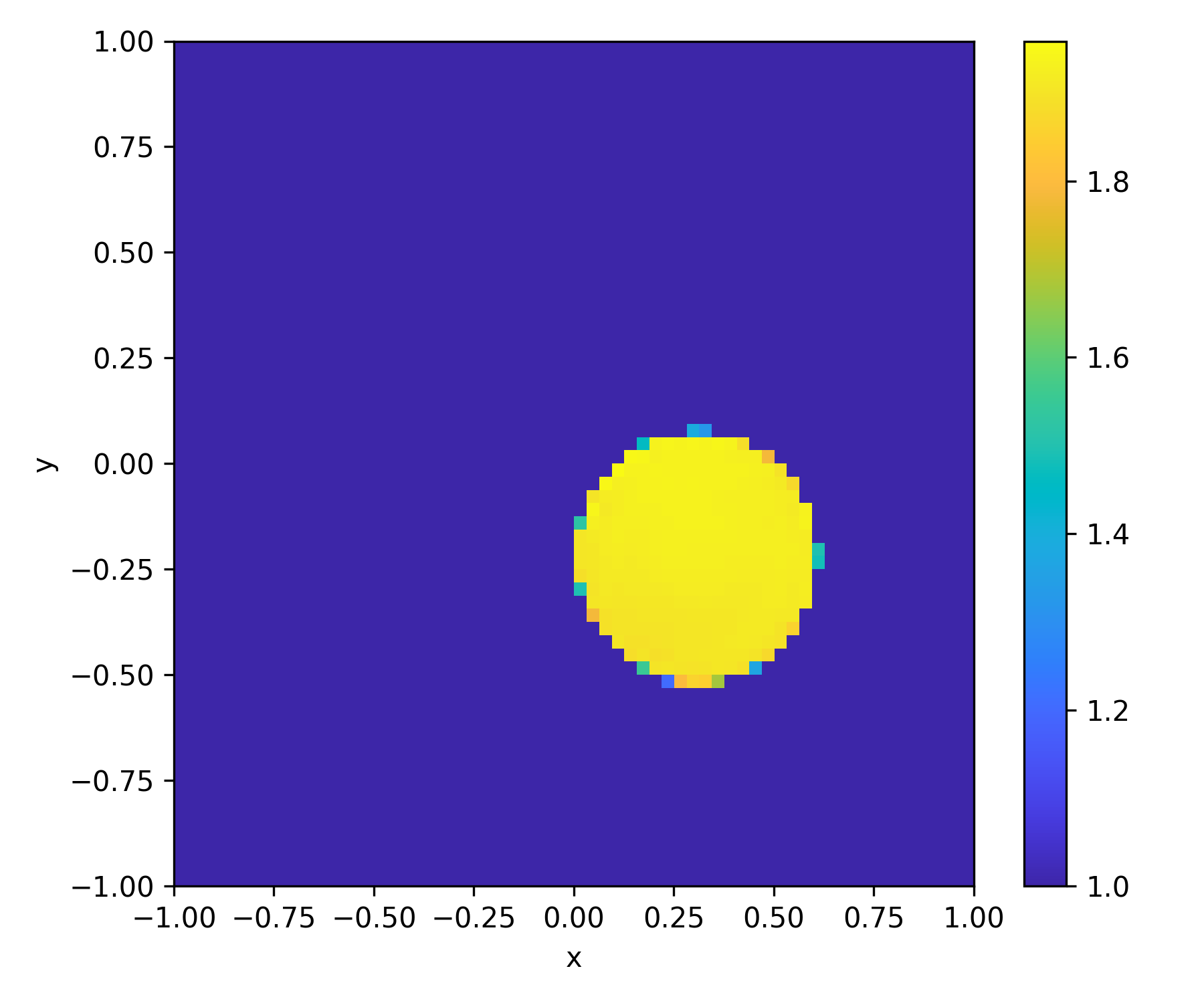}
				\\
				40\%& &
				\includegraphics[width=0.15\textwidth]{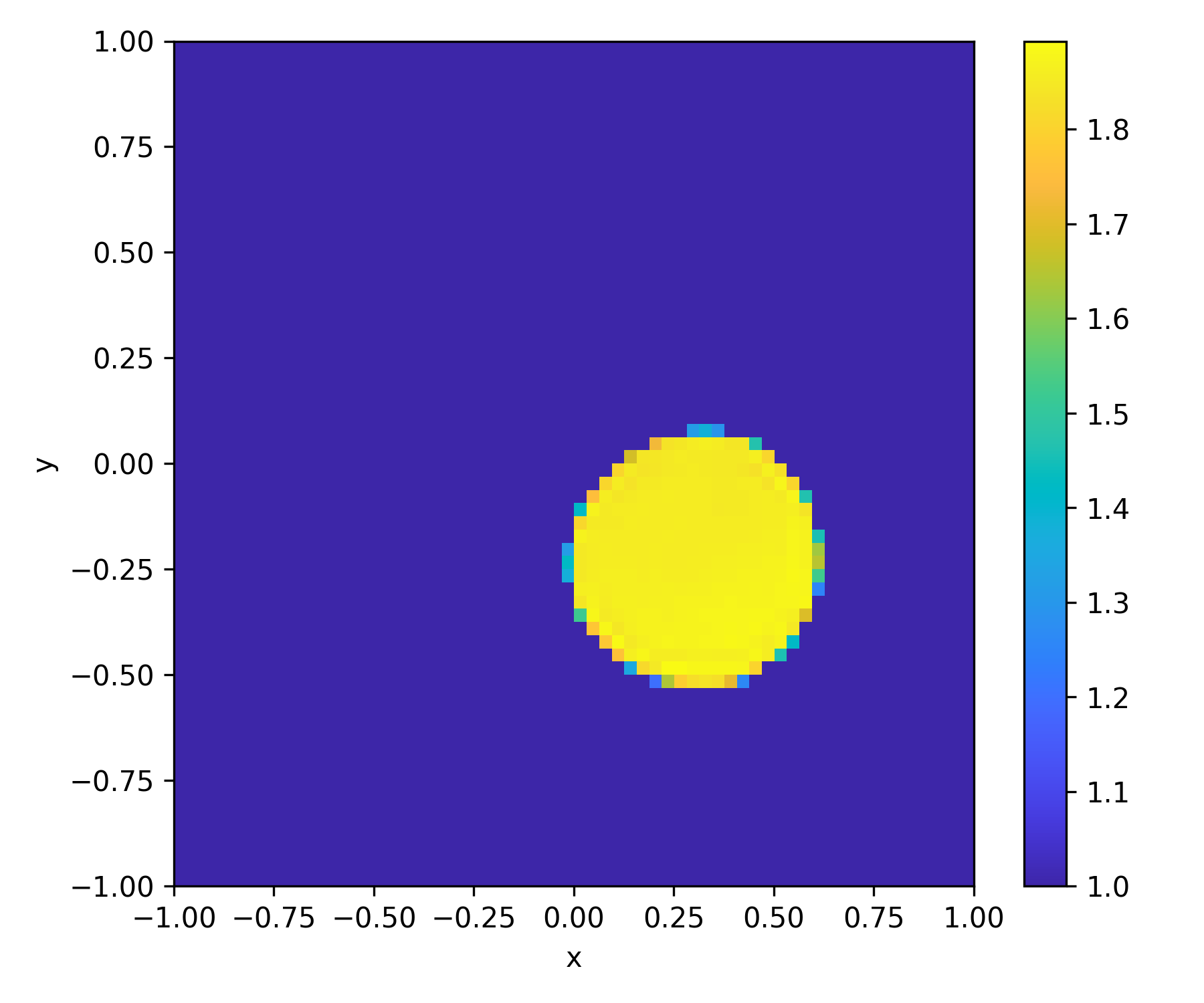}&
				\includegraphics[width=0.15\textwidth]{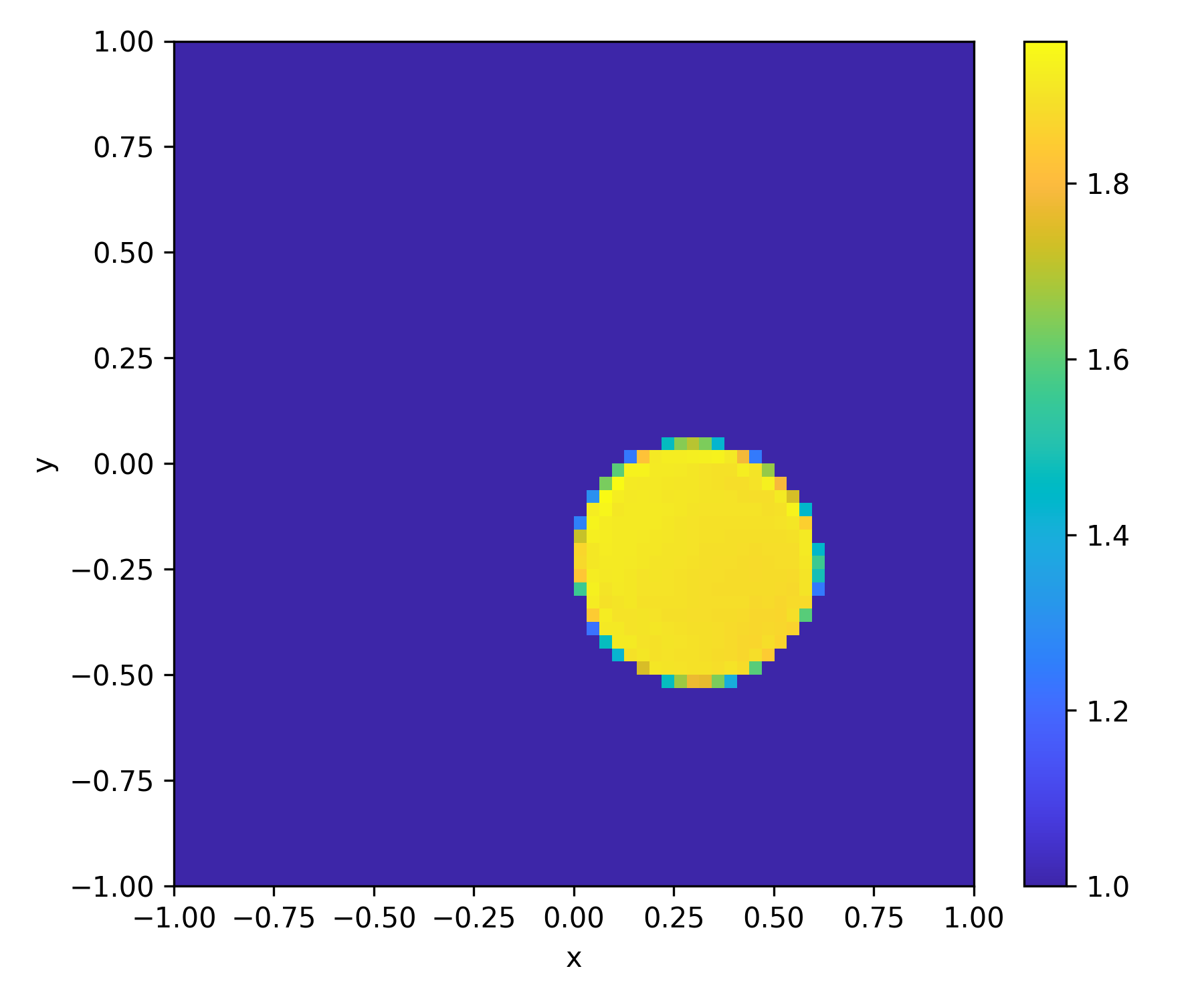}&
				\includegraphics[width=0.15\textwidth]{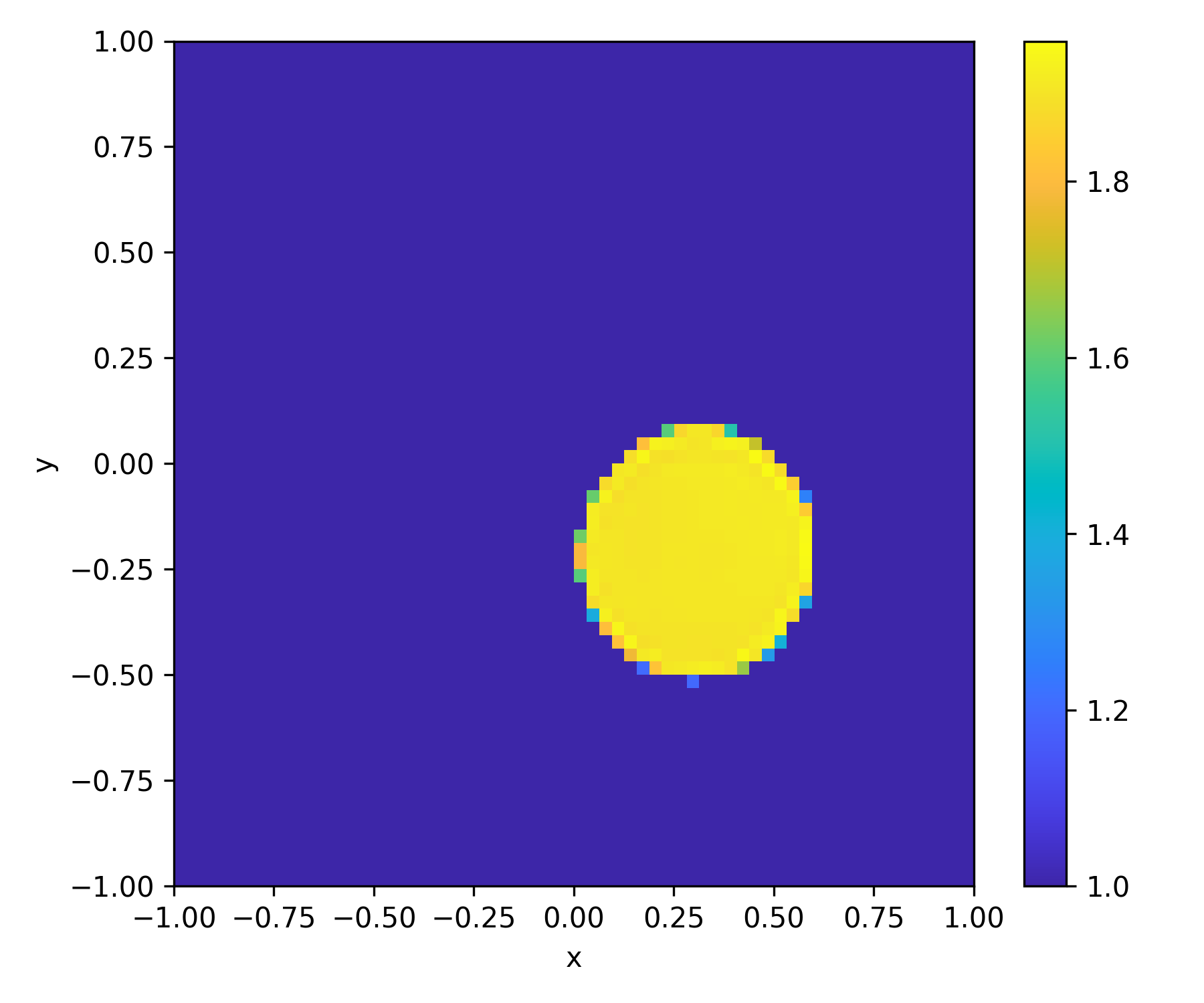}&
				\includegraphics[width=0.15\textwidth]{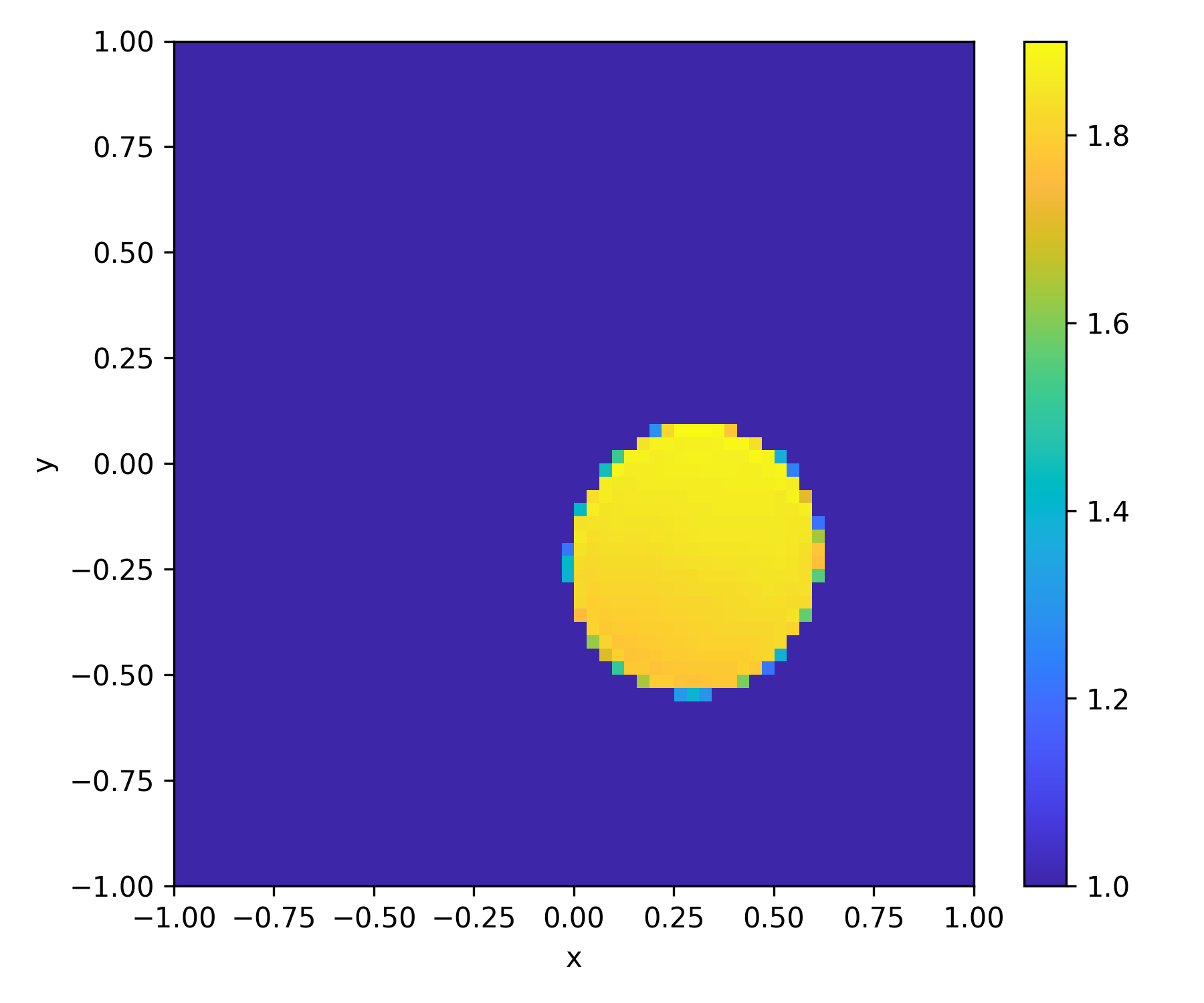}&
				\includegraphics[width=0.15\textwidth]{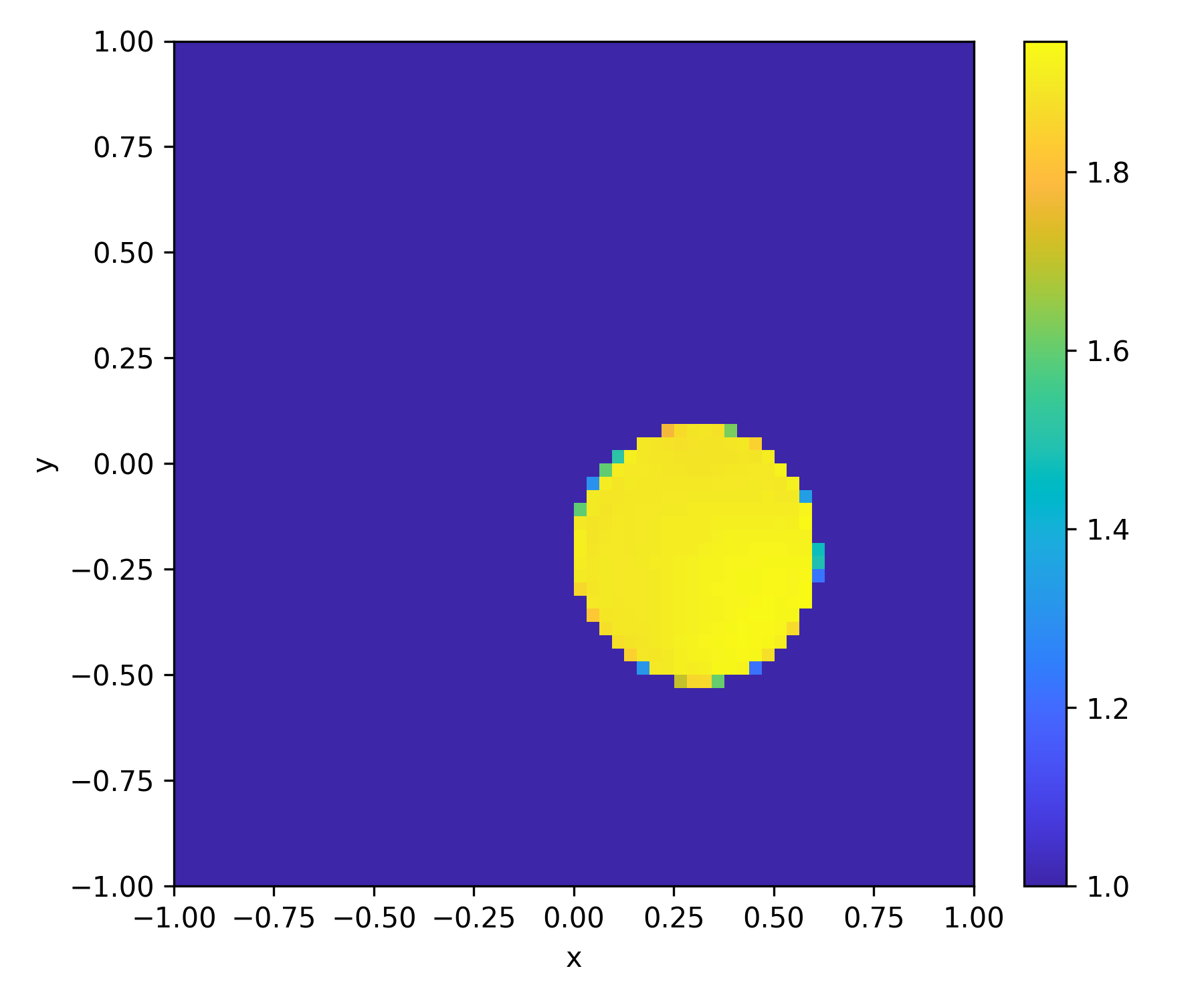}
				\\
				
				15\%&\SetCell[r=2]{c}\includegraphics[width=0.15\textwidth]{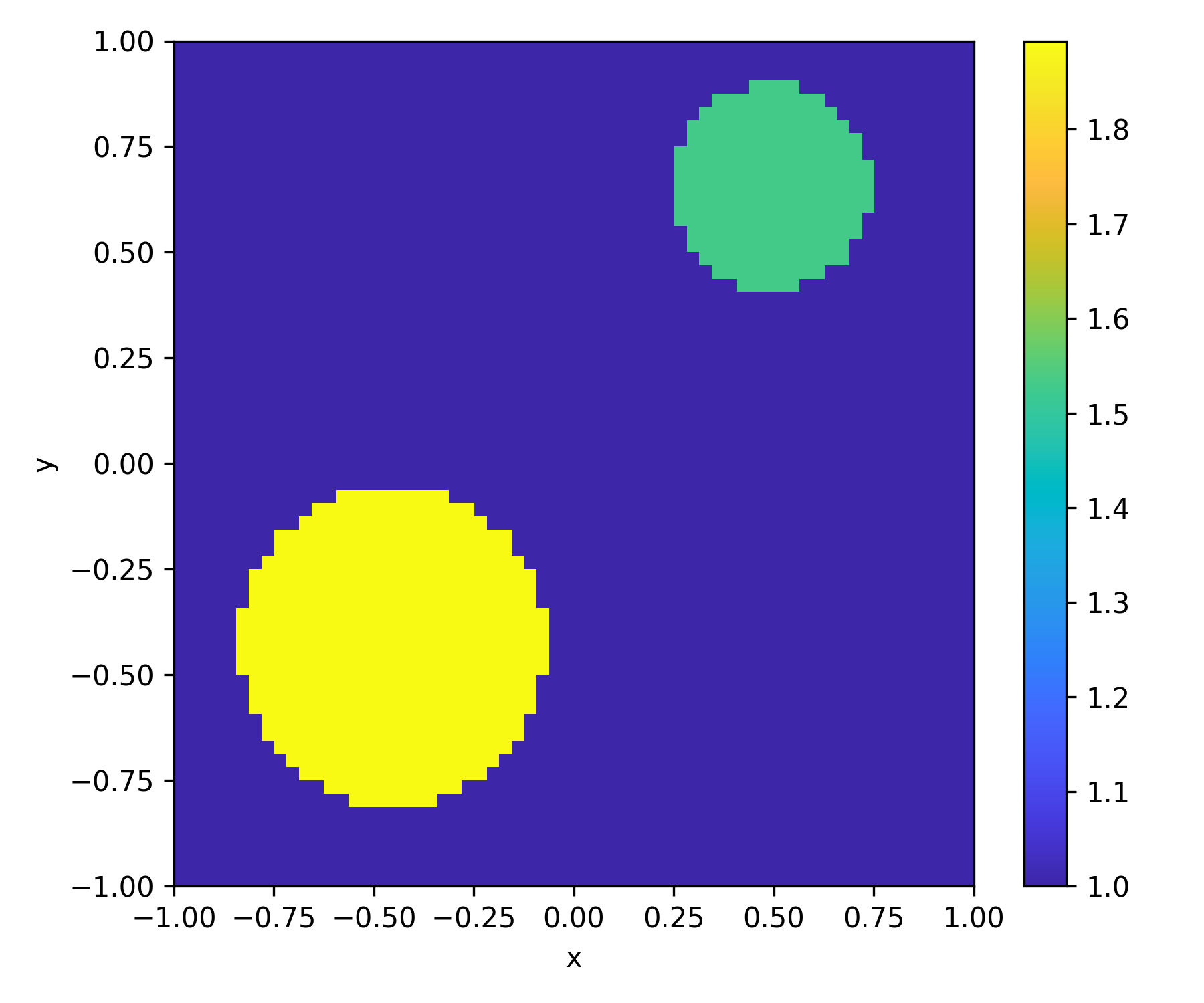} &
				\includegraphics[width=0.15\textwidth]{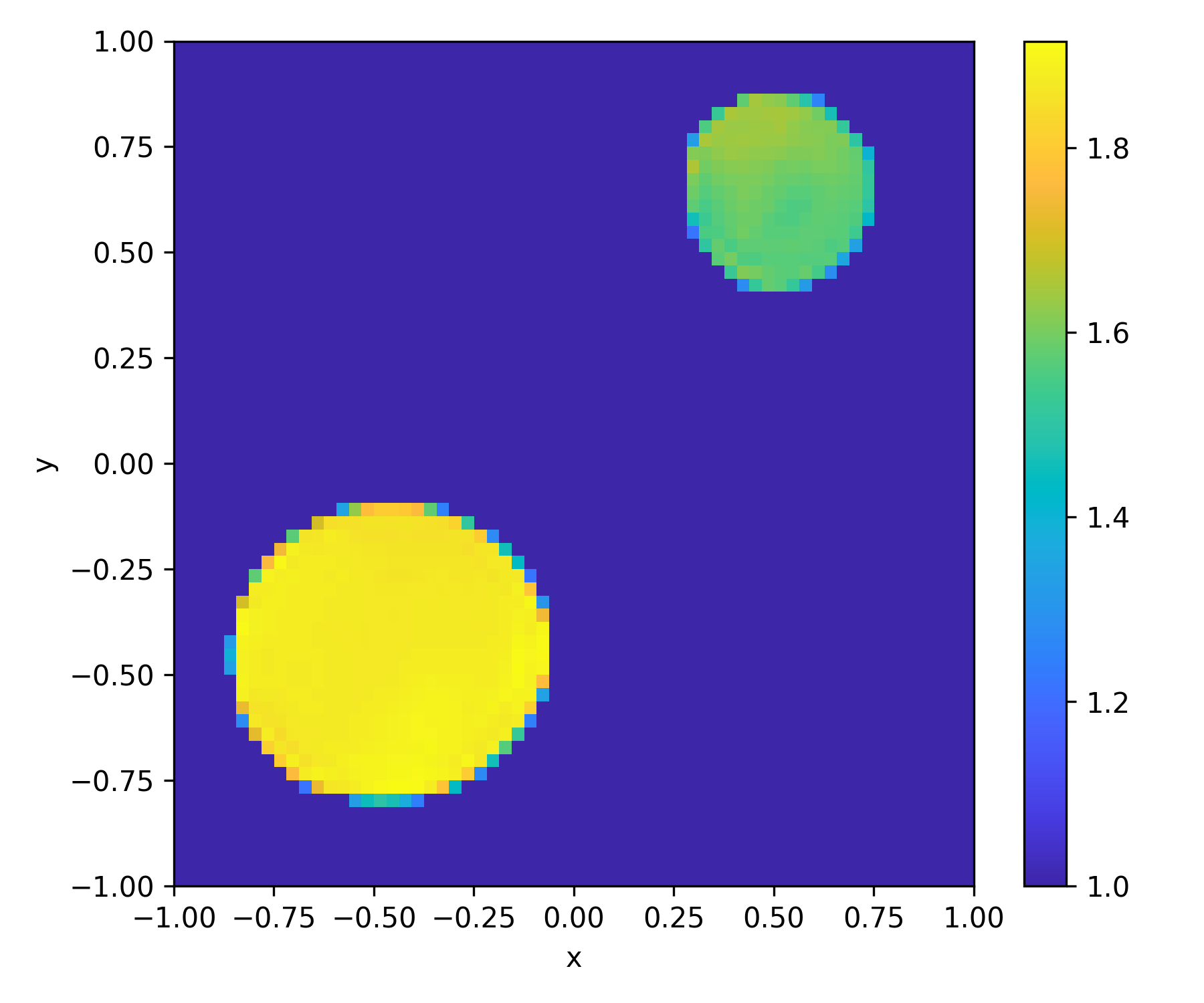}&
				\includegraphics[width=0.15\textwidth]{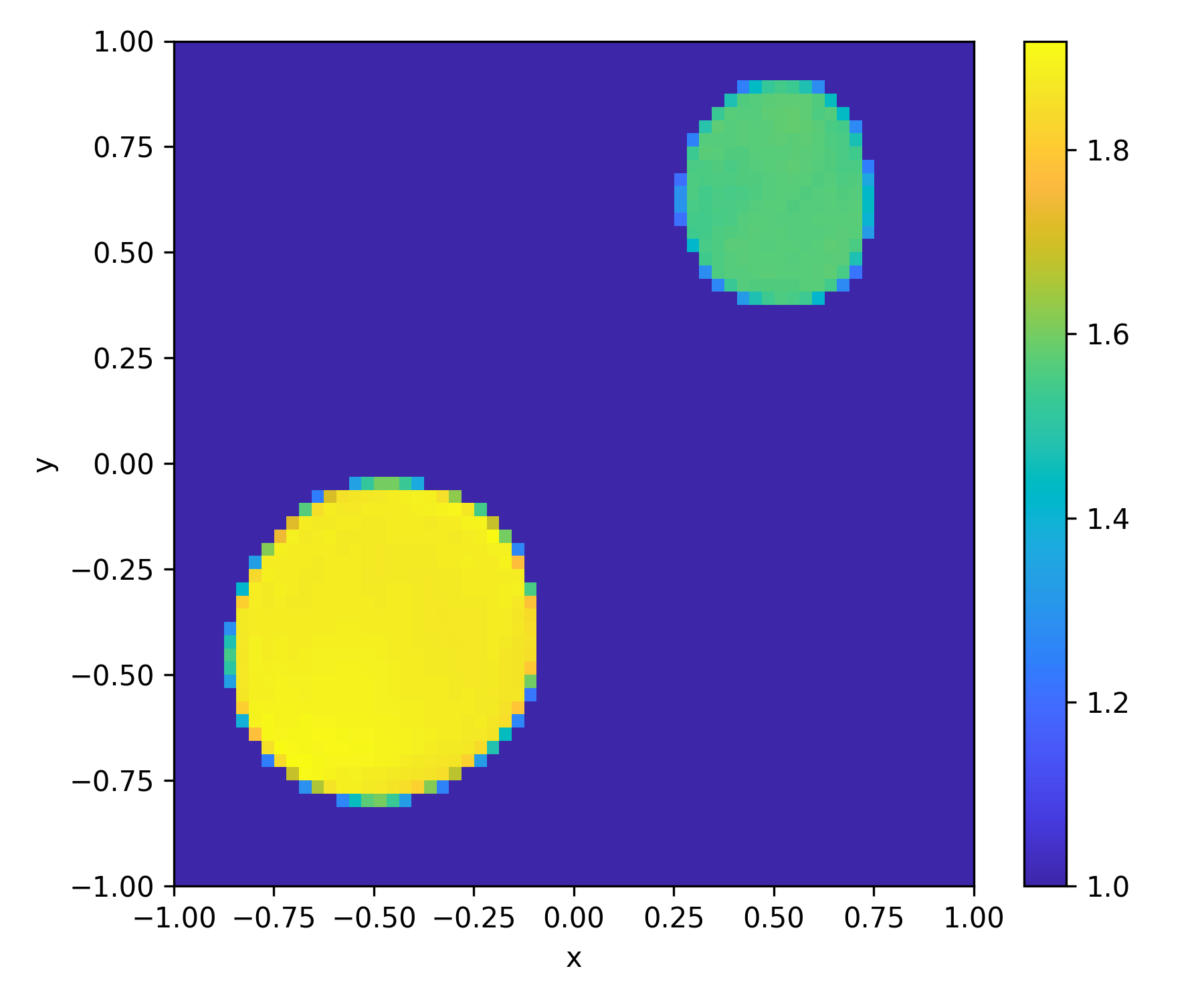}&
				\includegraphics[width=0.15\textwidth]{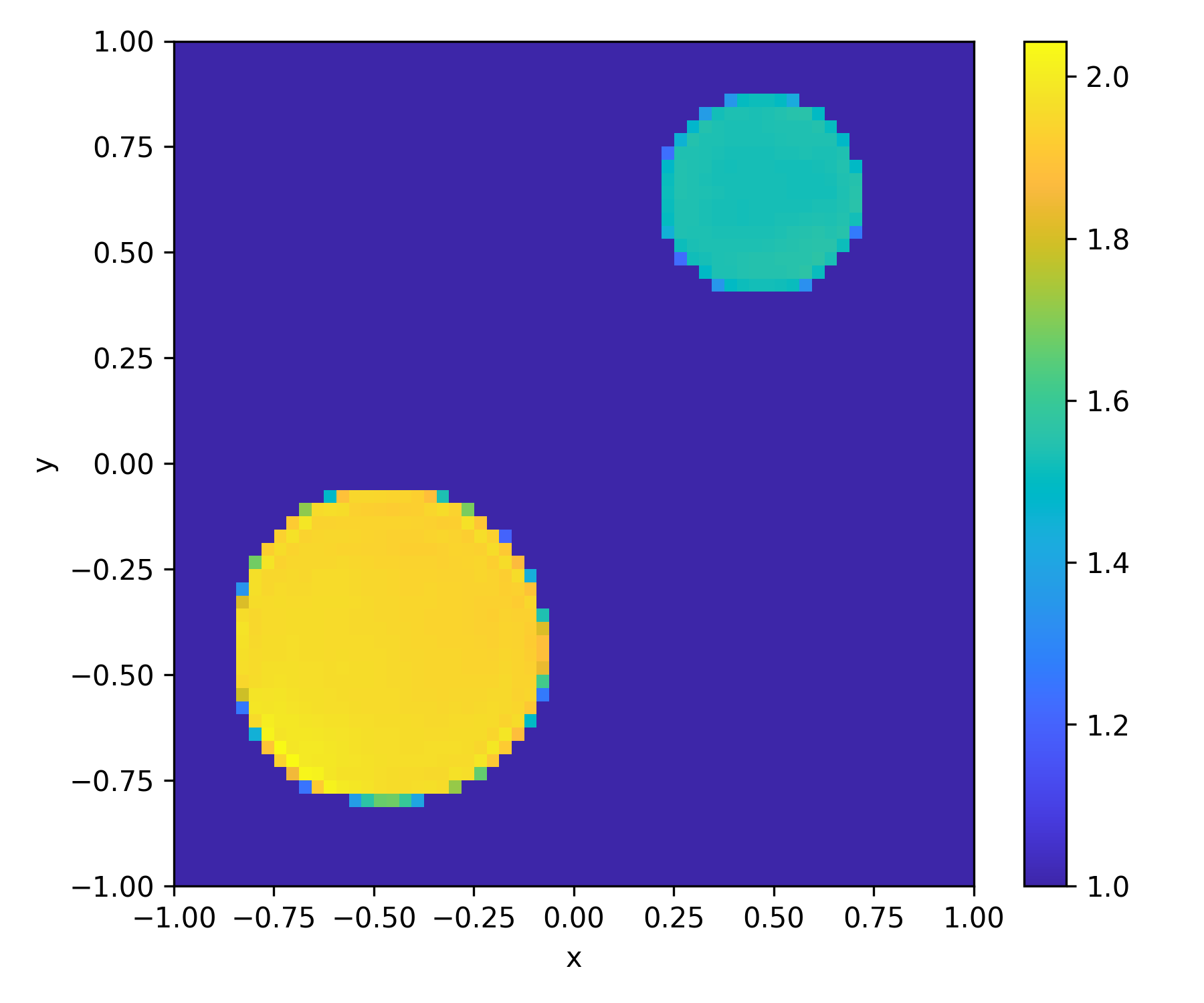}&
				\includegraphics[width=0.15\textwidth]{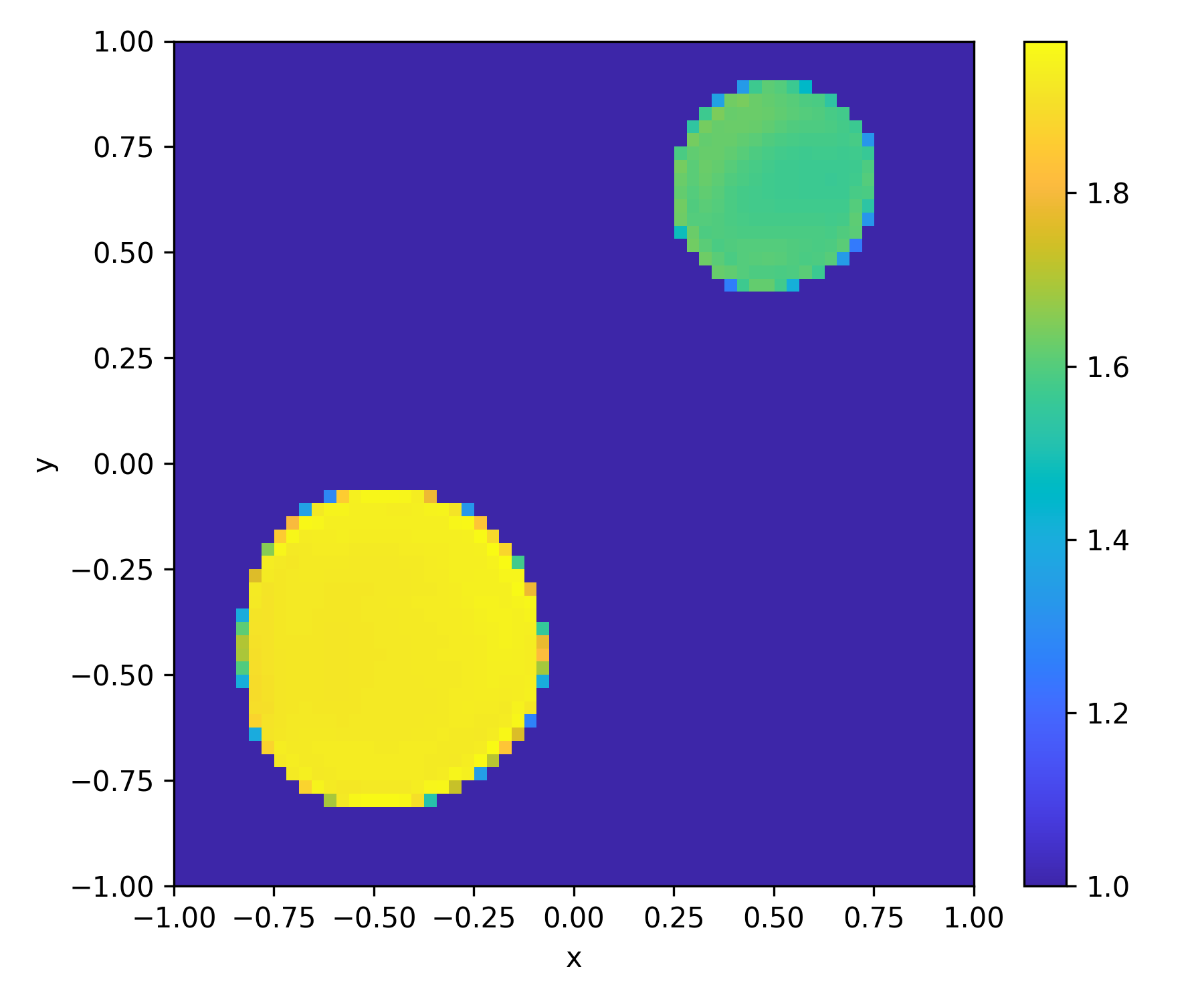}&
				\includegraphics[width=0.15\textwidth]{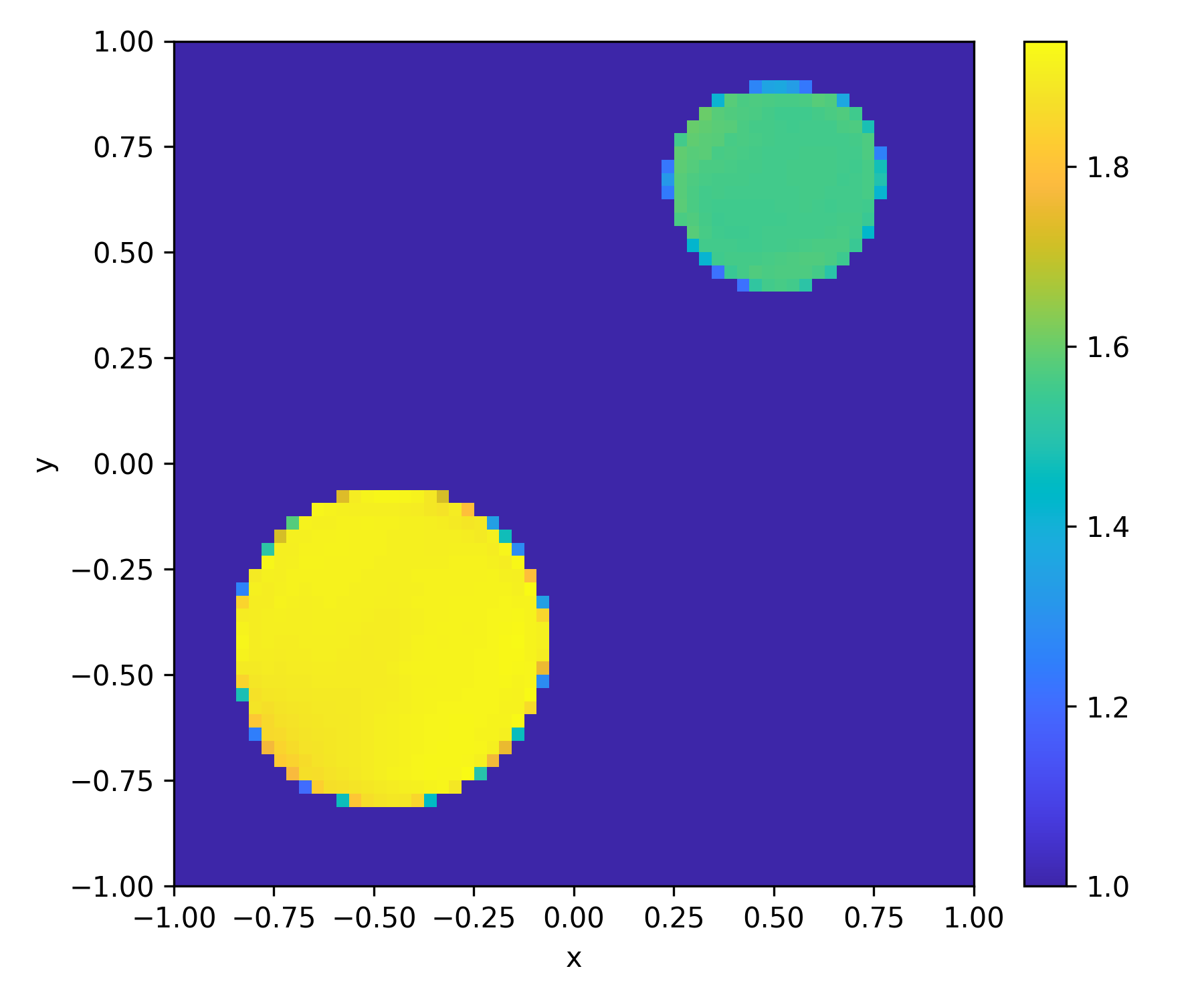}
				\\
				40\%& &
				\includegraphics[width=0.15\textwidth]{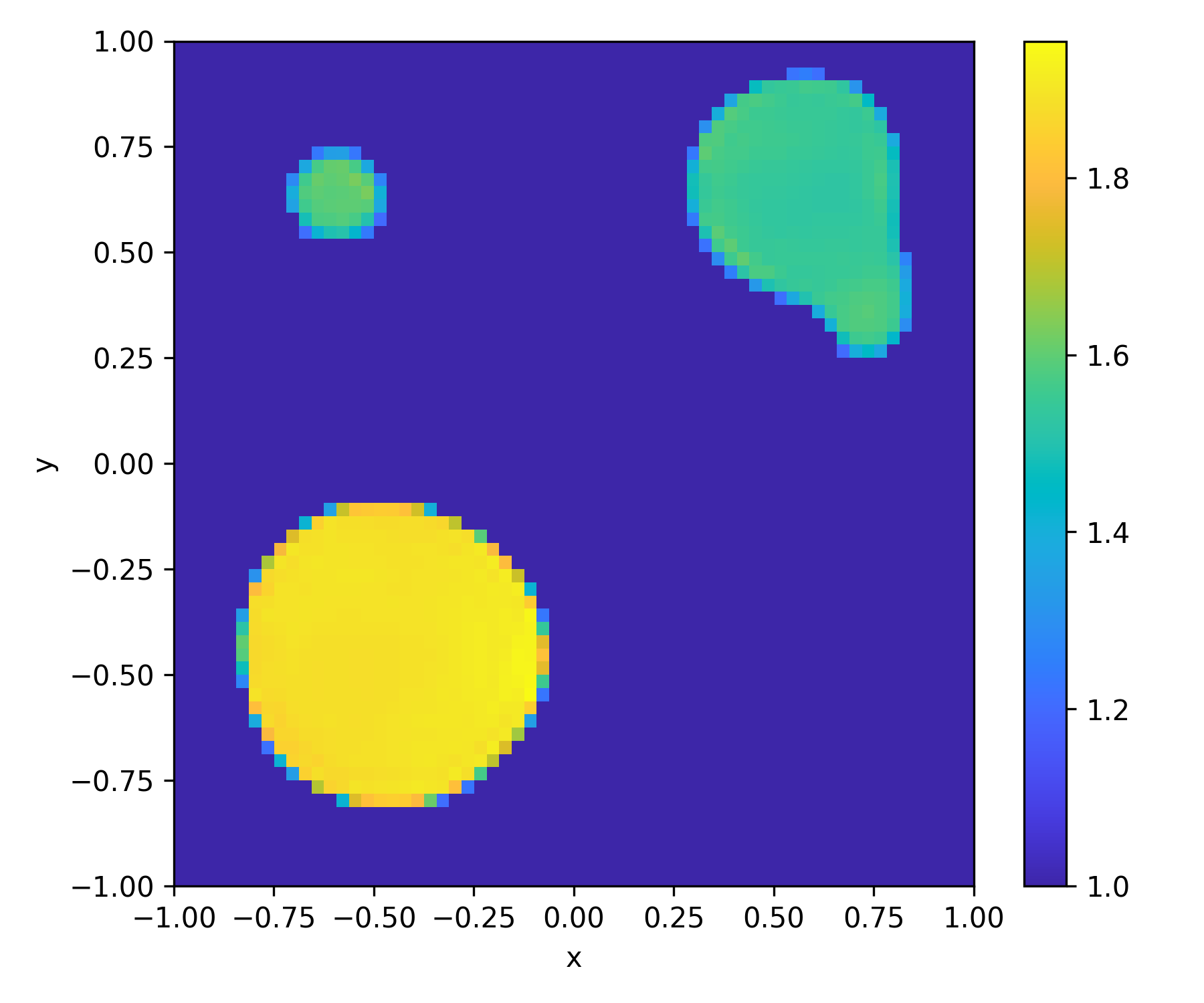}&
				\includegraphics[width=0.15\textwidth]{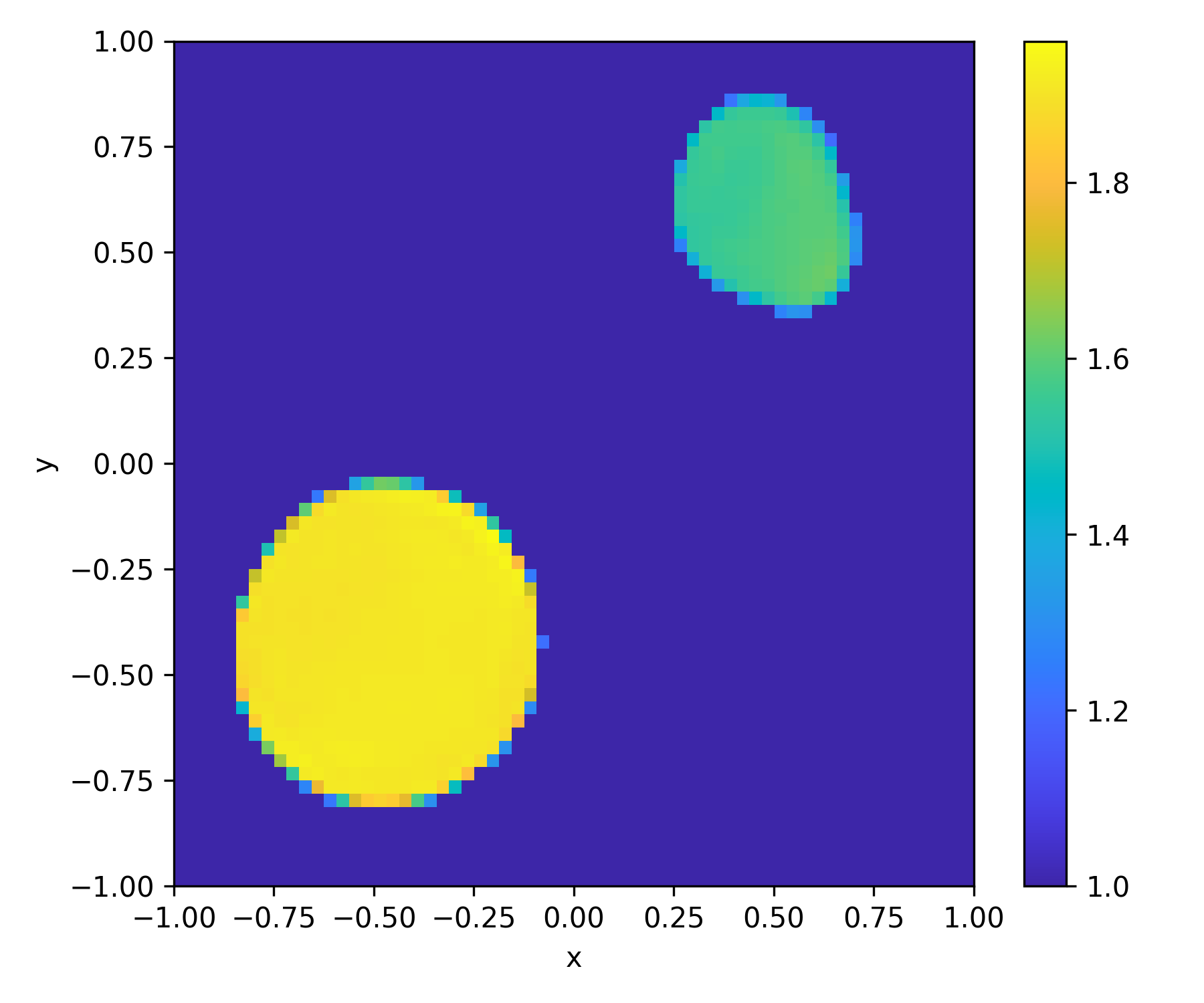}&
				\includegraphics[width=0.15\textwidth]{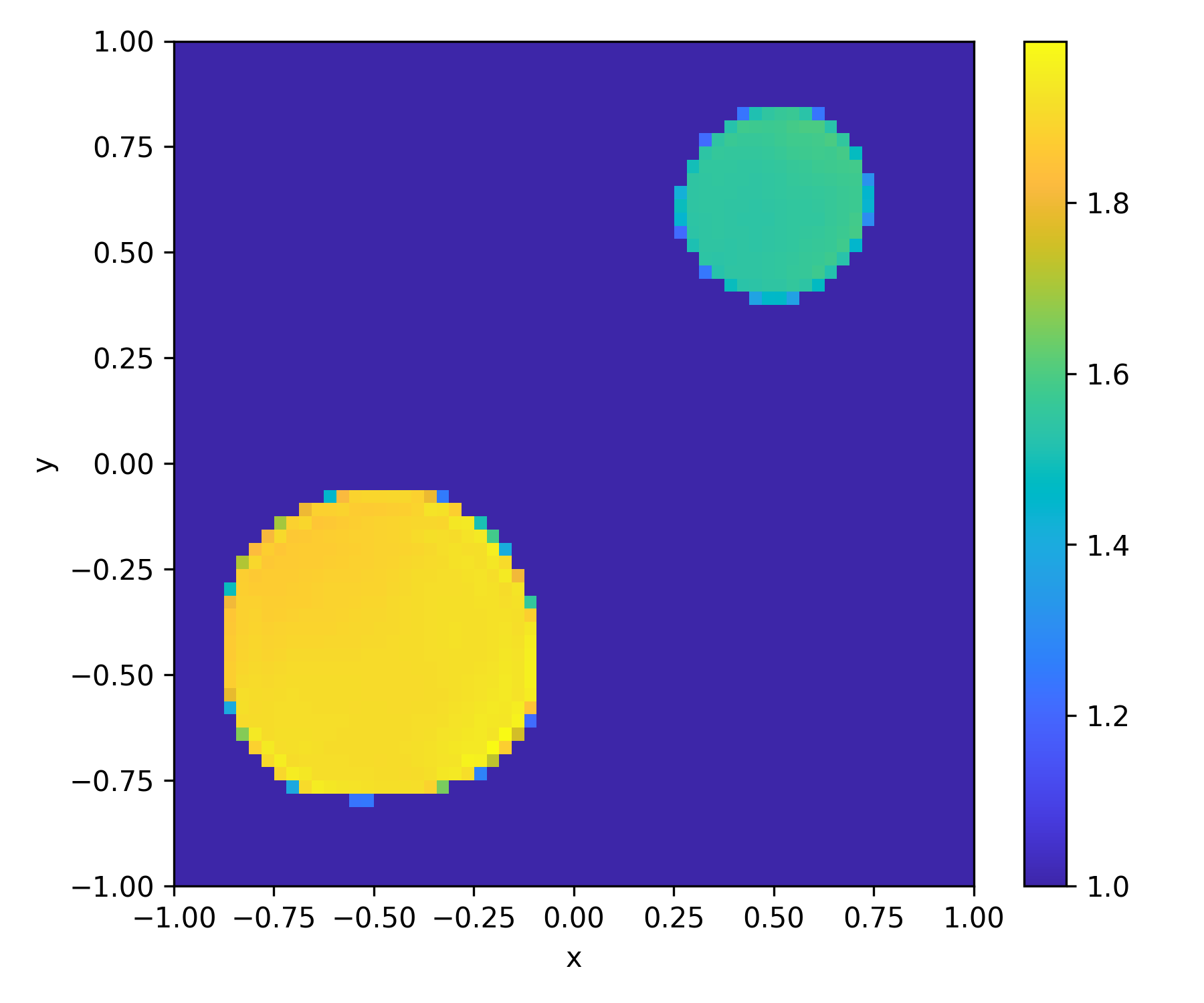}&
				\includegraphics[width=0.15\textwidth]{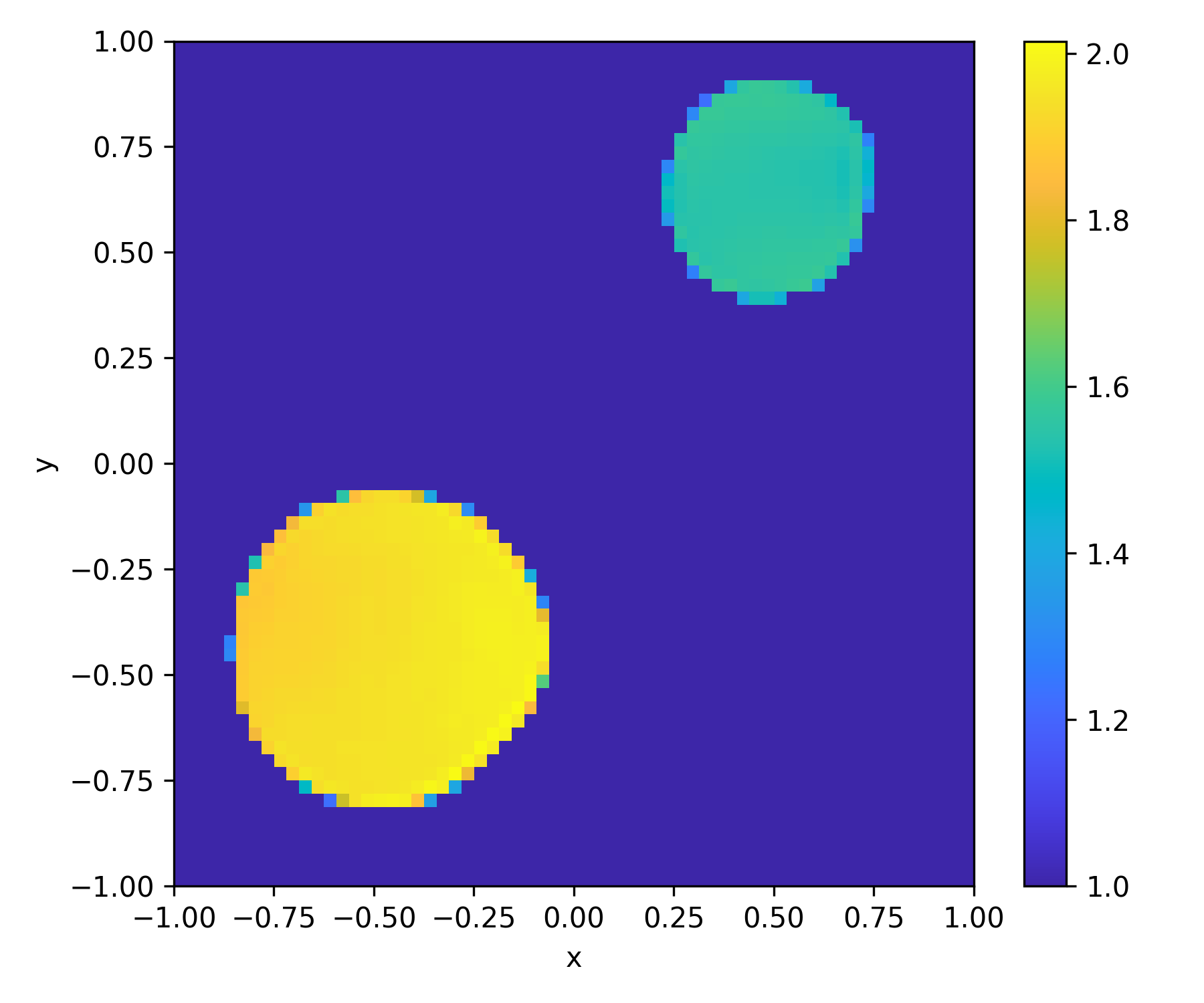}&
				\includegraphics[width=0.15\textwidth]{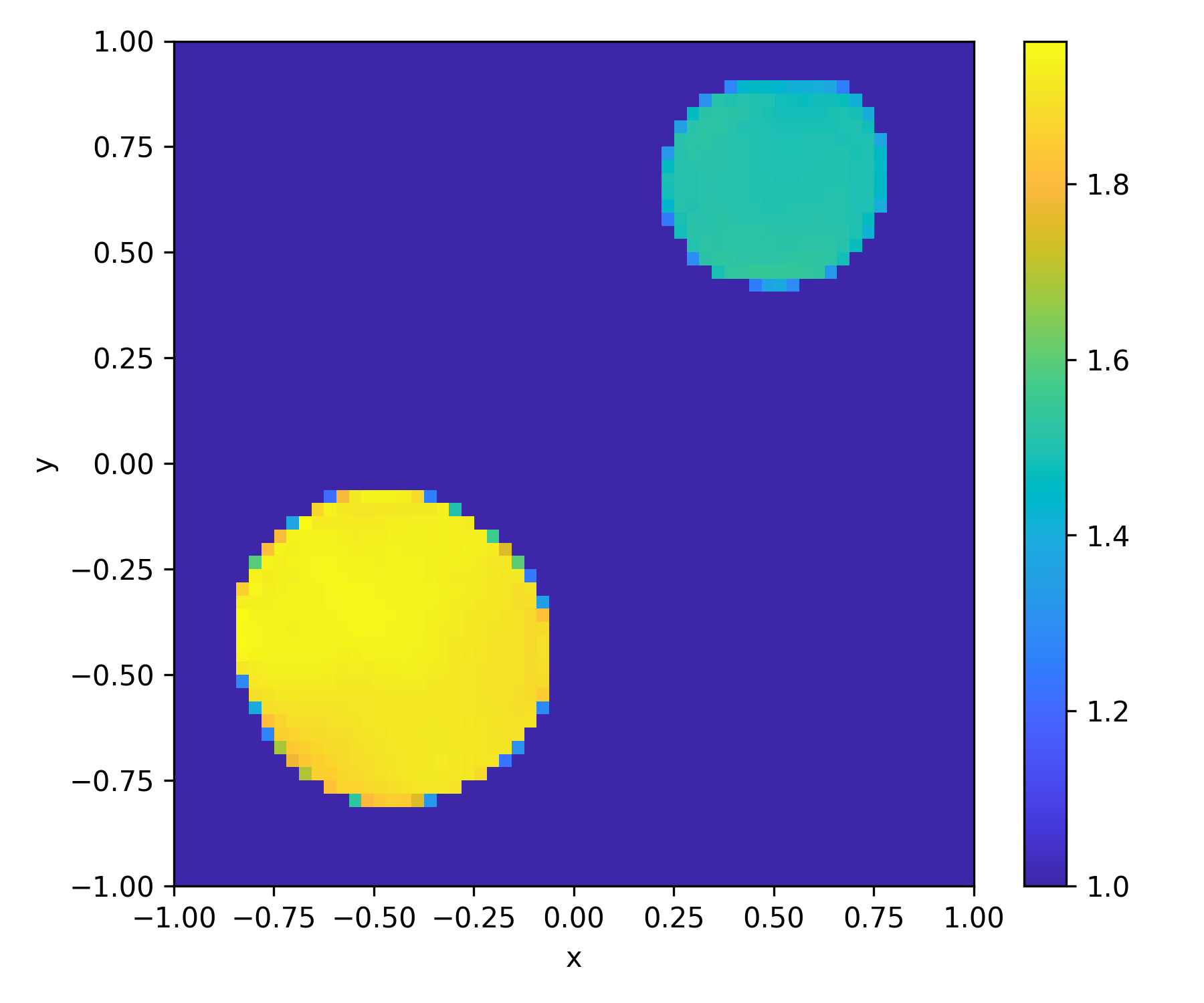}
				\\
				
				15\%&\SetCell[r=2]{c}\includegraphics[width=0.15\textwidth]{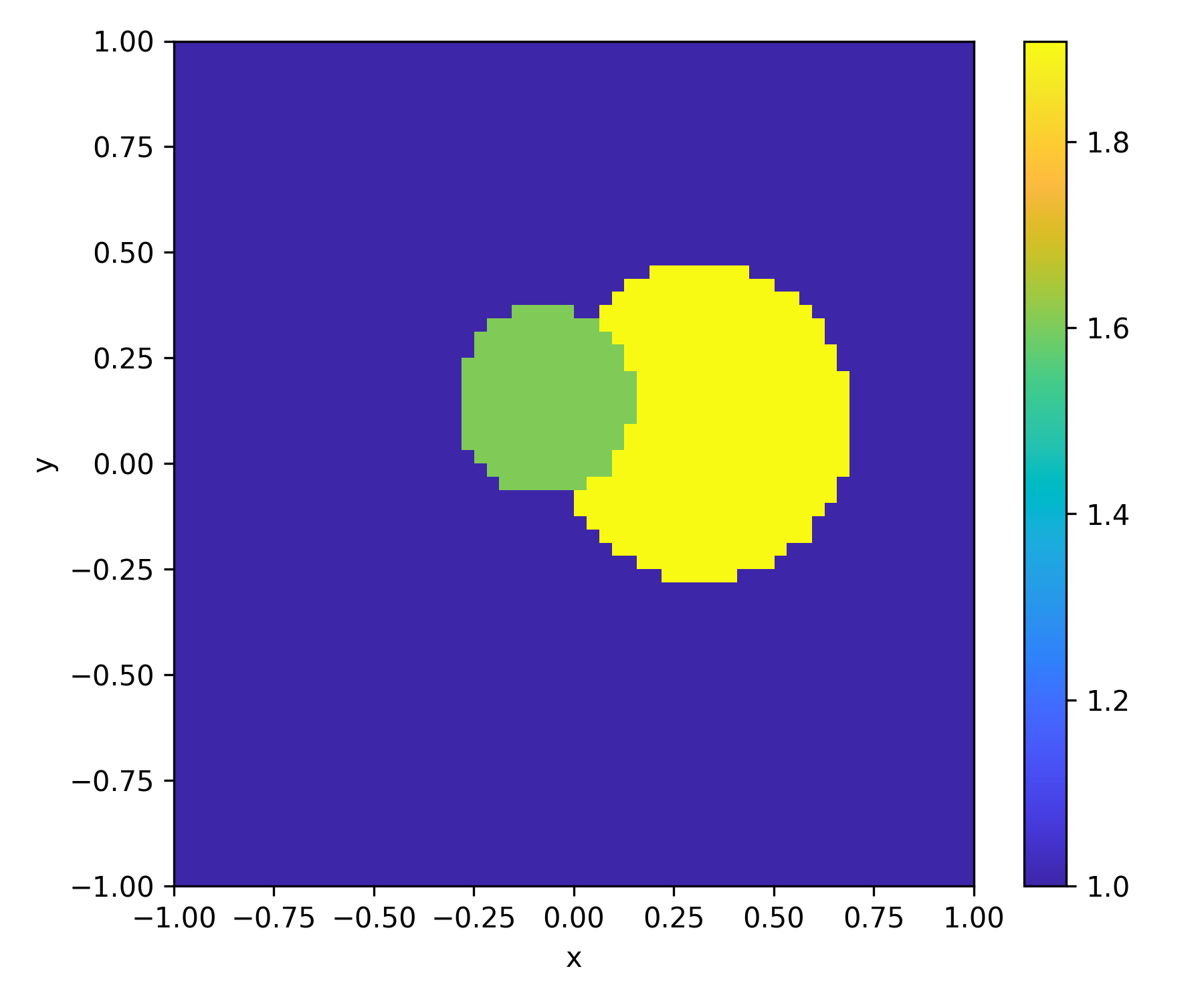}&
				\includegraphics[width=0.15\textwidth]{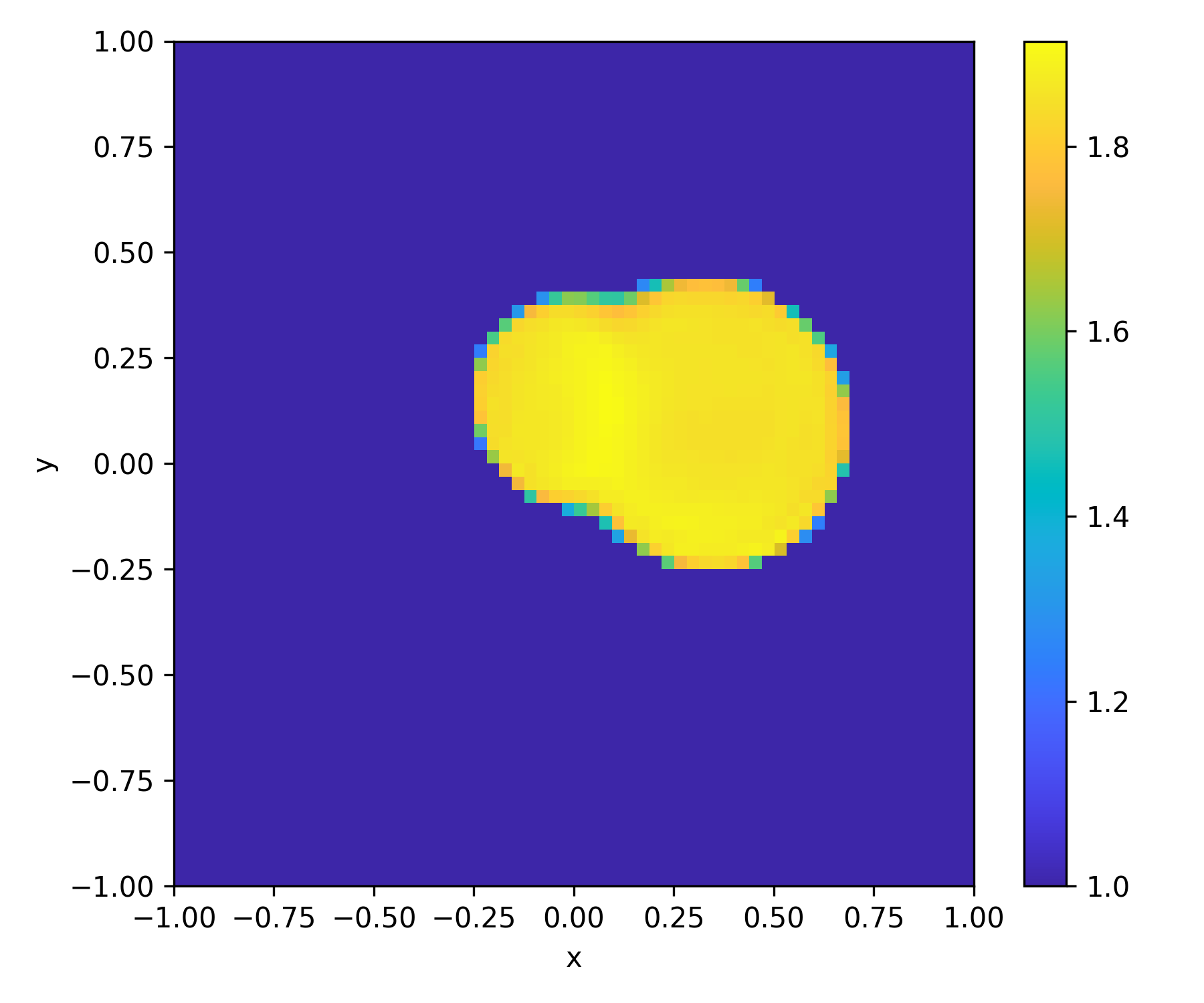}&
				\includegraphics[width=0.15\textwidth]{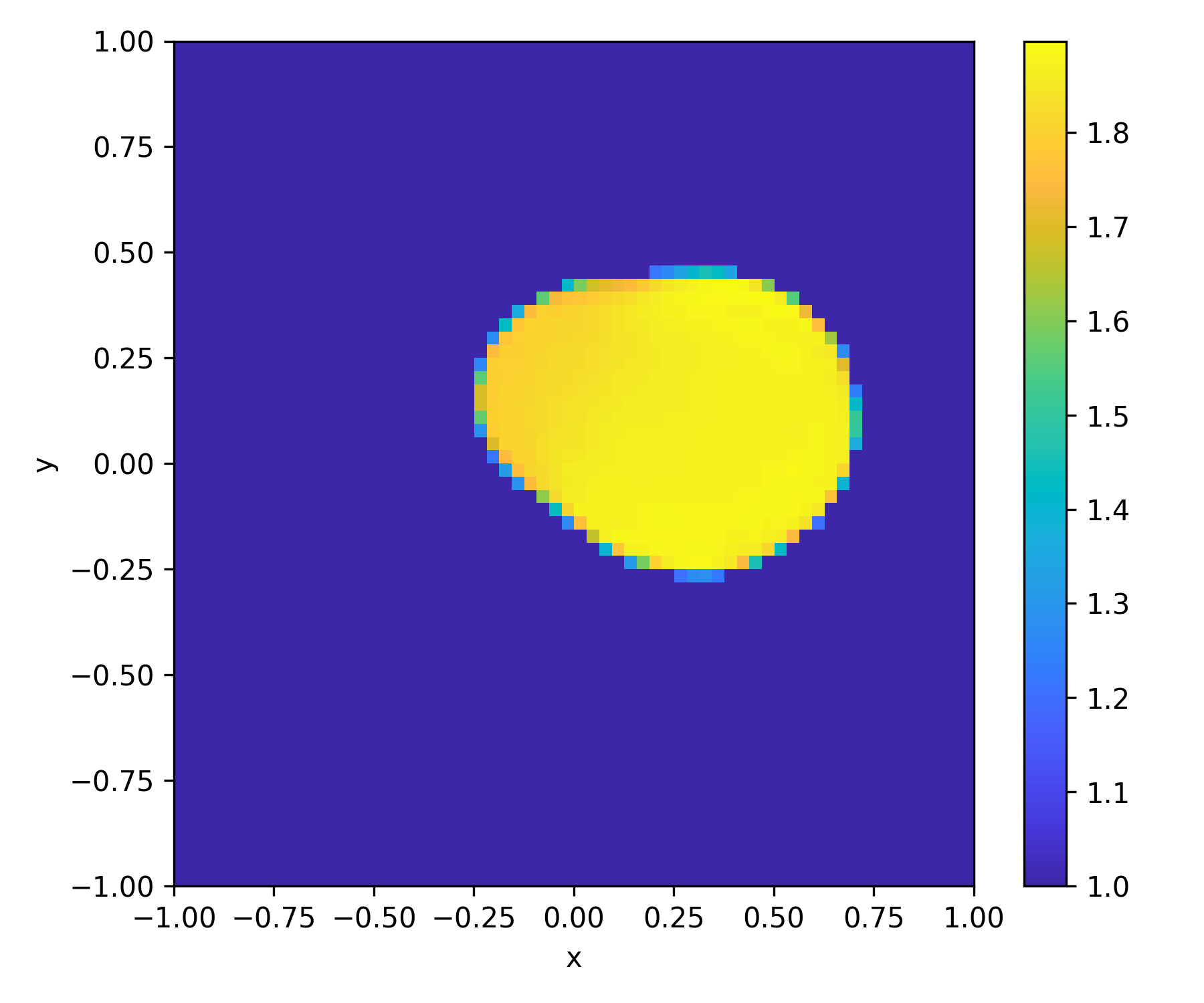}&
				\includegraphics[width=0.15\textwidth]{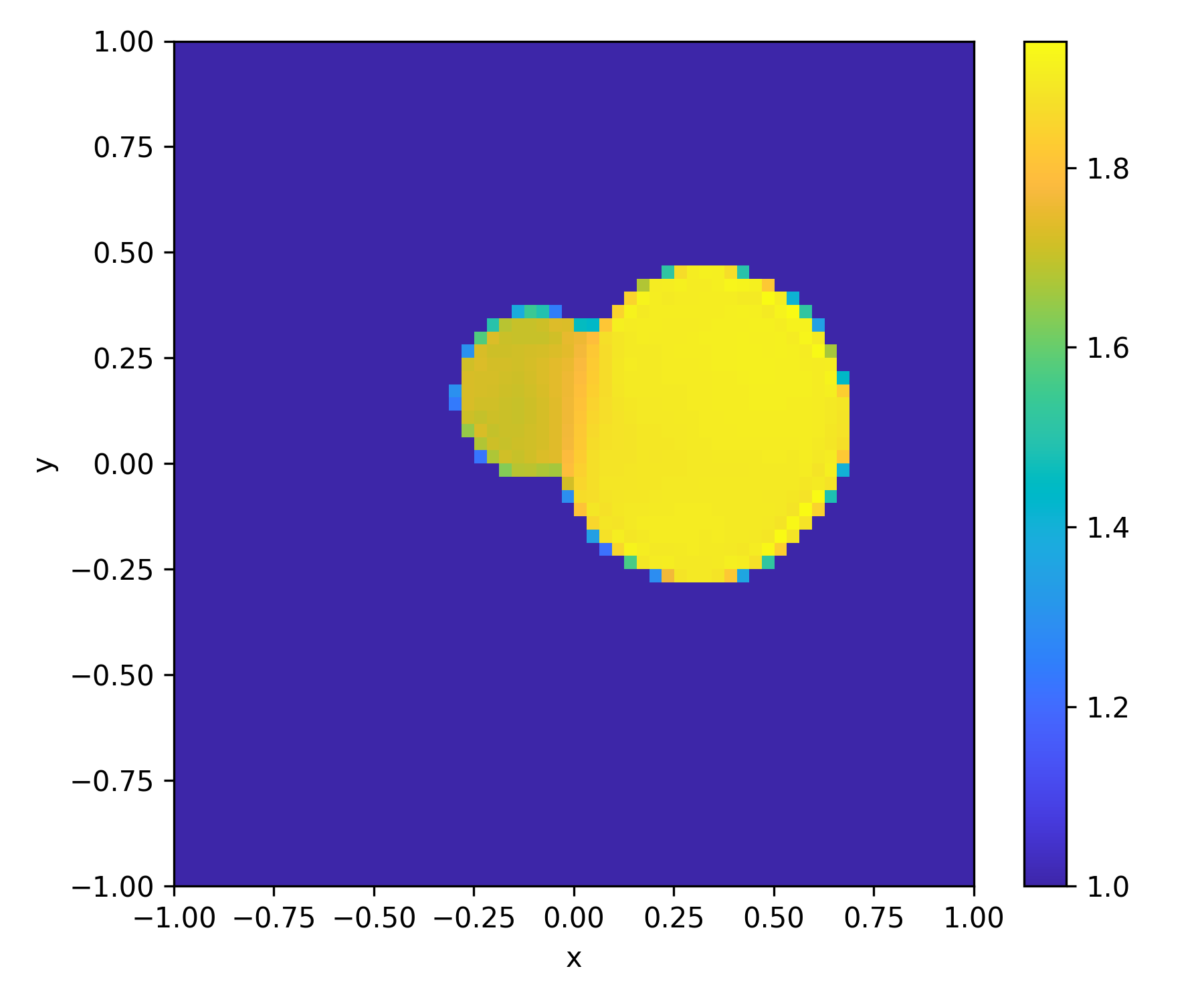}&
				\includegraphics[width=0.15\textwidth]{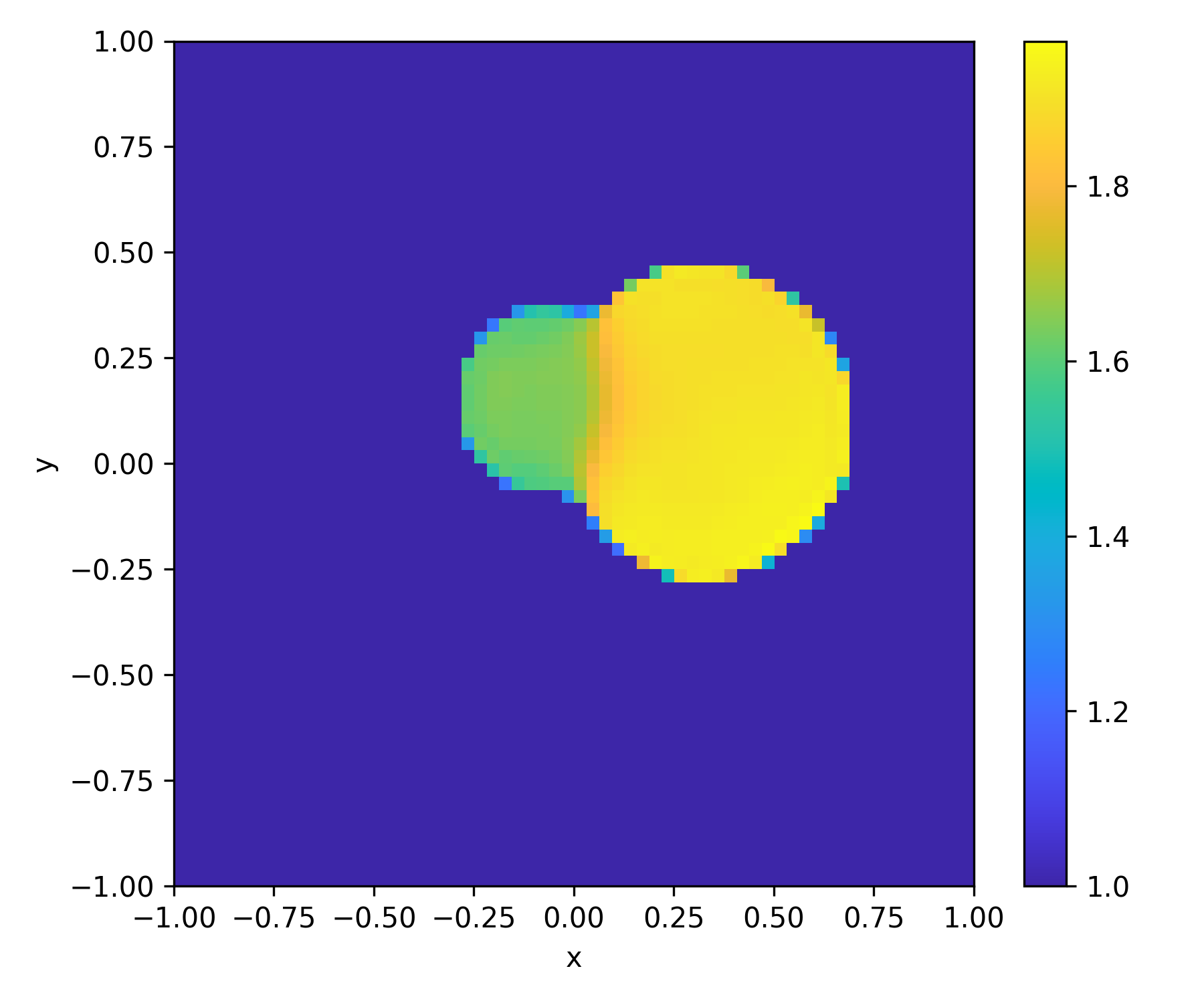}&
				\includegraphics[width=0.15\textwidth]{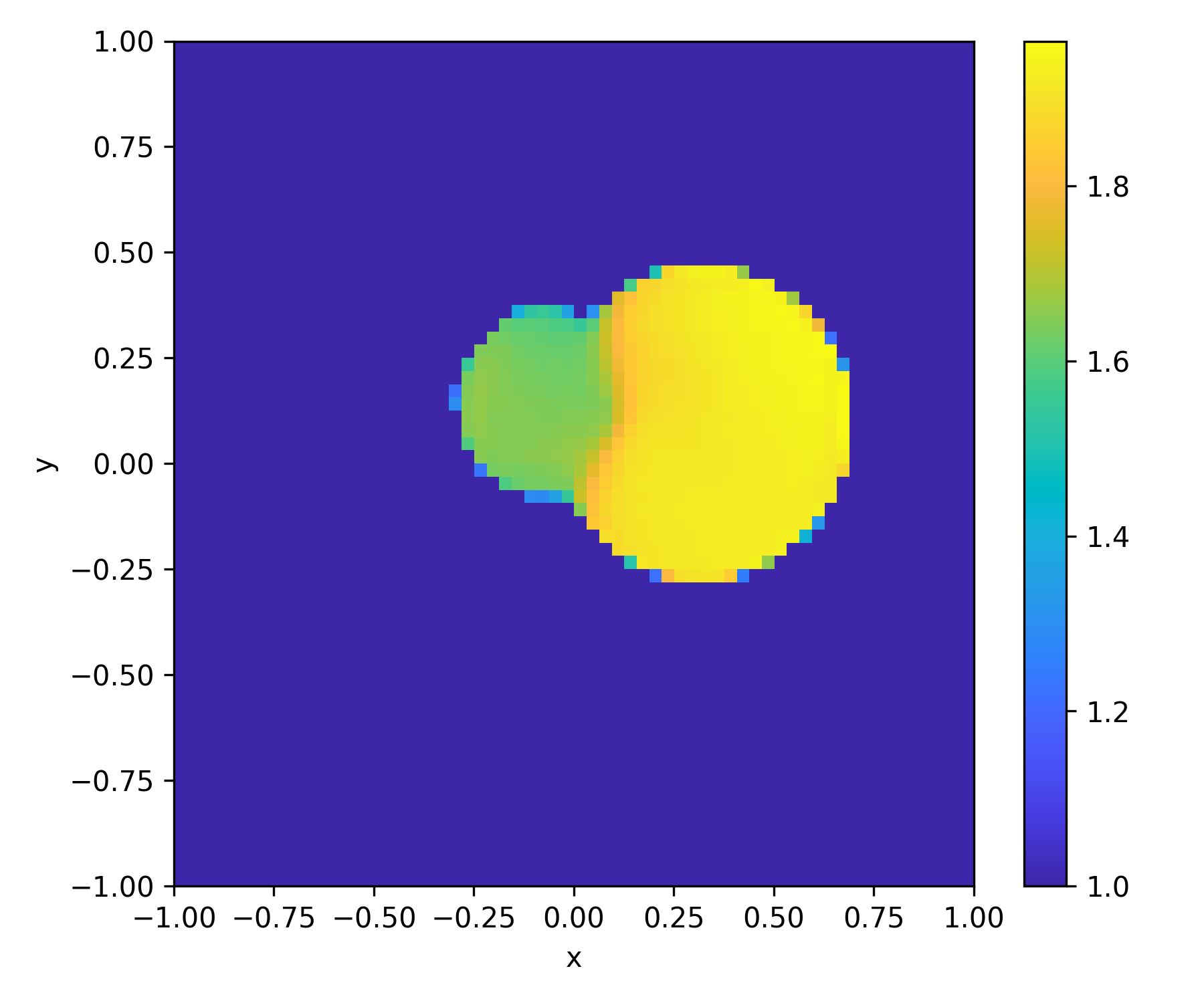}
				\\
				40\%& &
				\includegraphics[width=0.15\textwidth]{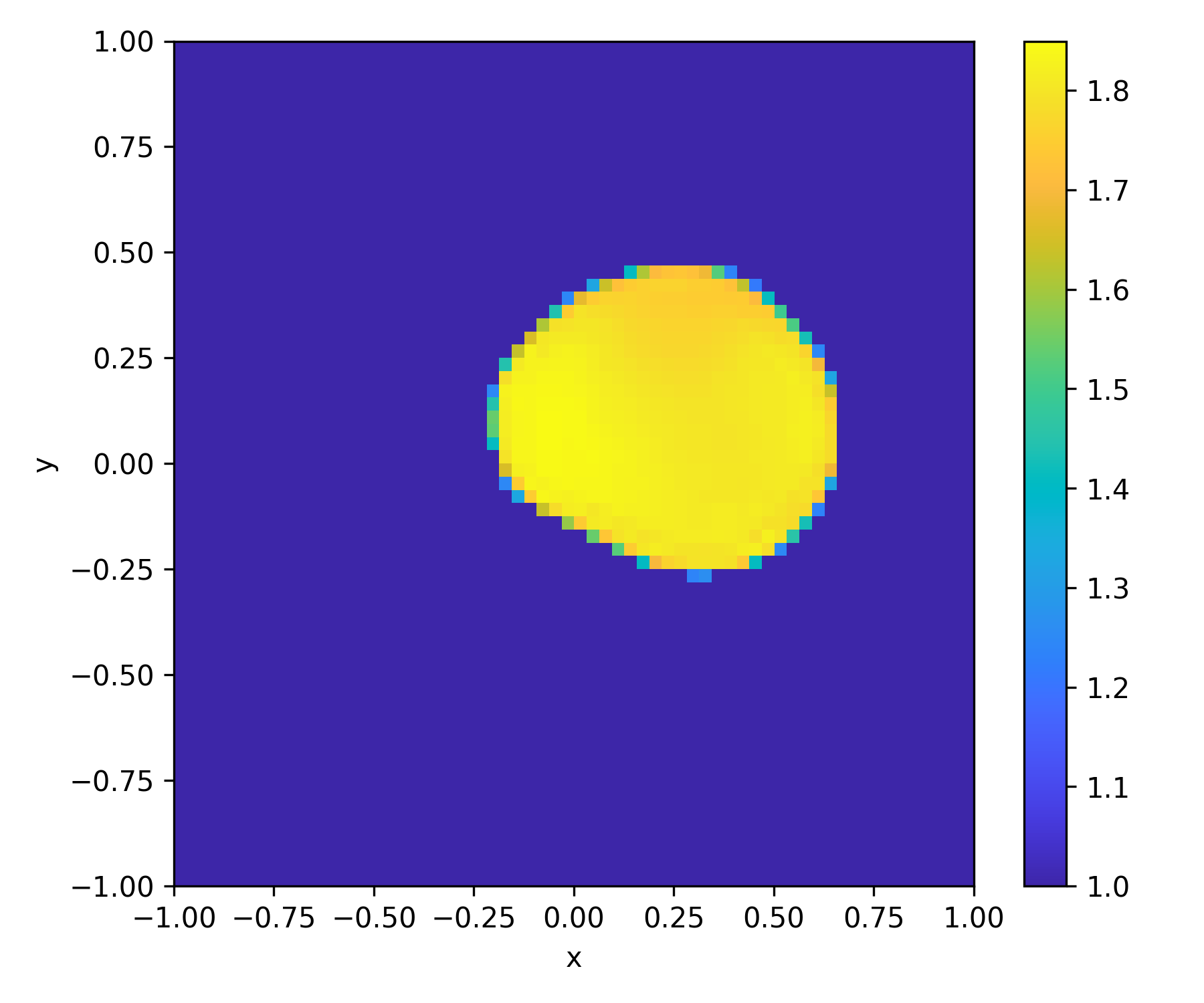}&
				\includegraphics[width=0.15\textwidth]{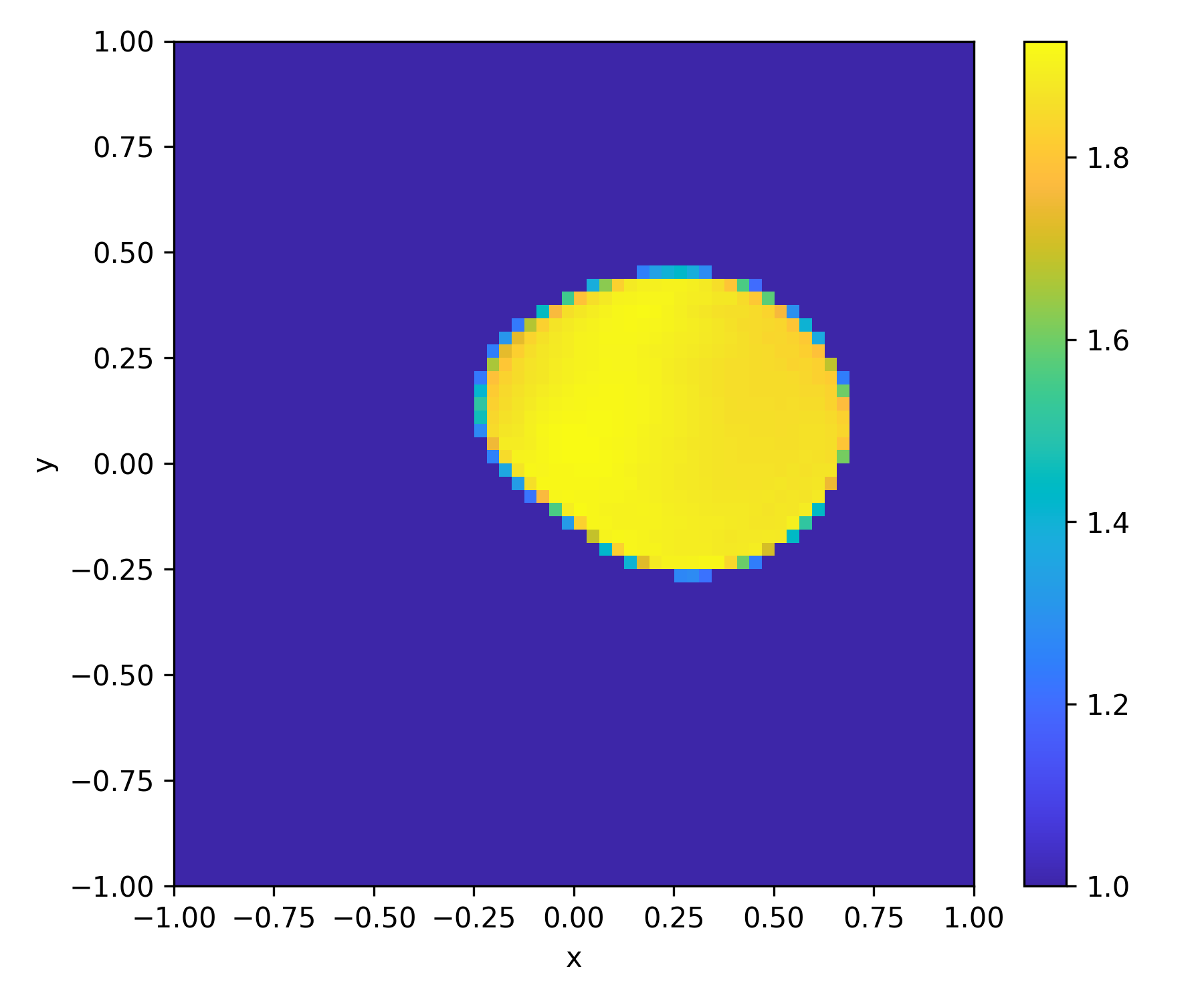}&
				\includegraphics[width=0.15\textwidth]{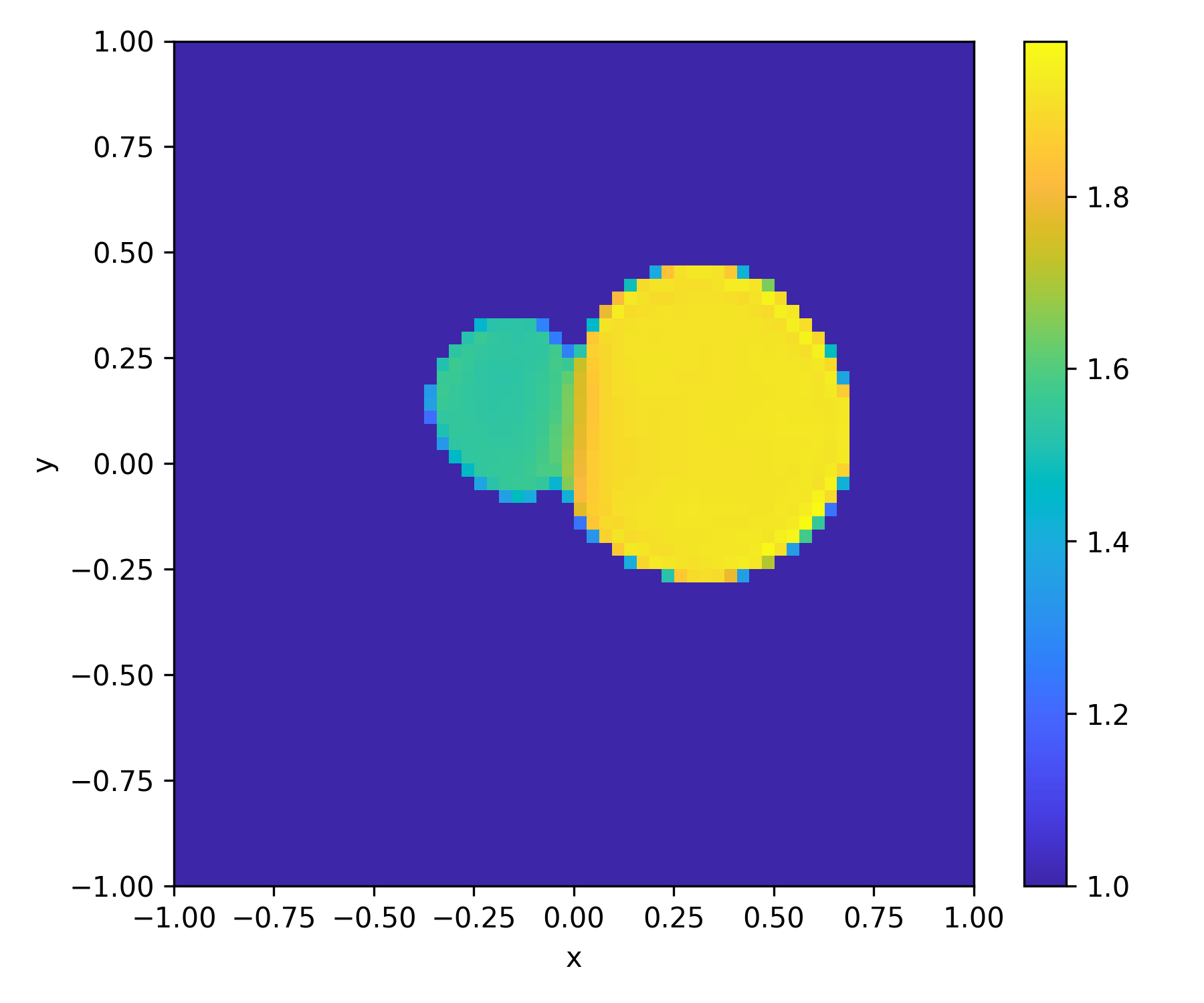}&
				\includegraphics[width=0.15\textwidth]{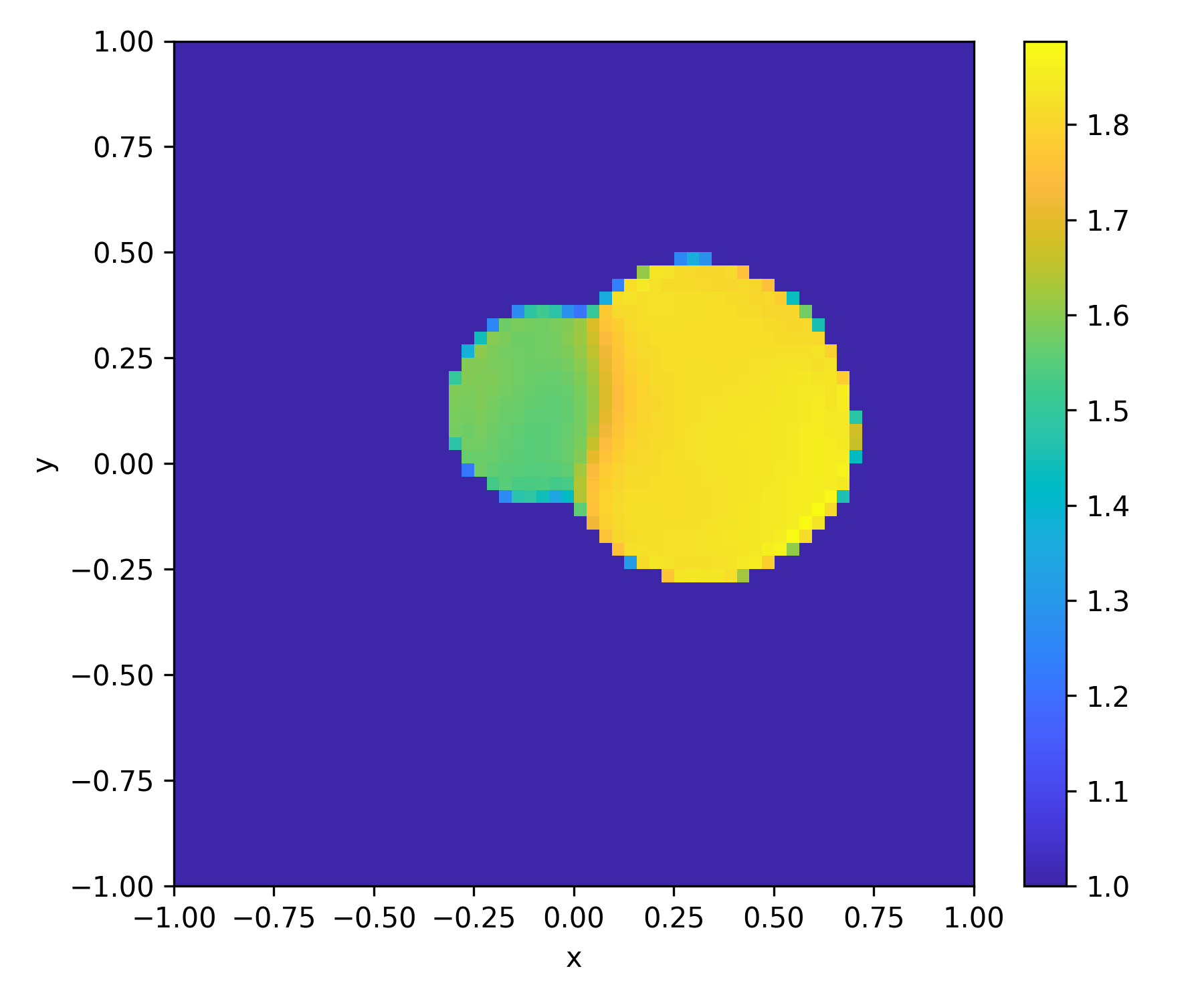}&
				\includegraphics[width=0.15\textwidth]{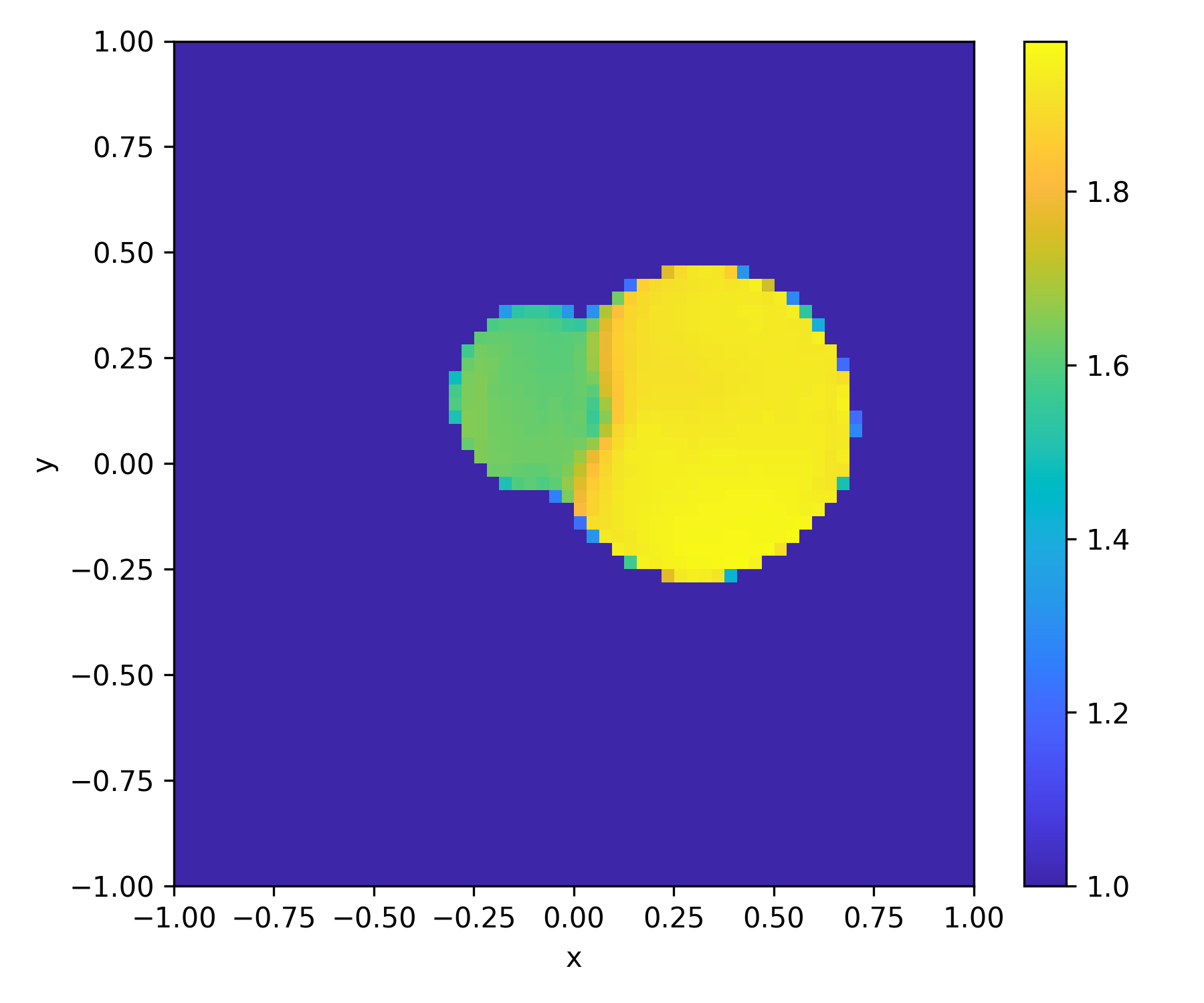}
				\\ 	
				15\%&\SetCell[r=2]{c}\includegraphics[width=0.15\textwidth]{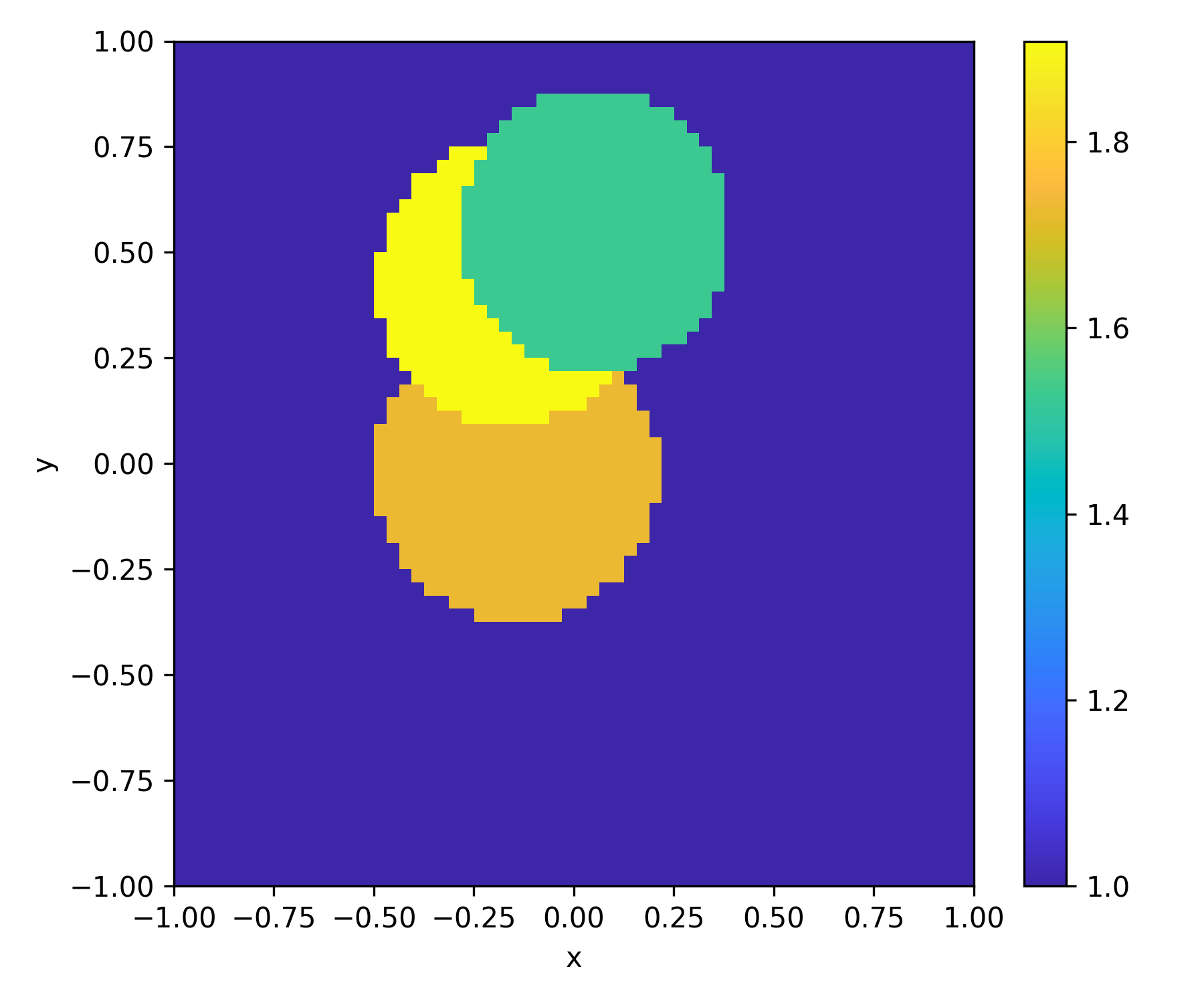}&
				\includegraphics[width=0.15\textwidth]{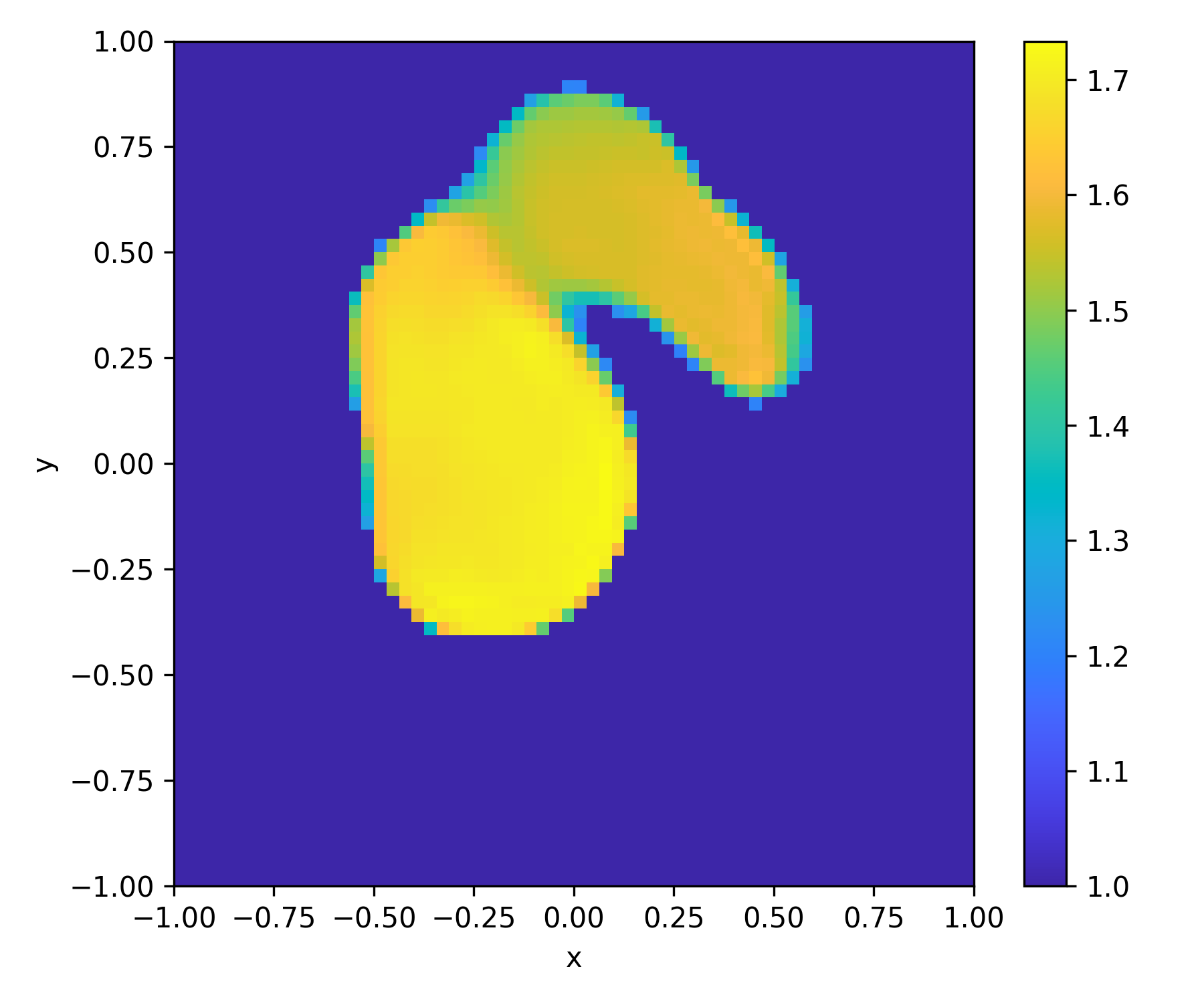}&
				\includegraphics[width=0.15\textwidth]{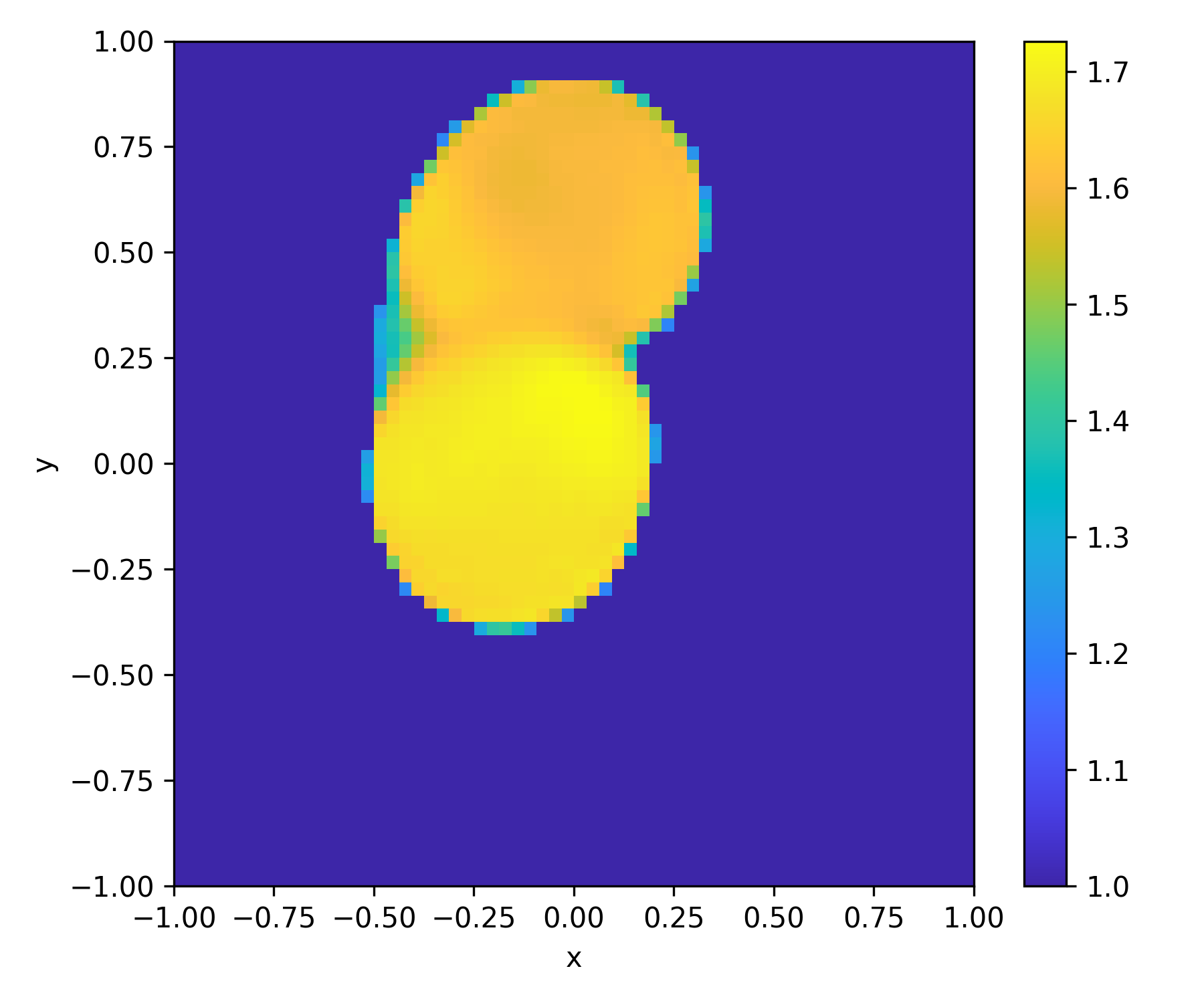}&
				\includegraphics[width=0.15\textwidth]{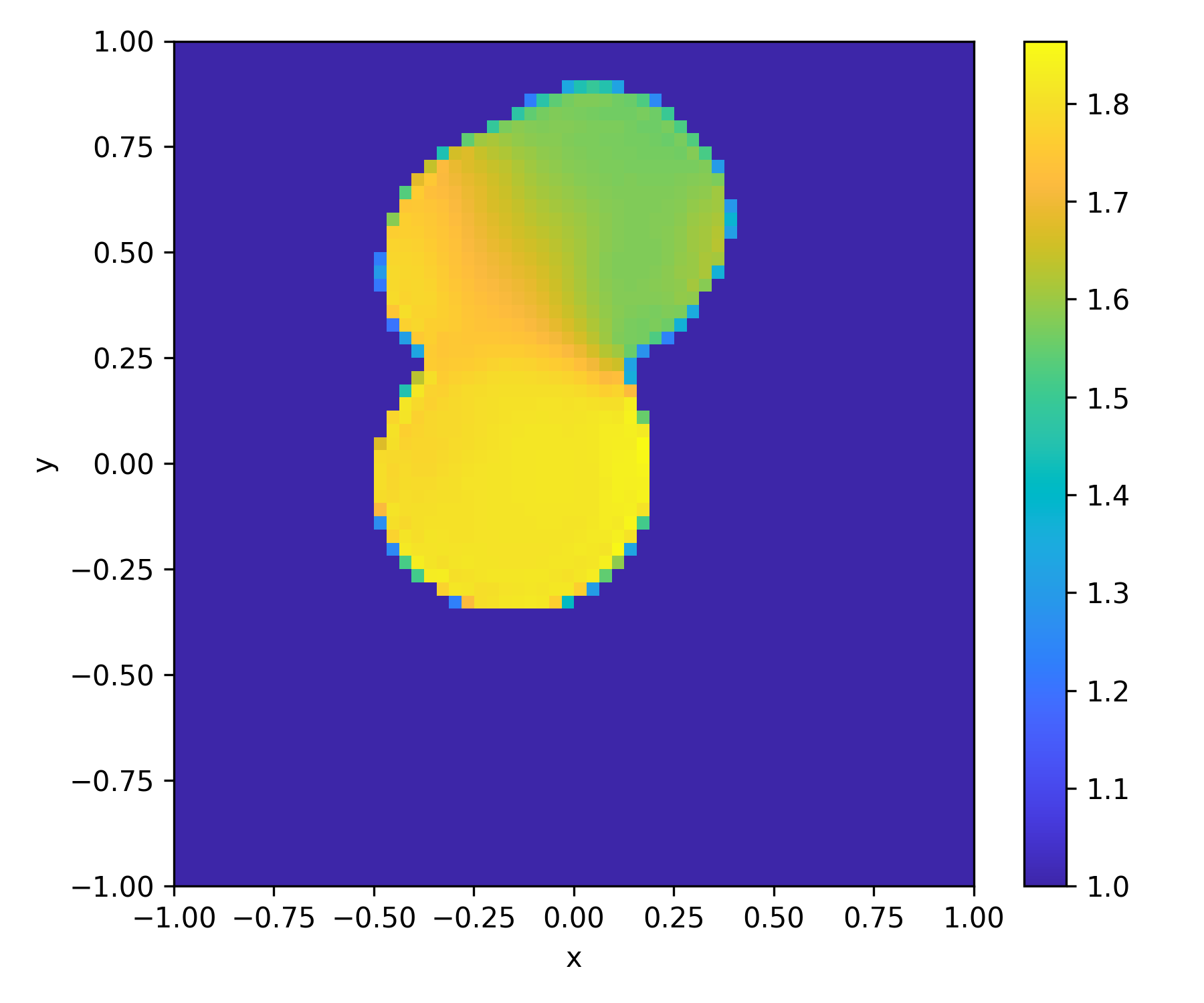}&
				\includegraphics[width=0.15\textwidth]{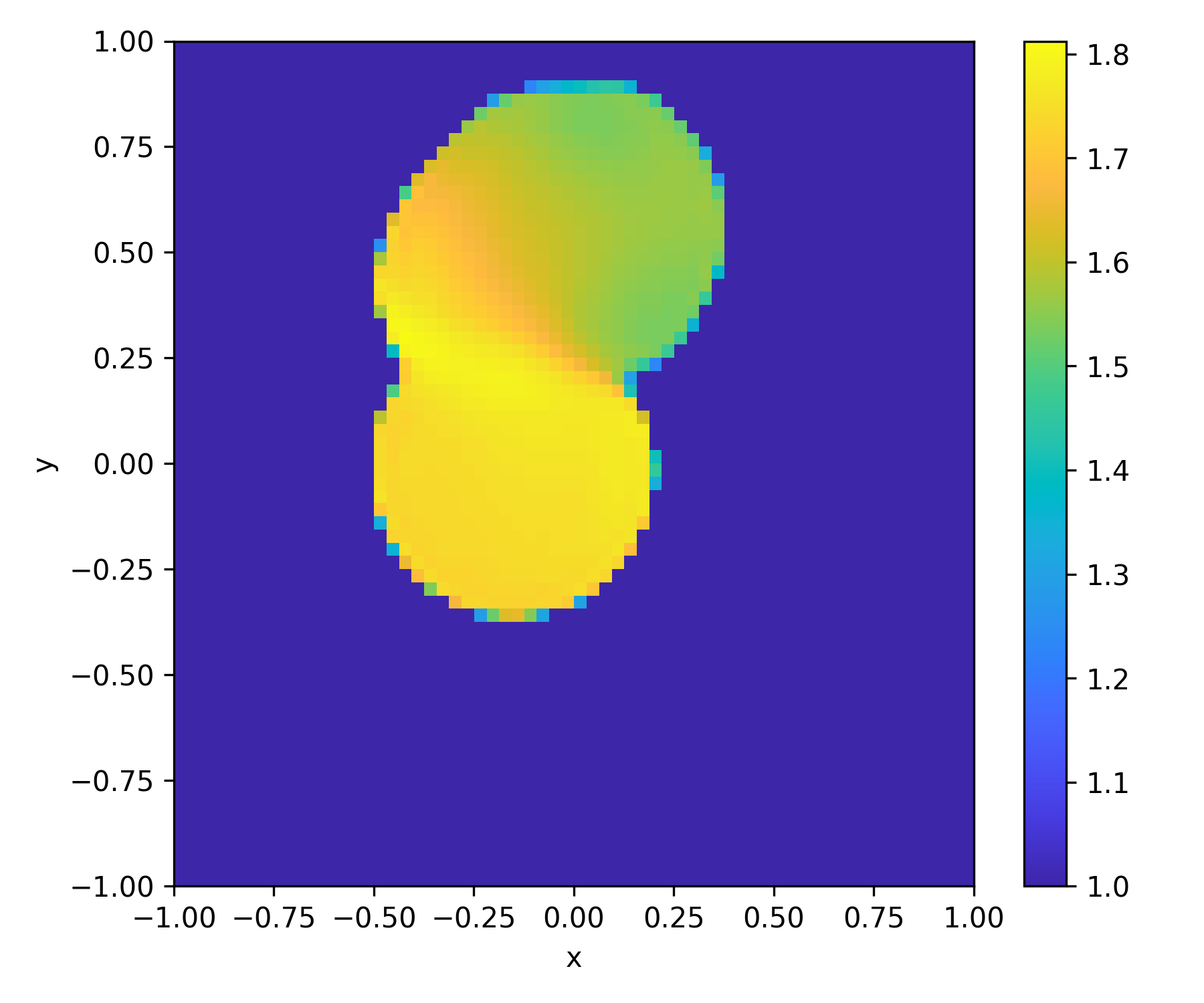}&
				\includegraphics[width=0.15\textwidth]{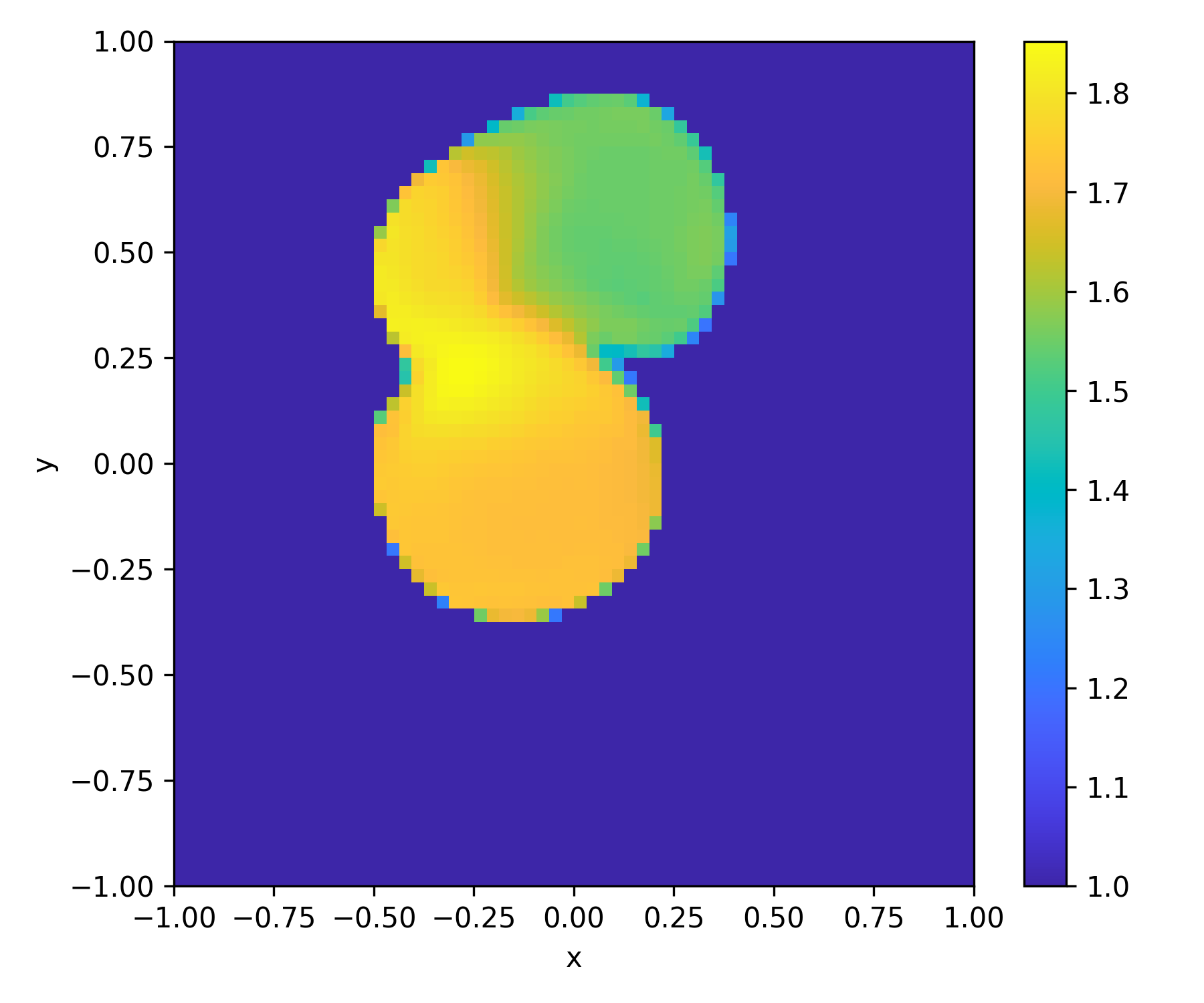} 	
				\\ 	
				40\%& &
				\includegraphics[width=0.15\textwidth]{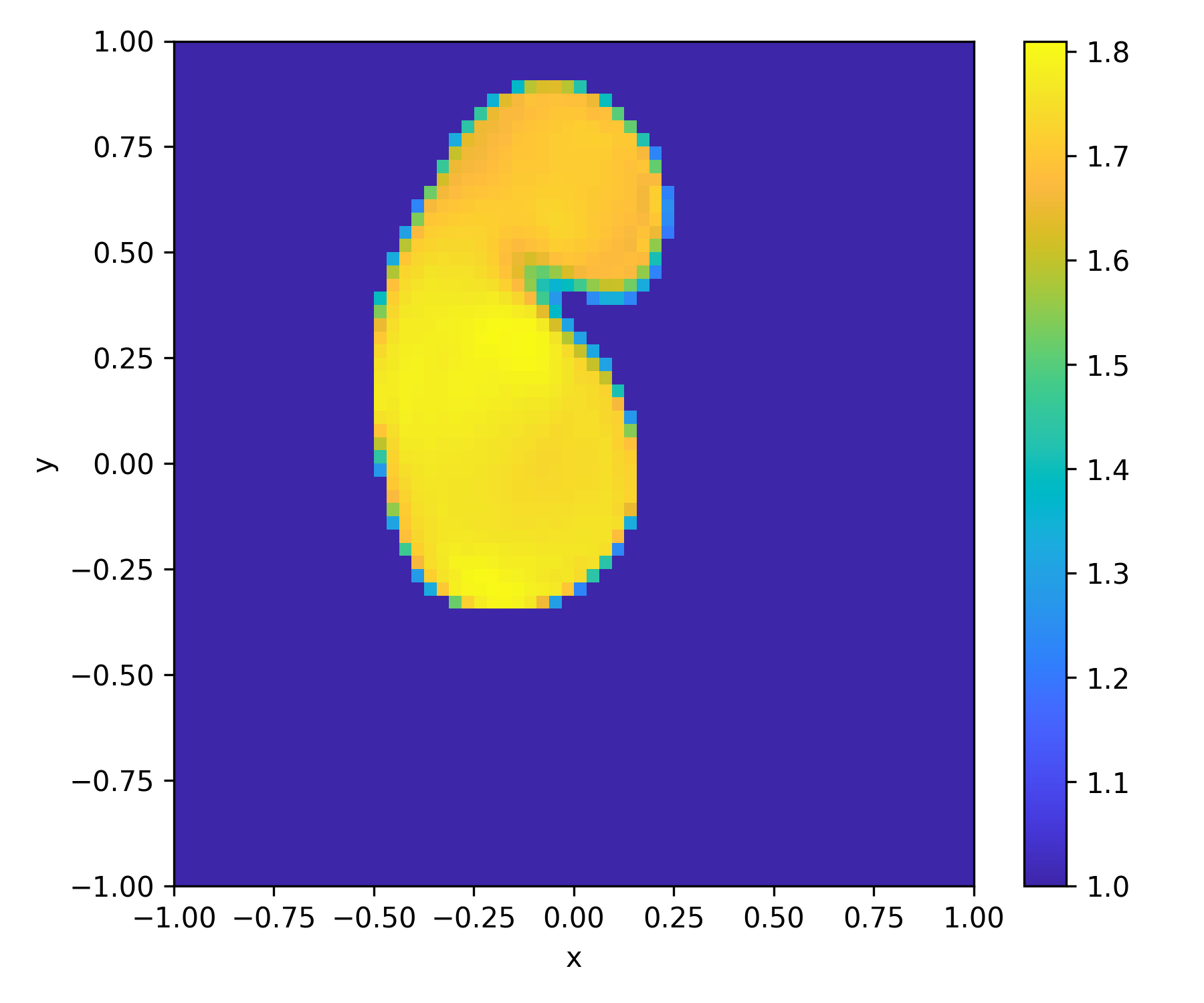}&
				\includegraphics[width=0.15\textwidth]{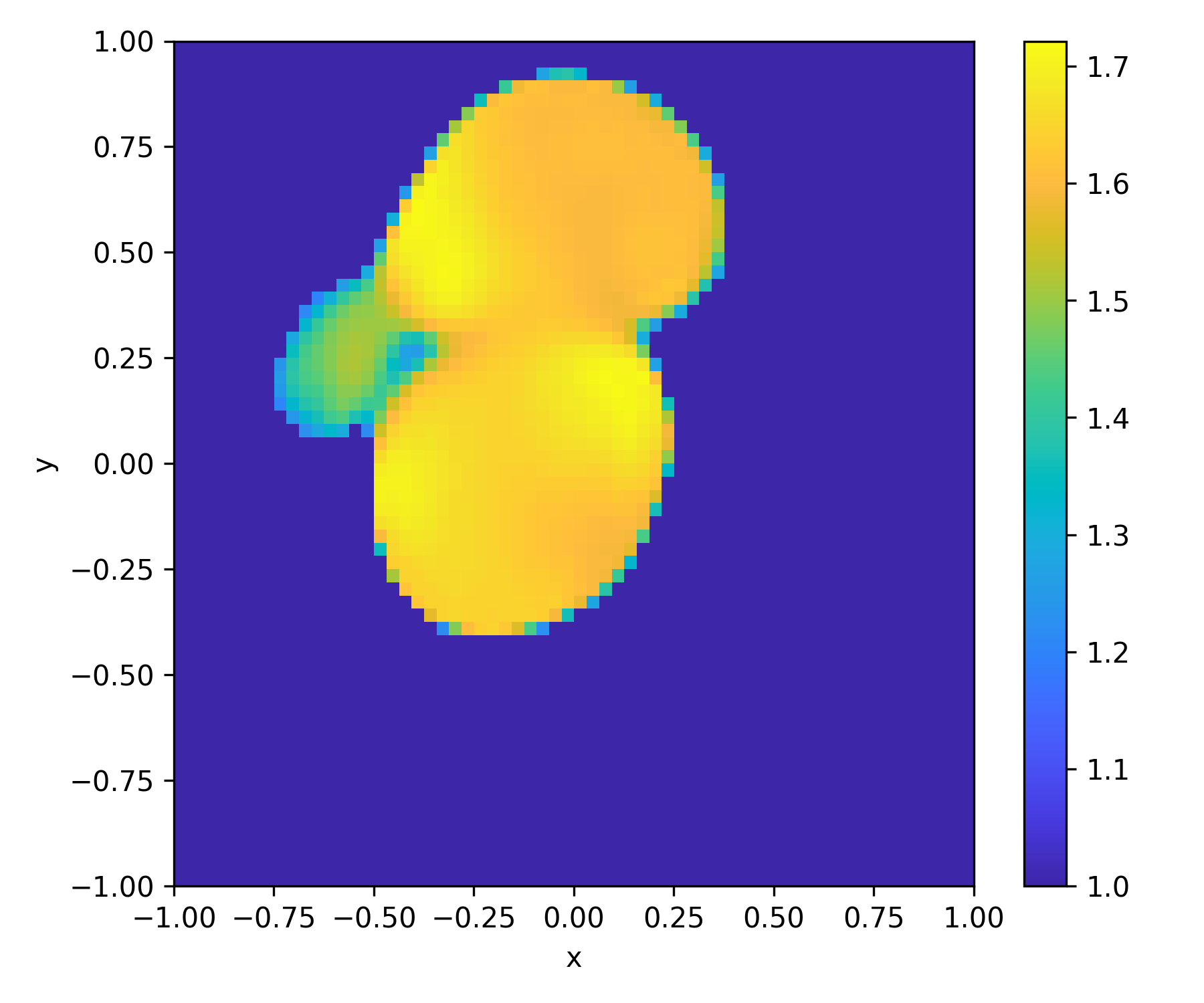}&
				\includegraphics[width=0.15\textwidth]{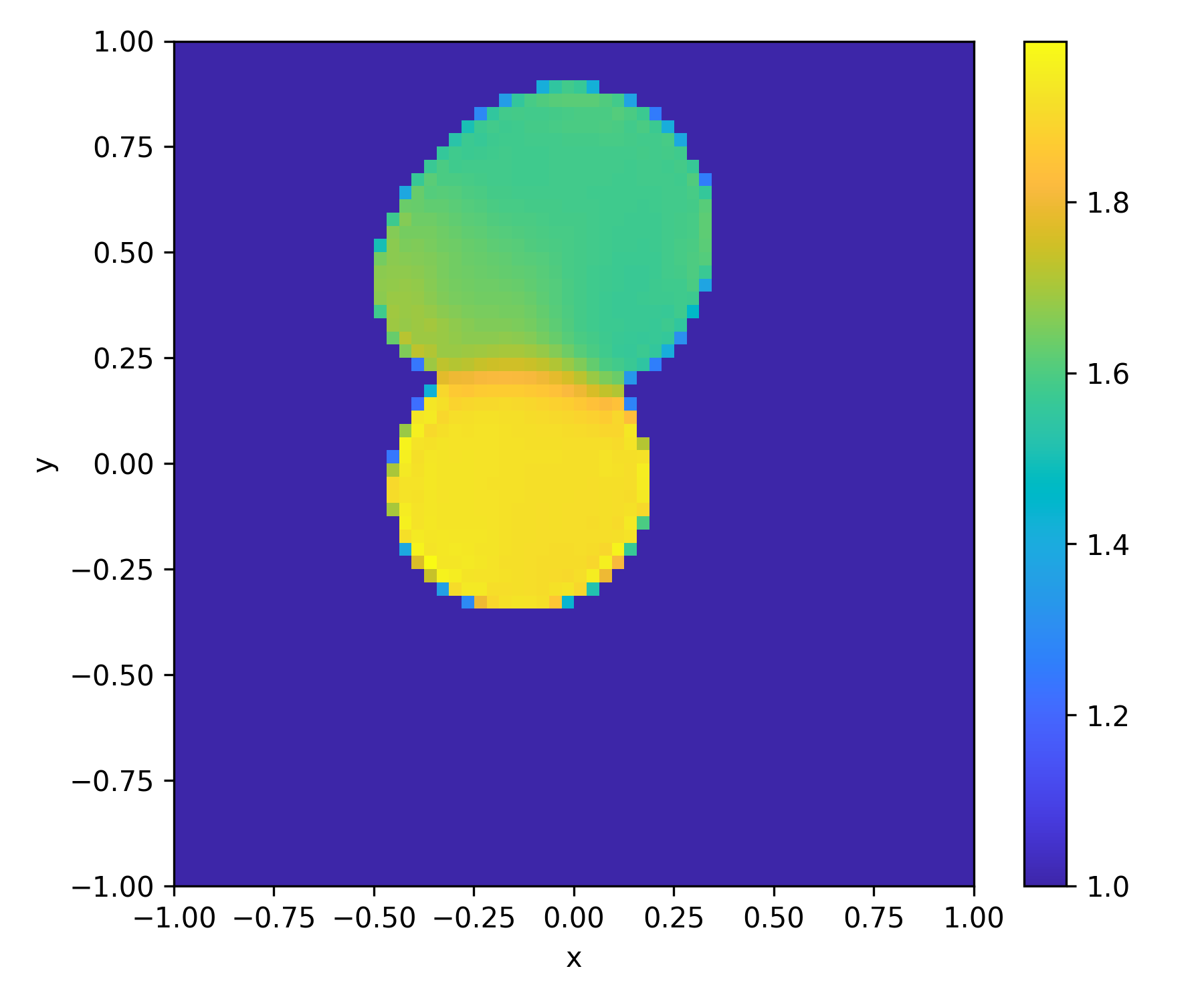}&
				\includegraphics[width=0.15\textwidth]{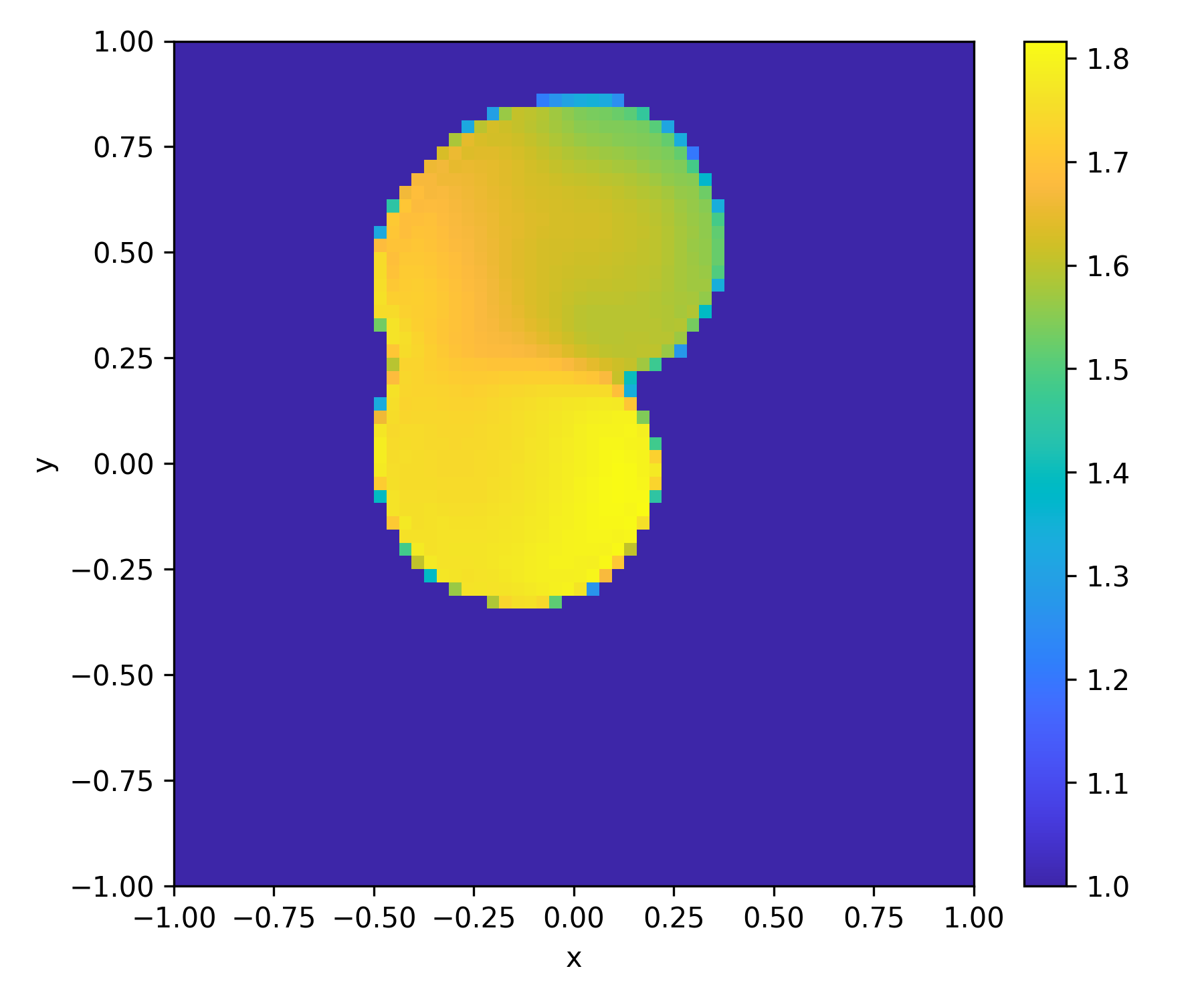}&
				\includegraphics[width=0.15\textwidth]{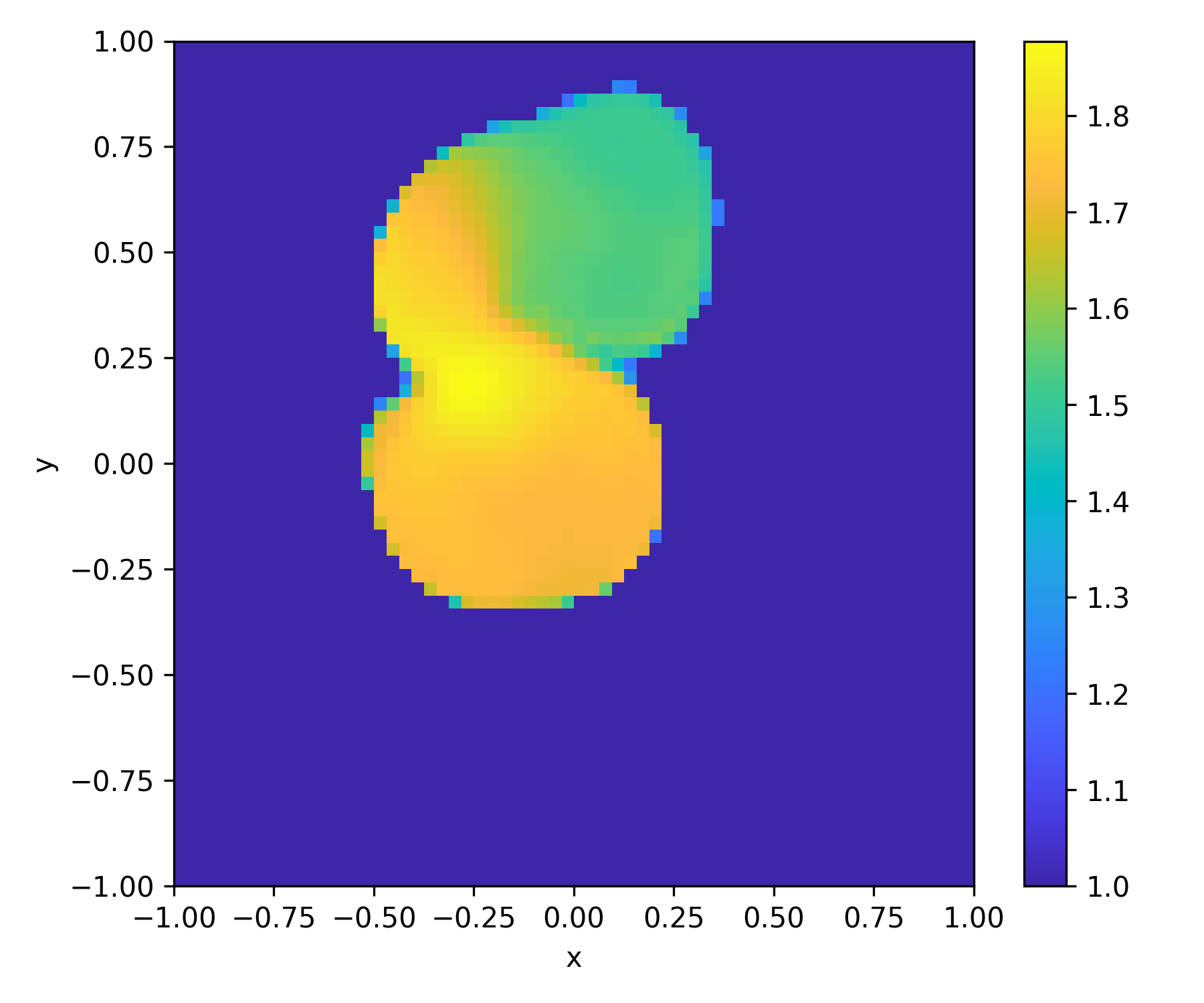} 
			\end{tblr}
			\caption{Image reconstructions from measured scattered fields with $15\%$ and $40\%$ Gaussian noises by using the networks trained by the circle dataset, where the relative permittivity is between 1.5 and 2.0. From left to right: the ground-truth images, the reconstruction with 1,2,4,8, and 16 incident fields.}
			\label{tab:fig-Circle}
		\end{center}
	\end{figure}
 \begin{figure}[htp]
	\centering
	\includegraphics[width=1.0\linewidth]{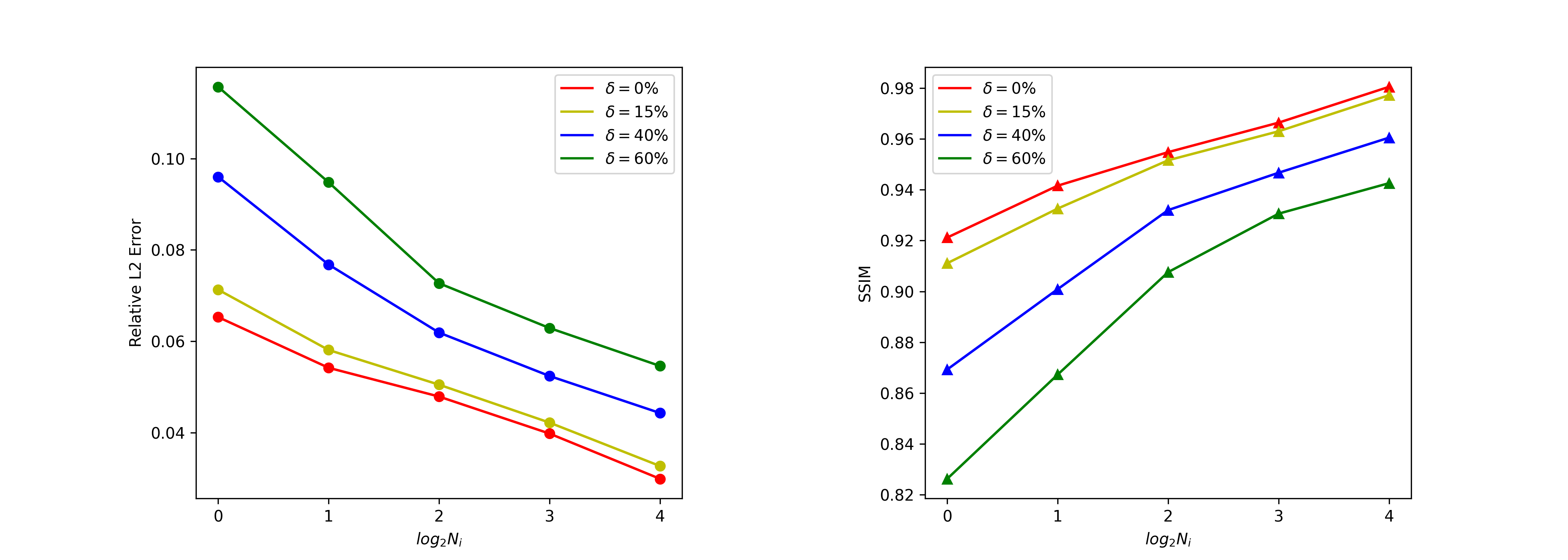}
	\caption{The relative L2 testing error and SSIM for the networks trained by the circle dataset with different noise levels and number of incidences.}
	\label{fig:Error_Circle}
\end{figure}
	
	\subsubsection{Tests with a four-circles example}
	To examine the generality of the trained networks, we consider the reconstruction of scatters of four circles, which is beyond the distribution of the training data. Based on the reconstructions shown in Fig.\,\ref{tab:fig-Four_Circle}, an additional scatterer is mispredicted in the reconstructed profile with $\delta=15\%$ for $N_{i}=1$. It also shows that the noises have a significant impact on the reconstruction with small $N_{i}$, while better reconstruction and generality capability can be observed by using more incident fields. 
		\begin{figure}[htp]\small
		\begin{center}
			\begin{tblr}
				{colspec = {X[-1]X[c]X[c,h]X[c,h]X[c,h]X[c,h]X[c,h]},
					stretch = 0,
					rowsep = 0pt,}
				Noise Level& Ground truth& $N_{i}$=1& $N_{i}$=2 &$N_{i}$=4&$N_{i}$=8& $N_{i}$=16\\
				15\%&\SetCell[r=2]{c}\includegraphics[width=0.15\textwidth]{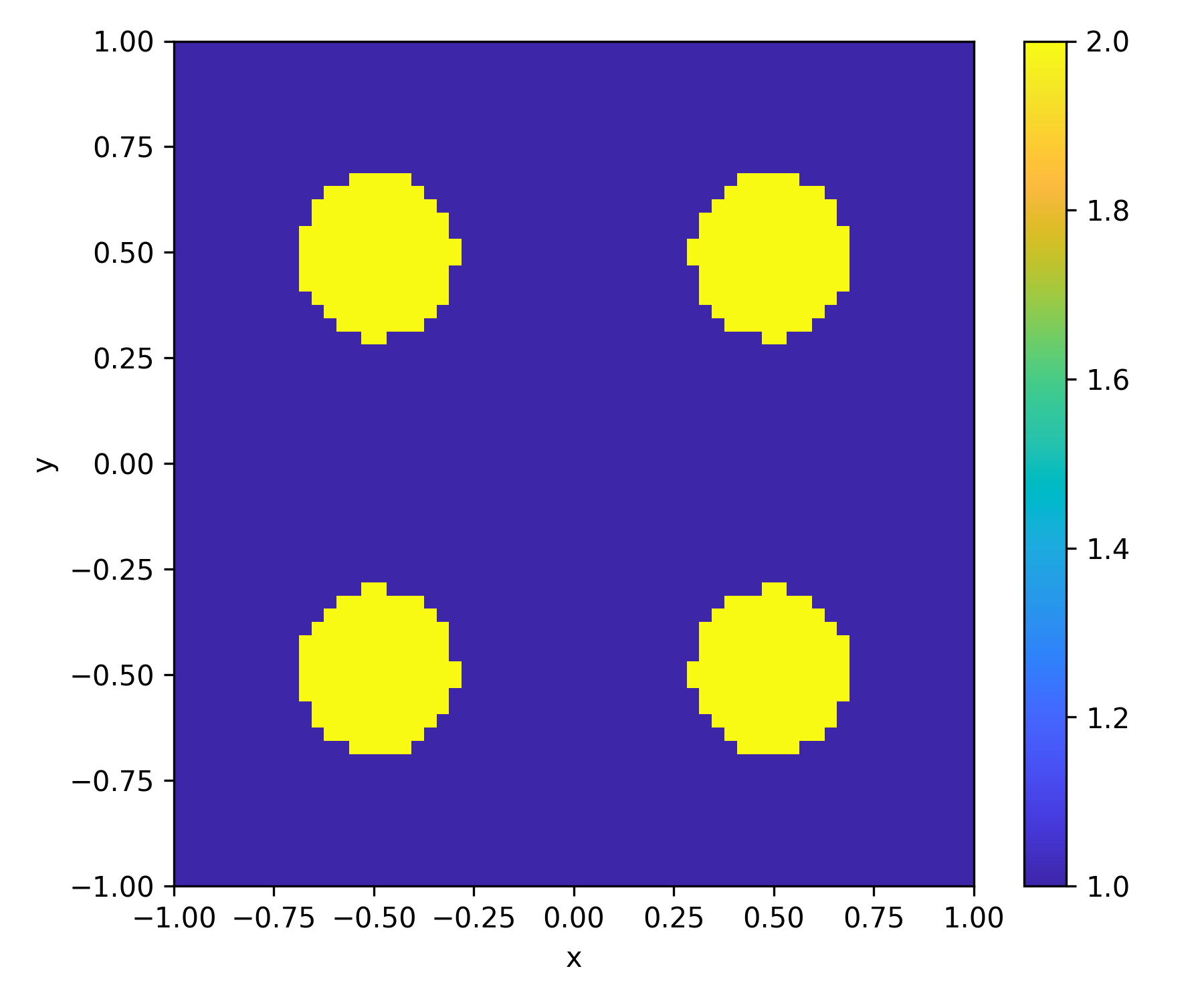}&
				\includegraphics[width=0.15\textwidth]{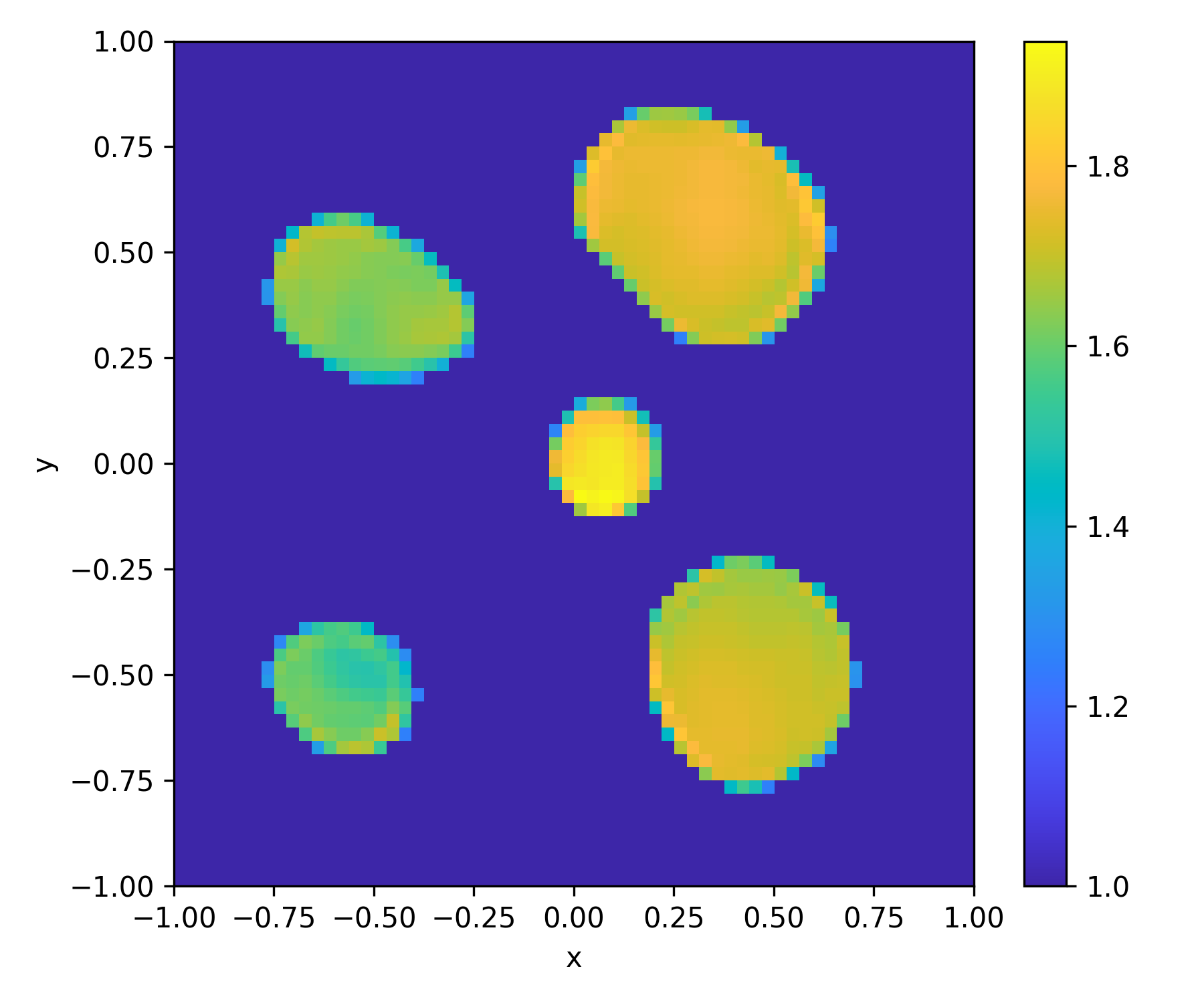}&
				\includegraphics[width=0.15\textwidth]{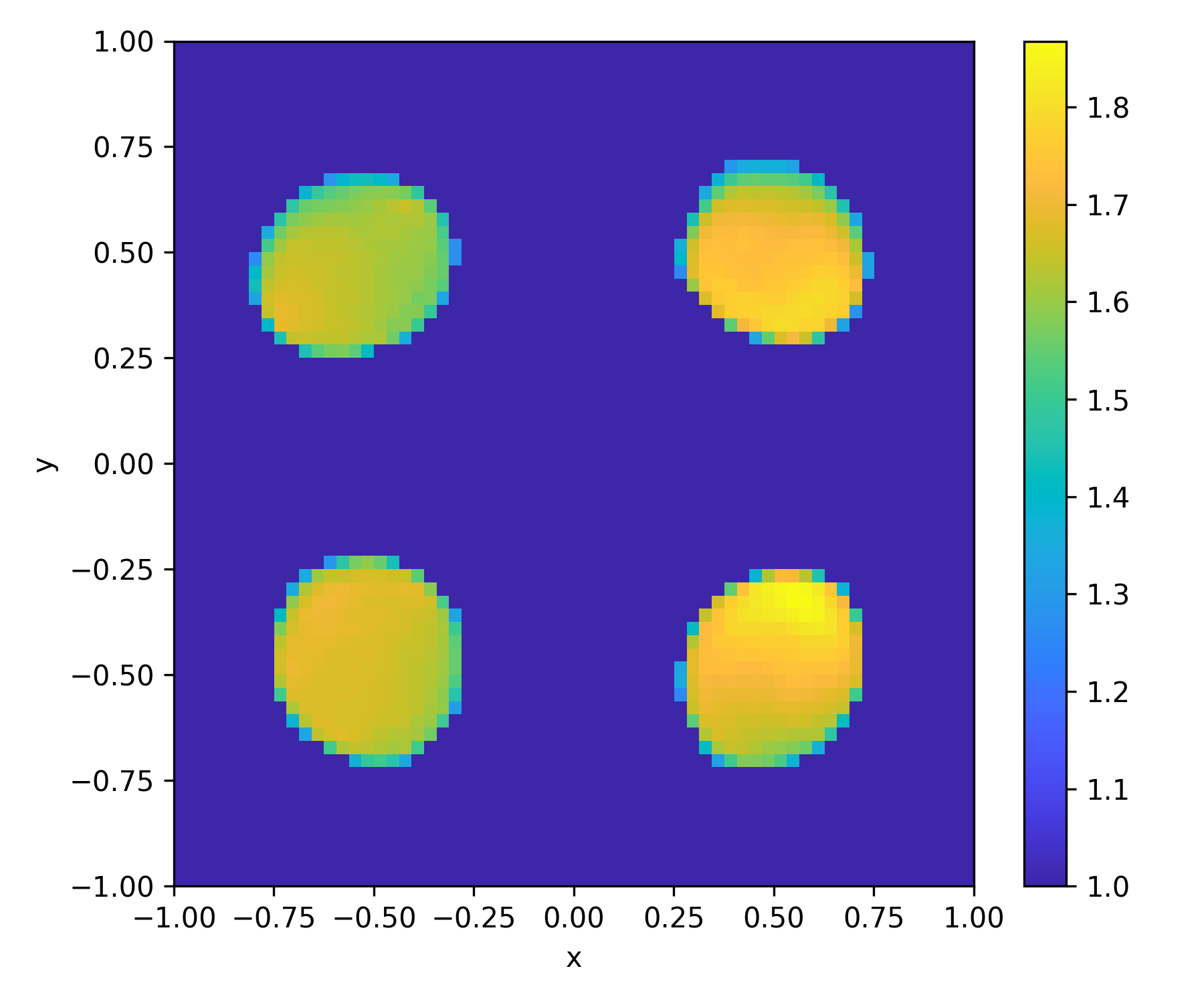}&
				\includegraphics[width=0.15\textwidth]{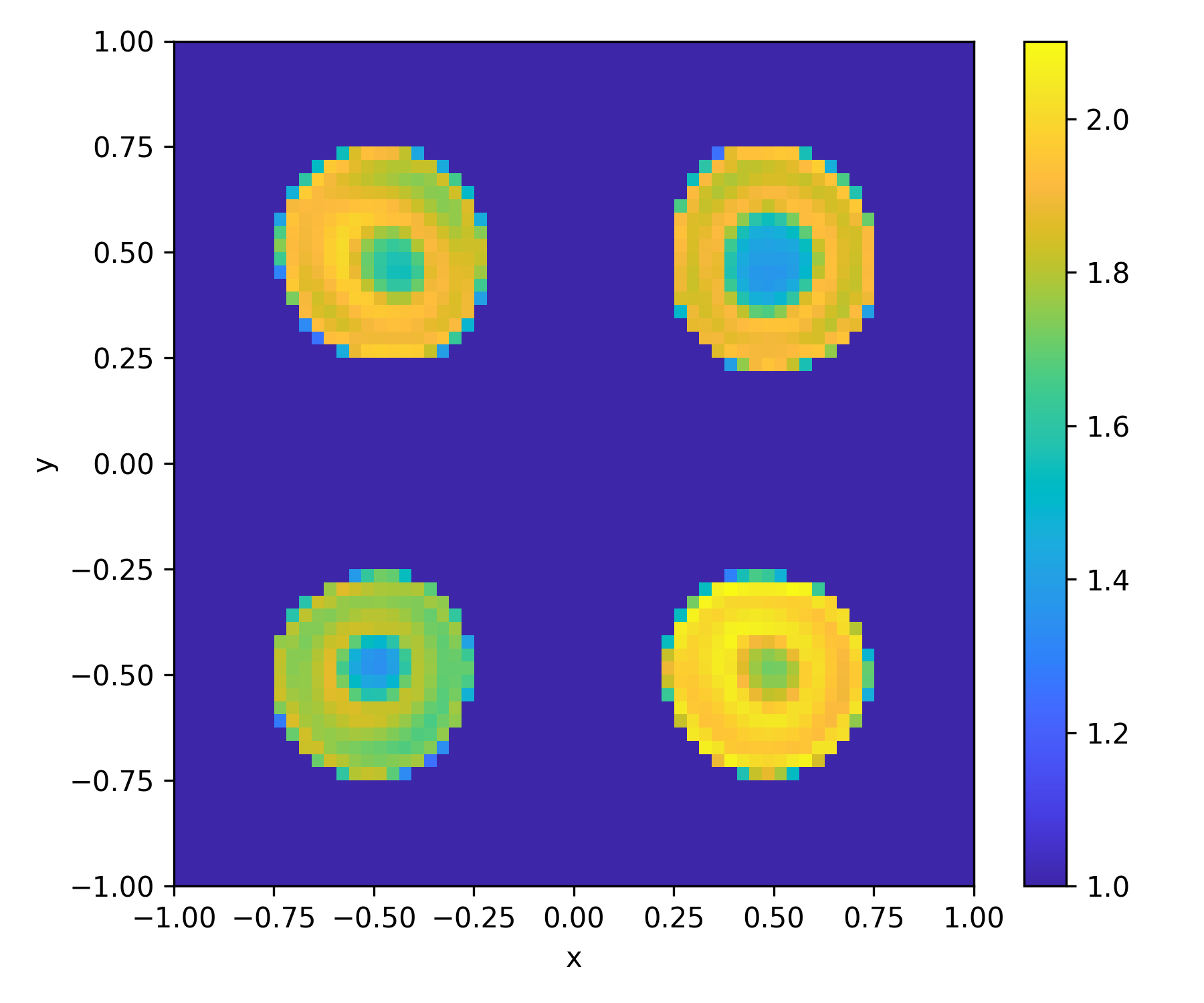}&
				\includegraphics[width=0.15\textwidth]{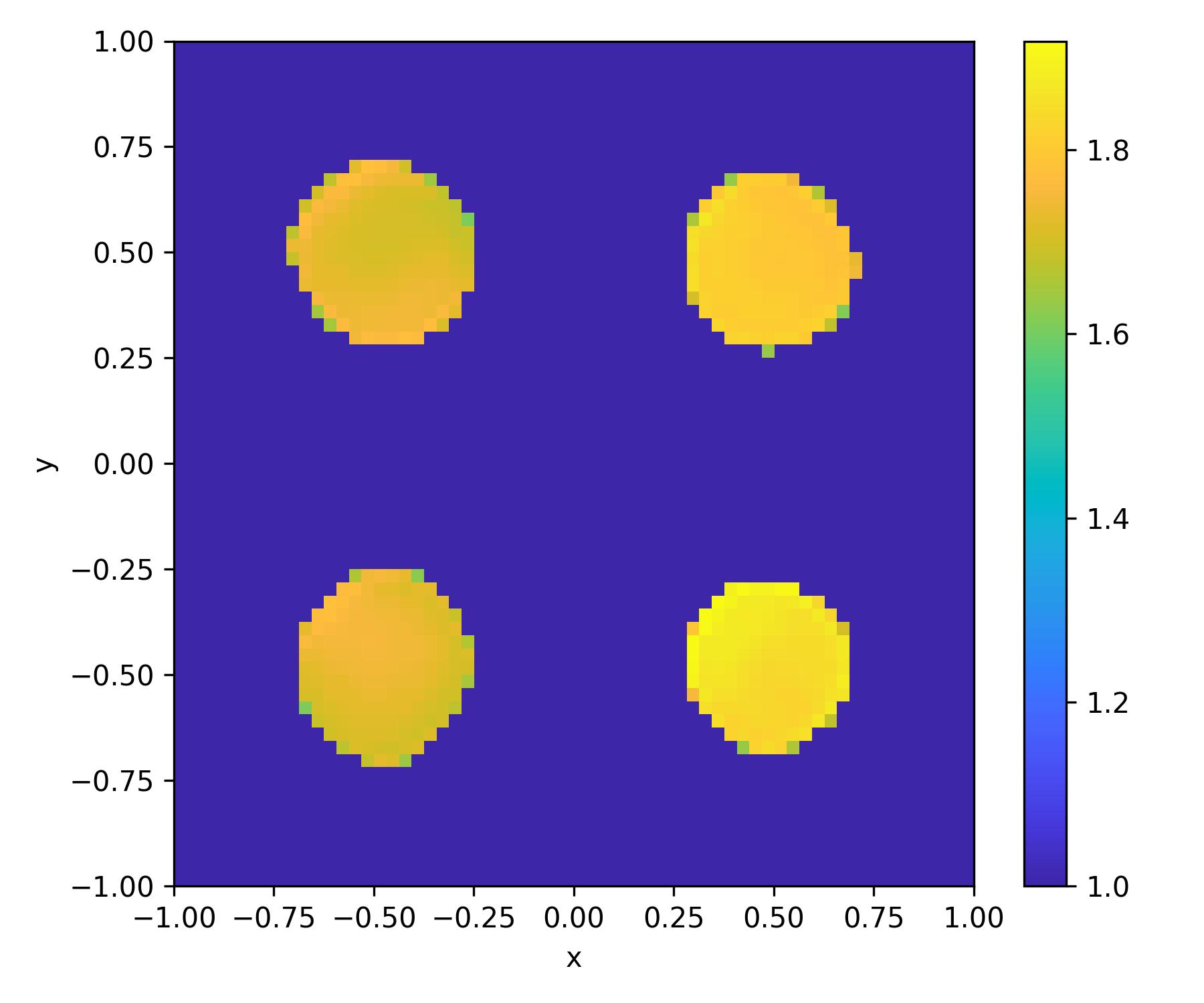}&
				\includegraphics[width=0.15\textwidth]{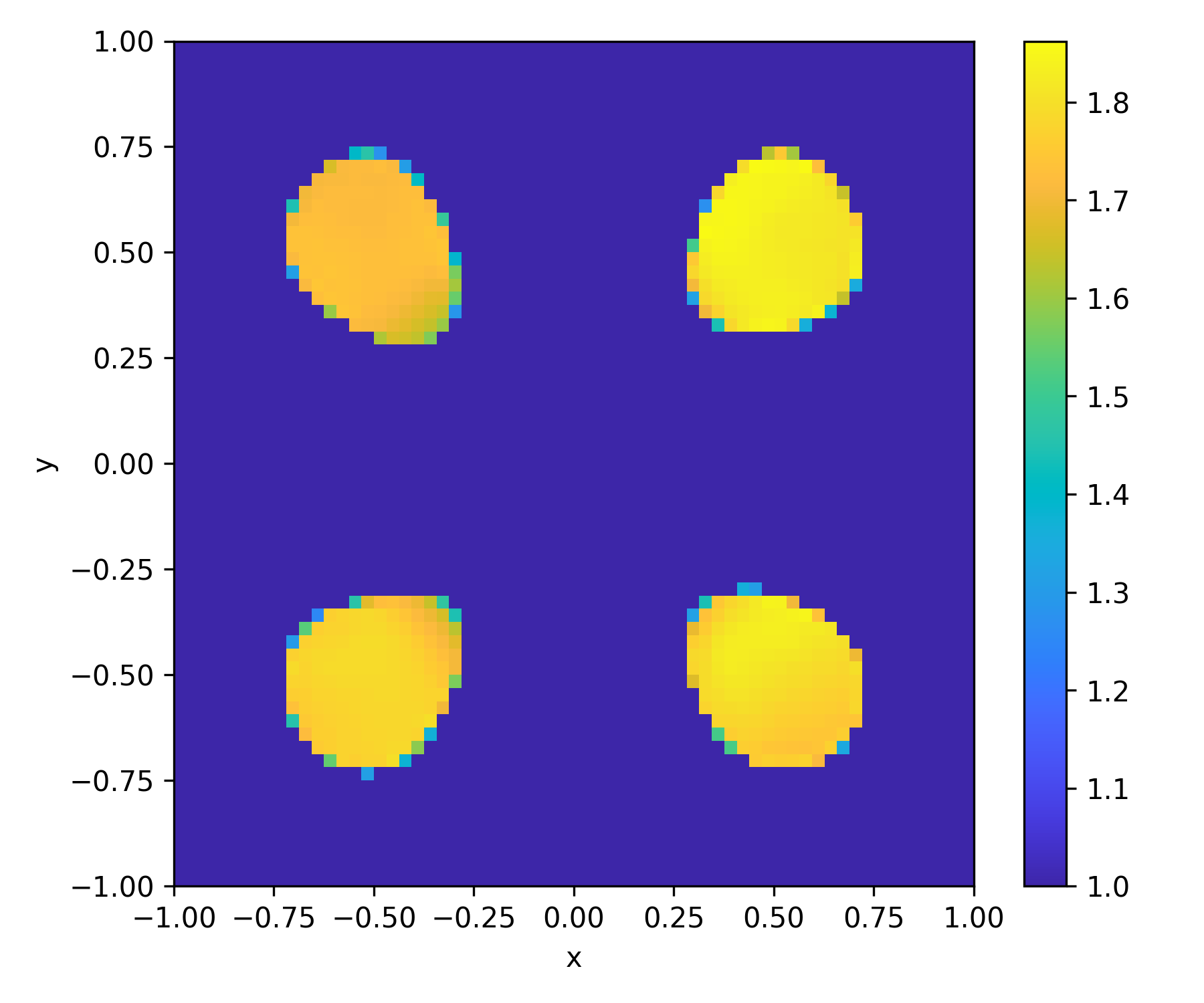}
				\\
				40\%&  &
				\includegraphics[width=0.15\textwidth]{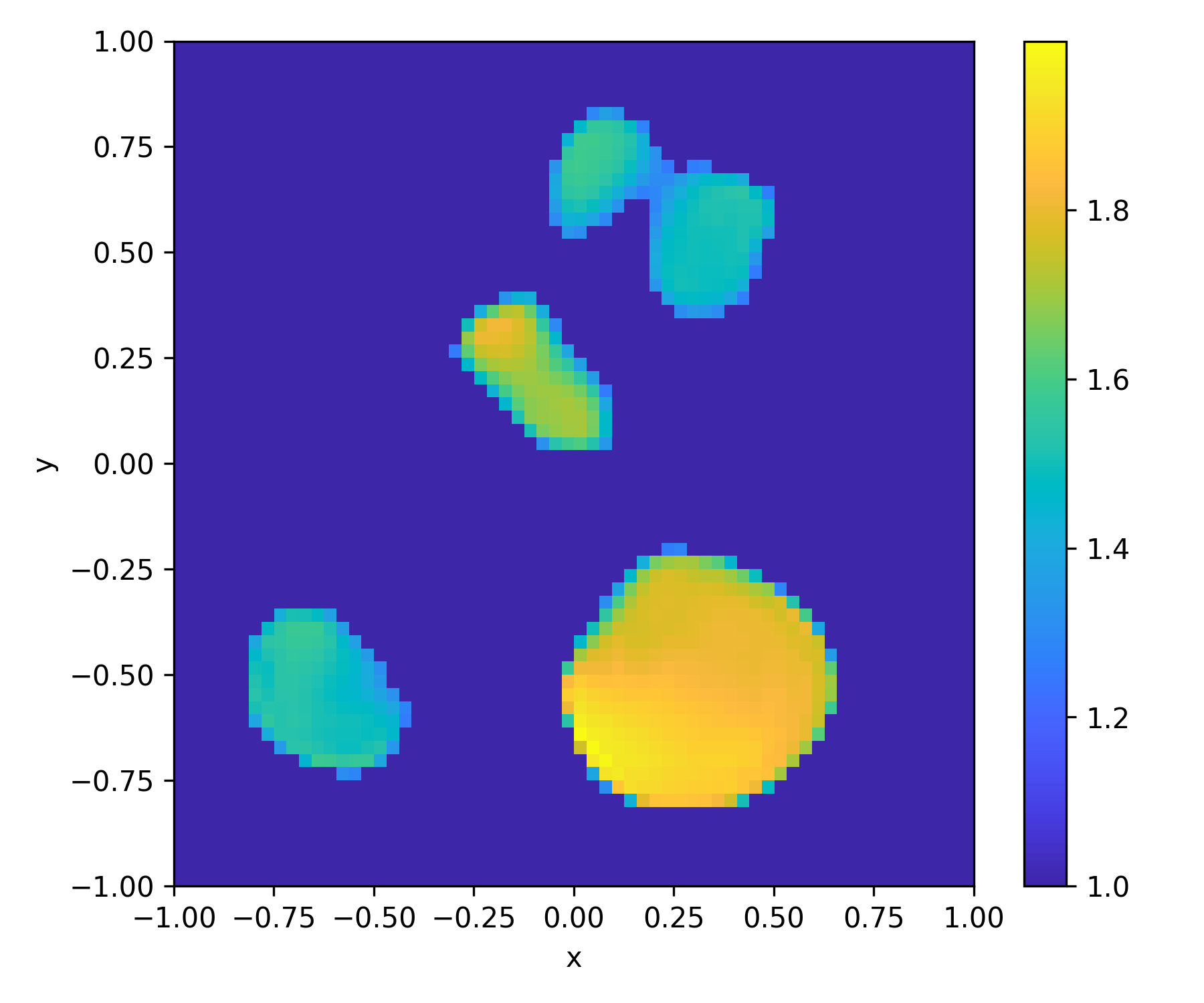}&
				\includegraphics[width=0.15\textwidth]{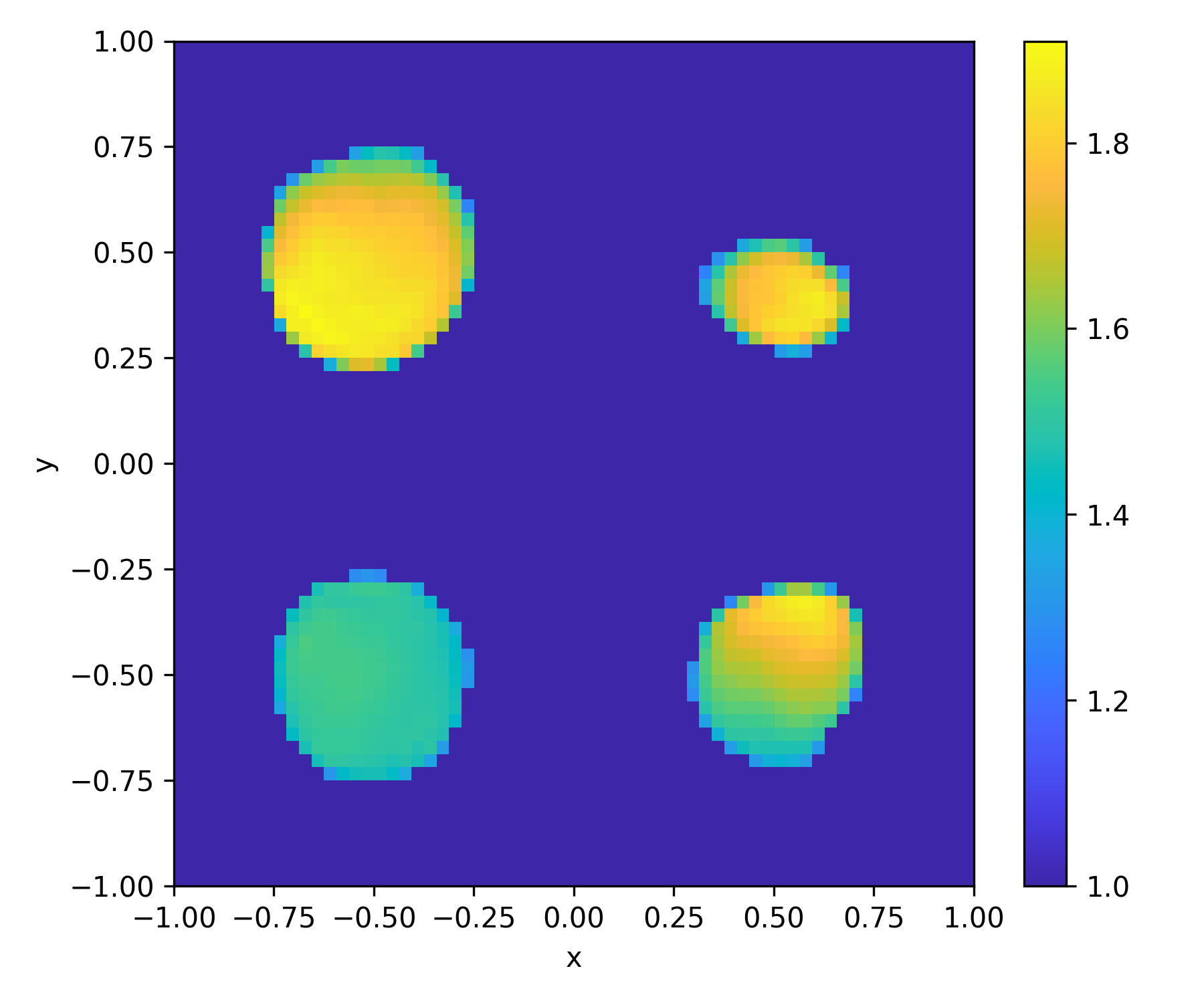}&
				\includegraphics[width=0.15\textwidth]{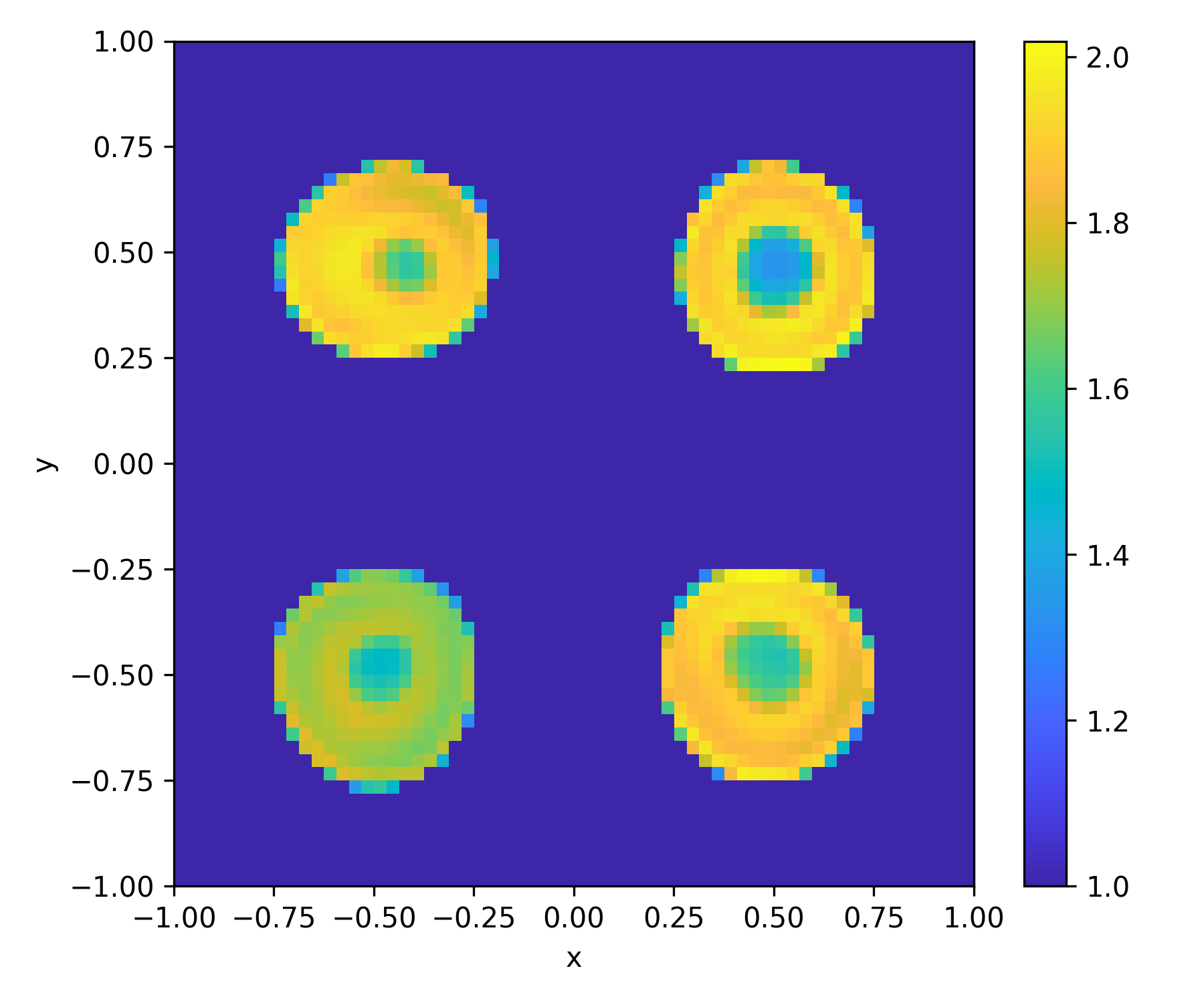}&
				\includegraphics[width=0.15\textwidth]{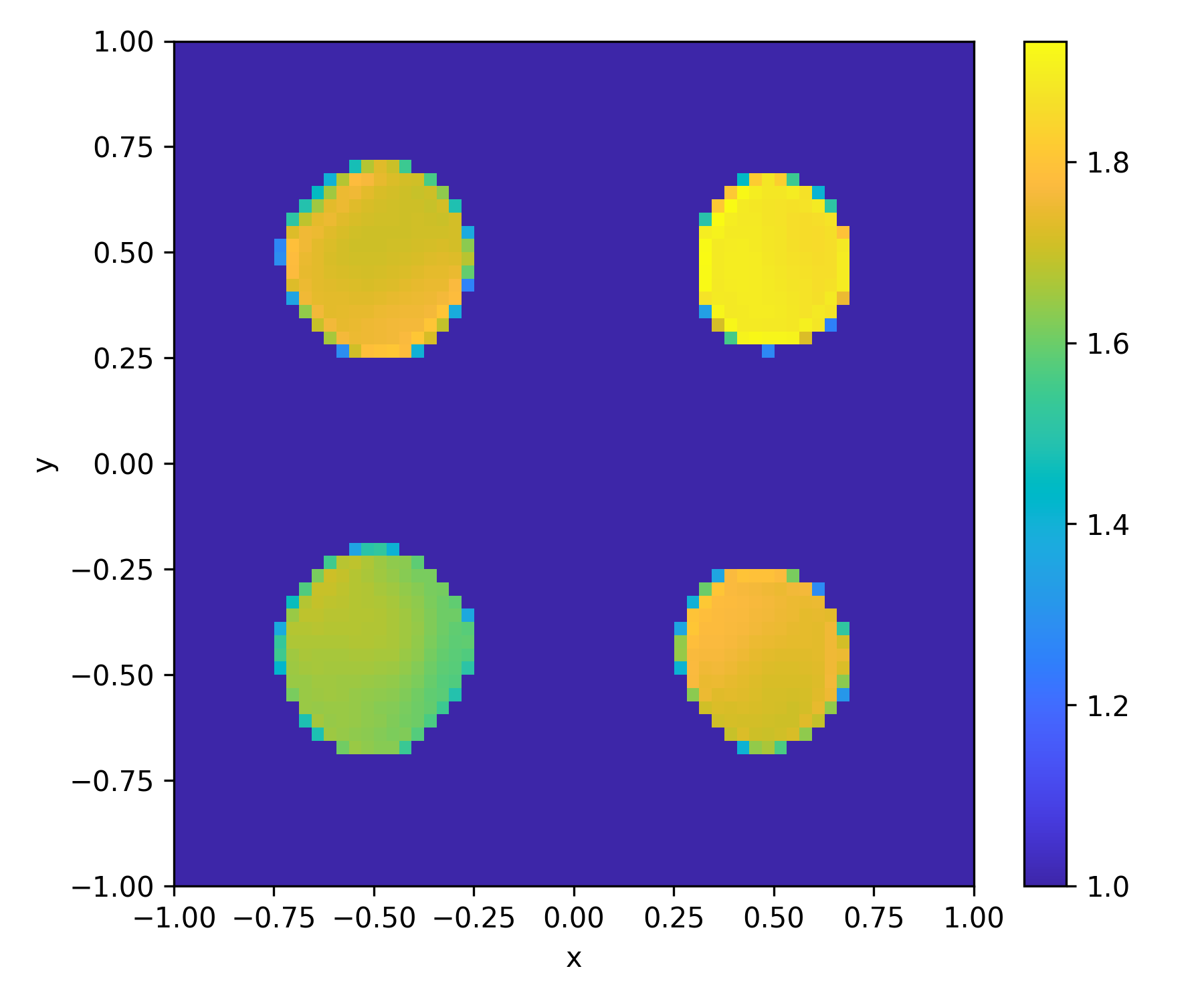}&
				\includegraphics[width=0.15\textwidth]{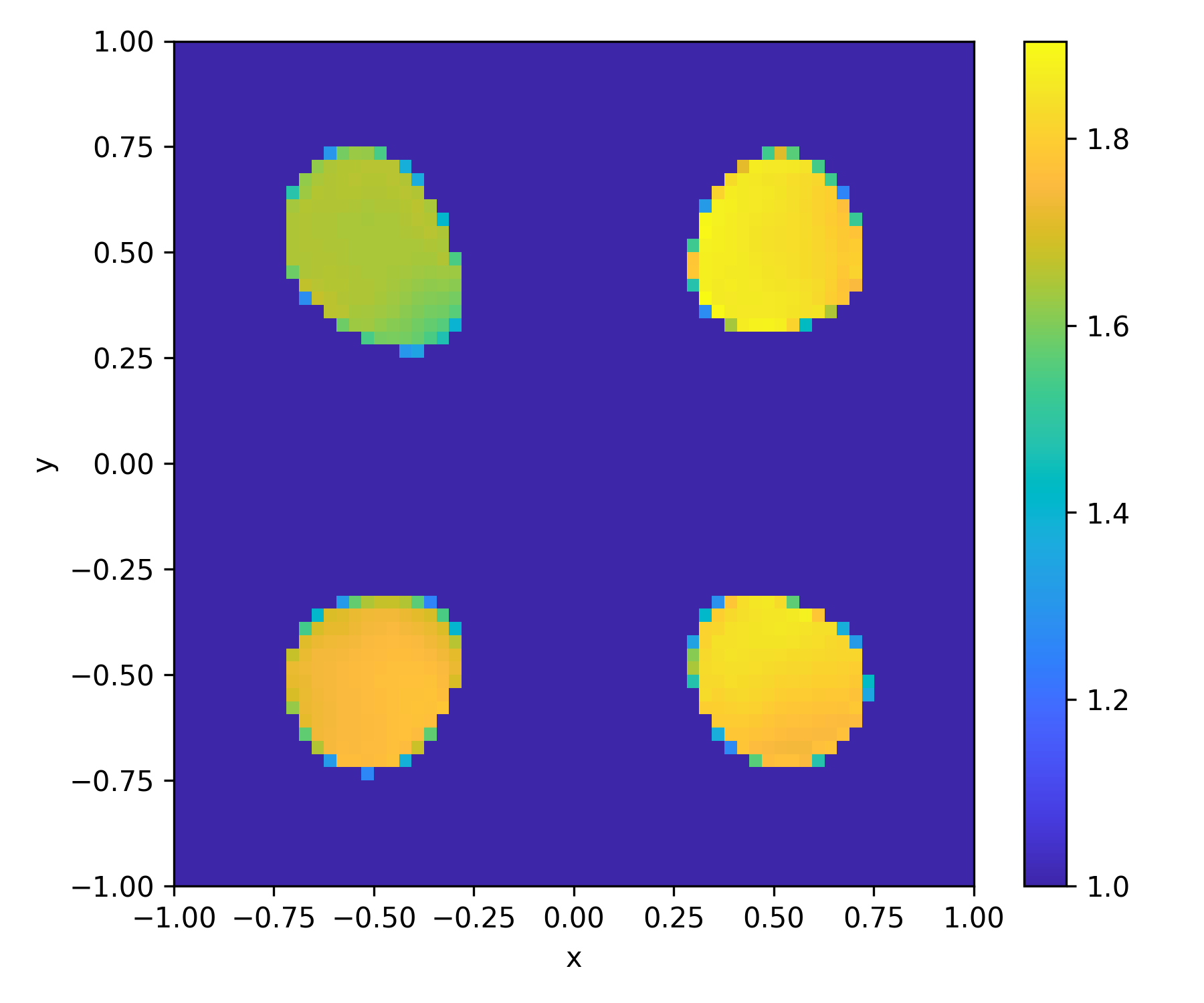}
			\end{tblr}
			\caption{Image reconstructions of a test example consisting of four circles with $15\%$ and $40\%$  Gaussian noises in the scattered fields by the networks trained by the circle dataset. From left to right: the ground-truth images, the reconstruction with 1,2,4,8, and 16 incident fields.}
			\label{tab:fig-Four_Circle}
		\end{center}
	\end{figure}
	
	\subsubsection{Tests with the “Austria ring”}
	Furthermore, we consider a rather special test called “Austria ring”, where the profile is also out of the distribution of the training data as shown in Fig.\,\ref{tab:fig-Austria_Circle} which is a well-known challenging profile in the community of inverse medium scattering problems. From Fig.\,\ref{tab:fig-Austria_Circle},  with very limited incident waves, it is reasonable to see that the DSM-DL can hardly recover the “Austria ring”. Moreover, as we have more measurement data, the accuracy and robustness of the reconstruction are improved gradually. This shows that using multiple data is very important for recovering complicated scatterers.

	\begin{figure}[htp]\small
		\begin{center}
			\begin{tblr}
				{colspec = {X[-1]X[c]X[c,h]X[c,h]X[c,h]X[c,h]X[c,h]},
					stretch = 0,
					rowsep = 0pt,}
				Noise Level& Ground truth& $N_{i}$=1& $N_{i}$=2 &$N_{i}$=4&$N_{i}$=8& $N_{i}$=16\\
				15\%&\SetCell[r=2]{c}\includegraphics[width=0.15\textwidth]{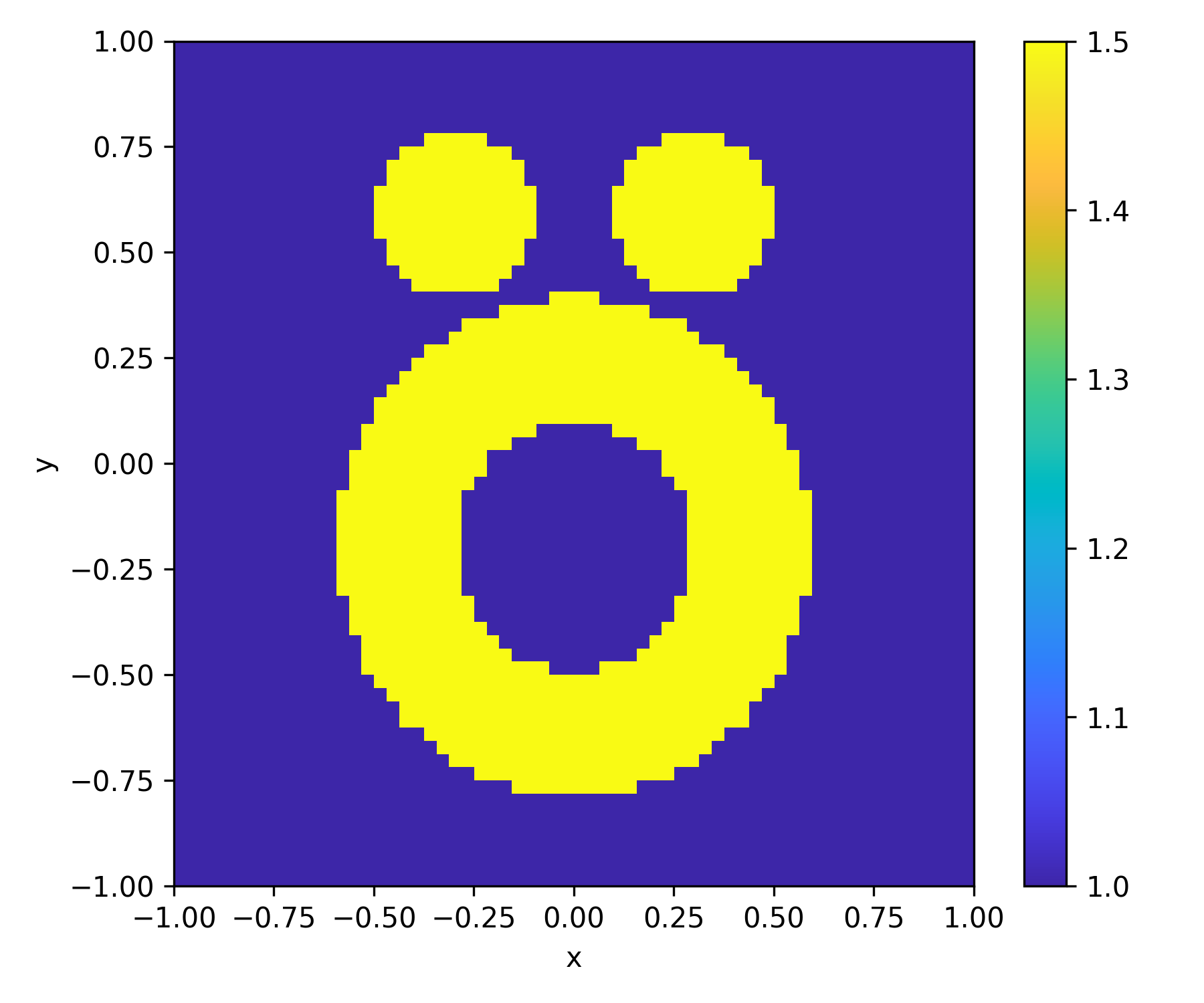}& 
				\includegraphics[width=0.15\textwidth]{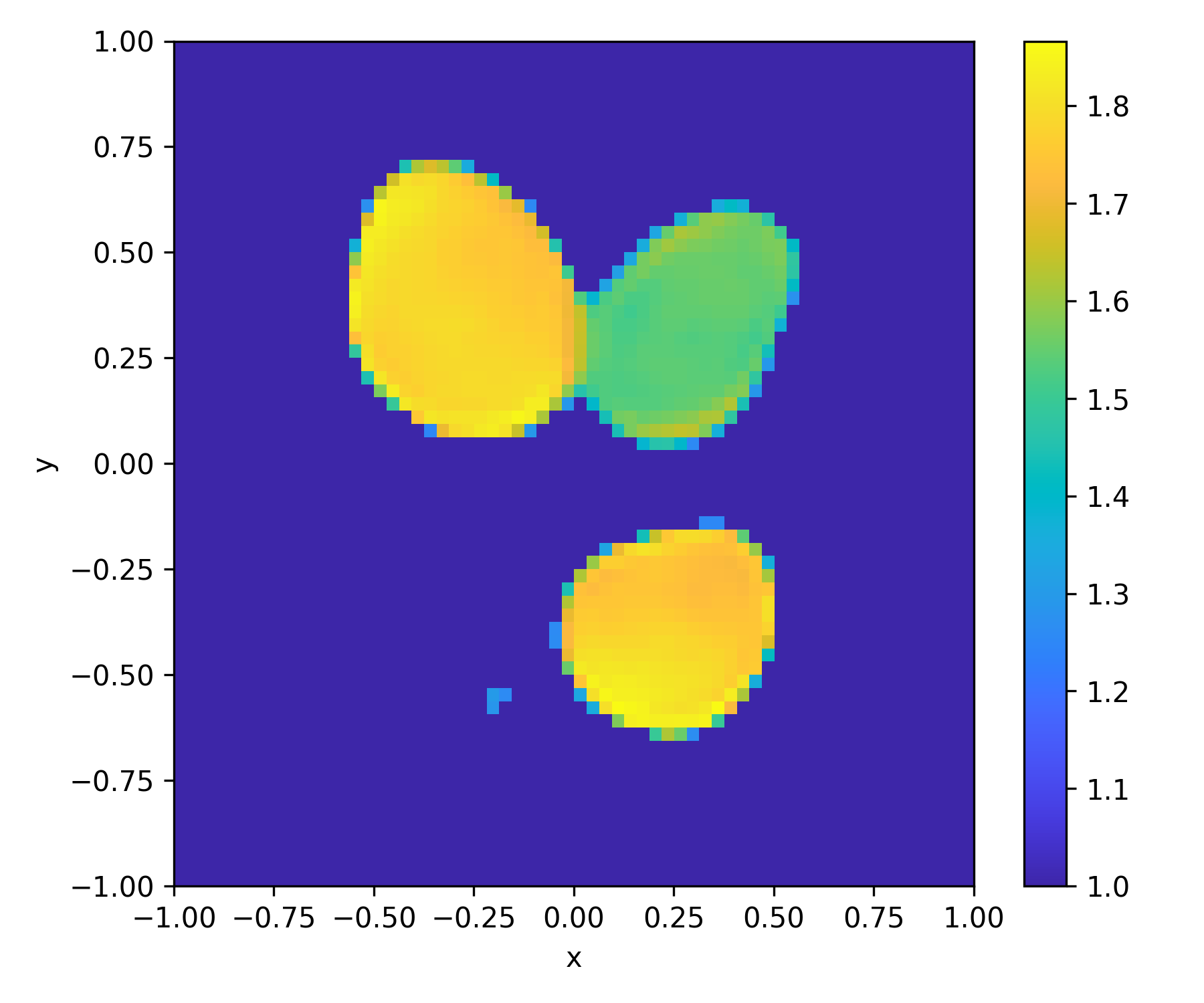}&
				\includegraphics[width=0.15\textwidth]{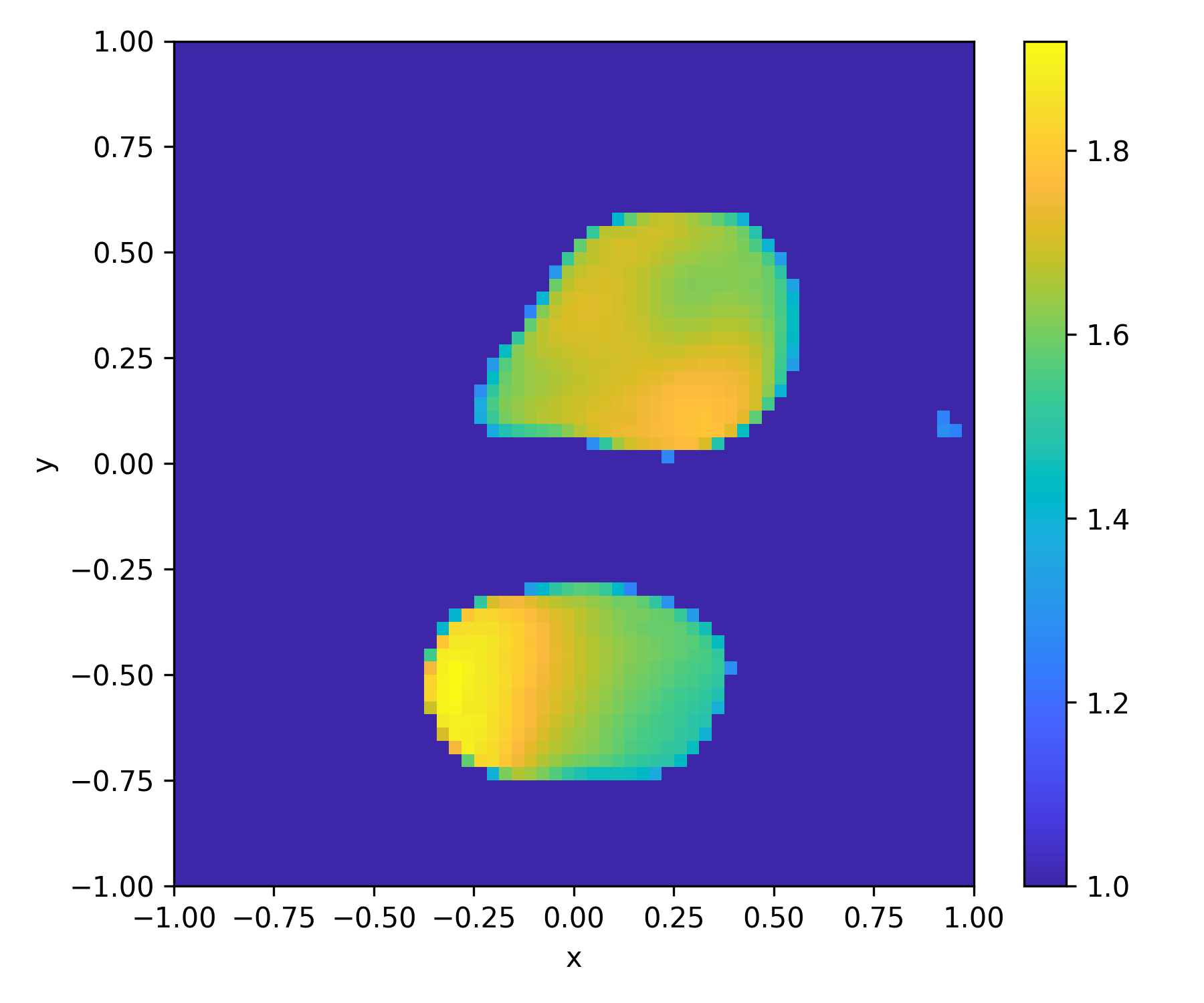}&
				\includegraphics[width=0.15\textwidth]{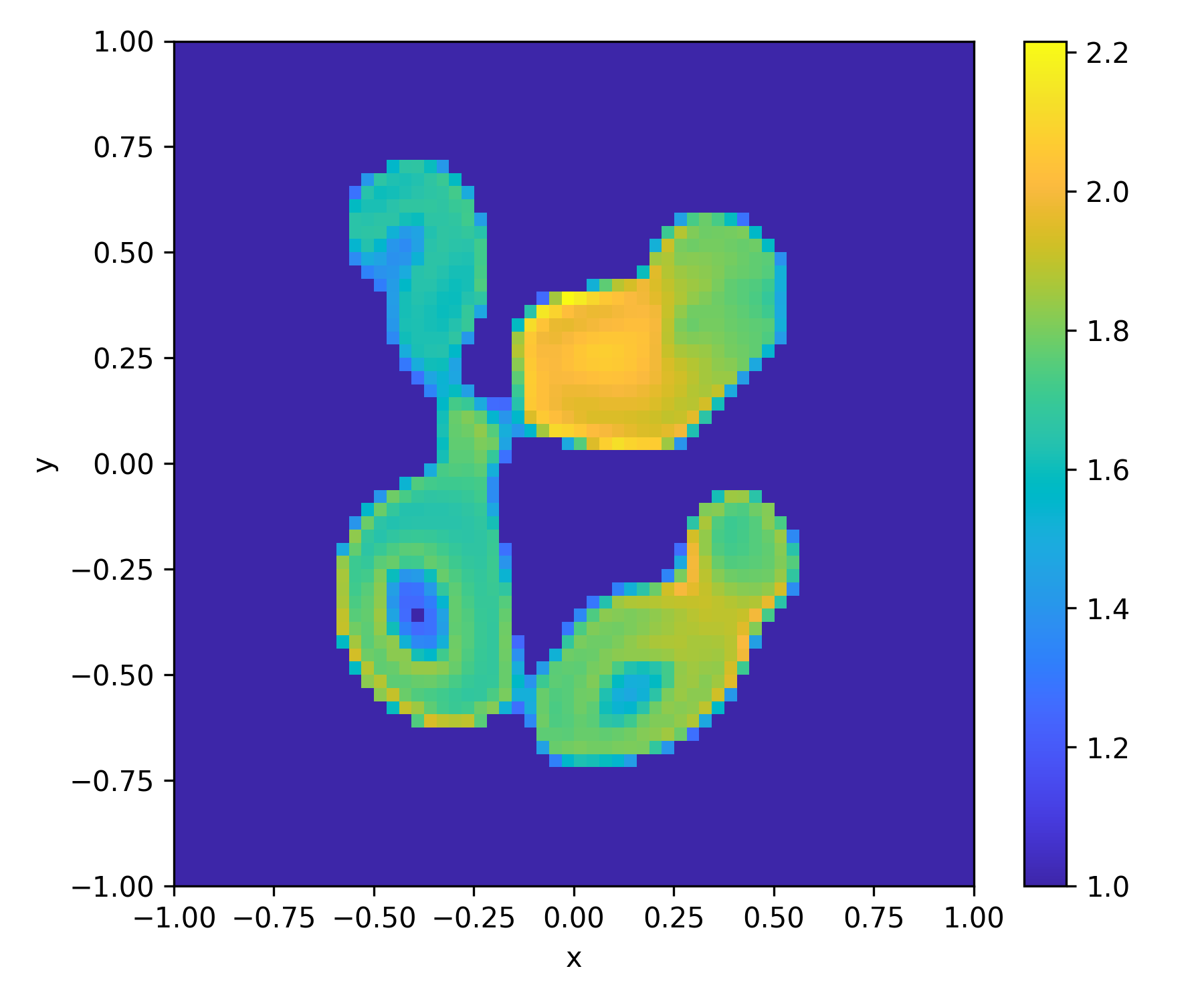}&
				\includegraphics[width=0.15\textwidth]{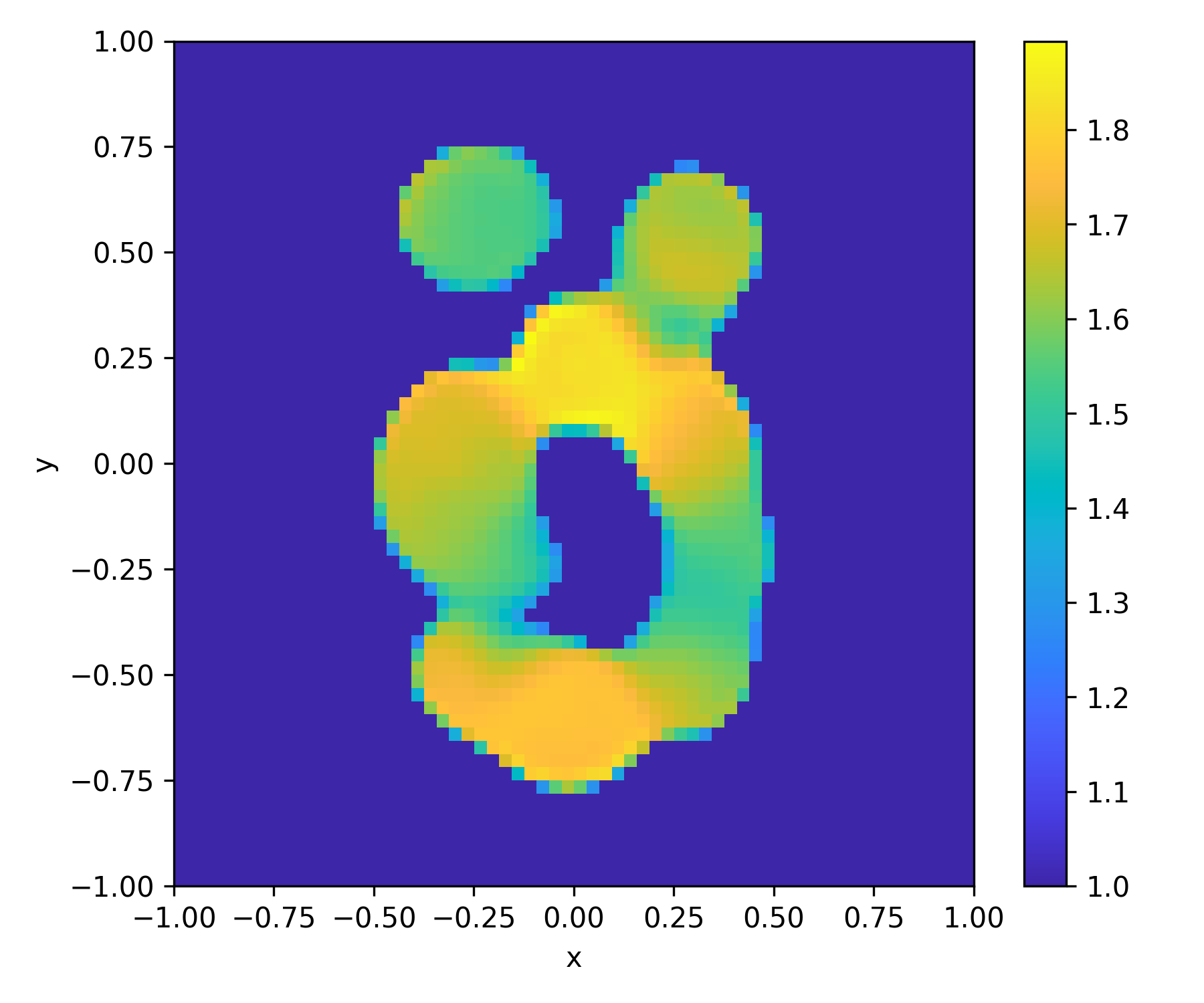}&
				\includegraphics[width=0.15\textwidth]{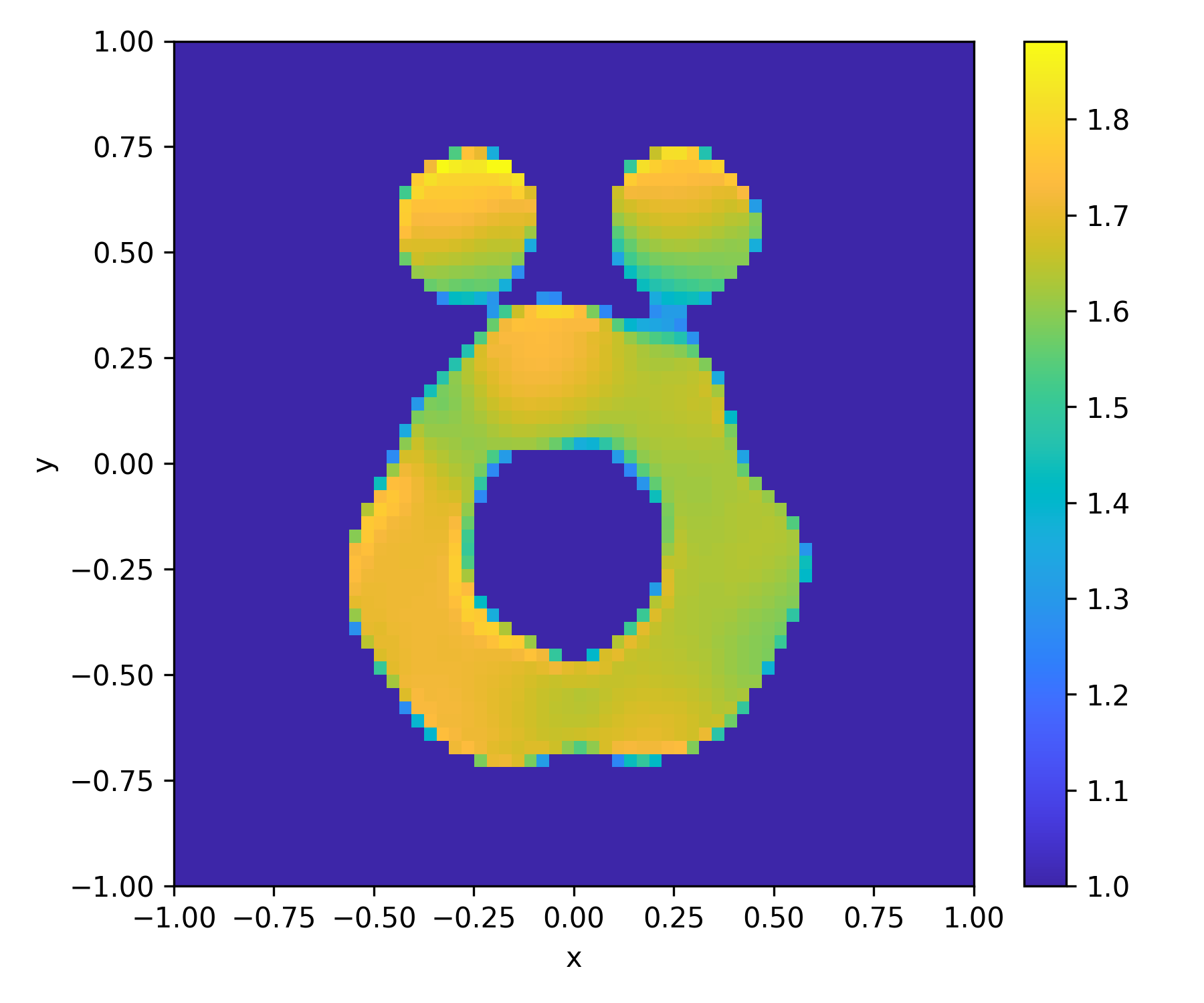}
				\\
				40\%& & 
				\includegraphics[width=0.15\textwidth]{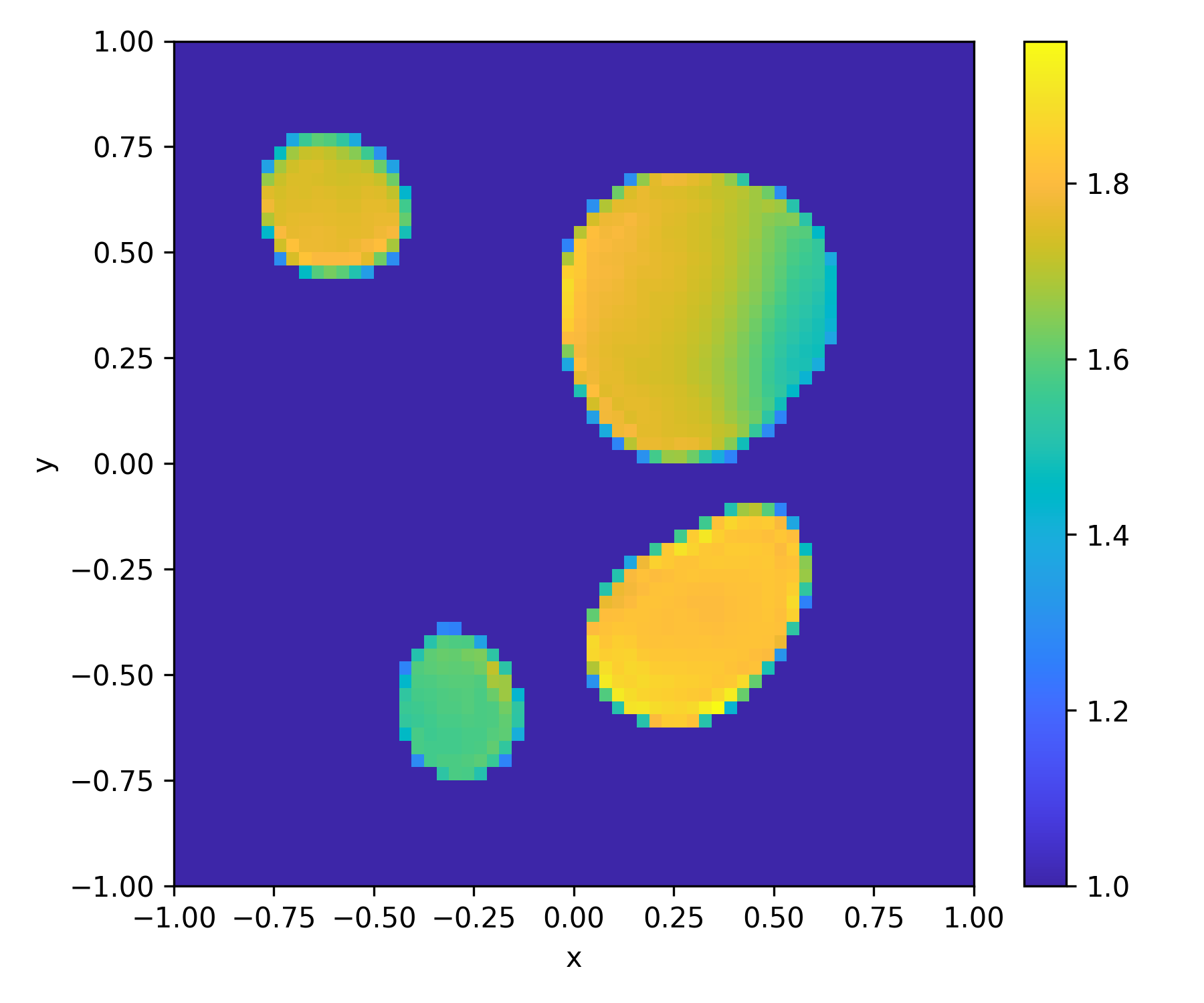}&
				\includegraphics[width=0.15\textwidth]{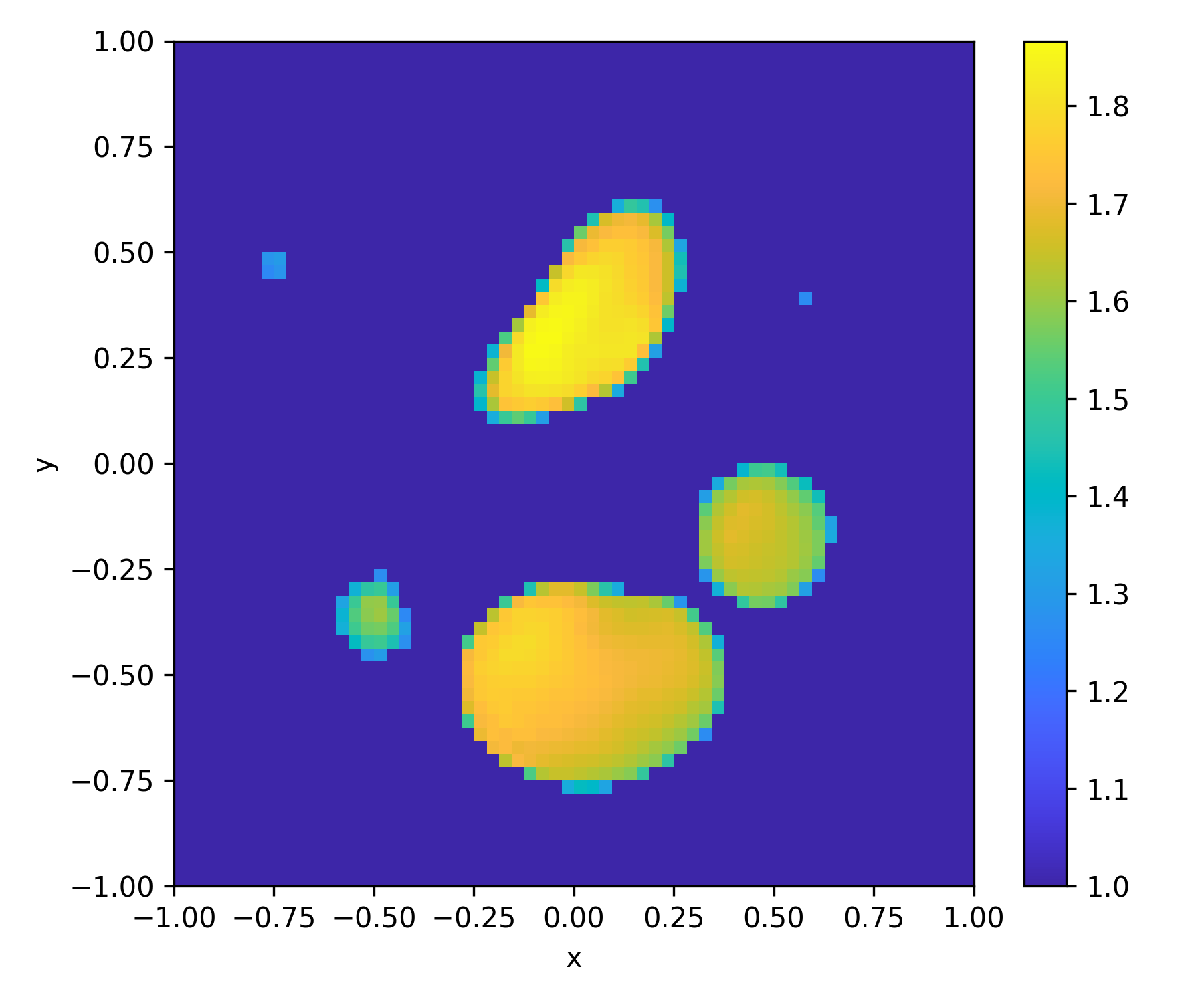}&
				\includegraphics[width=0.15\textwidth]{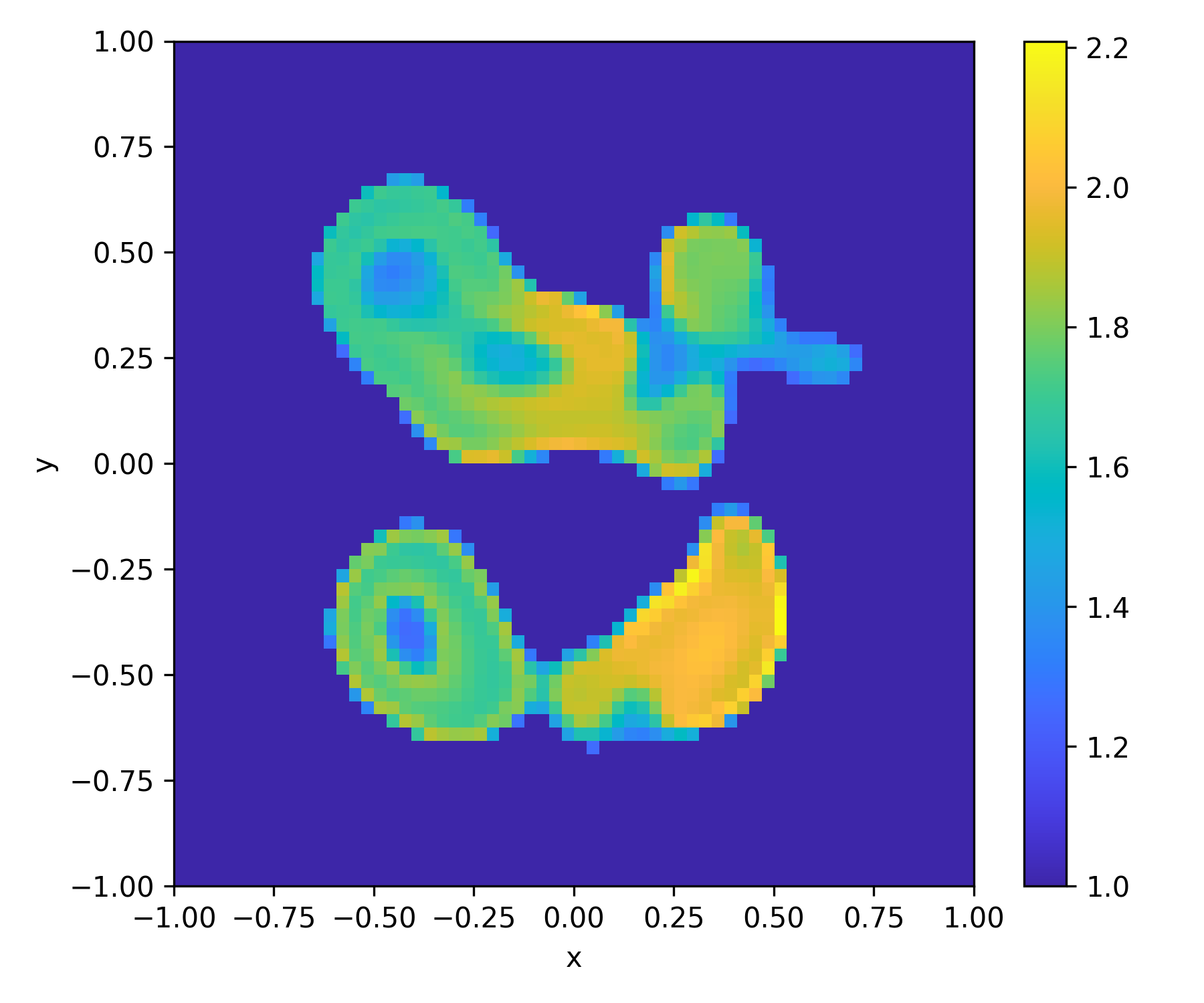}&
				\includegraphics[width=0.15\textwidth]{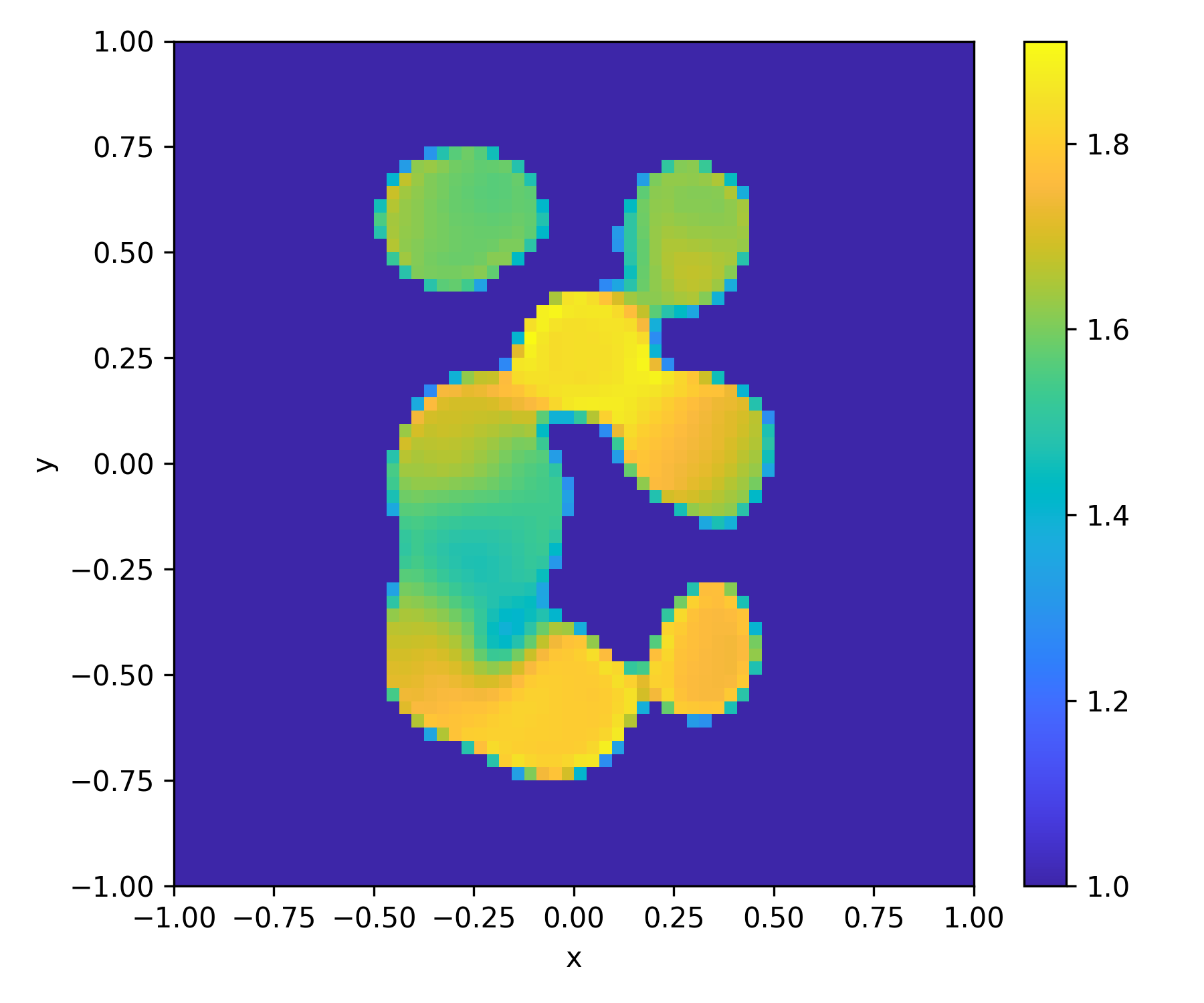}&
				\includegraphics[width=0.15\textwidth]{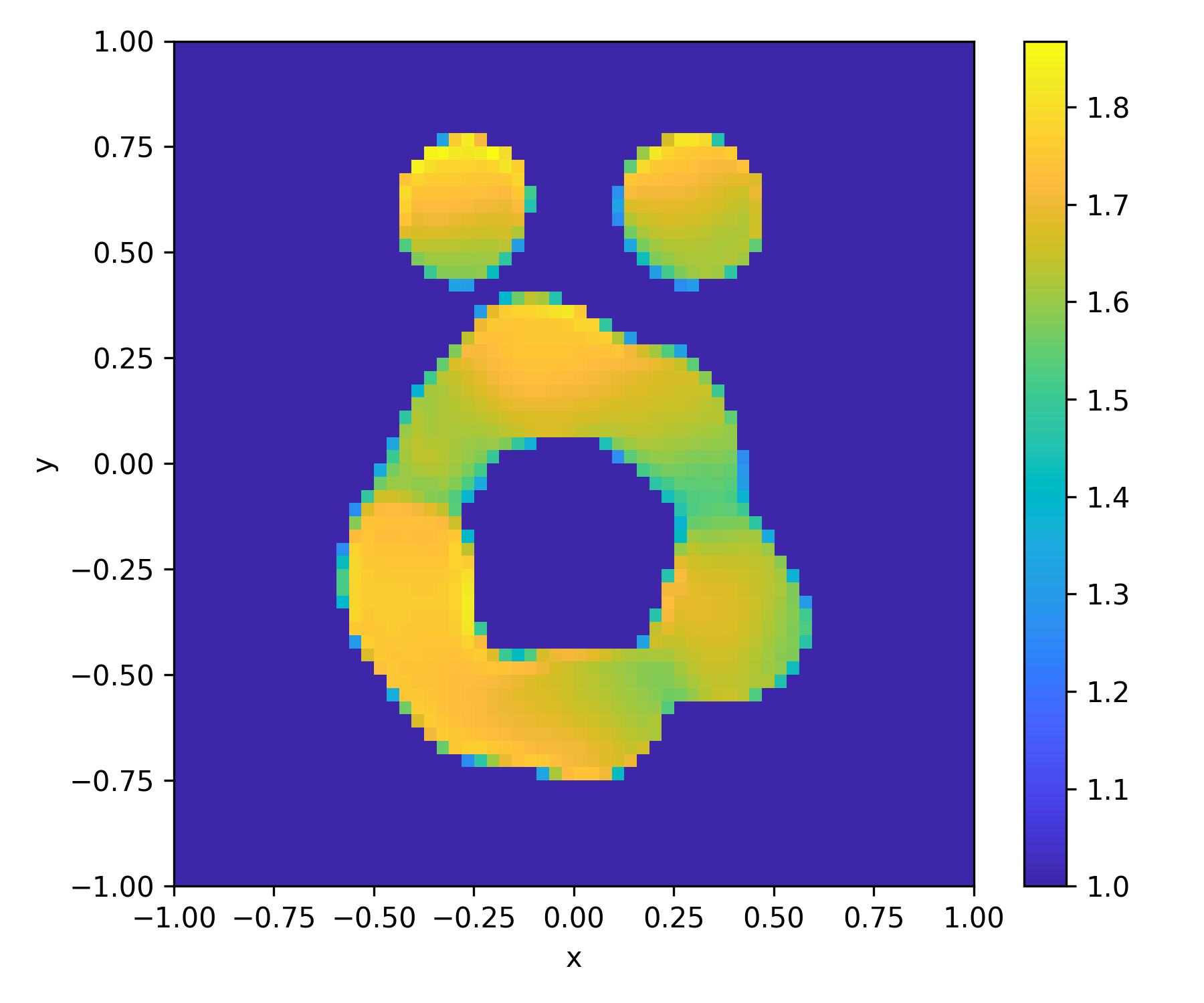}
			\end{tblr}
			\caption{Image reconstructions of the “Austria profile” with $15\%$ and $40\%$ Gaussian noises in the scattered fields by the networks trained by the circle dataset. From left to right: the ground-truth images, the reconstruction with 1,2,4,8, and 16 incident fields.}
			\label{tab:fig-Austria_Circle}
		\end{center}
	\end{figure}

  \subsubsection{Tests with the high-contrast scatterers}
  In this example, a high-contrast circle dataset is used to train and test the proposed neural network. The relative permittivity of the scatterers is sampled from $U(3.5,4.0)$, while other settings are the same as the previous circle dataset example. Thus, this inverse problem is much more ill-posed and nonlinear compared to the previous ones. Some reconstructions from the testing data are shown in Fig.\,\ref{tab:fig-Circle_Strong}, and the relative L2 testing error and SSIM are listed in Table\,\ref{tab:error_circle}. For small $N_i$, especially for the case of three scatterers in the domain, the reconstructed scatterers are distorted and there are some unexpected artifacts. The reconstruction quality can be significantly improved as we have more data and is quite satisfactory with at least $4$ incidences fields which may be because the classical DSM can be applied to the high-contrast case.

	\begin{figure}[htp]\small
	\begin{center}
		\begin{tblr}
			{colspec = {X[-1]X[c]X[c,h]X[c,h]X[c,h]X[c,h]X[c,h]},
				stretch = 0,
				rowsep = 0pt,}
			Noise Level& Ground truth& $N_{i}$=1& $N_{i}$=2 &$N_{i}$=4&$N_{i}$=8& $N_{i}$=16\\
			15\%&\SetCell[r=2]{c}\includegraphics[width=0.15\textwidth]{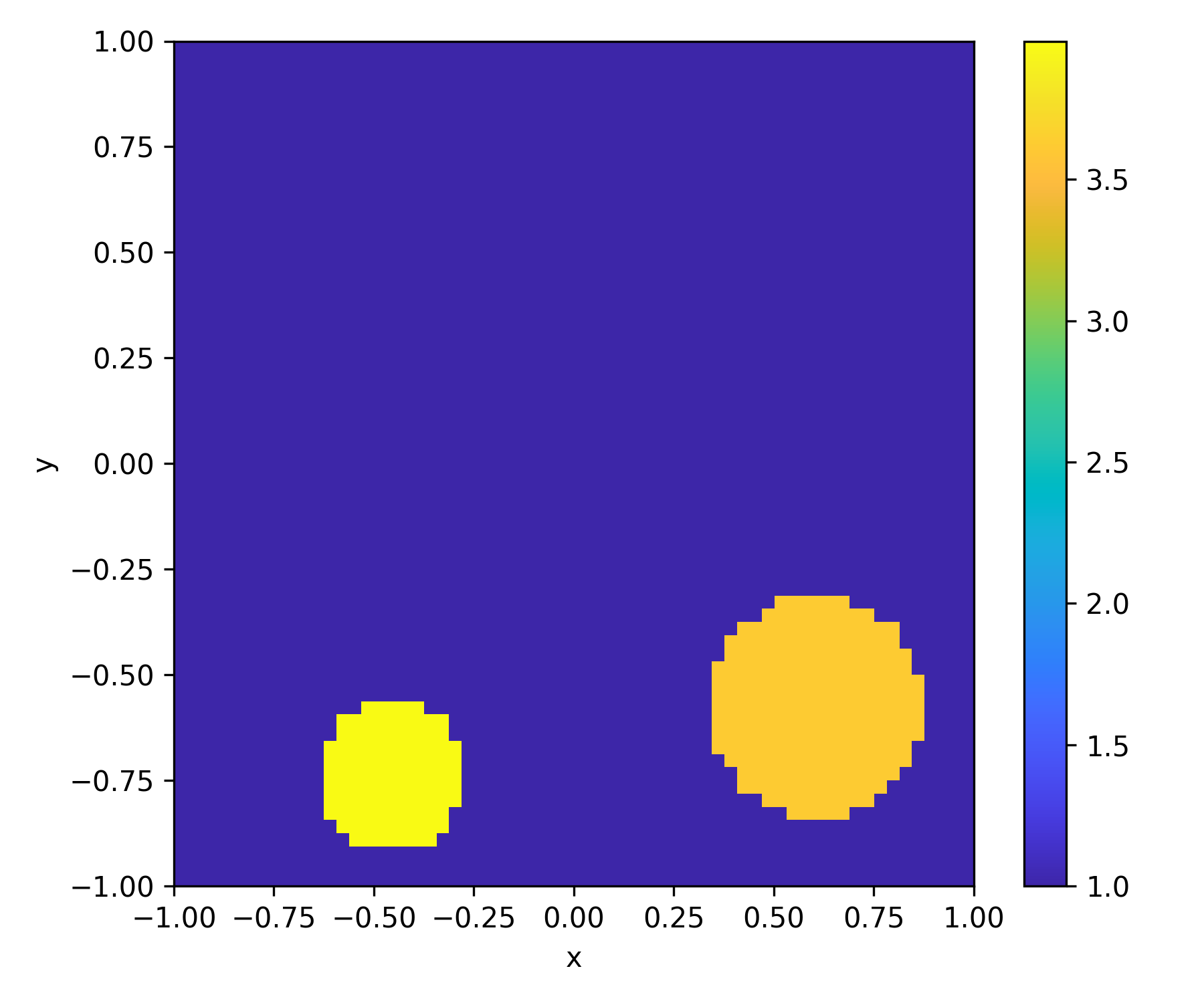}&
			\includegraphics[width=0.15\textwidth]{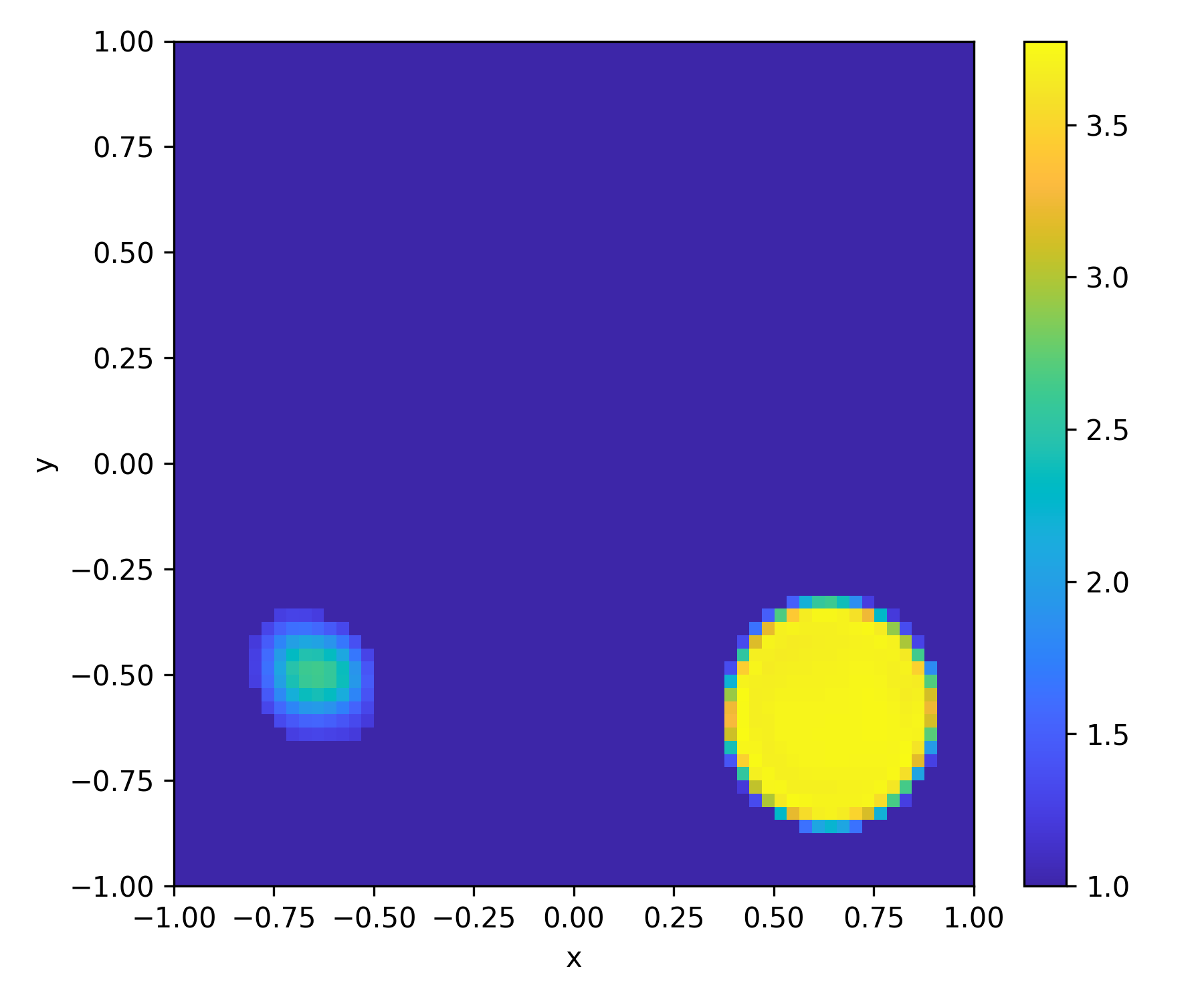}&
			\includegraphics[width=0.15\textwidth]{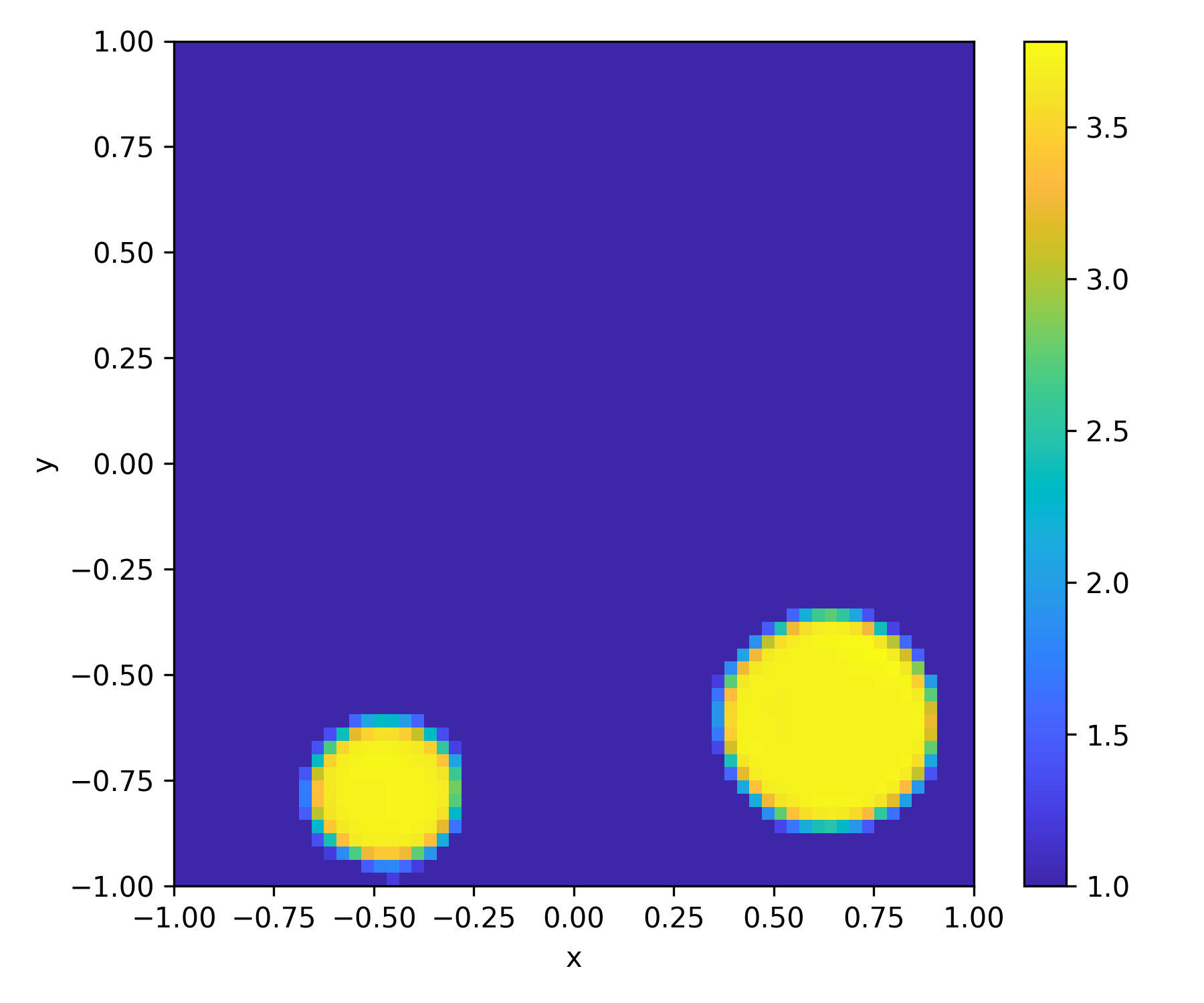}&
			\includegraphics[width=0.15\textwidth]{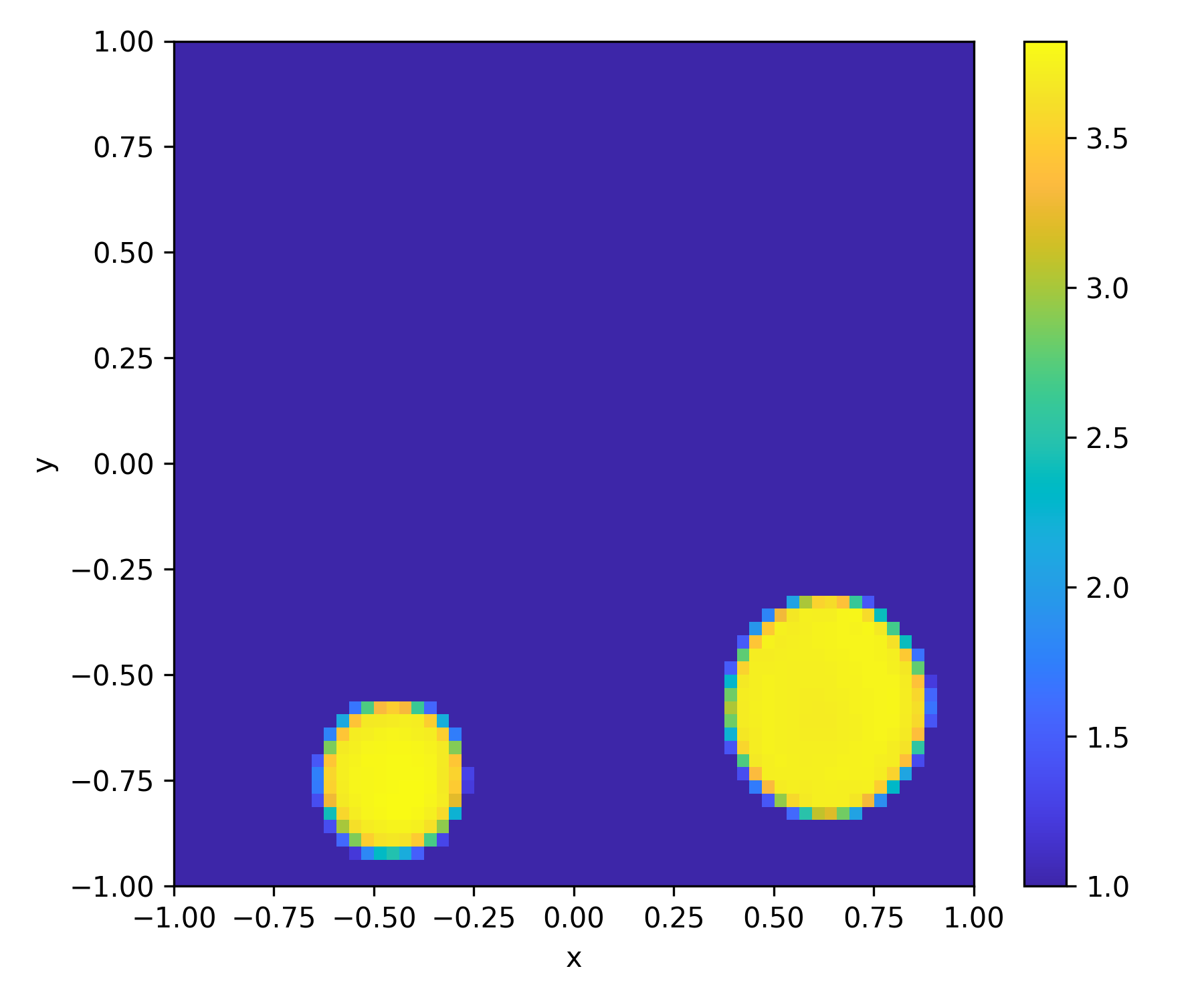}&
			\includegraphics[width=0.15\textwidth]{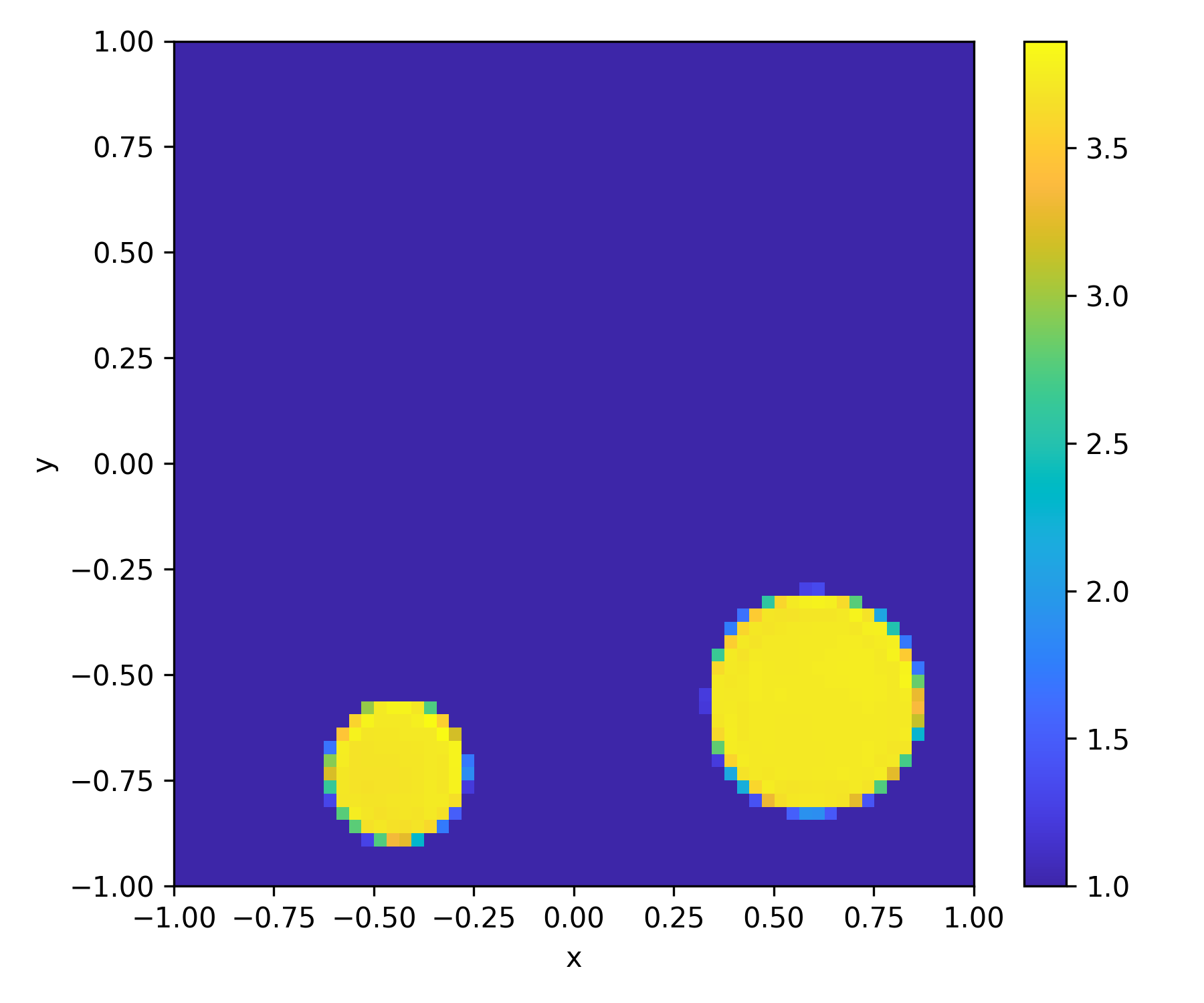}&
			\includegraphics[width=0.15\textwidth]{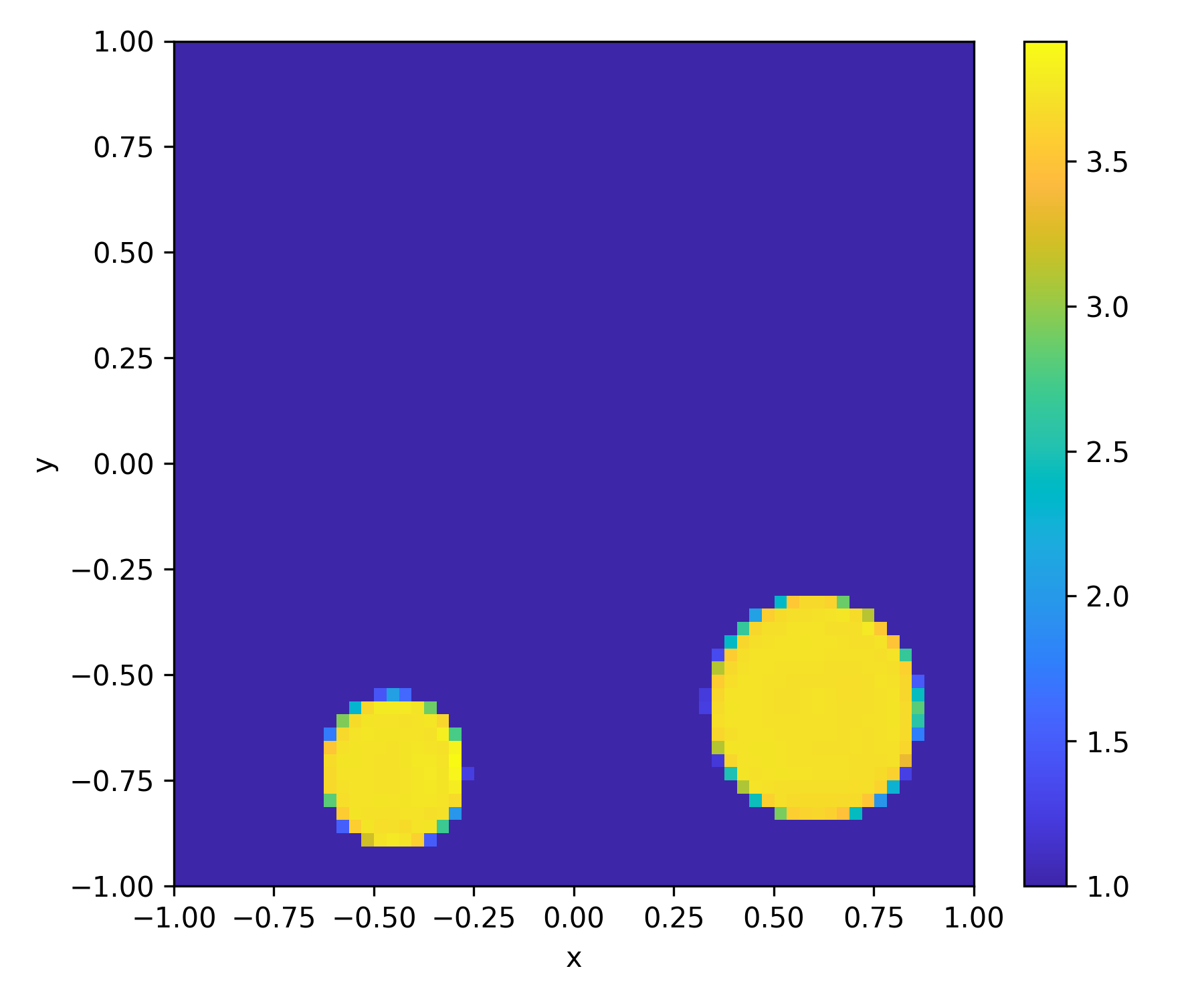}
			\\
			40\%& &
			\includegraphics[width=0.15\textwidth]{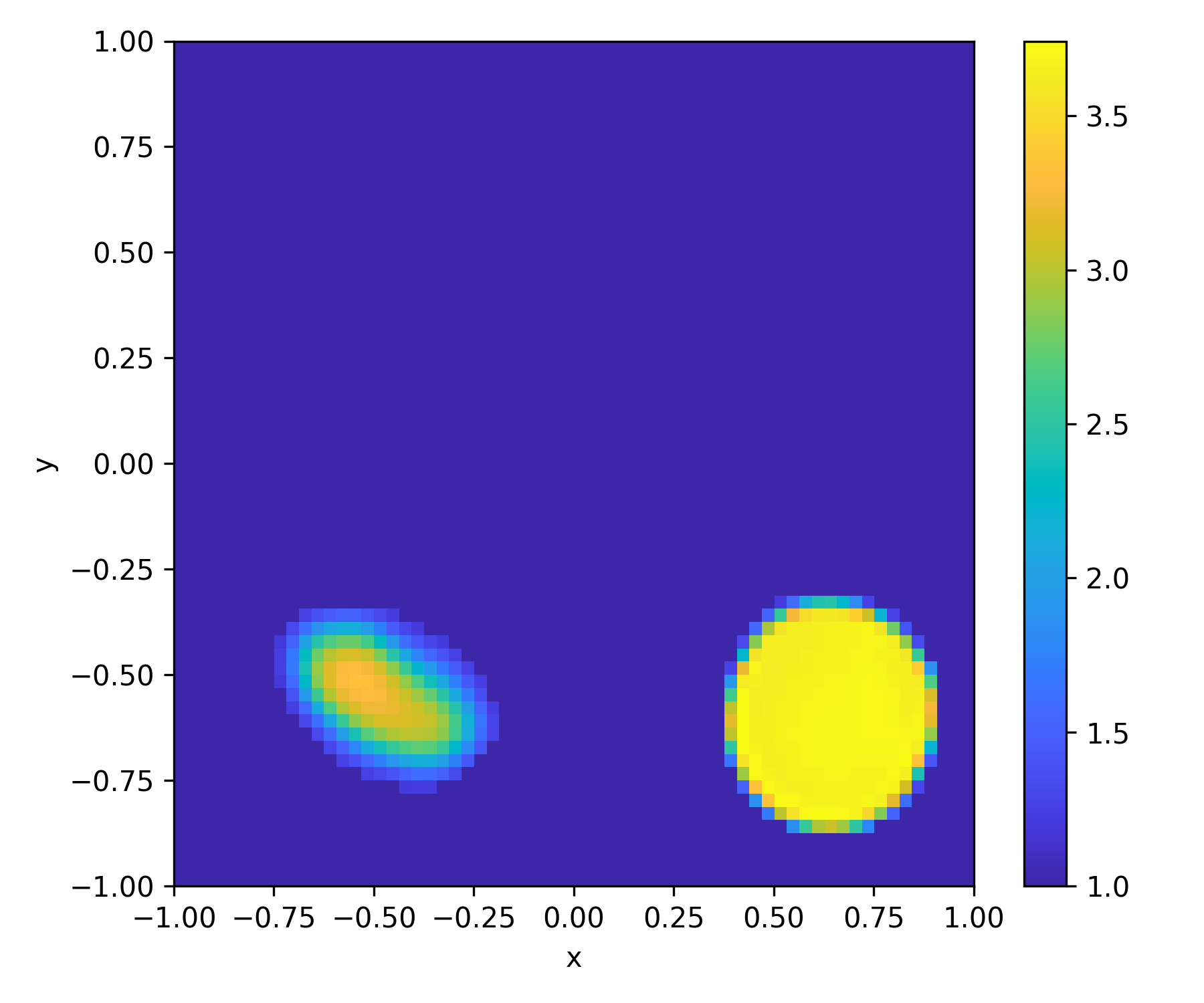}&
	    	\includegraphics[width=0.15\textwidth]{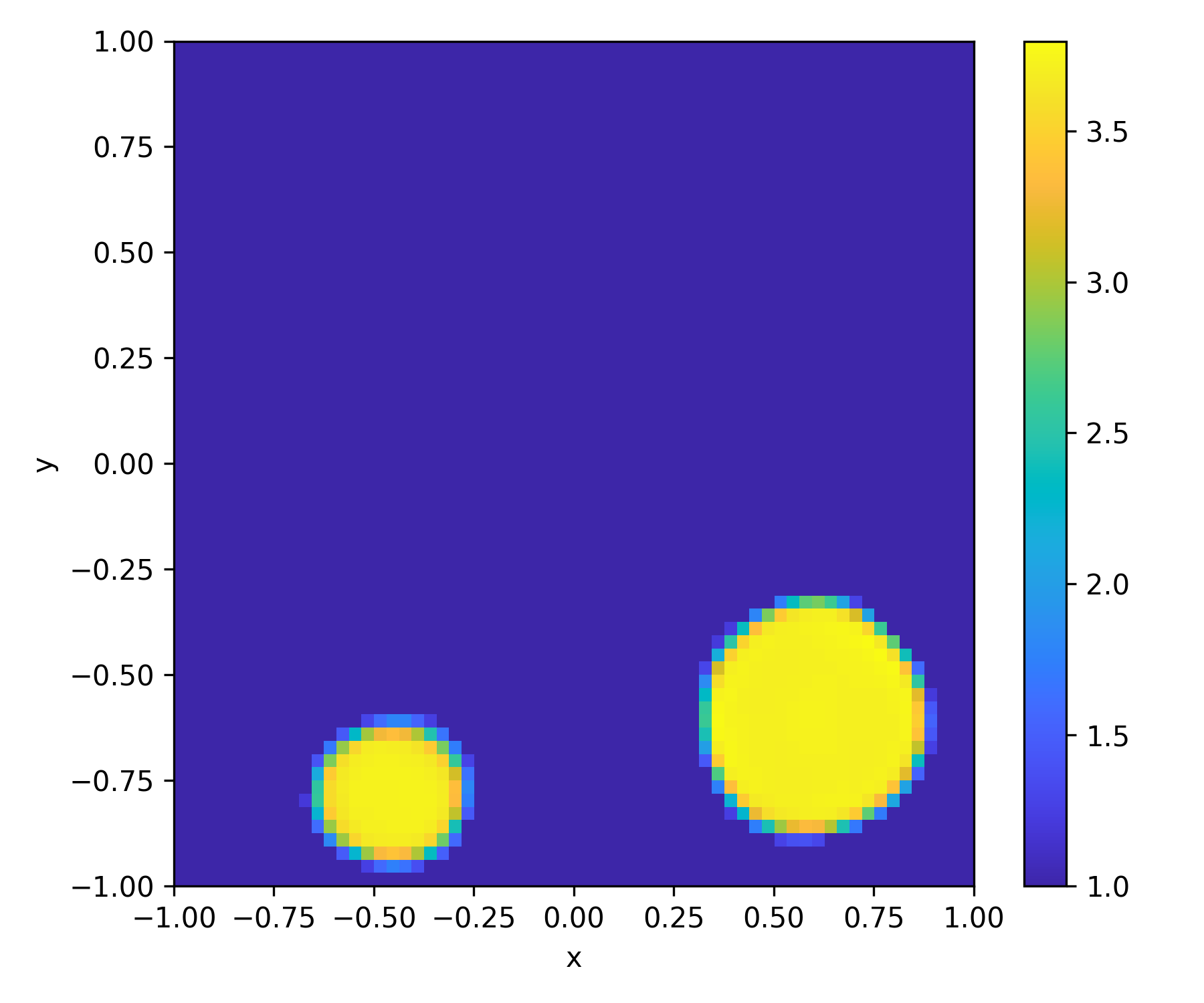}&
		   \includegraphics[width=0.15\textwidth]{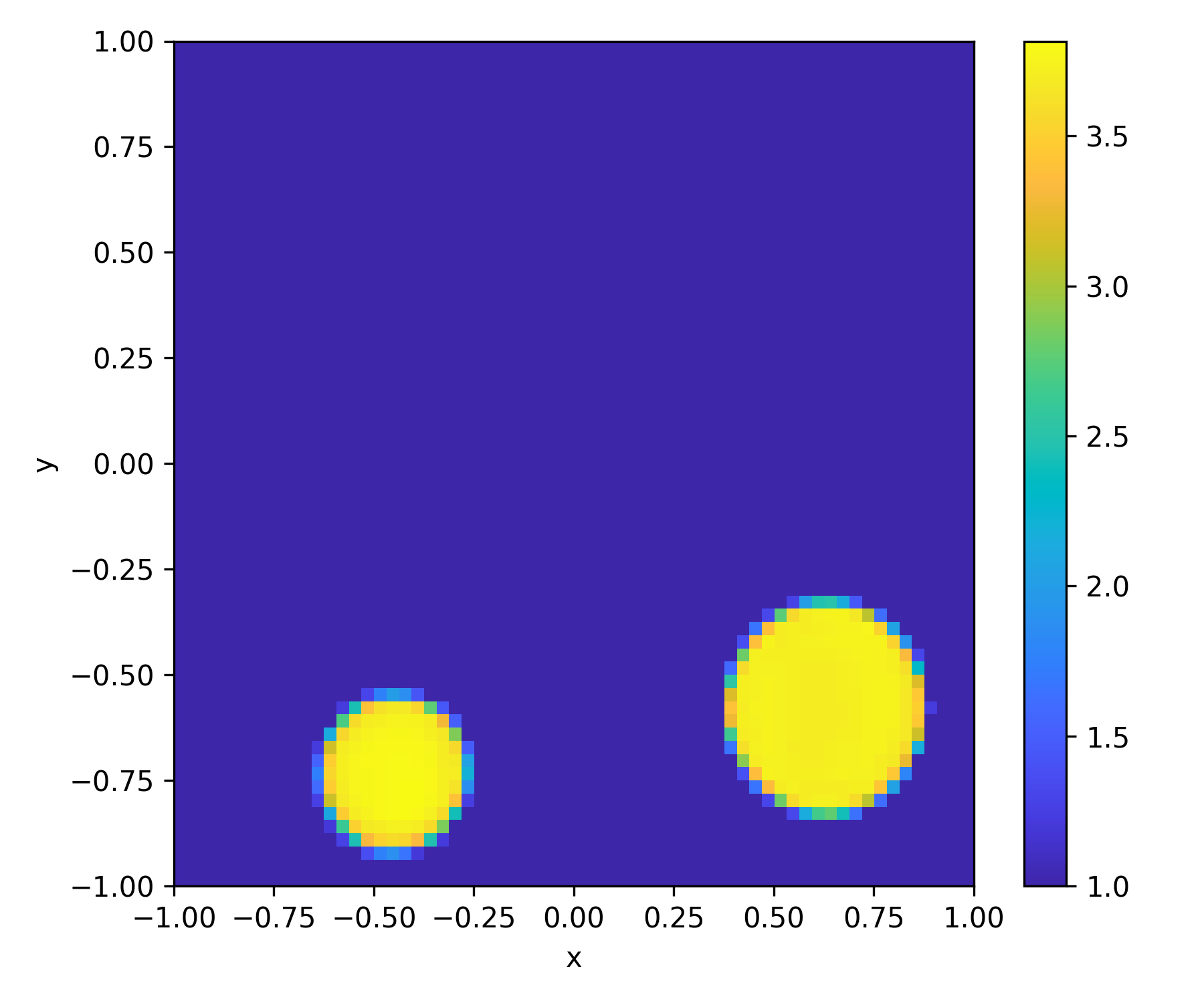}&
		  \includegraphics[width=0.15\textwidth]{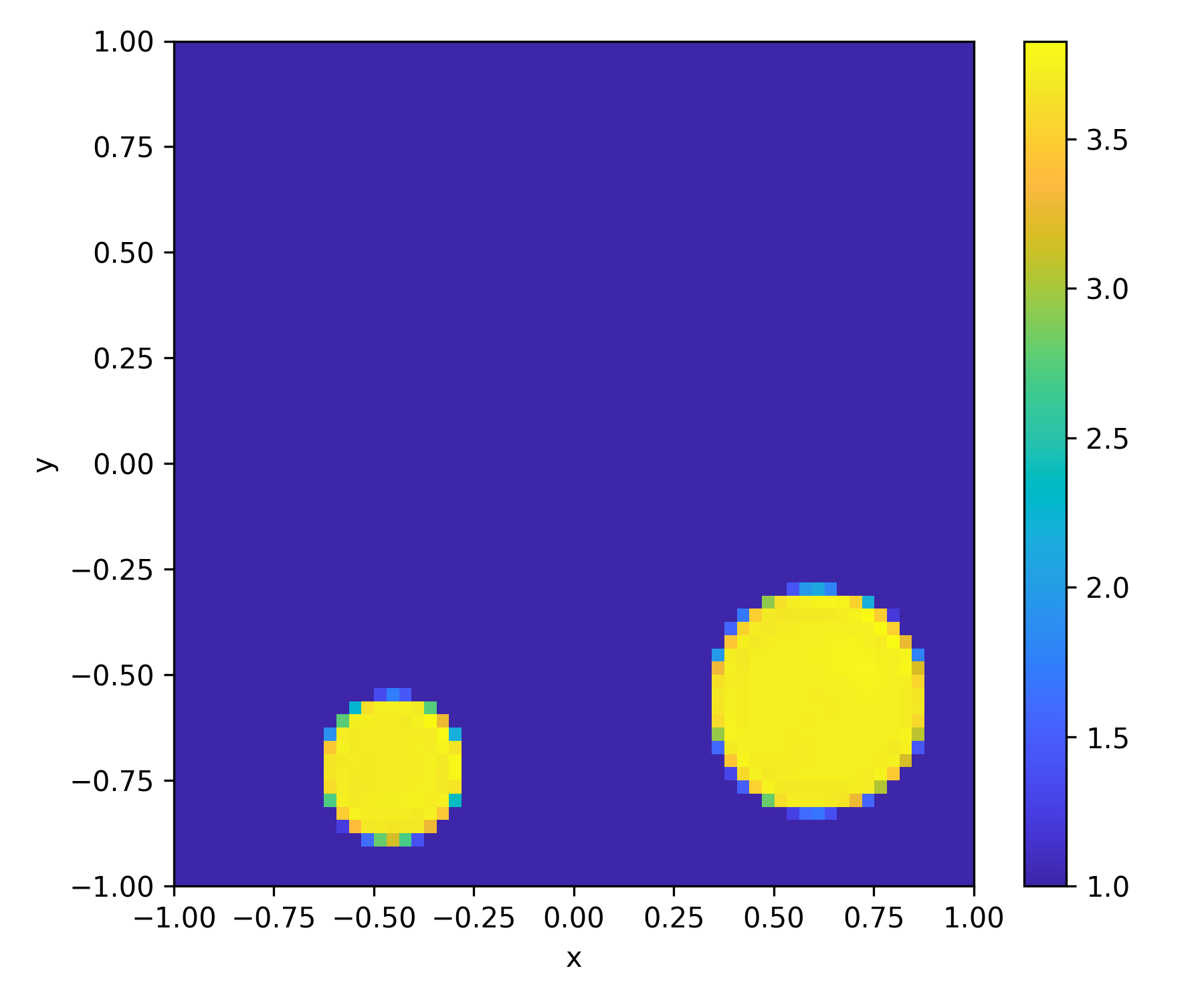}&
		  \includegraphics[width=0.15\textwidth]{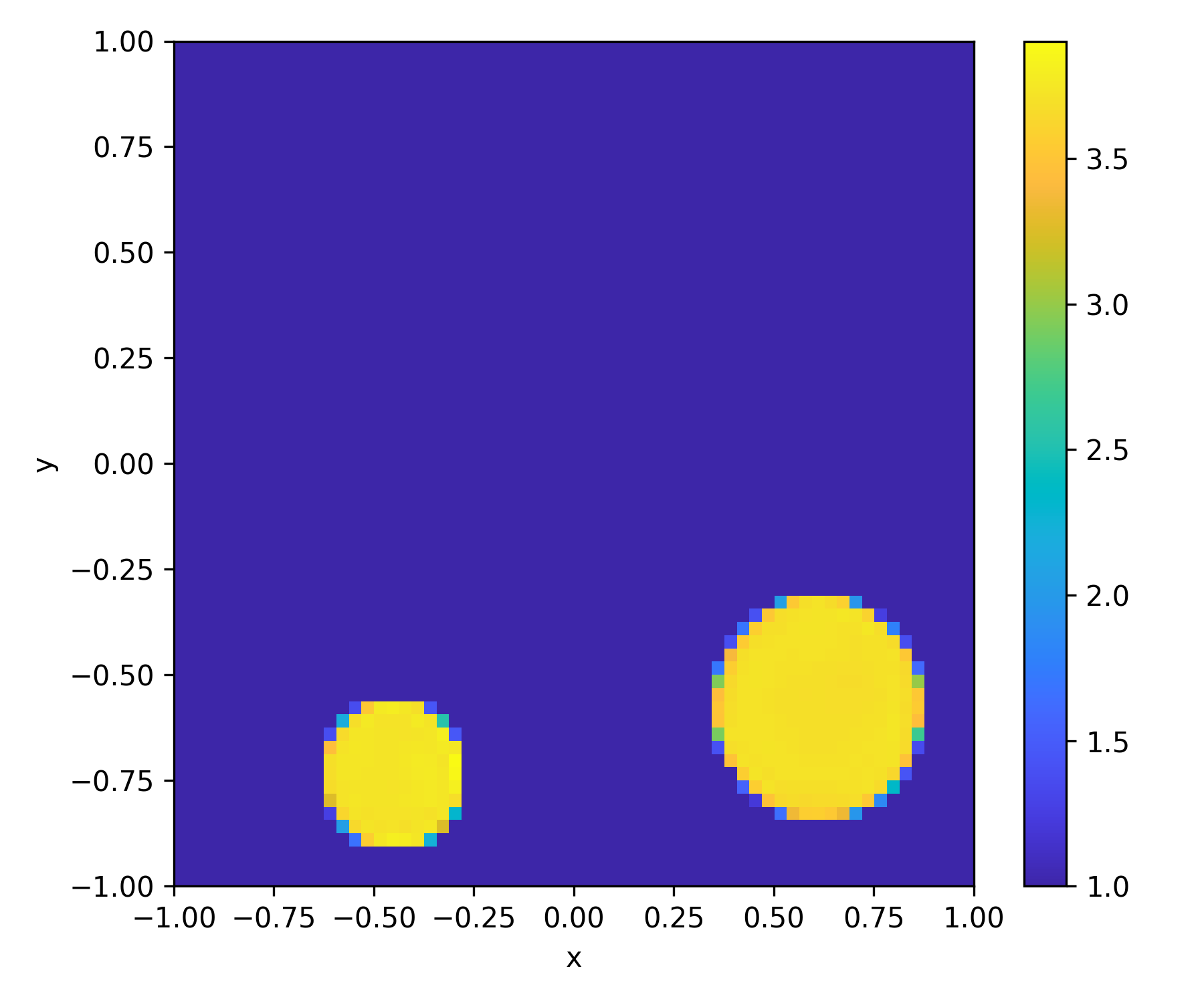}
			\\
			
			15\%&\SetCell[r=2]{c}\includegraphics[width=0.15\textwidth]{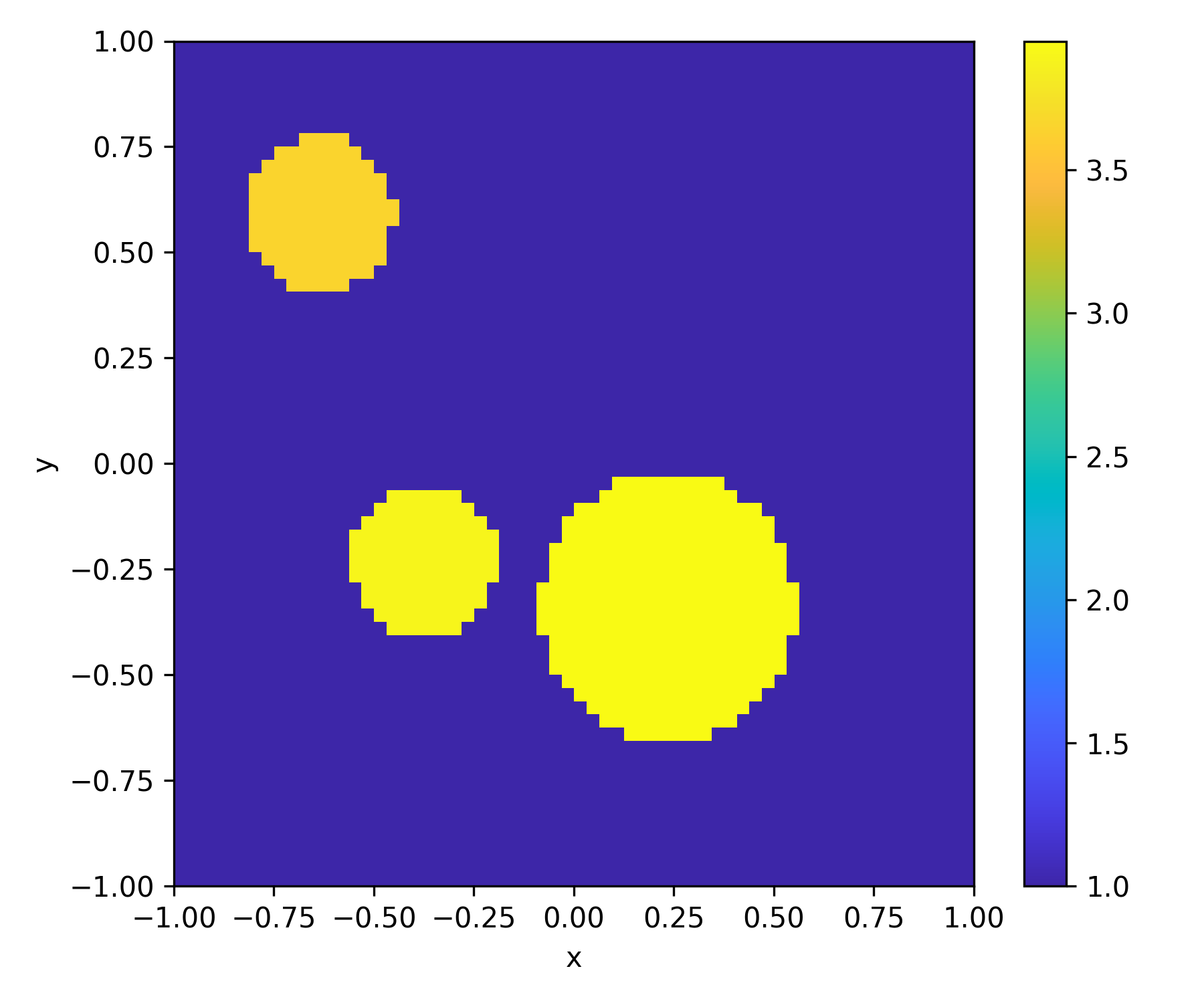}&
			\includegraphics[width=0.15\textwidth]{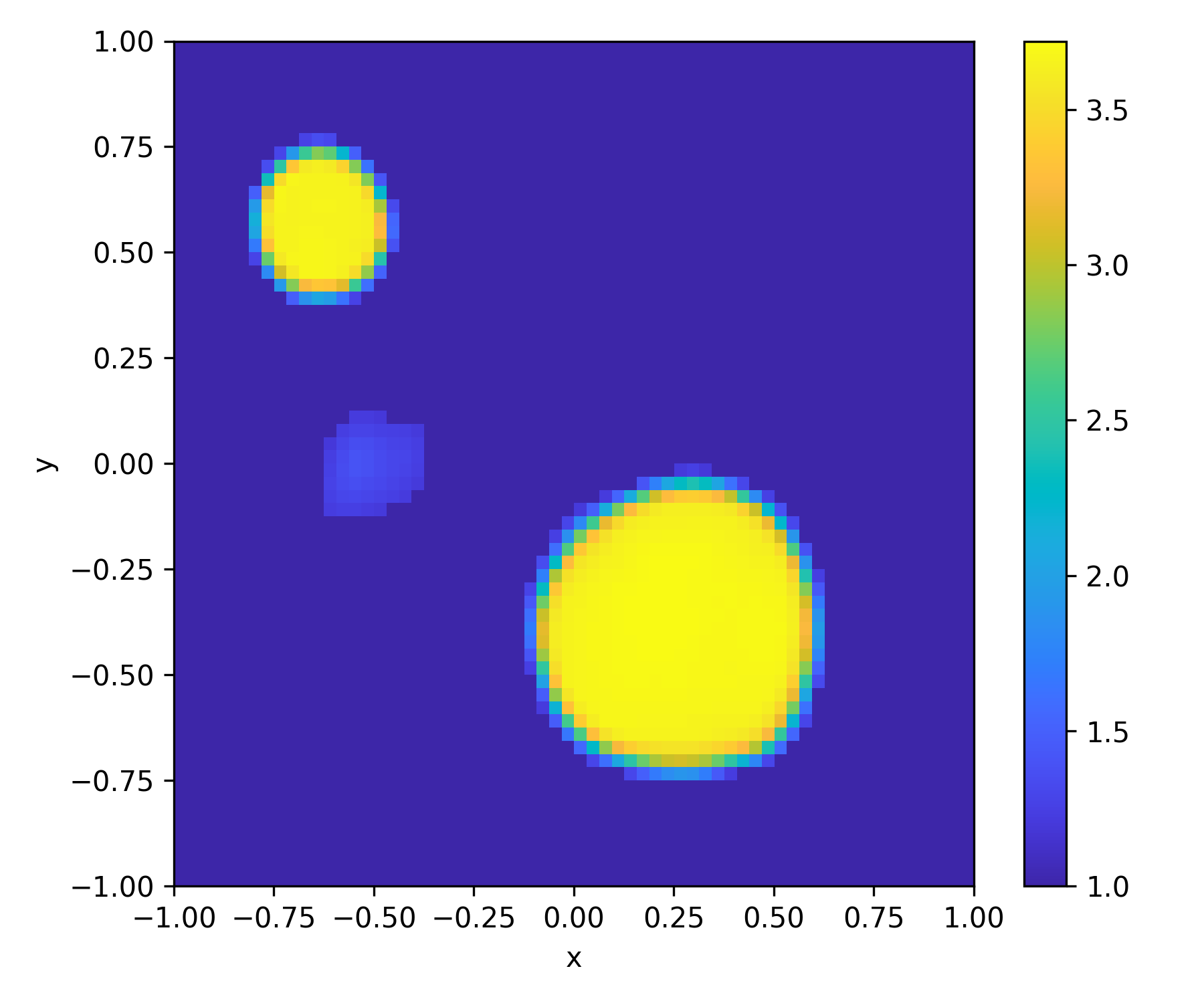}&
			\includegraphics[width=0.15\textwidth]{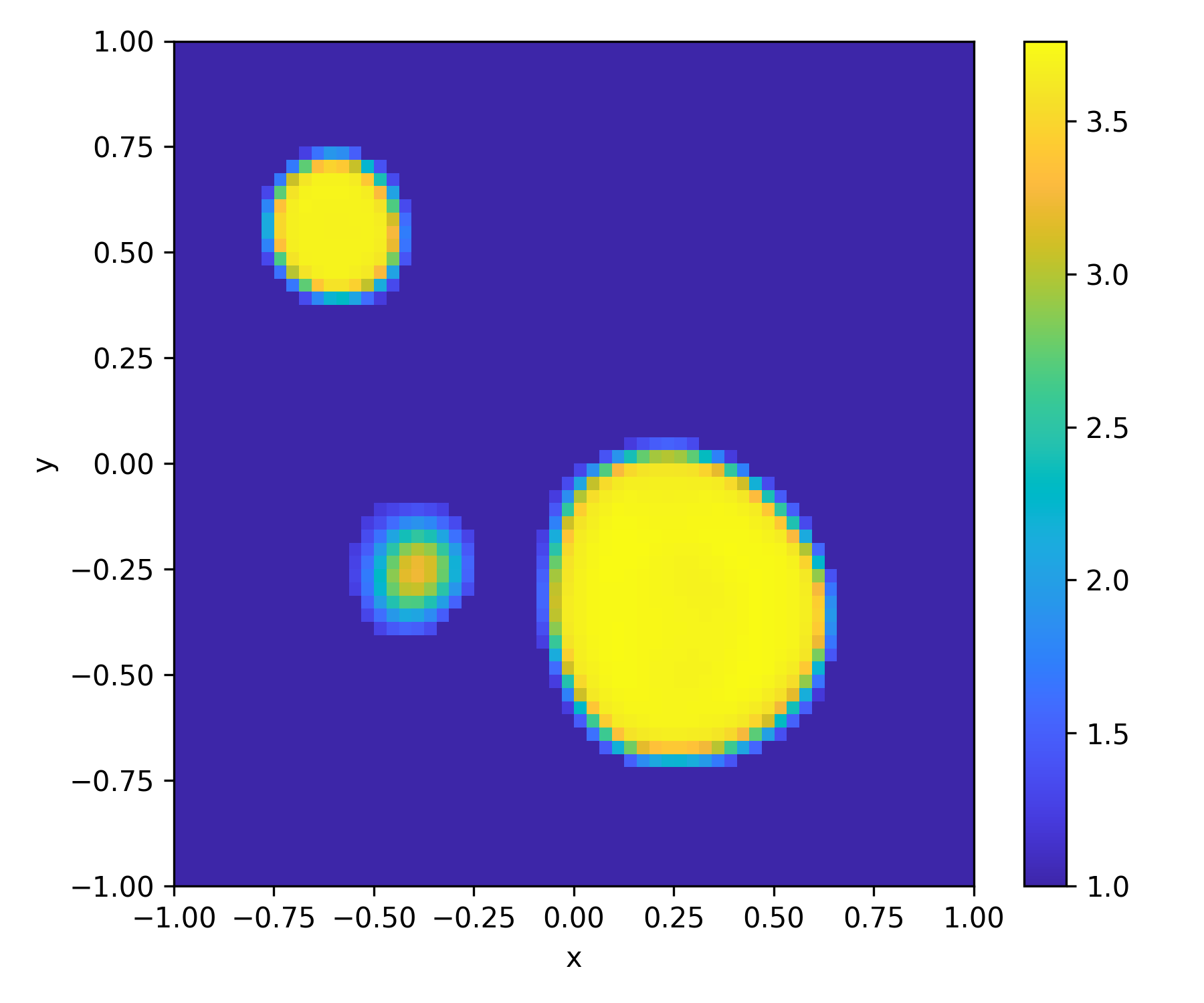}&
			\includegraphics[width=0.15\textwidth]{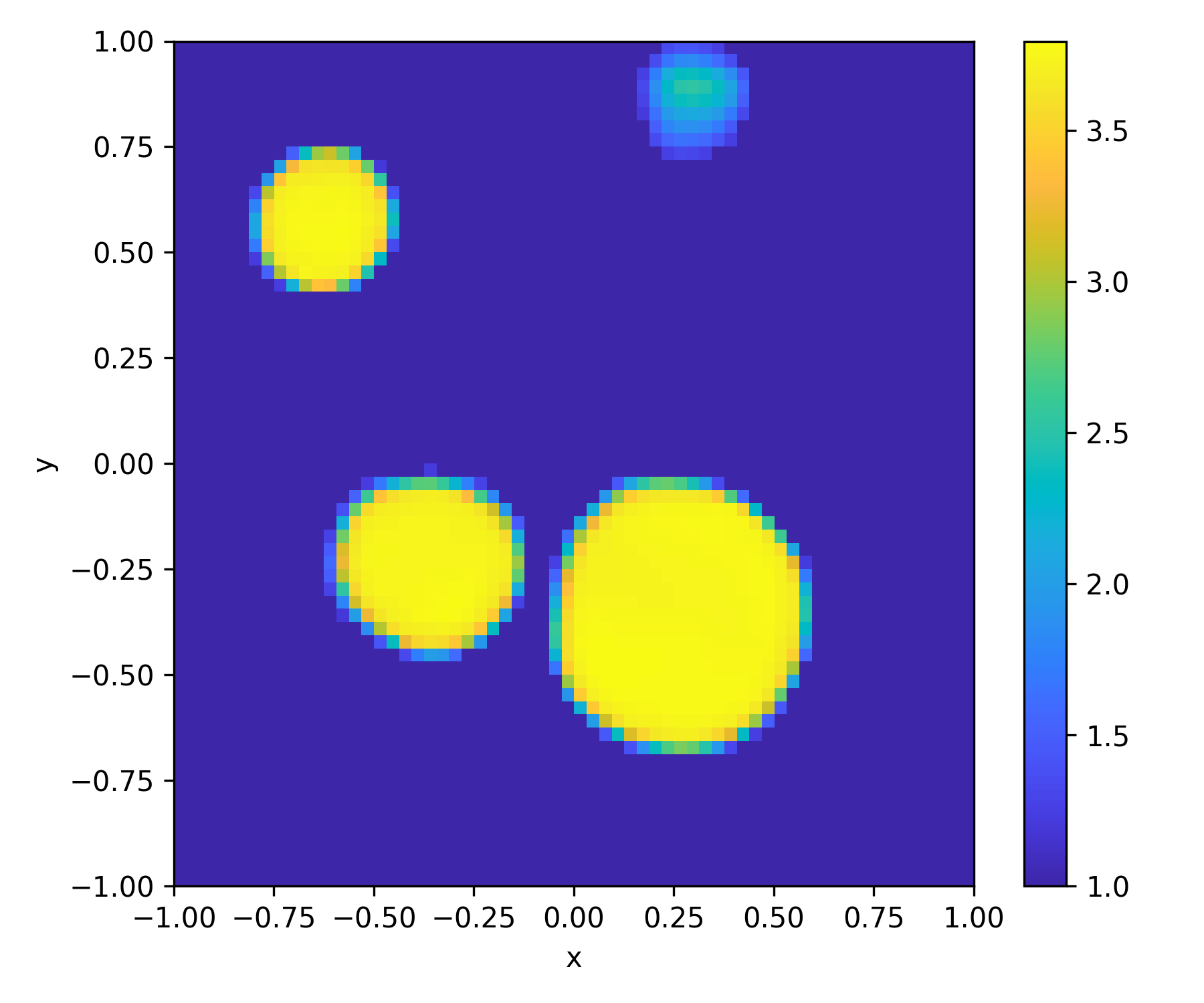}&
			\includegraphics[width=0.15\textwidth]{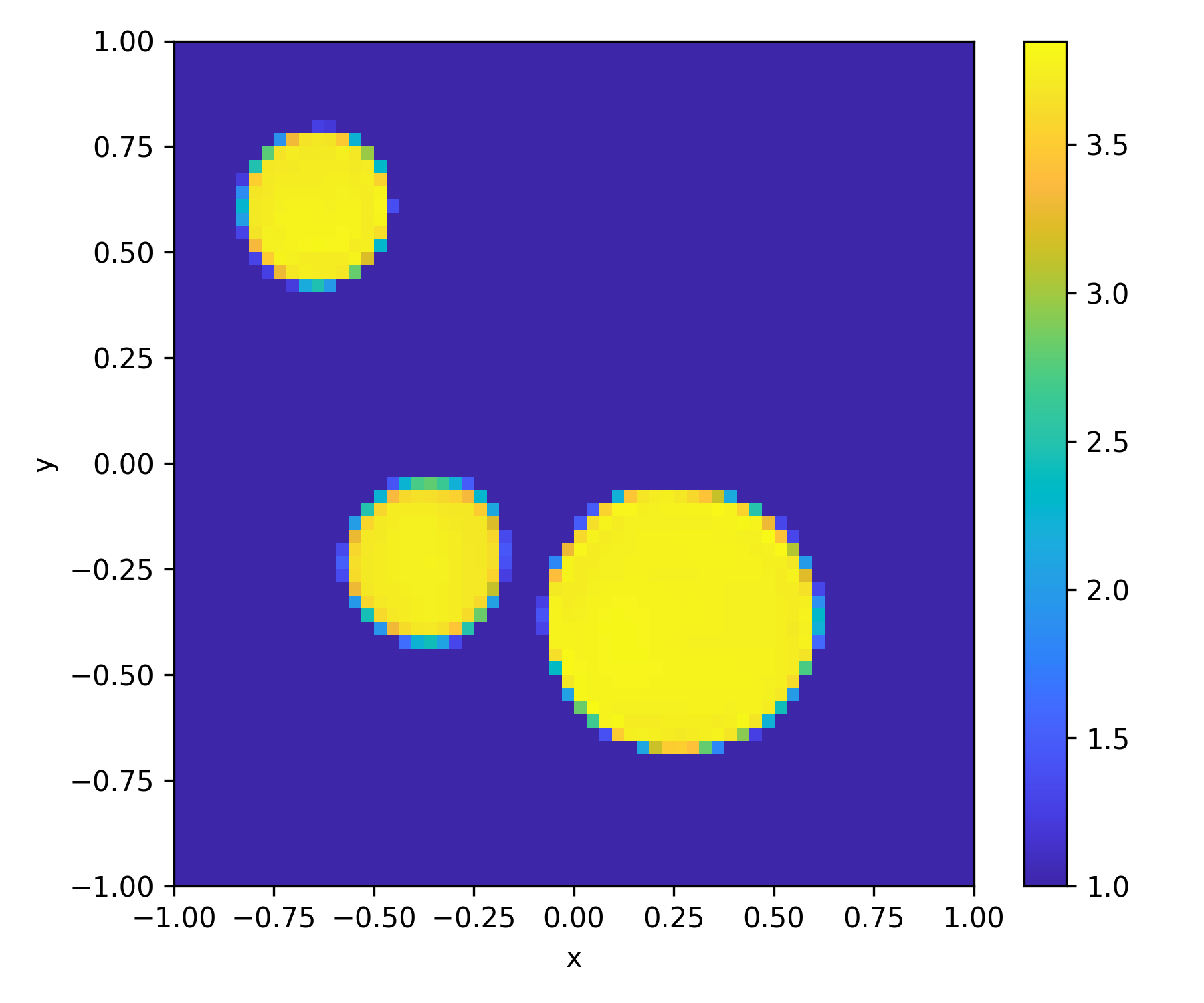}&
			\includegraphics[width=0.15\textwidth]{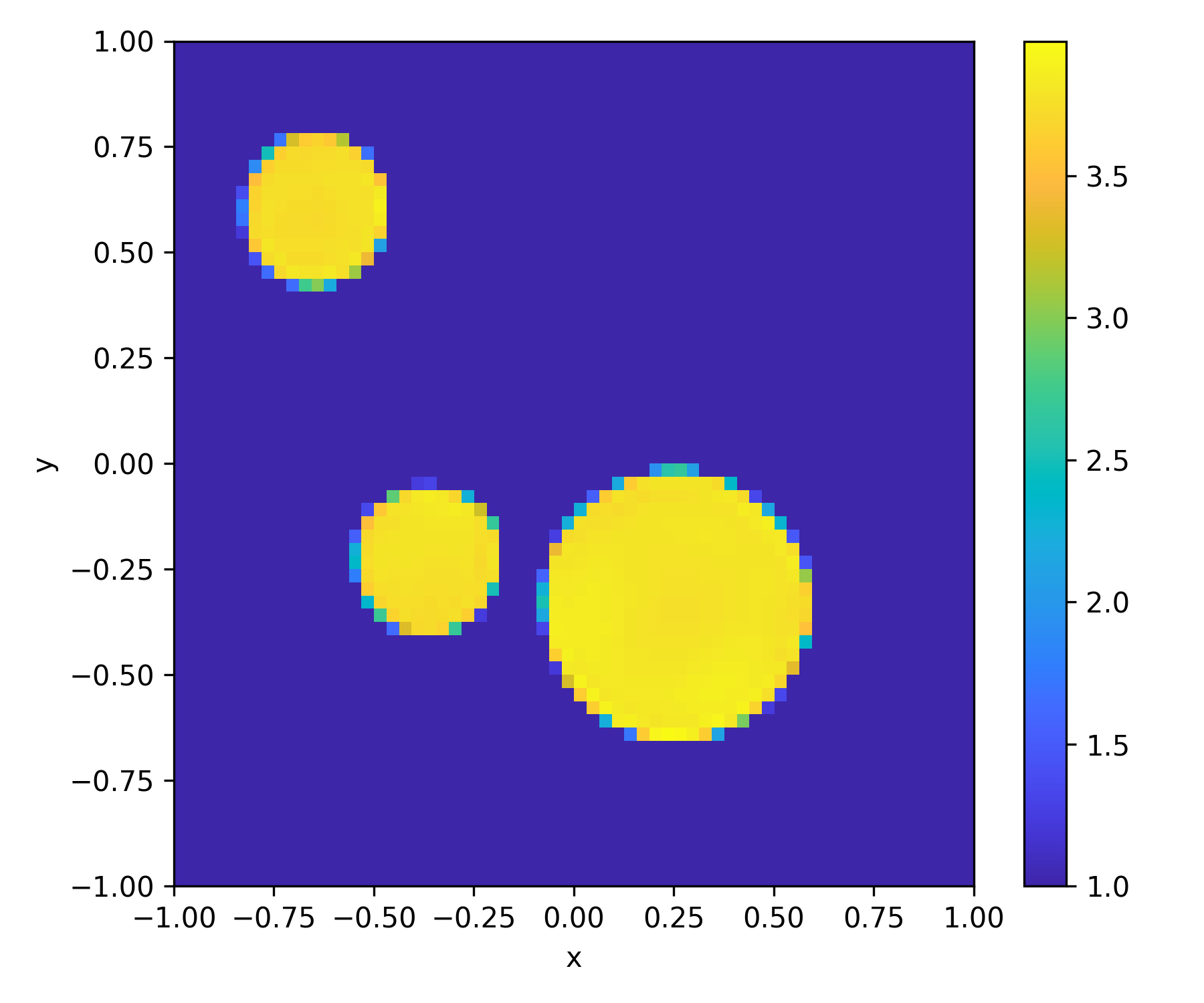}
			\\
			40\%& &
			\includegraphics[width=0.15\textwidth]{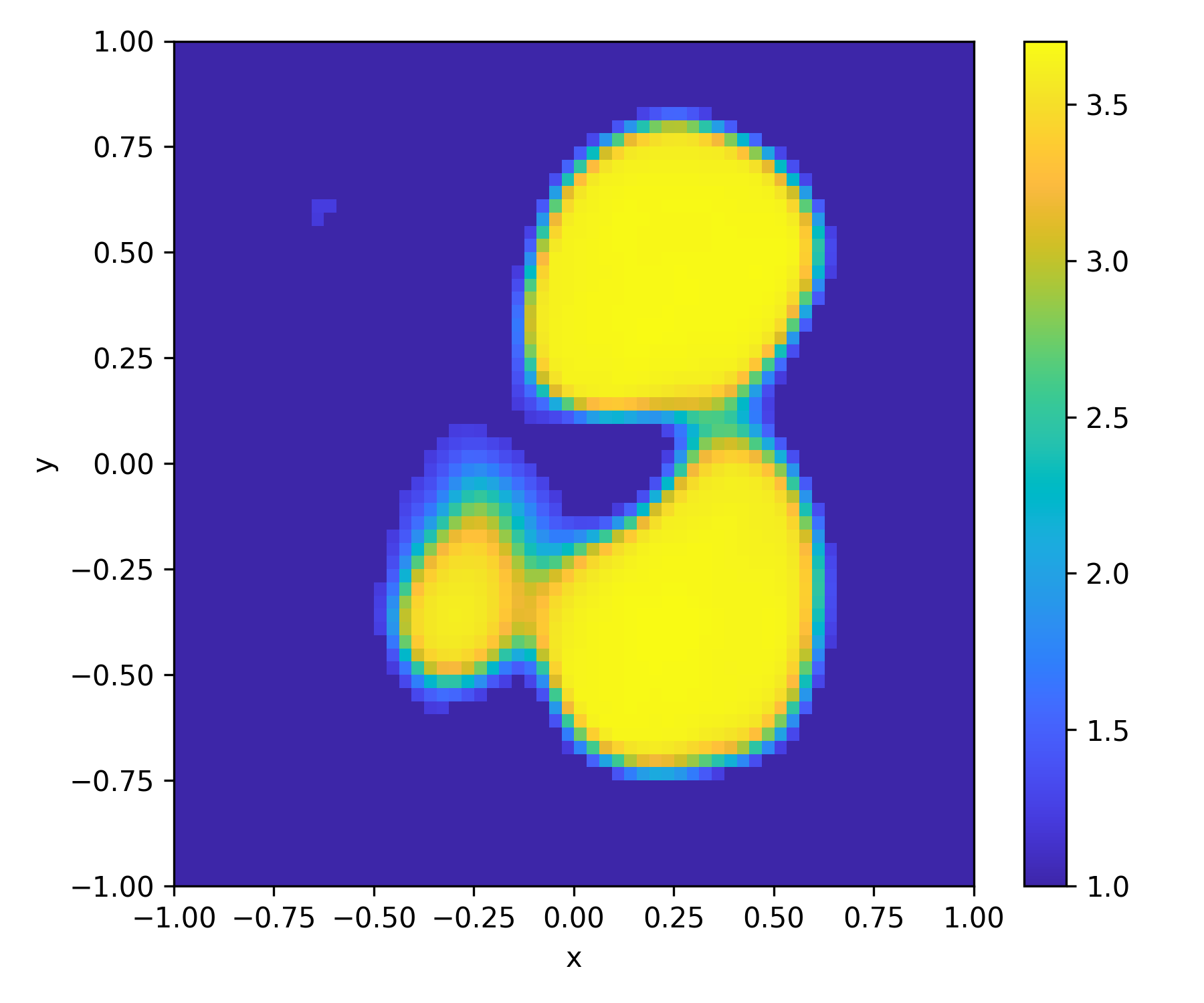}&
			\includegraphics[width=0.15\textwidth]{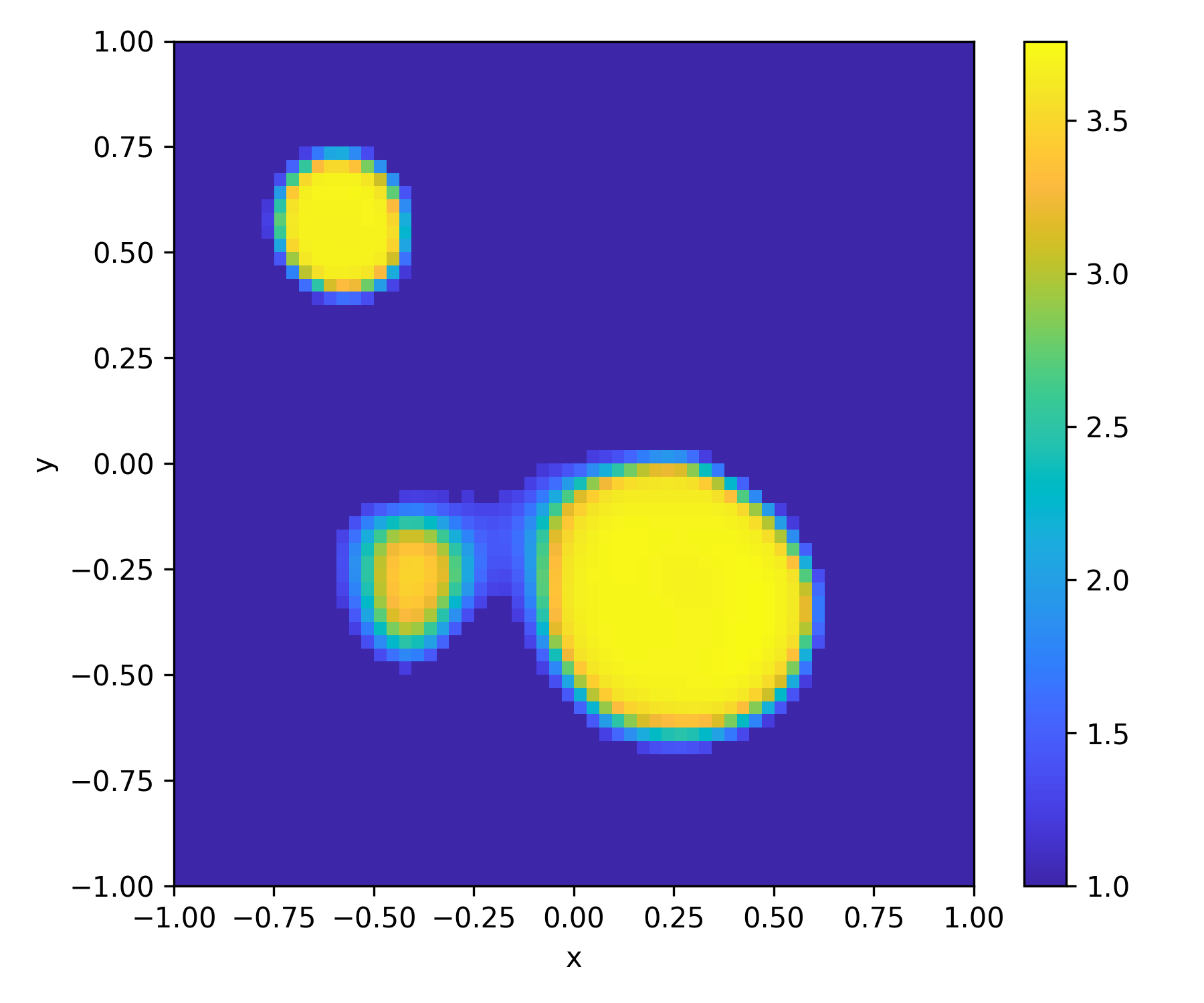}&
			\includegraphics[width=0.15\textwidth]{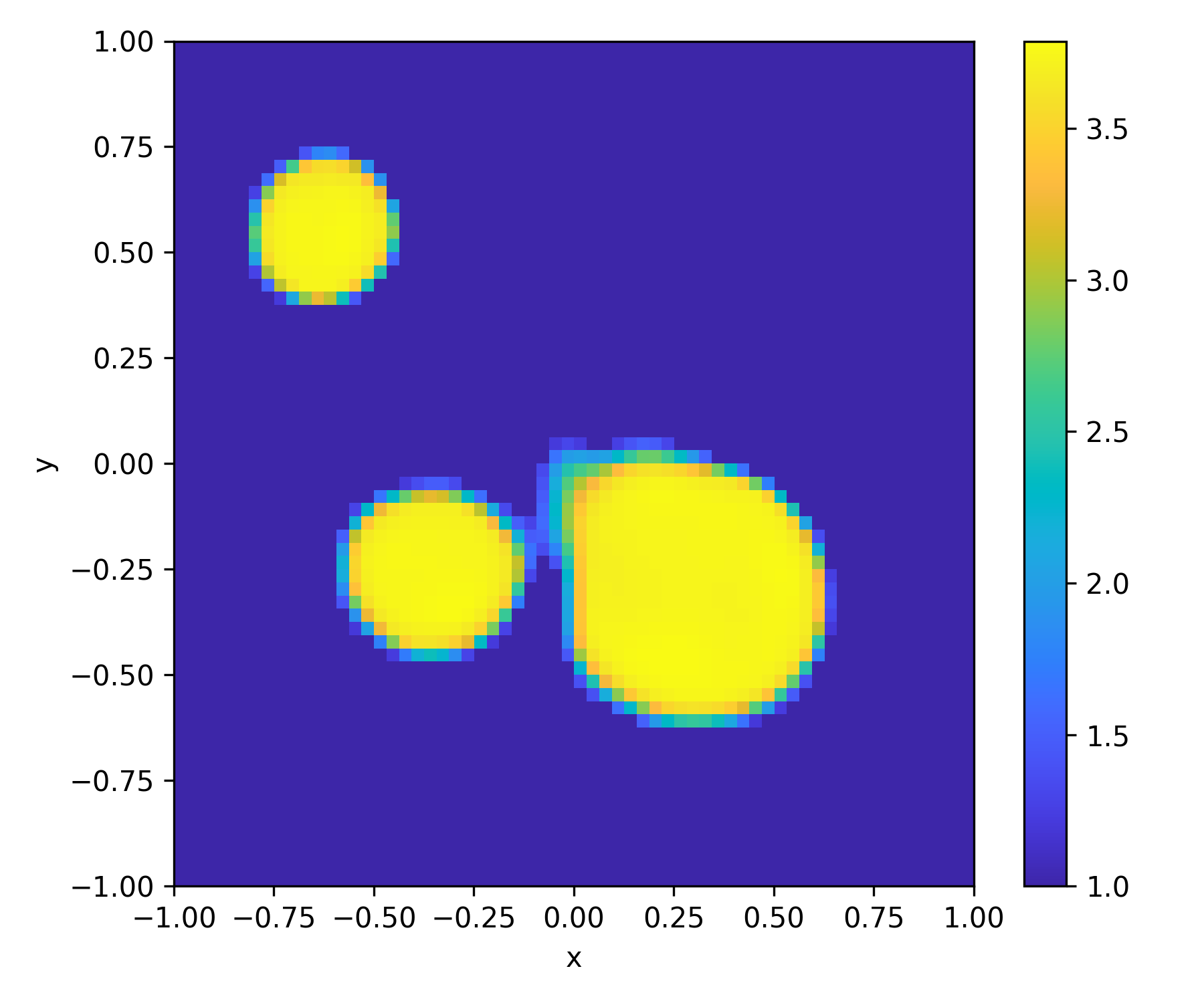}&
			\includegraphics[width=0.15\textwidth]{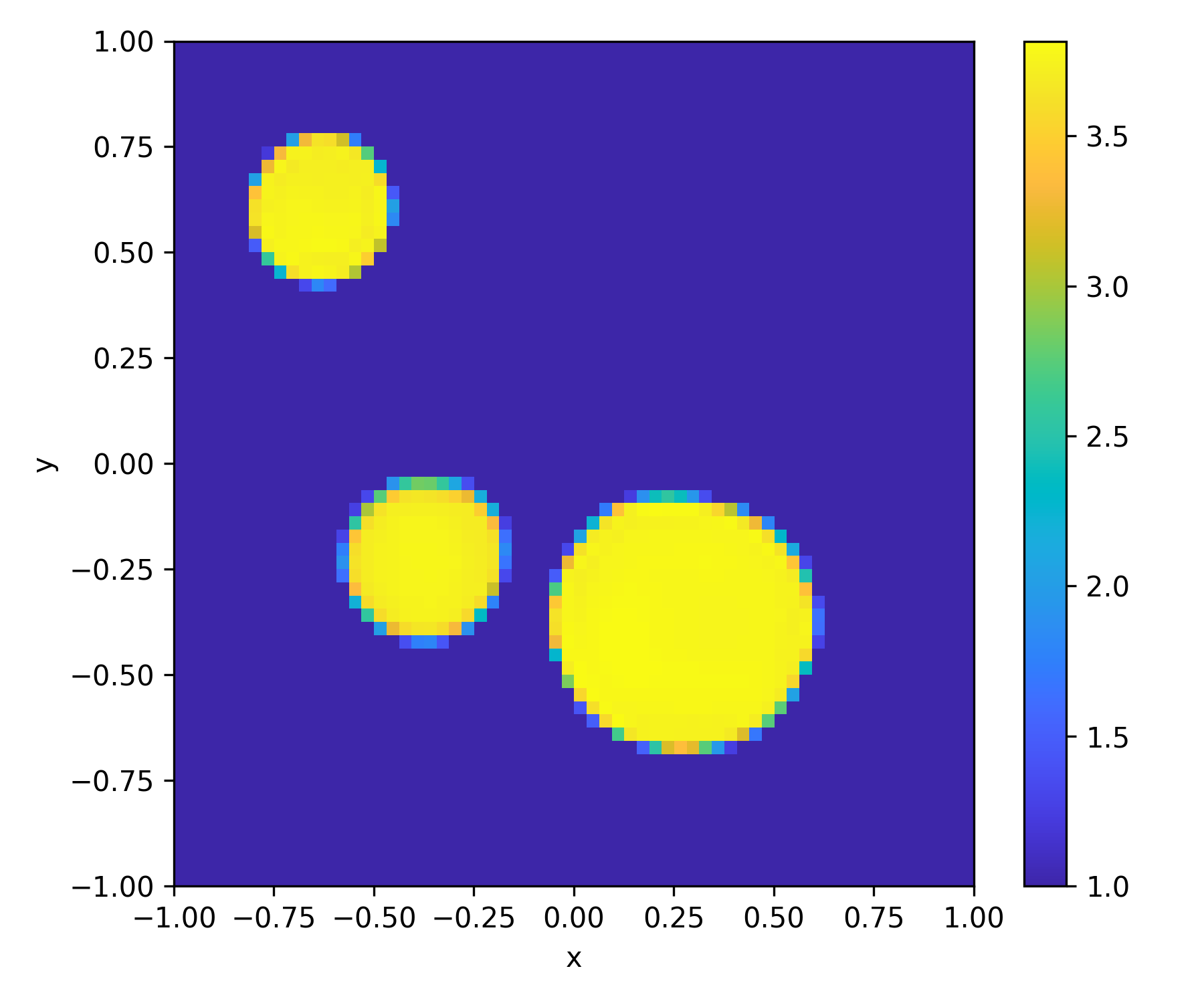}&
			\includegraphics[width=0.15\textwidth]{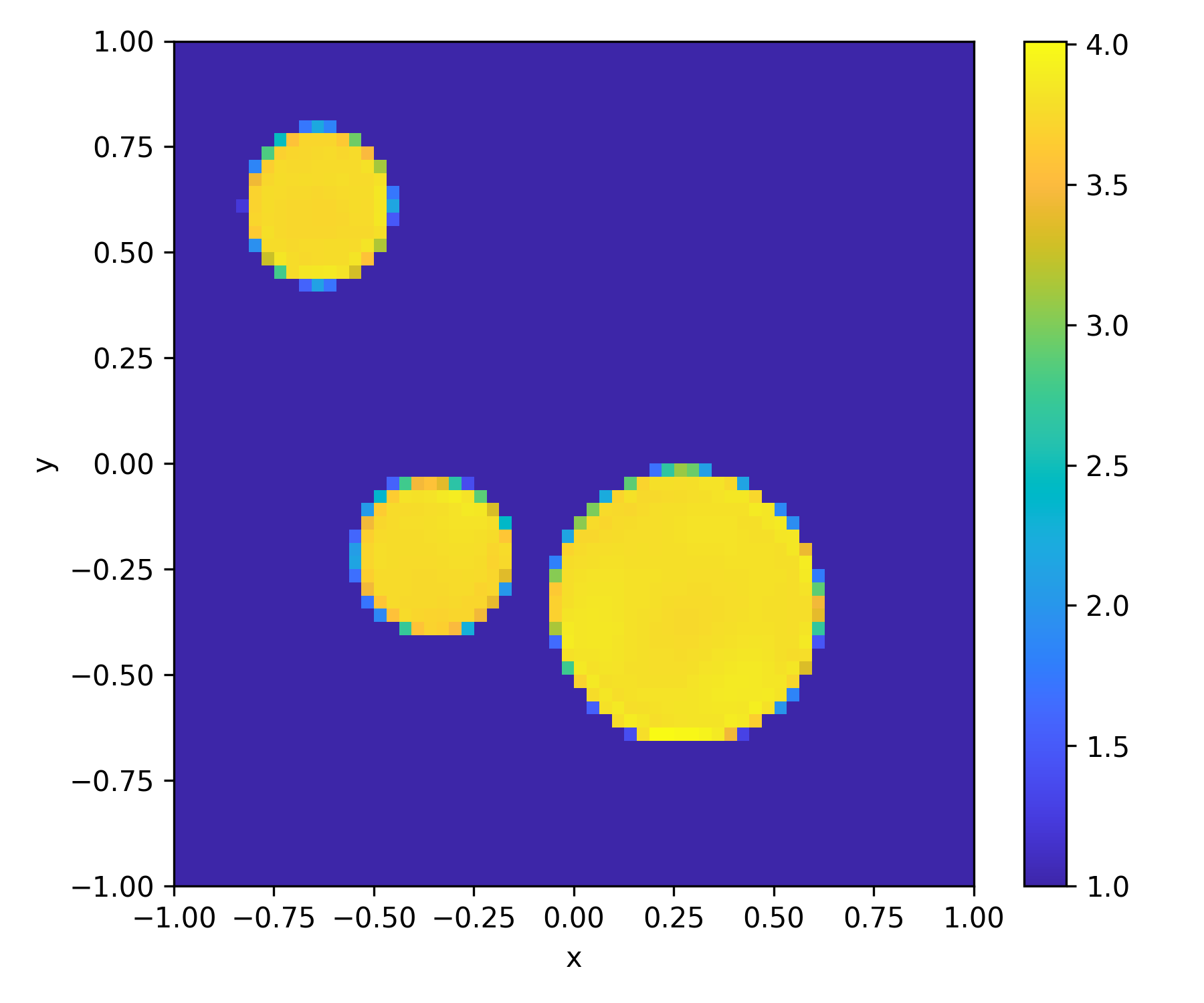}
			\\
			
			15\%&\SetCell[r=2]{c}\includegraphics[width=0.15\textwidth]{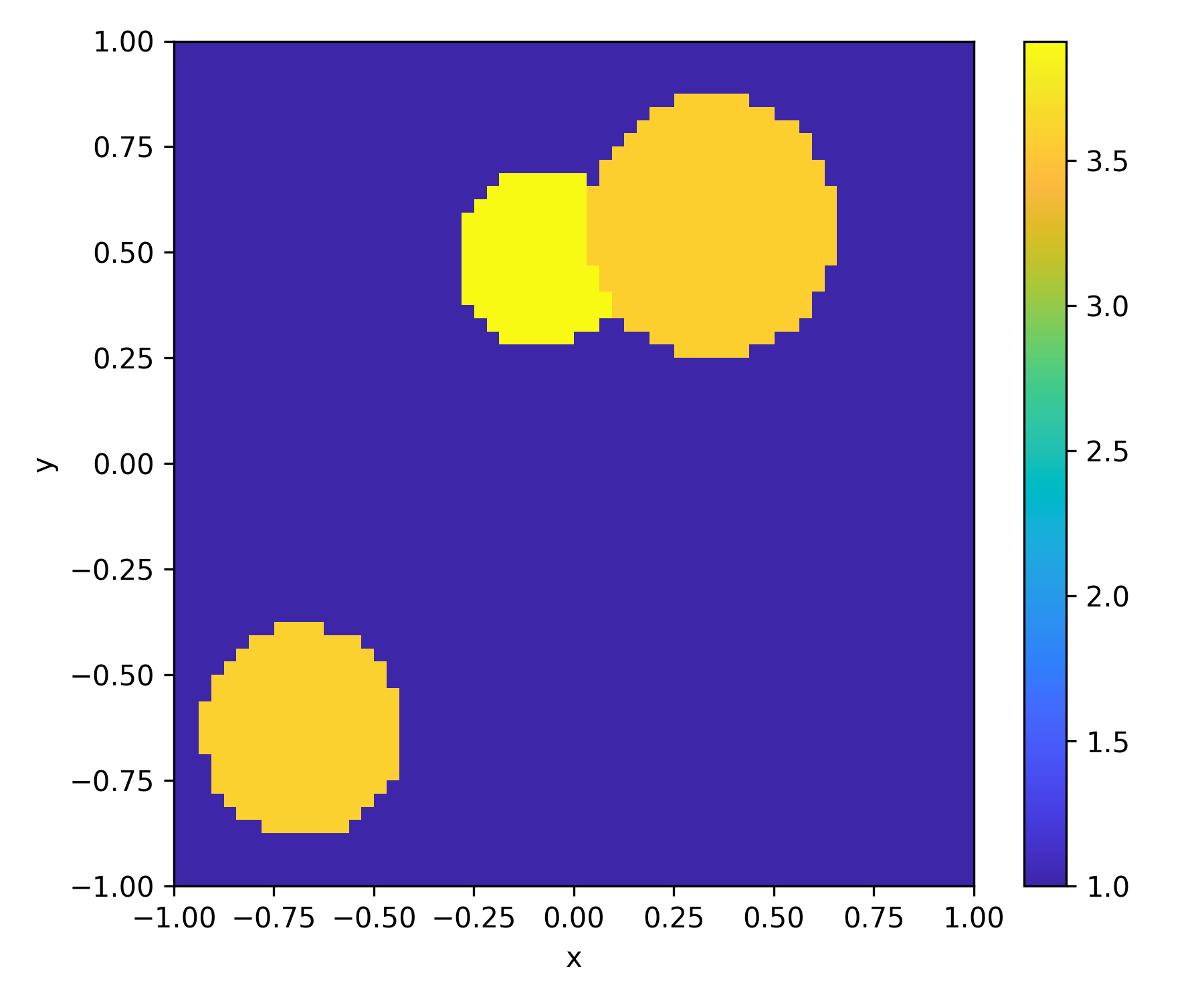}&
			\includegraphics[width=0.15\textwidth]{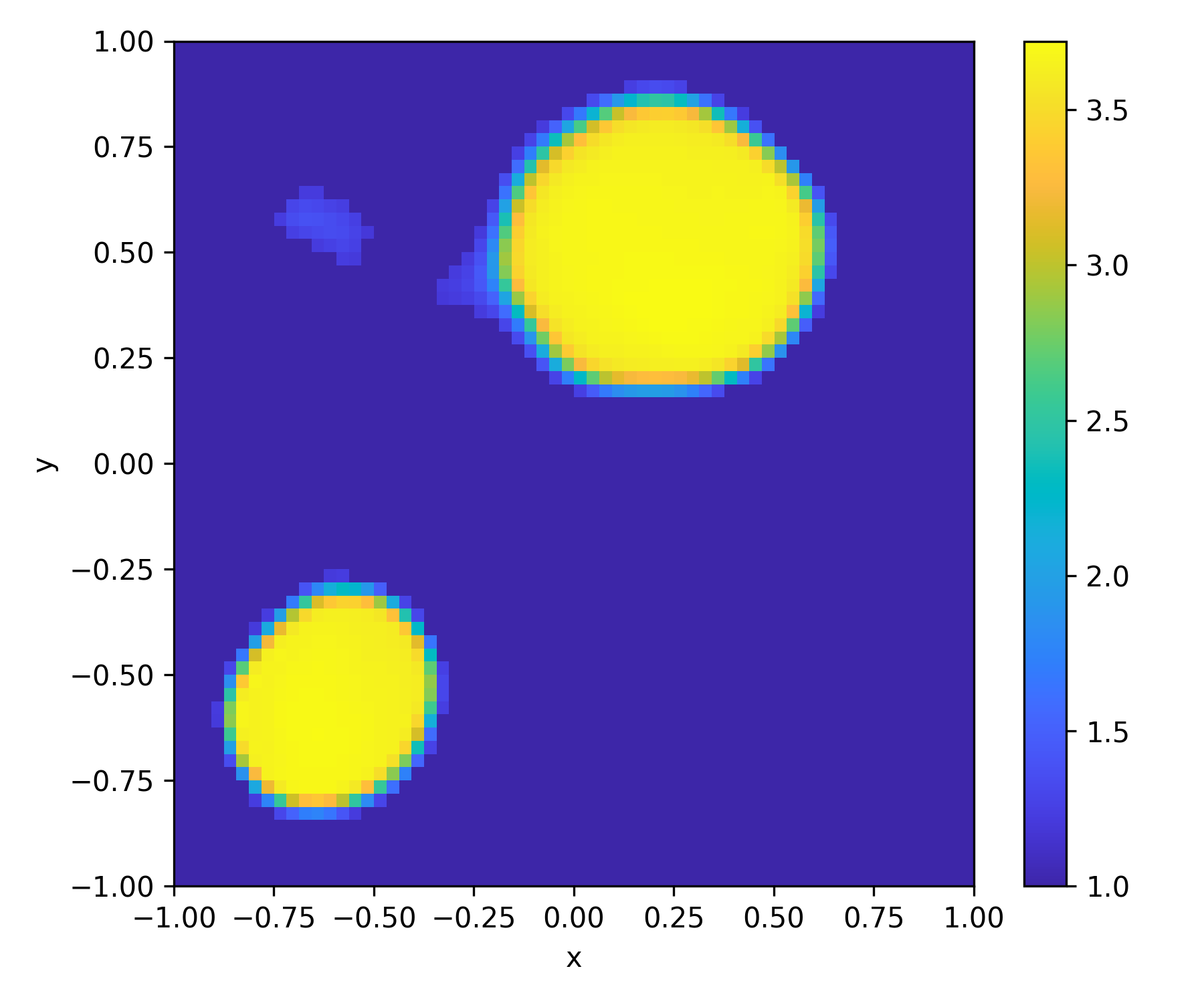}&
			\includegraphics[width=0.15\textwidth]{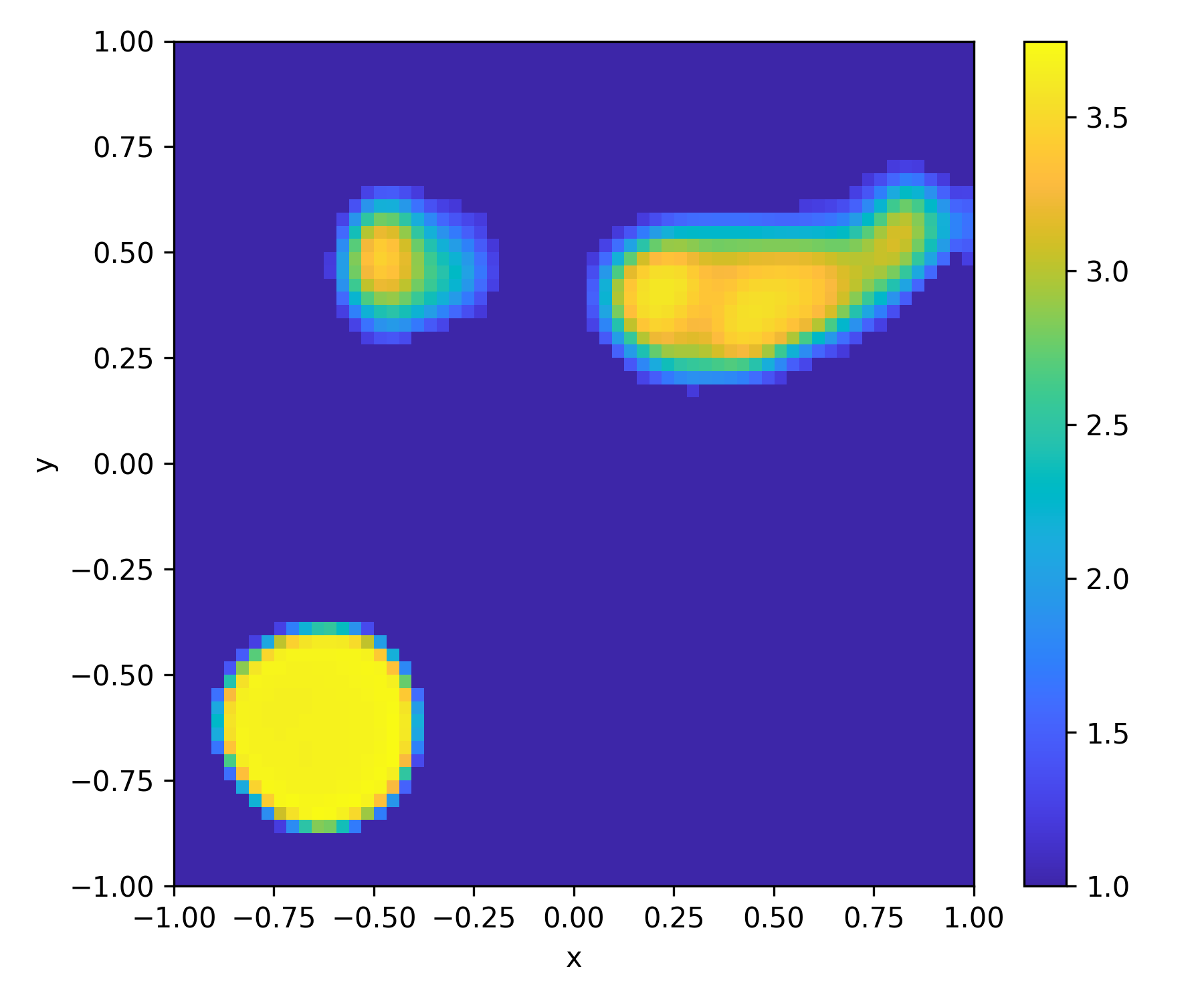}&
			\includegraphics[width=0.15\textwidth]{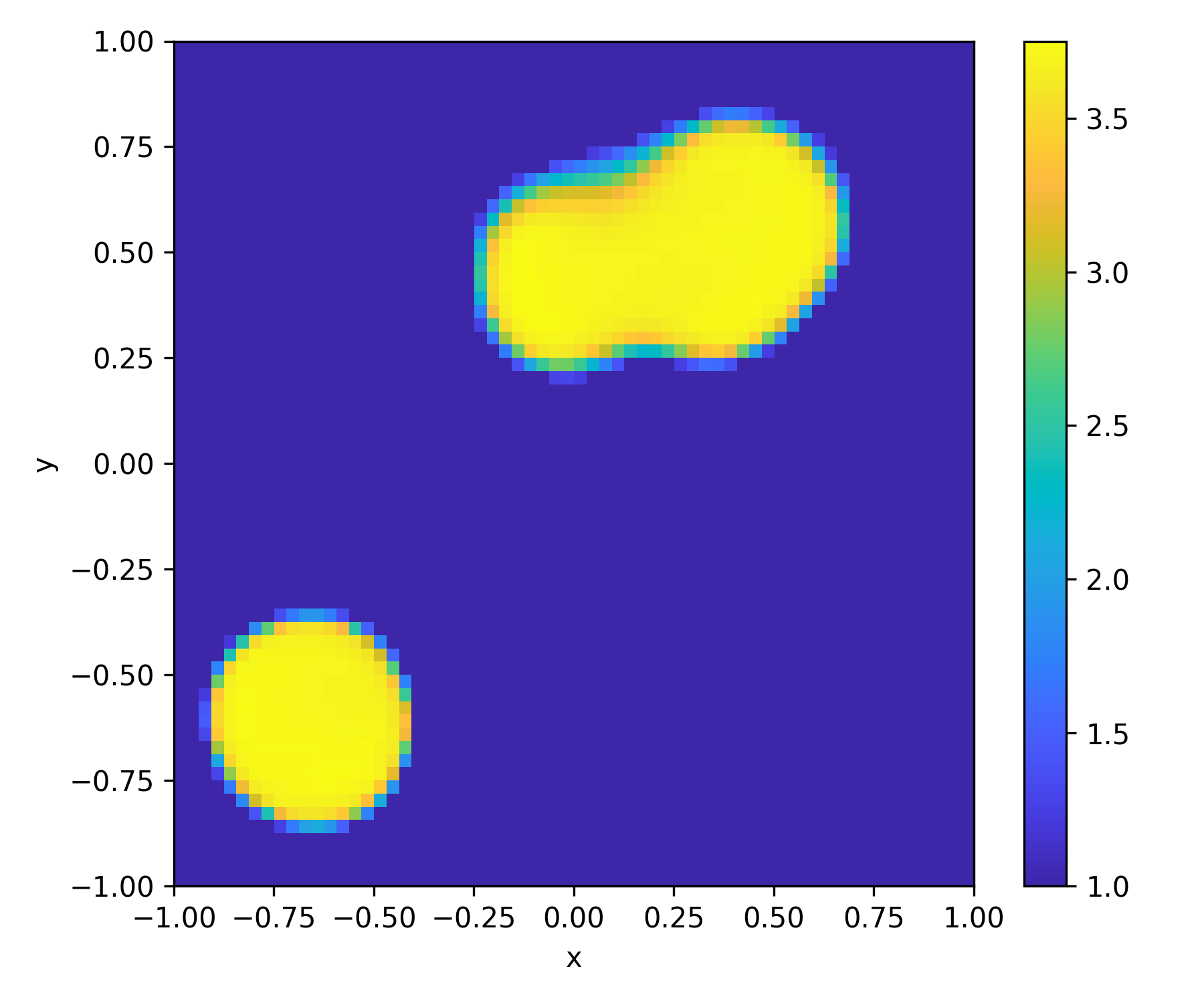}&
			\includegraphics[width=0.15\textwidth]{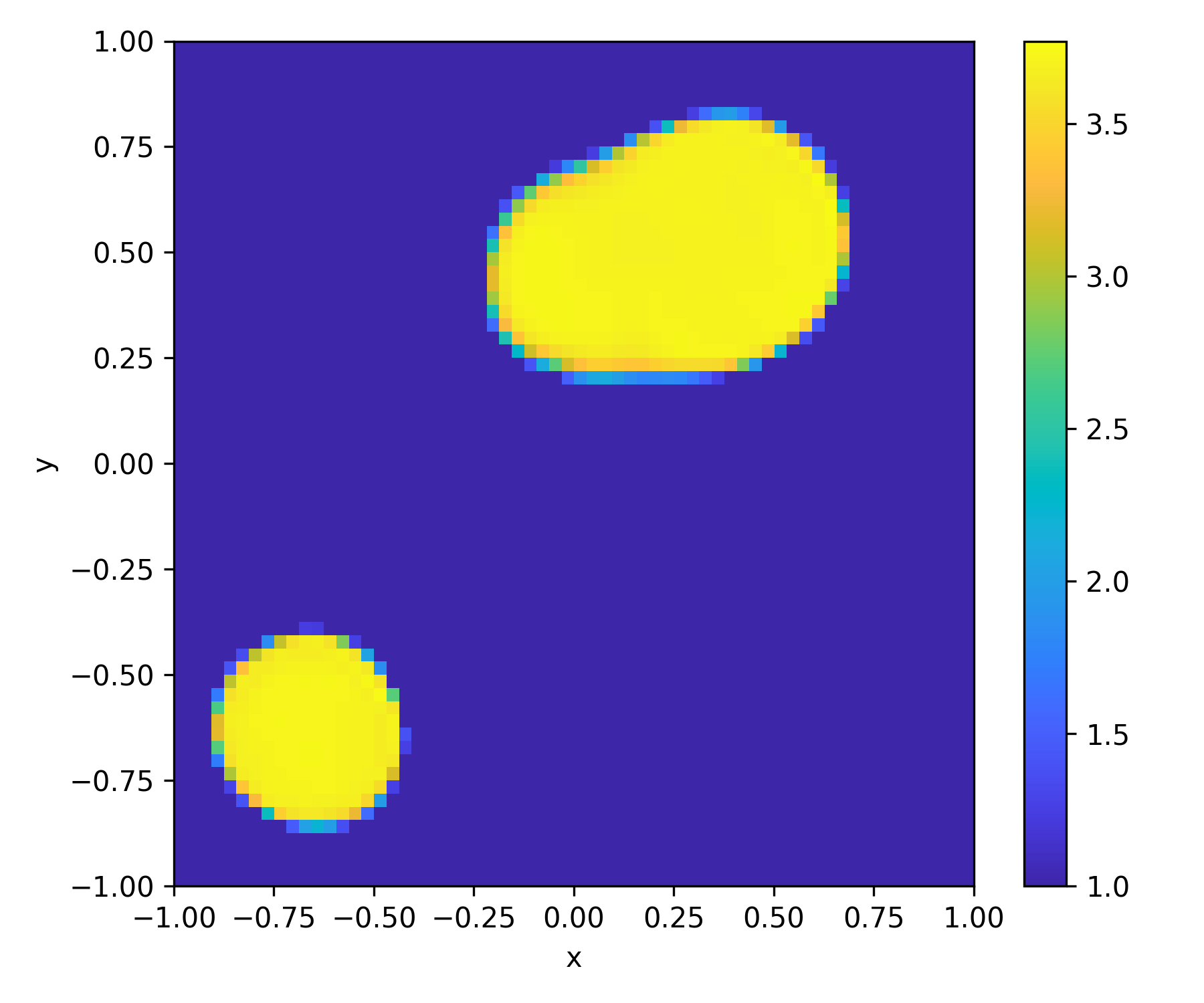}&
			\includegraphics[width=0.15\textwidth]{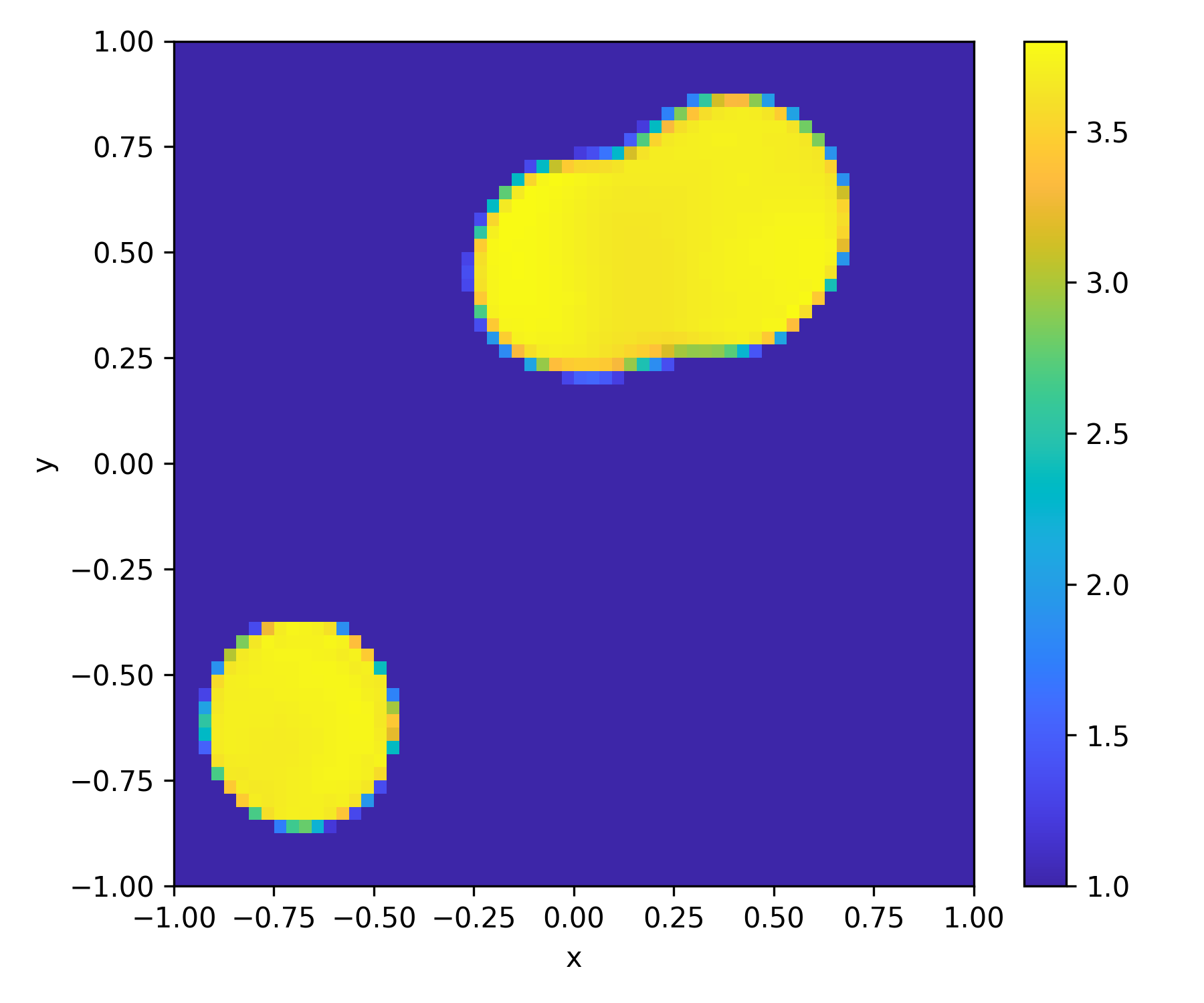}
			\\
			40\%& &
			\includegraphics[width=0.15\textwidth]{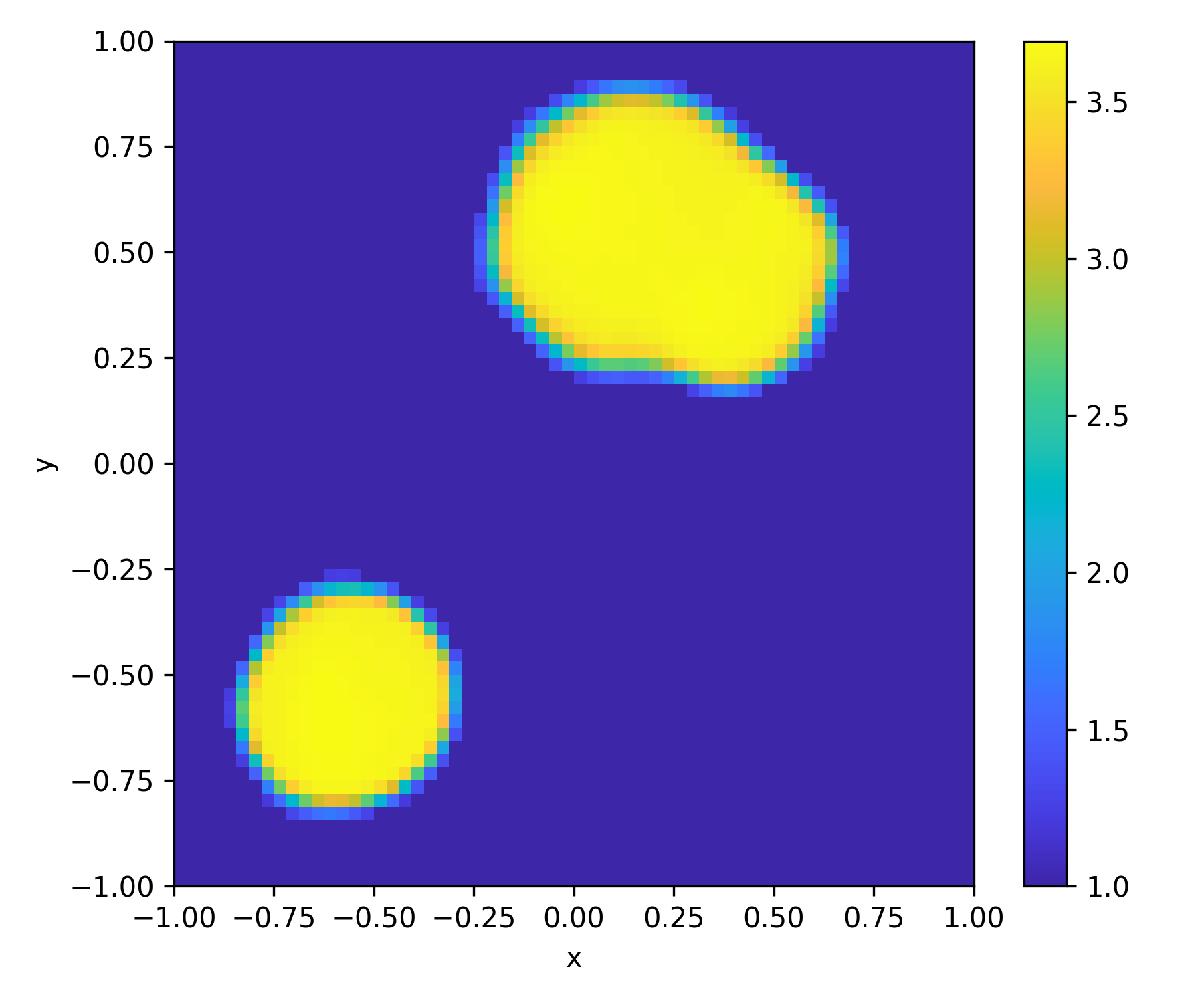}&
			\includegraphics[width=0.15\textwidth]{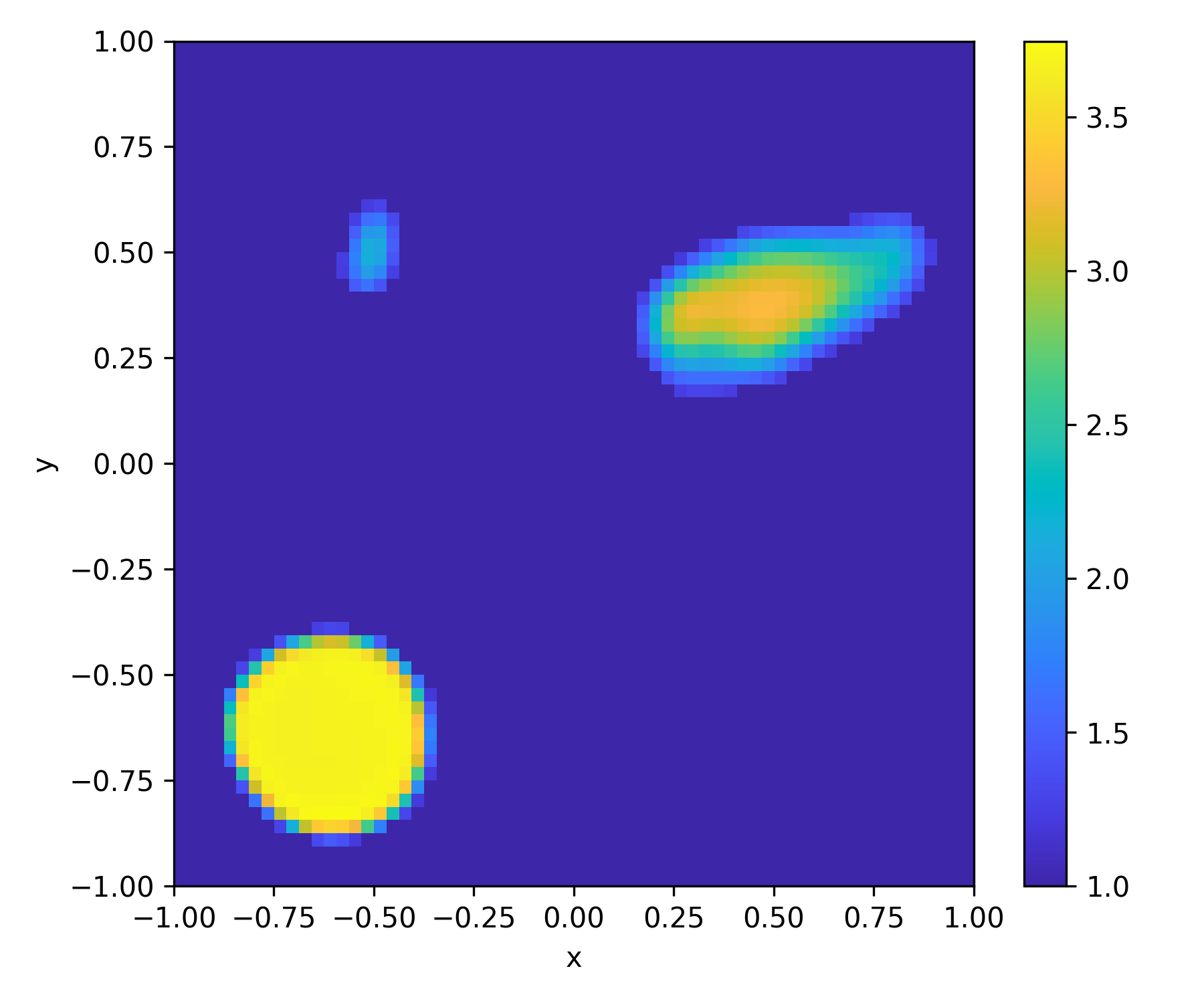}&
			\includegraphics[width=0.15\textwidth]{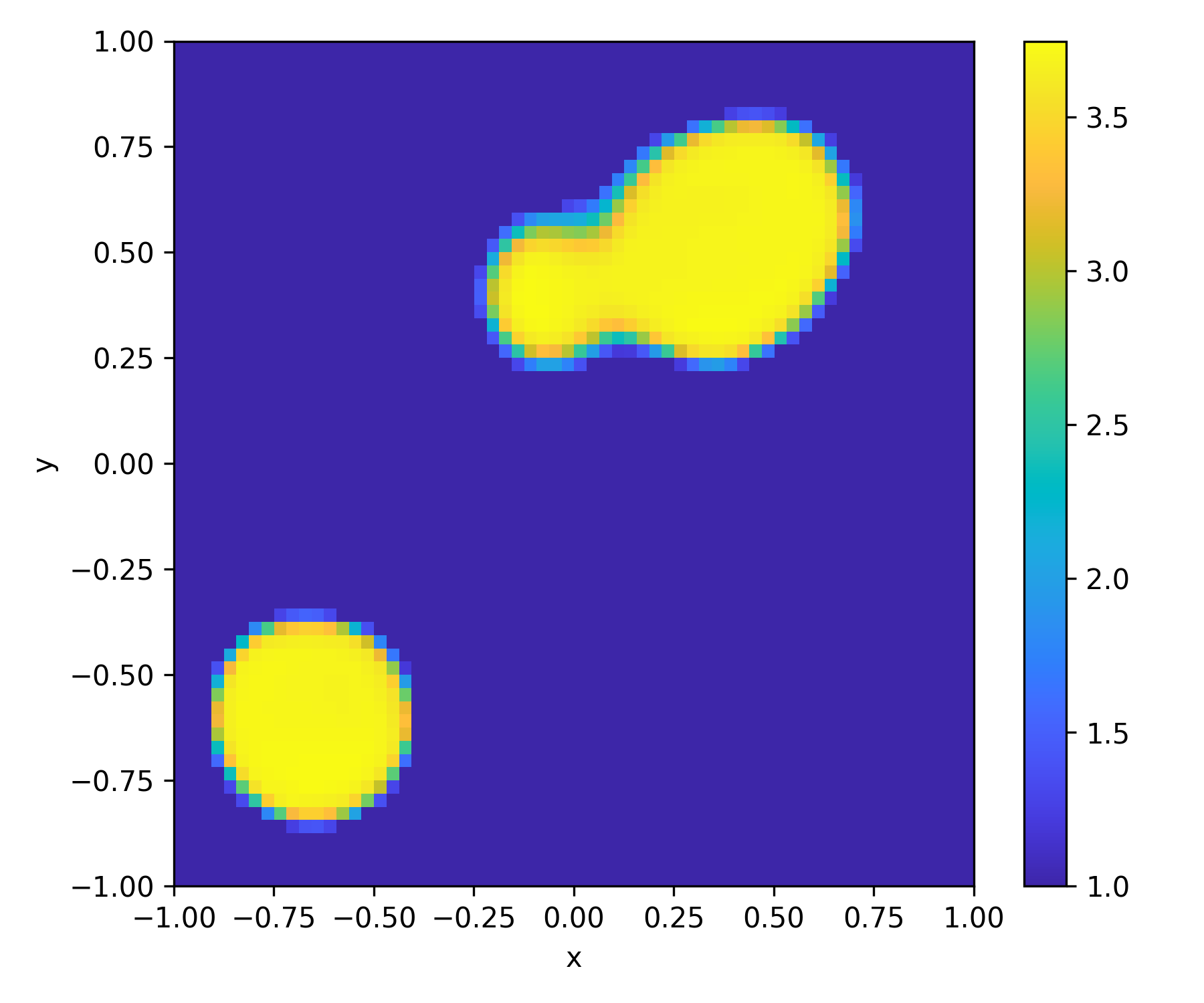}&
			\includegraphics[width=0.15\textwidth]{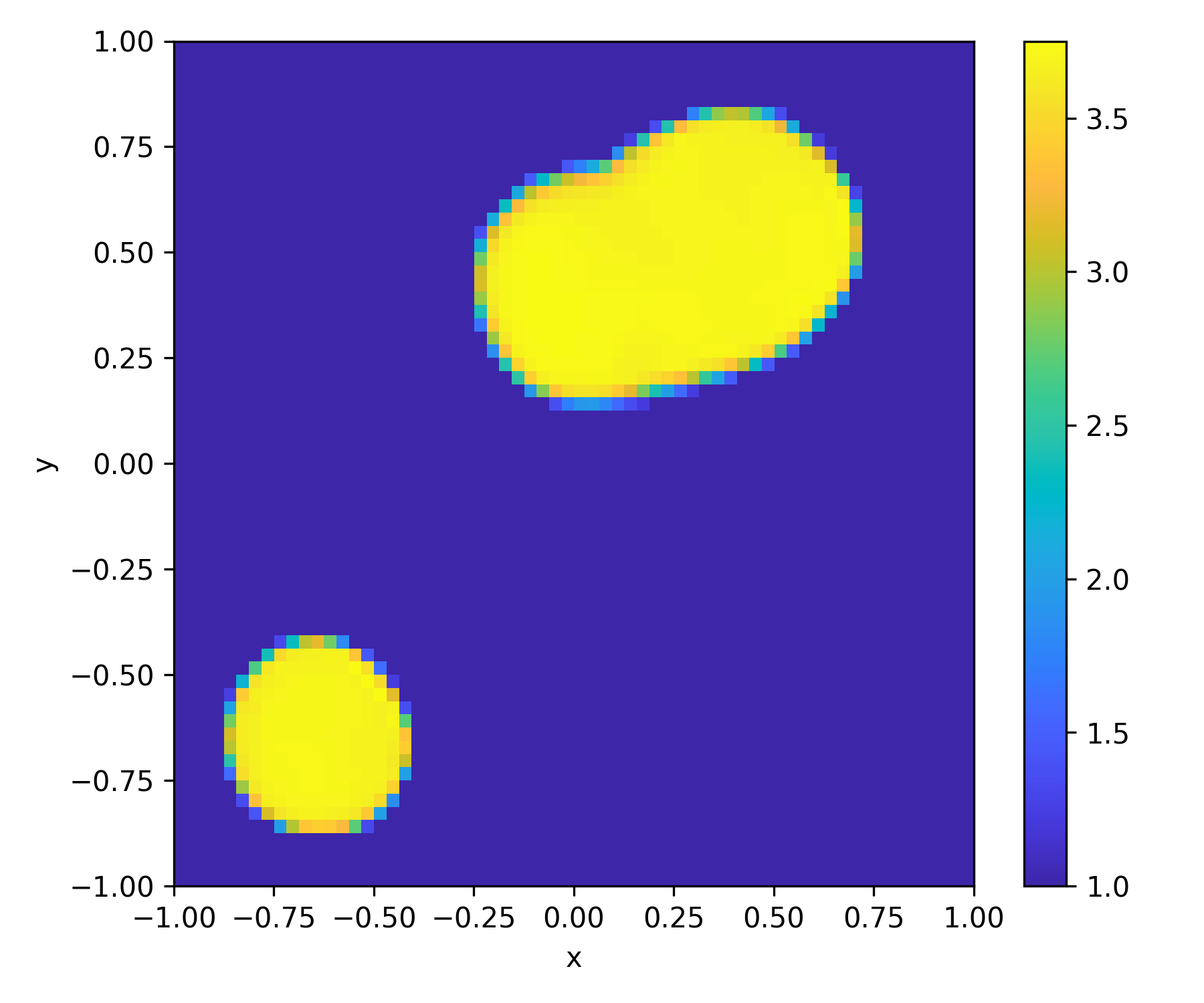}&
			\includegraphics[width=0.15\textwidth]{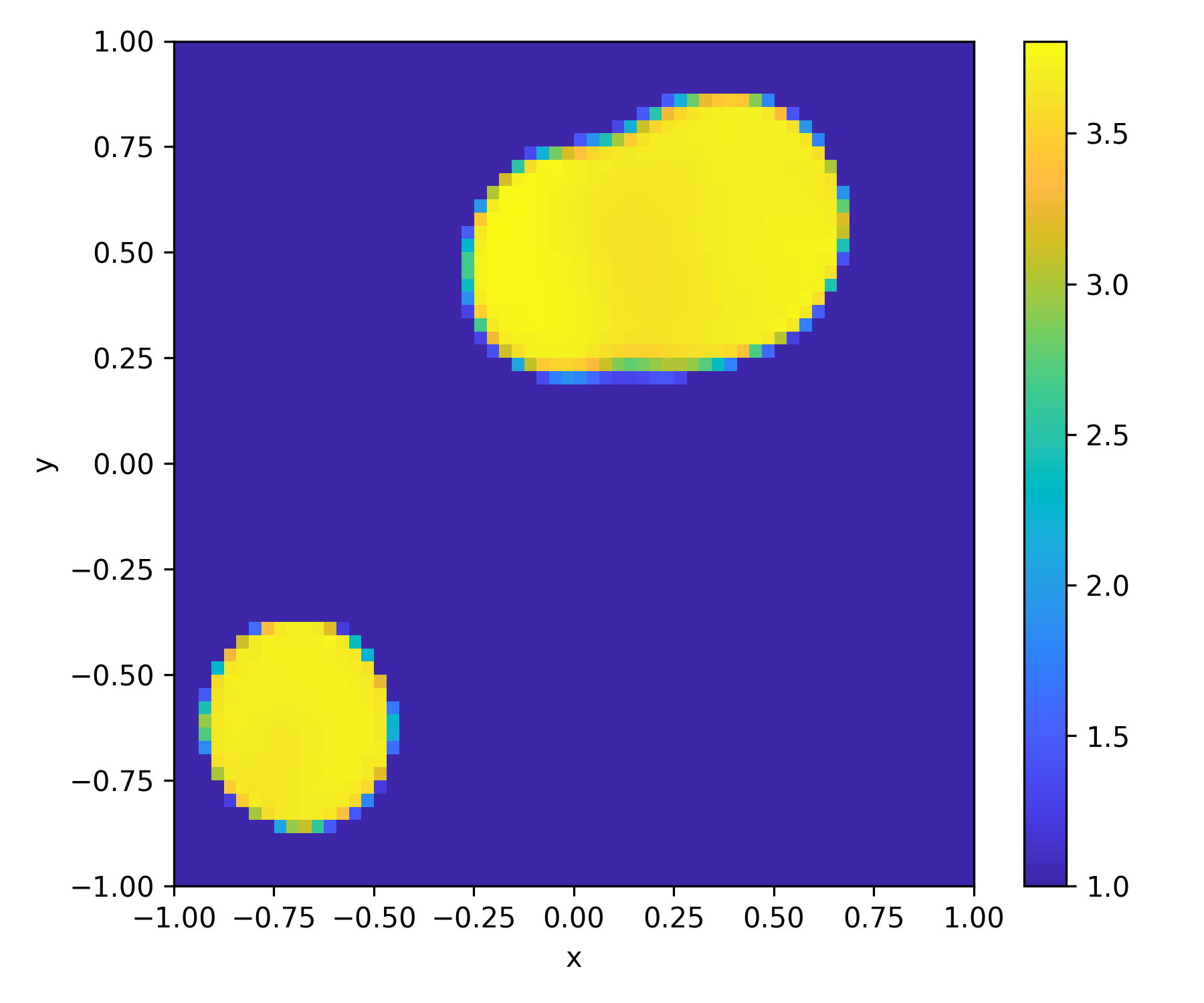}
		\end{tblr}
		\caption{Image reconstructions from measured scattered fields with $15\%$ and $40\%$ Gaussian noises by using the networks trained by the high contrast circle dataset, where the relative permittivity is between 3.5 and 4.0. From left to right: the ground-truth images, the reconstruction with 1,2,4,8, and 16 incident fields.}
		\label{tab:fig-Circle_Strong}
	\end{center}
\end{figure}
	\begin{table}[htp]\small
	\begin{center}
		\begin{tabular}{ |c|c|c|c|c|c|c|c|}
			\hline
			Example&Metric&Noise Level& $N_{i}=1$& $N_{i}=2$ & $N_{i}=4$ & $N_{i}=8$ & $N_{i}=16$ \\
			\hline
			\multirow{4}*{Circle}&\multirow{2}*{Relative L2 error}&15\% &0.0713& 0.0581 & 0.0505 & 0.0422&0.0327 \\
			\cline{3-8}
			& &40\% &0.0960& 0.0768 & 0.0619 & 0.0524&0.0443 \\
			\cline{2-8}
			&\multirow{2}*{SSIM}&15\% &0.9123& 0.9352 & 0.9496 & 0.9632&0.9771 \\
			\cline{3-8}
			& &40\% &0.8645& 0.9009 & 0.9305 & 0.9474&0.9608\\
			\hline
			
			\multirow{4}*{High Contrast Circle}&\multirow{2}*{Relative L2 error}&15\% &0.2462& 0.2192 & 0.1842 & 0.1575&0.1322 \\
			\cline{3-8}
			& &40\% &0.2941& 0.2309 & 0.1981 & 0.1701&0.1439 \\
			\cline{2-8}
			&\multirow{2}*{SSIM}&15\% &0.8388& 0.8702 & 0.8951 & 0.9161&0.9366 \\
			\cline{3-8}
			& &40\% &0.8024& 0.8566 & 0.8846 & 0.9075&0.9260\\
			\hline
		\end{tabular}
		\caption{The relative L2 testing error and SSIM for circle and high contrast circle examples with different noise levels and number of incidences.}
		\label{tab:error_circle}
	\end{center}
\end{table}
	
	\subsection{MNIST database example}
	In this example, the unknown inhomogeneous medium is modeled by a modification of the well-known MNIST dataset. In the MNIST dataset, the resolution of each image is $28\times 28$ and the pixel values range from $0$ to $1$. We rescale the size to be $64\times 64$ and apply a threshold with value $0.5$ so that the pixel values only take 0 and 1. To represent multiple scatterers and enhance the diversity of the data, we randomly rotate the digits and also randomly add a circle with a radius from $U(0.1,0.3)$ to the domain of interest $\Omega$. The relative permittivities of the digits and the circles are randomly chosen from $U(1.5,2.5)$. In the examples of the MNIST database, which is more complicated than the circle dataset, we use $10000$ images as the training data, $200$ images as the validation data, and $200$ images as the testing data. The batch size is set as $10$ and we use a total of $20$ epochs to train the models. As the loss function (\ref*{loss_func}) is applied, we actually have $20000$ training data and the batch size is $20$. The learning rate starts at $0.001$ and decreases by a factor of $0.5$ every $2$ epoch.

	\subsubsection{Tests with testing data}
	In Fig.\,\ref{tab:fig-Mnist}, reconstructions of four images from the MNIST dataset are presented. When $N_{i}$ is small, it is difficult for the trained networks to recover the physical properties, and noises have a large impact on the reconstruction, but the DSM-DL can still provide some reasonable reconstruction for the shapes and locations of the scatterers.  The error and SSIM for different noises and incidences are presented in Table\,\ref{tab:error} and Fig.\,\ref{fig:Error_Mnist}, we can also see significant improvement as $N_{i}$ becomes larger.
		\begin{figure}[htp]\small
		\begin{center}
			\begin{tblr}
				{colspec = {X[-1]X[c]X[c,h]X[c,h]X[c,h]X[c,h]X[c,h]},
					stretch = 0,
					rowsep = 0pt,}
				Noise Level& Ground truth& $N_{i}$=1& $N_{i}$=2 &$N_{i}$=4&$N_{i}$=8& $N_{i}$=16\\
				15\%&\SetCell[r=2]{c} \includegraphics[width=0.15\textwidth]{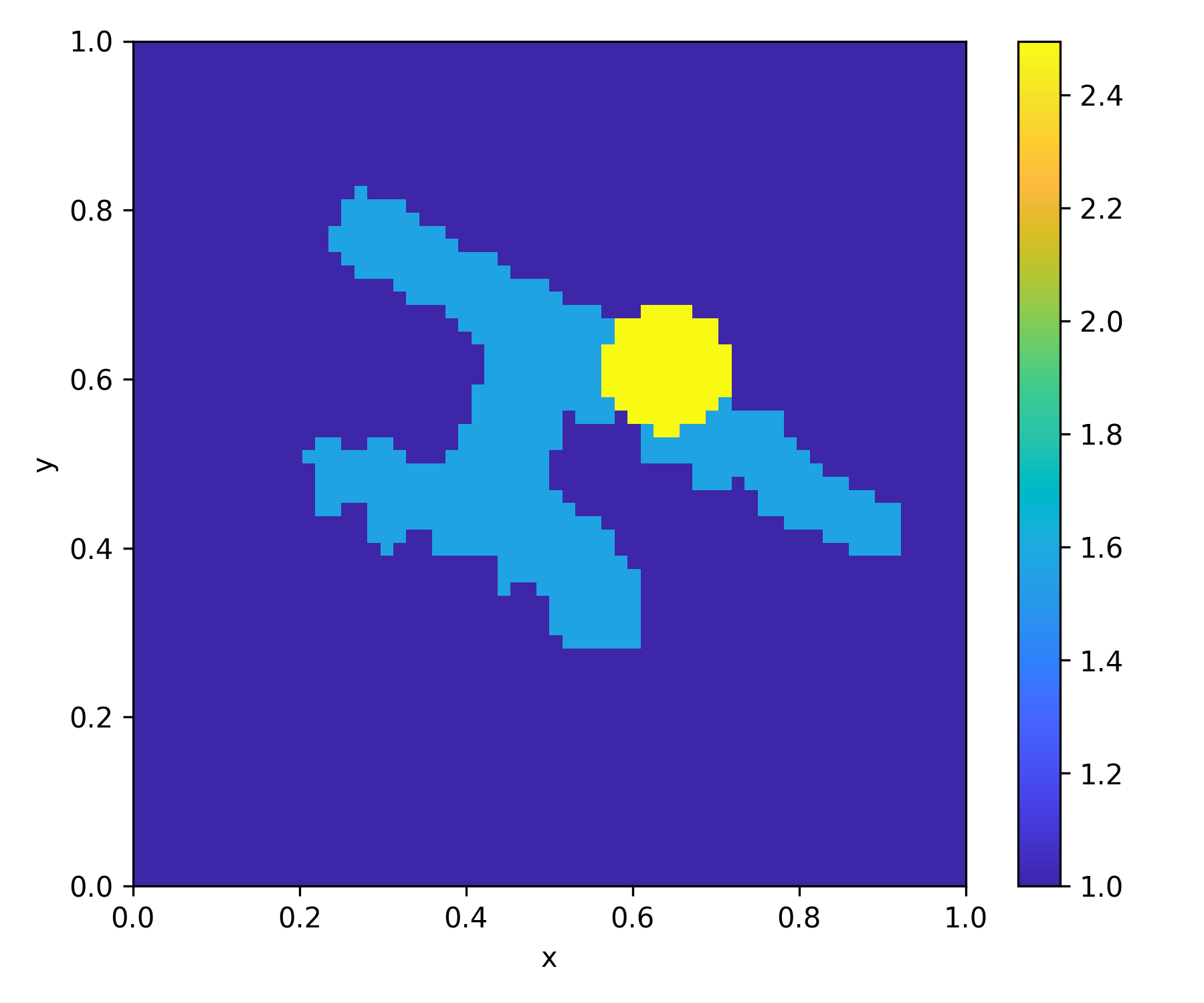} &
				\includegraphics[width=0.15\textwidth]{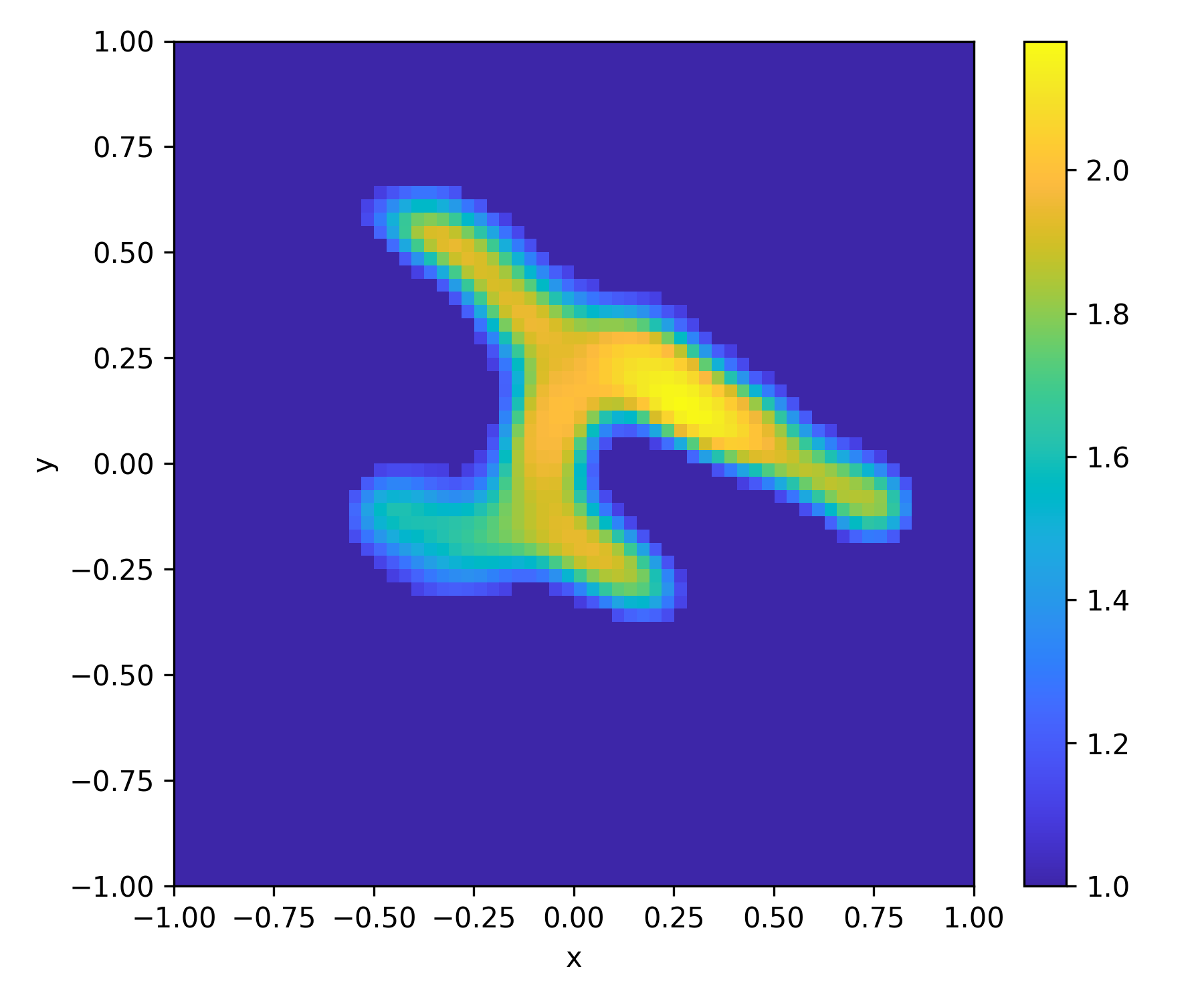}&
				\includegraphics[width=0.15\textwidth]{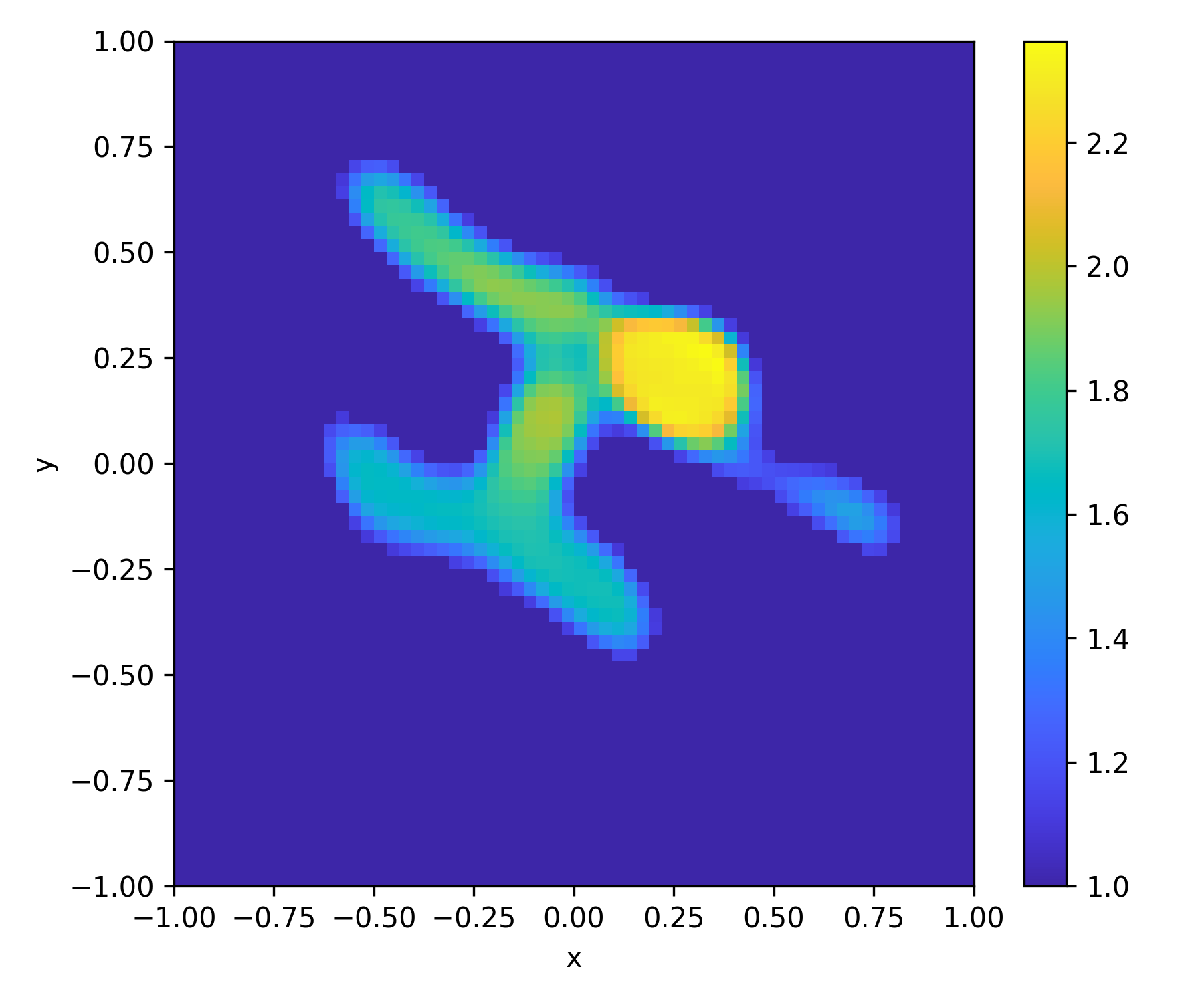}&
				\includegraphics[width=0.15\textwidth]{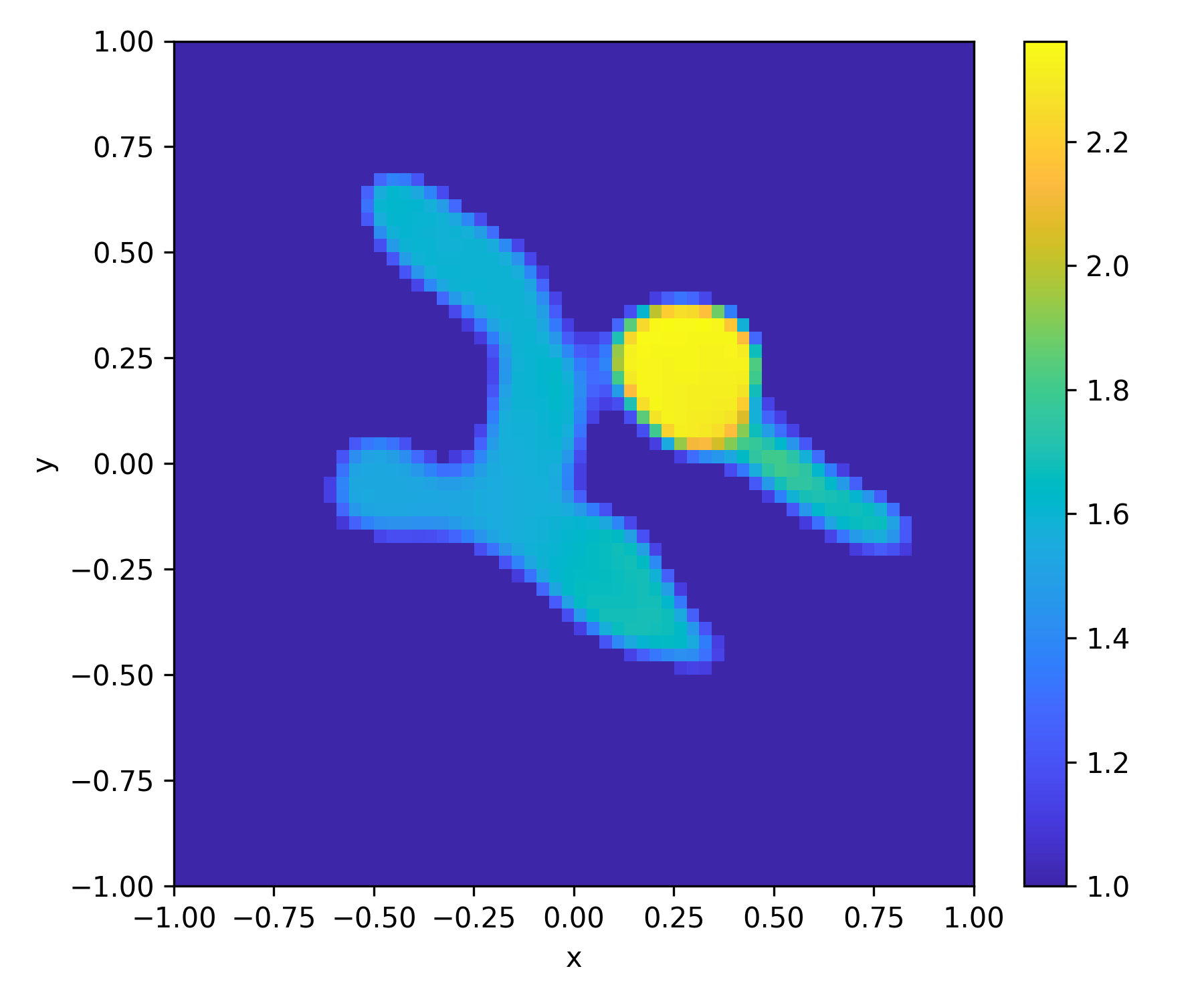}&
				\includegraphics[width=0.15\textwidth]{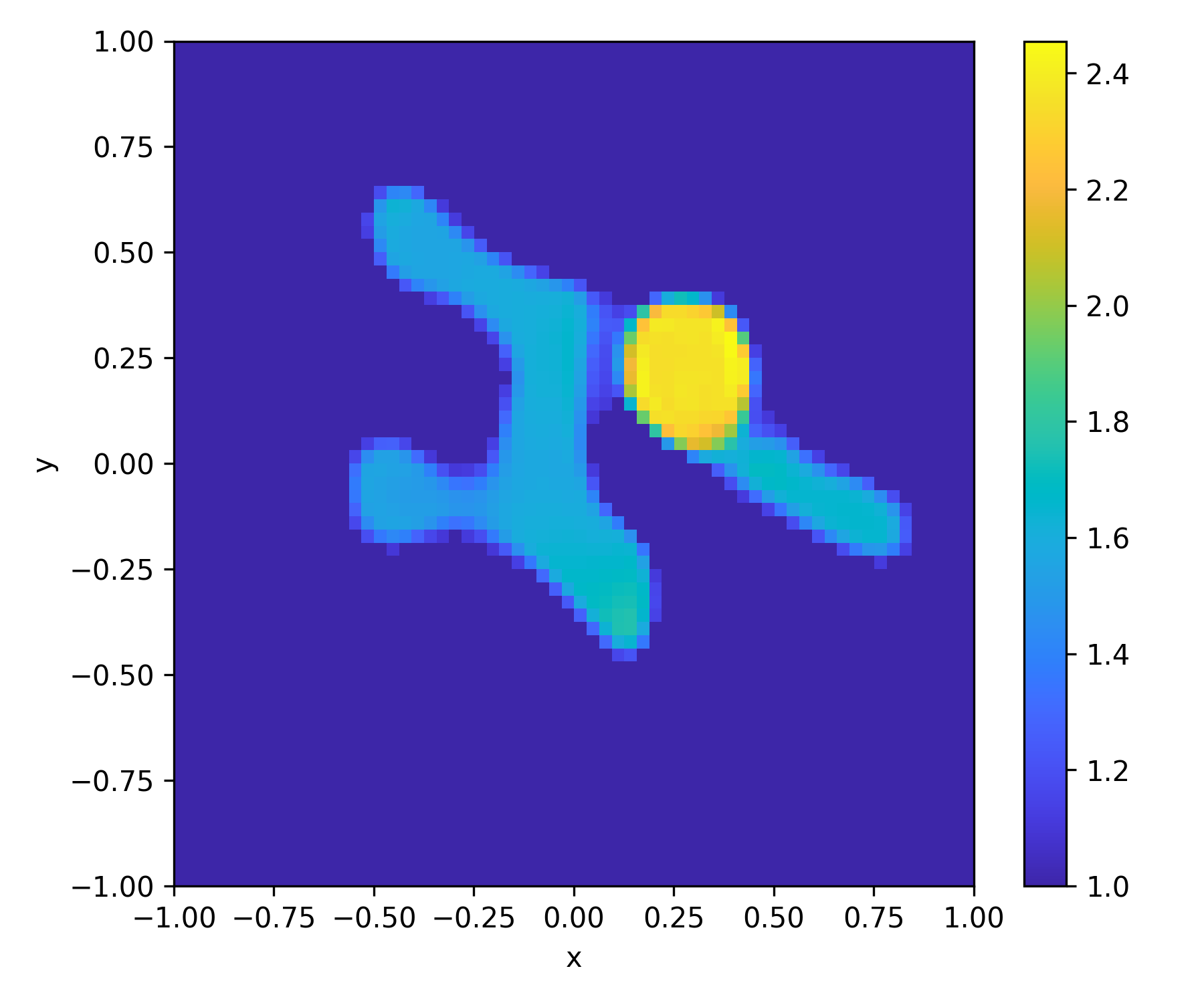}&
				\includegraphics[width=0.15\textwidth]{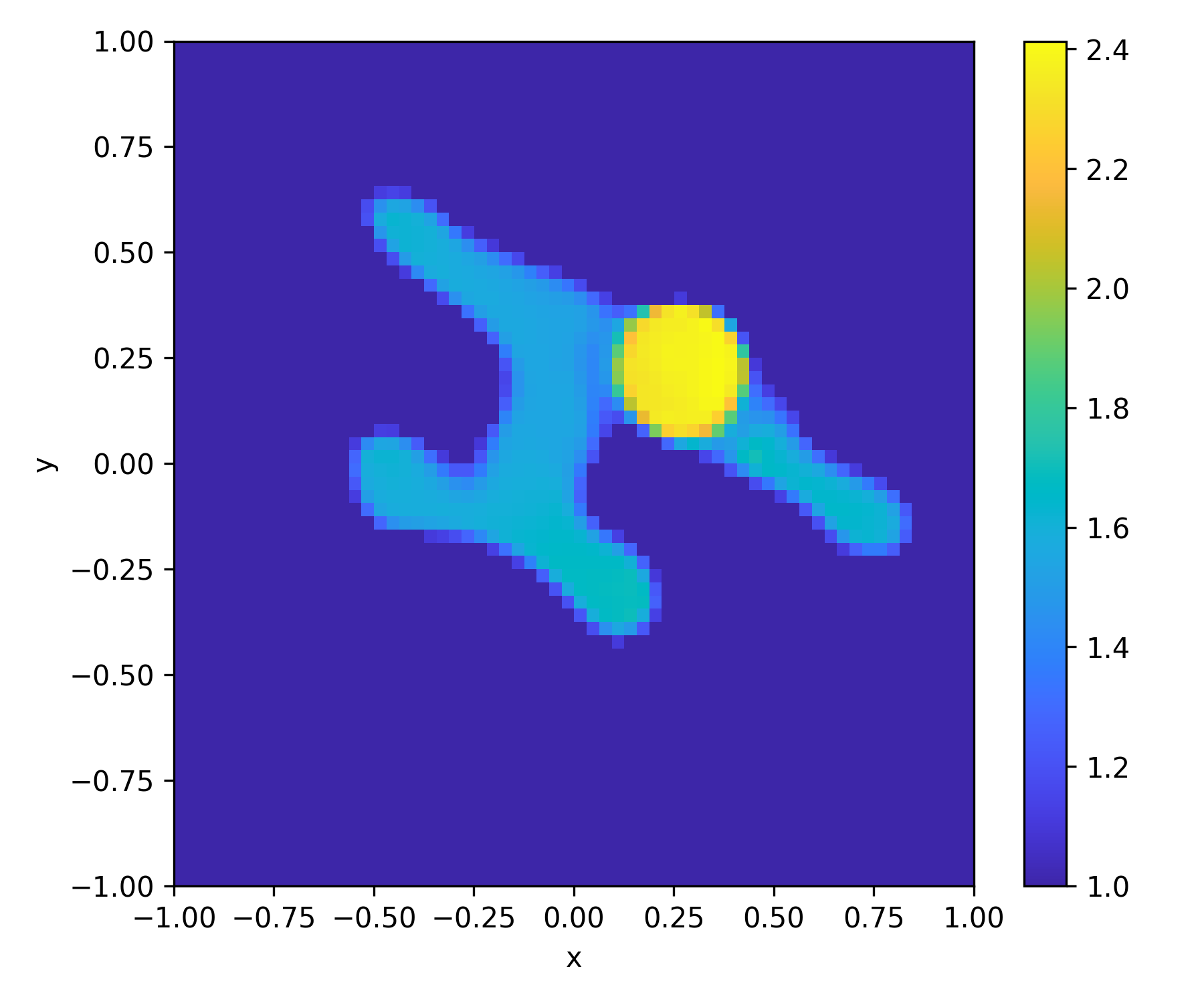}
				\\
				40\%& &
				\includegraphics[width=0.15\textwidth]{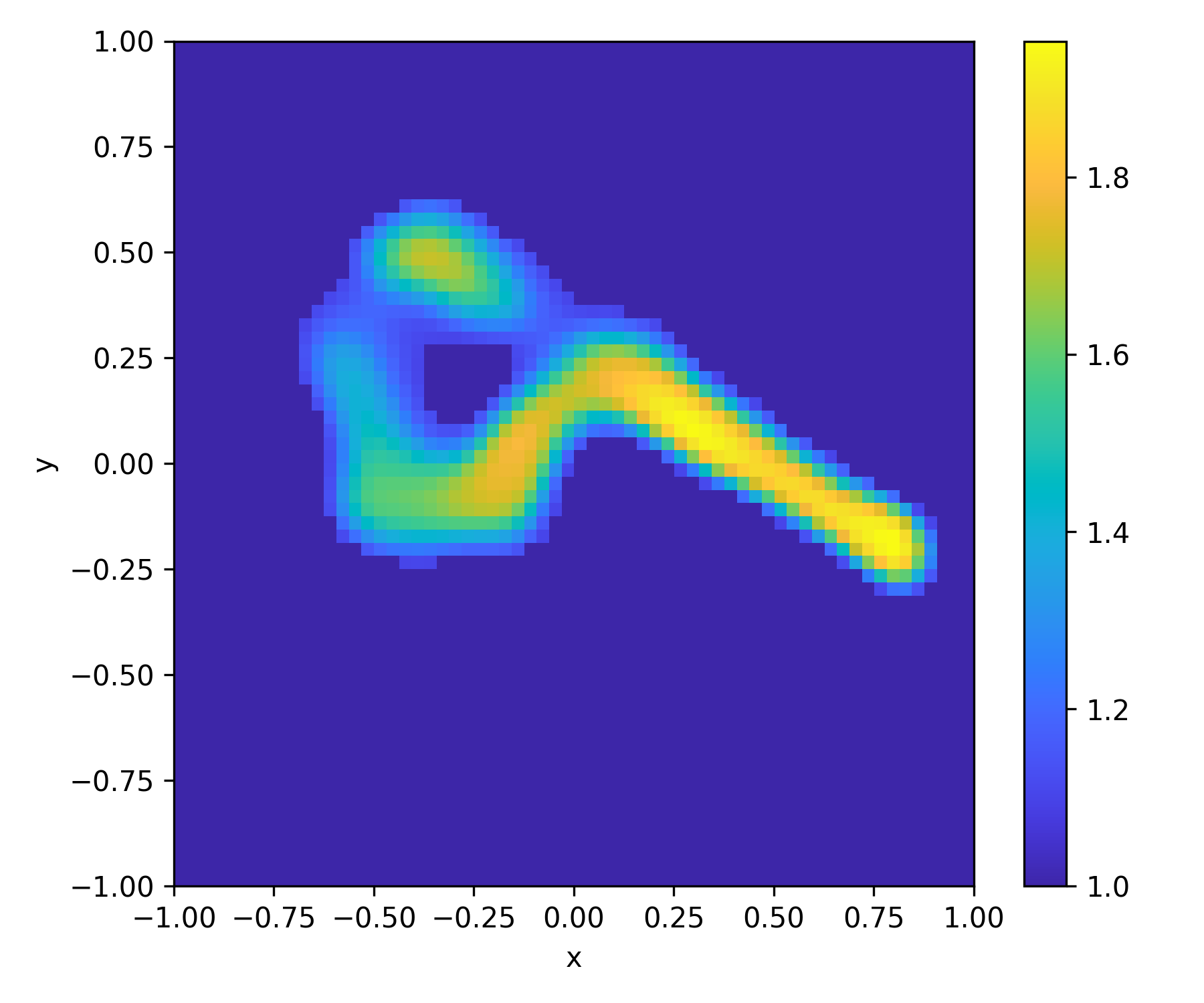}&
				\includegraphics[width=0.15\textwidth]{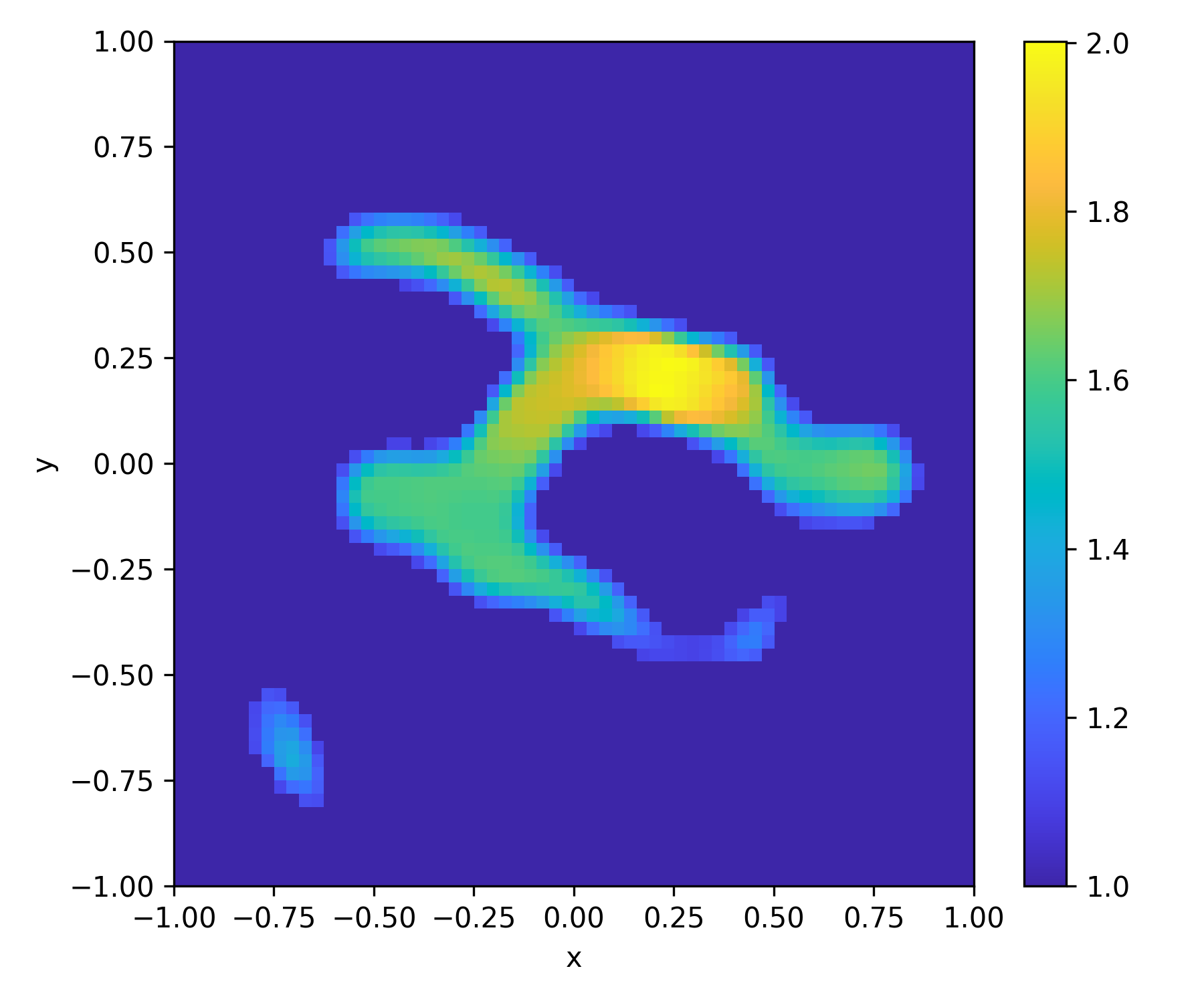}&
				\includegraphics[width=0.15\textwidth]{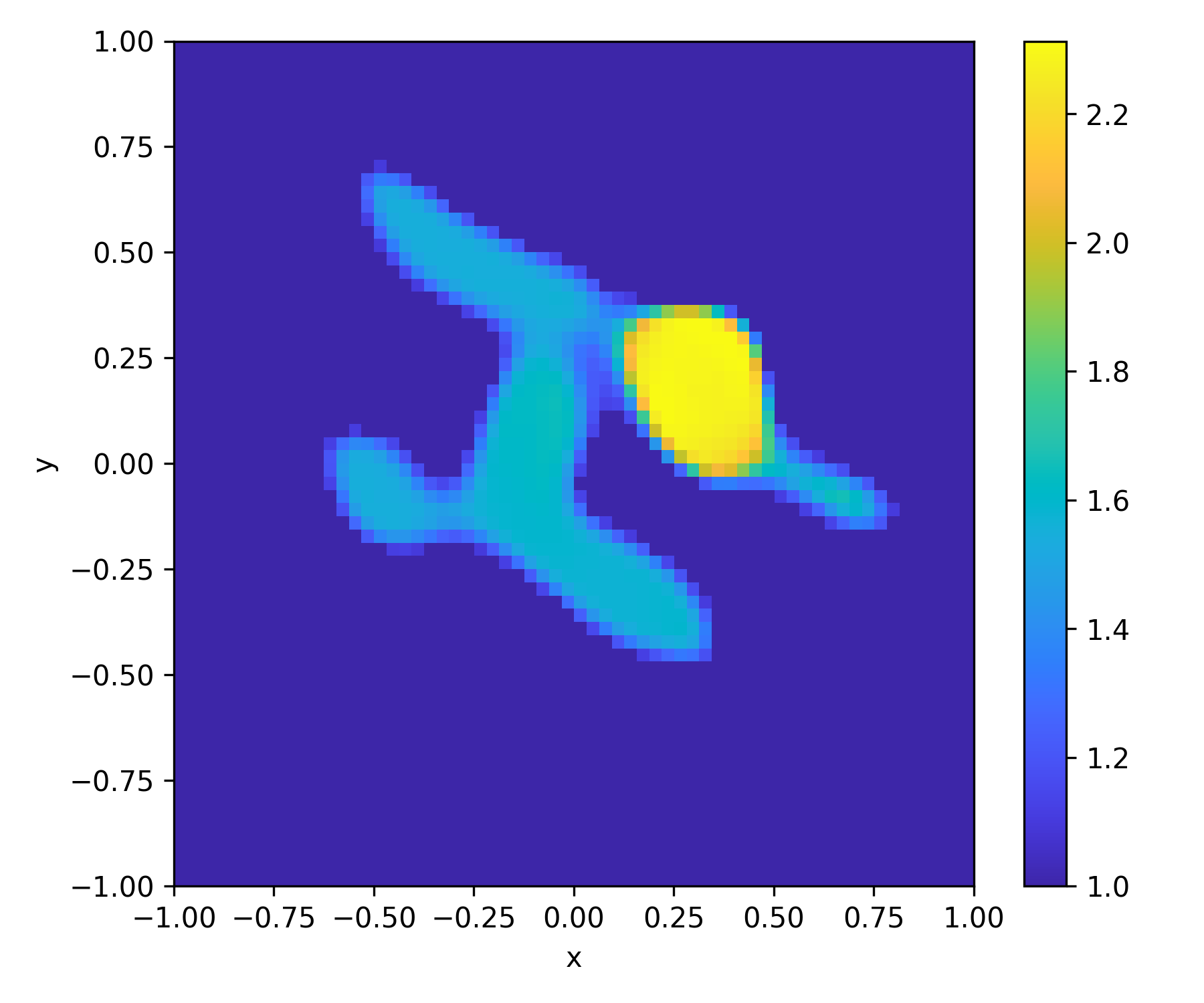}&
				\includegraphics[width=0.15\textwidth]{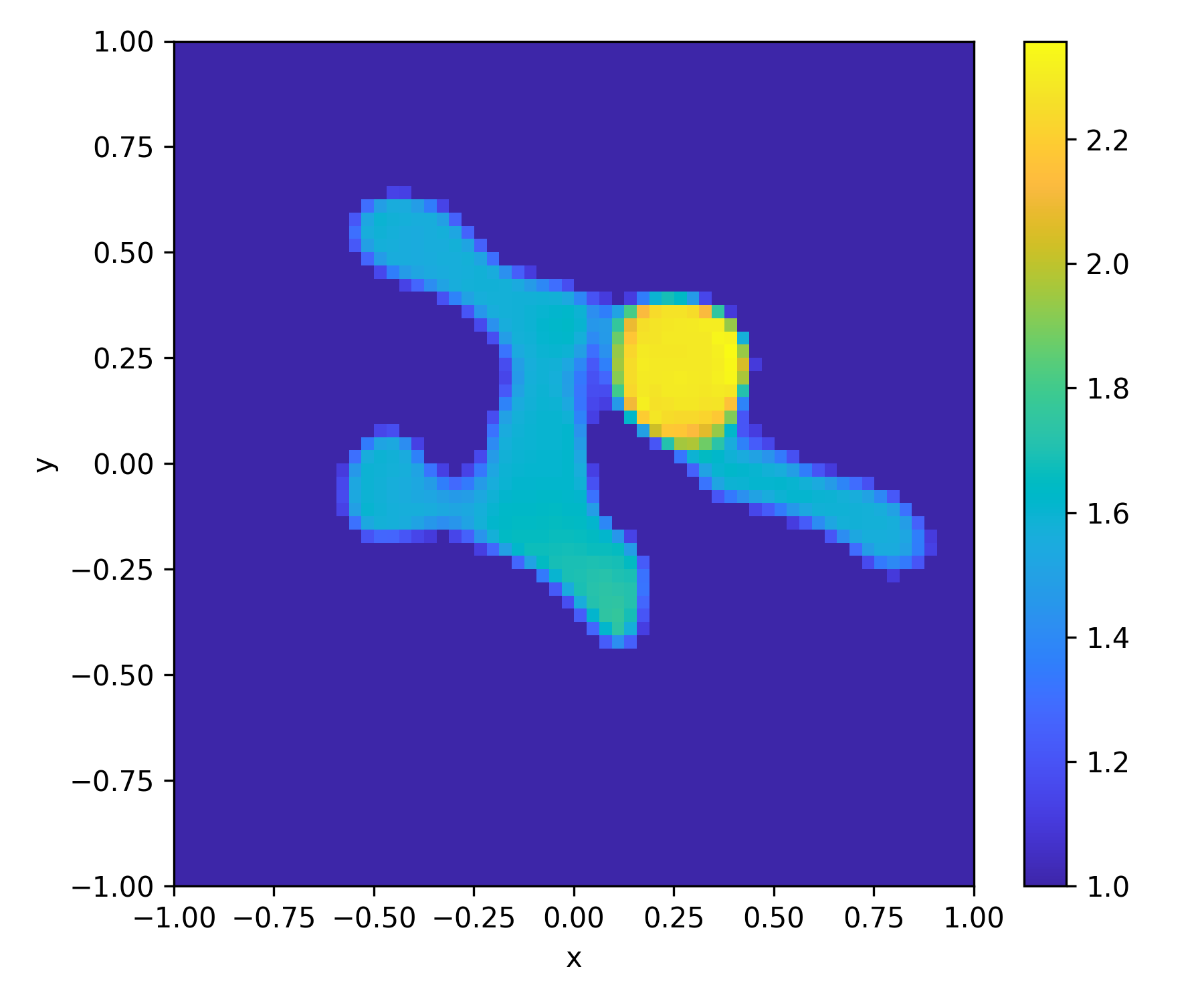}&
				\includegraphics[width=0.15\textwidth]{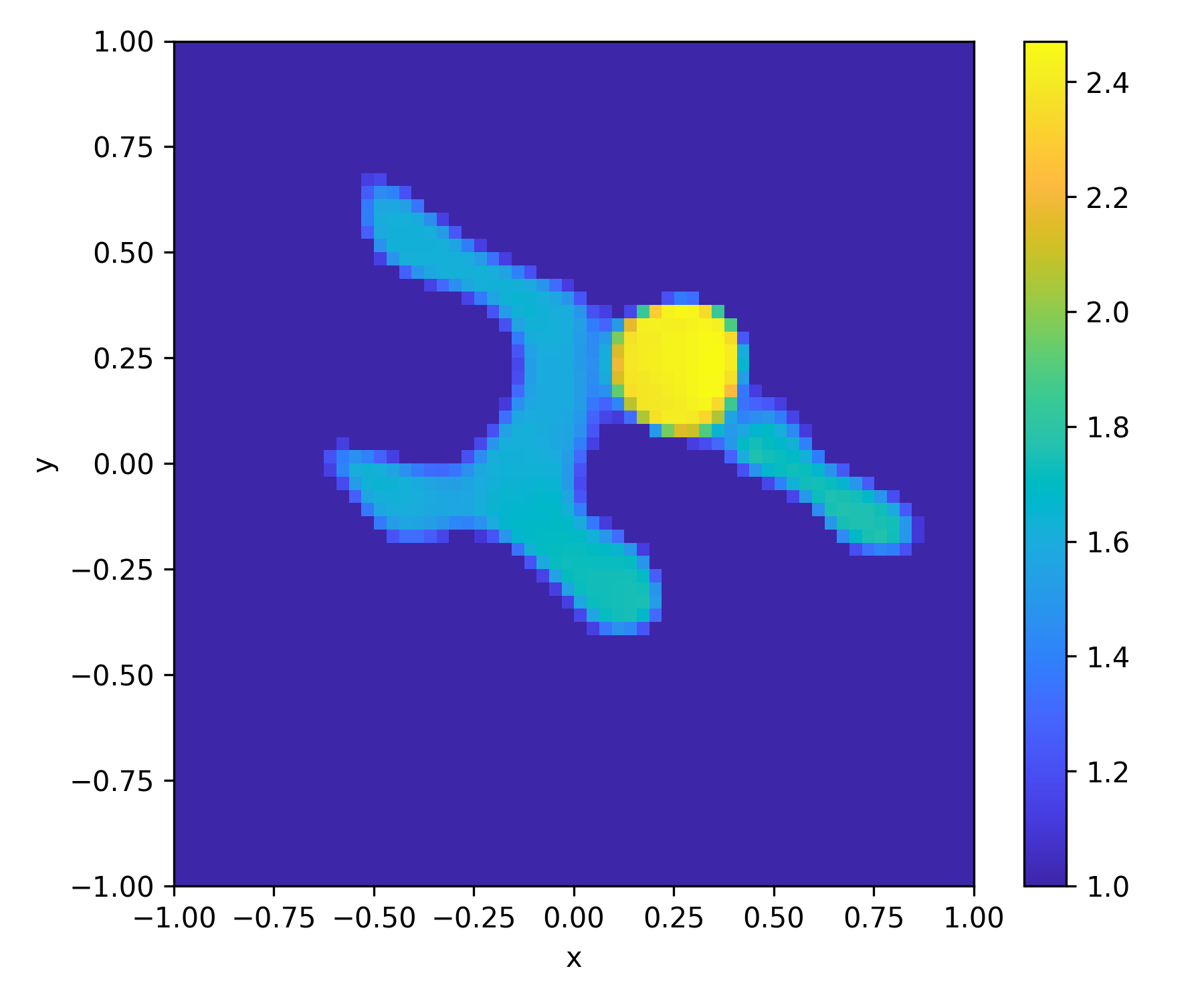}
				\\
				
				15\%&\SetCell[r=2]{c} \includegraphics[width=0.15\textwidth]{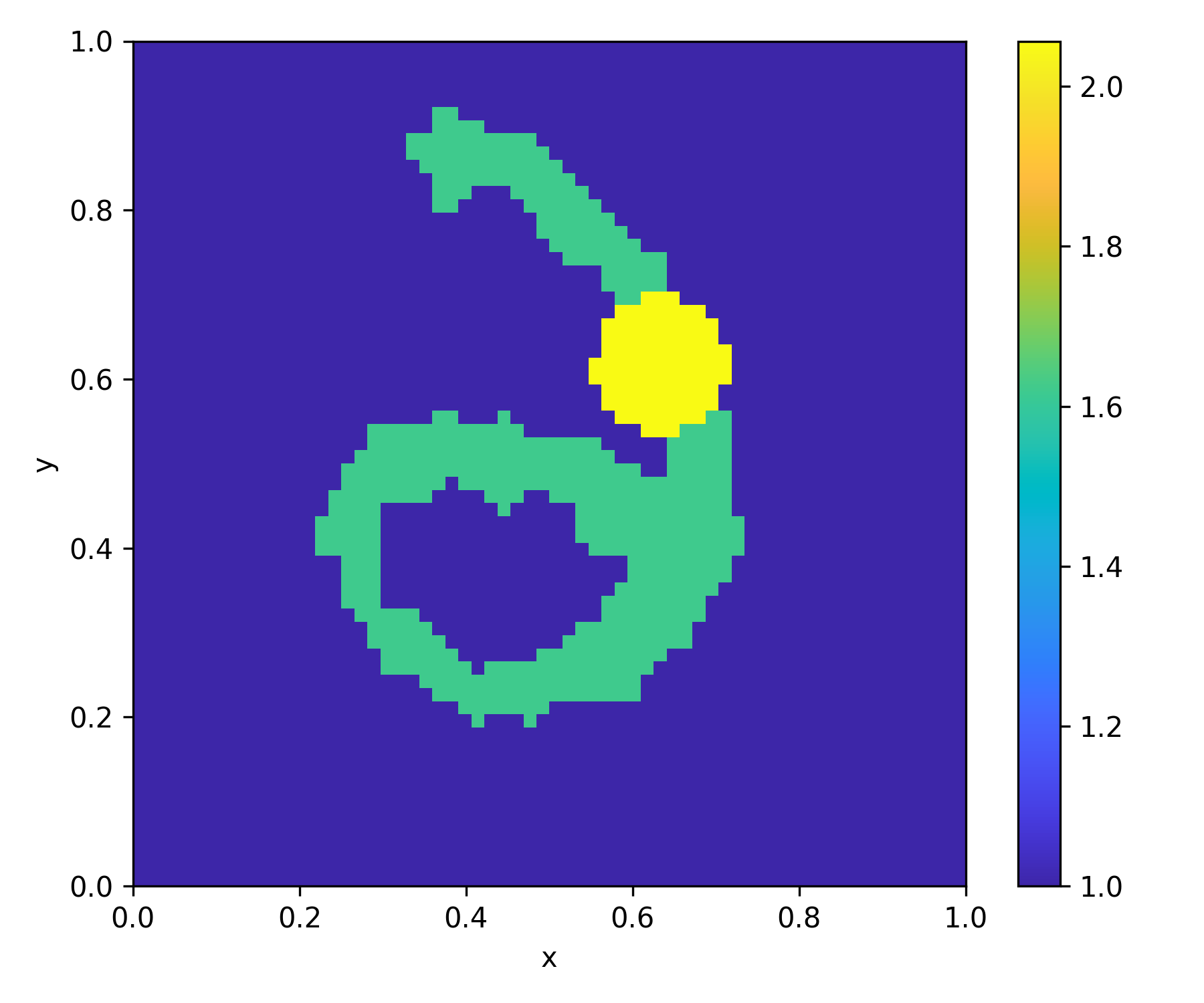} &
				\includegraphics[width=0.15\textwidth]{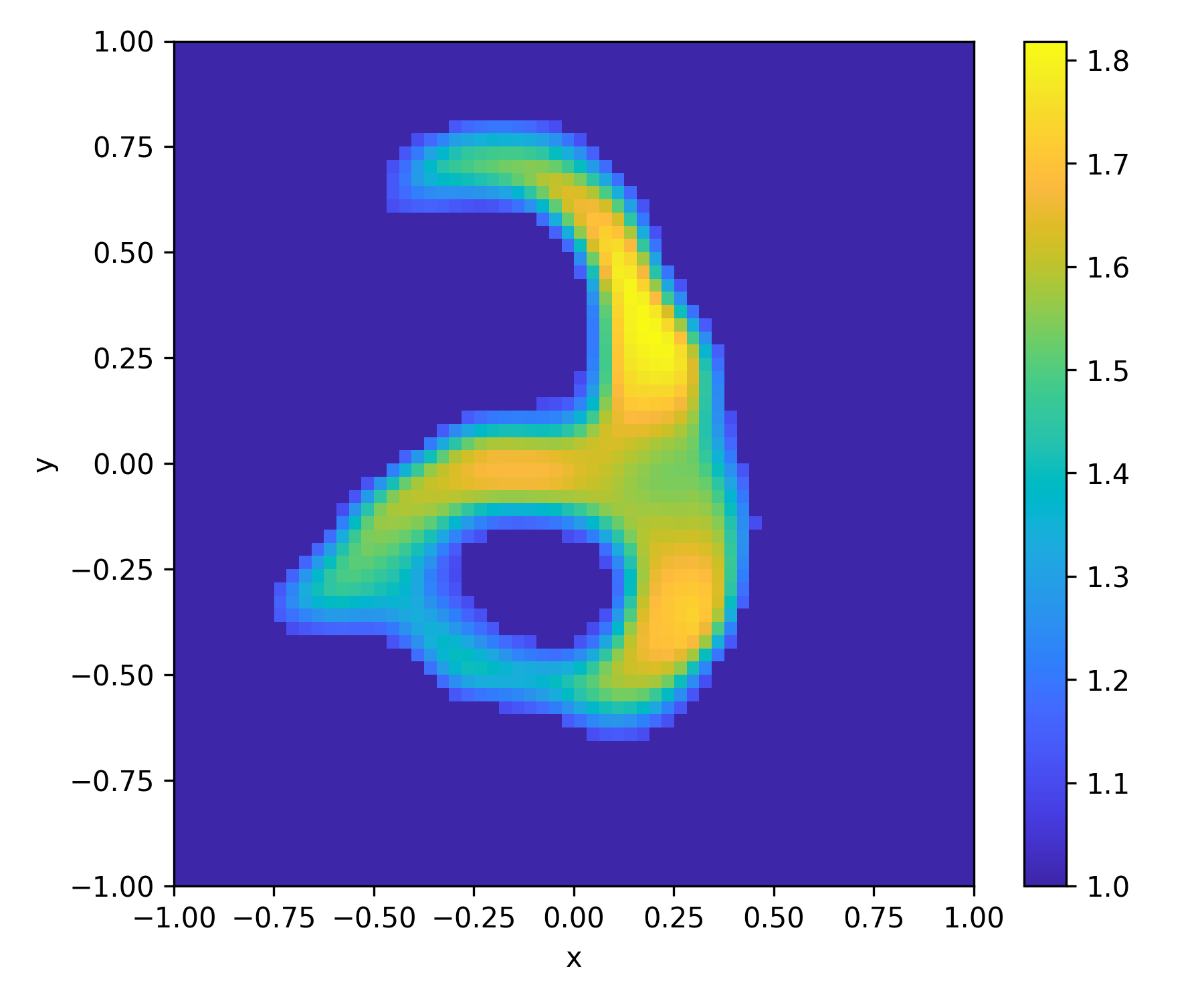}&
				\includegraphics[width=0.15\textwidth]{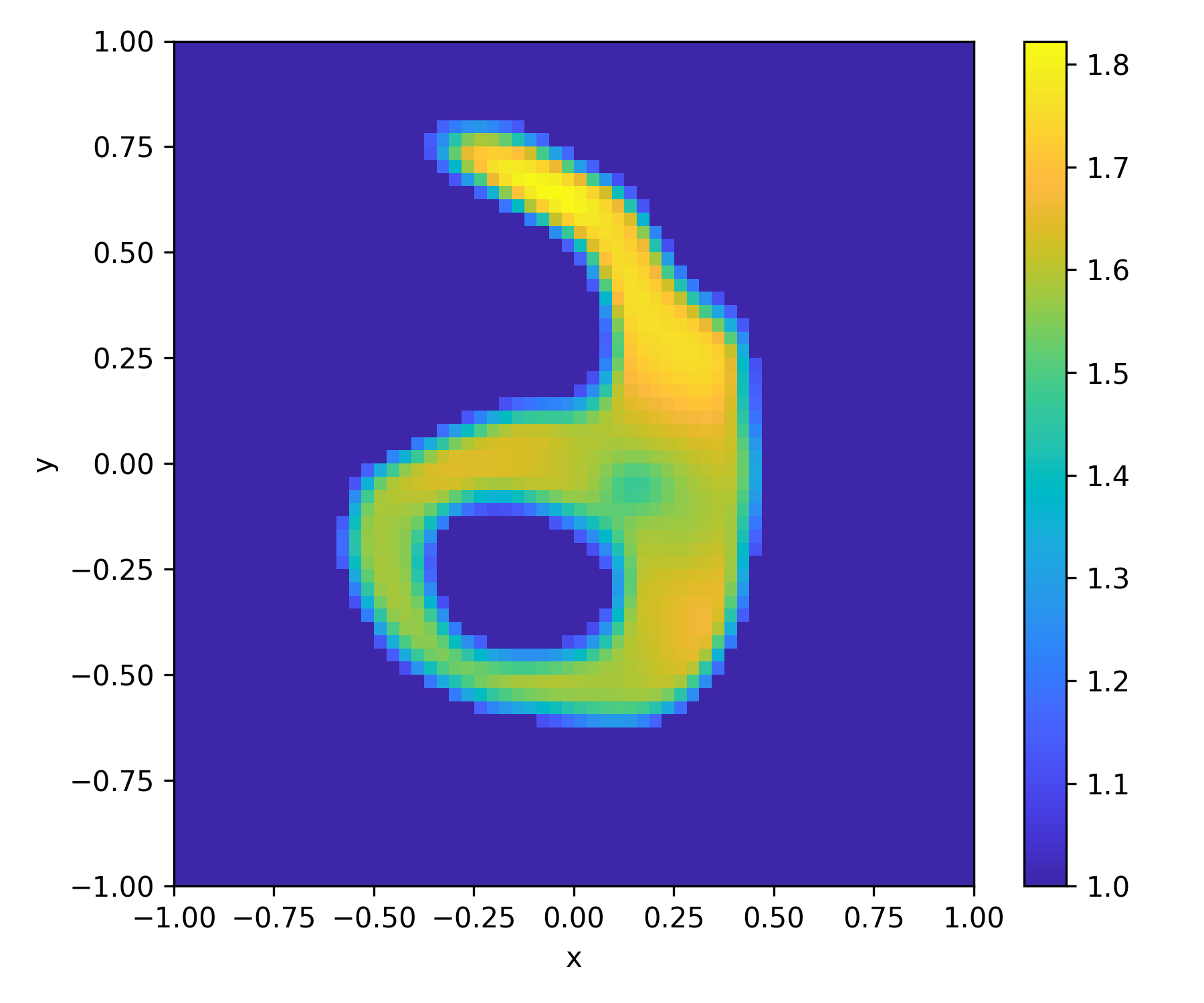}&
				\includegraphics[width=0.15\textwidth]{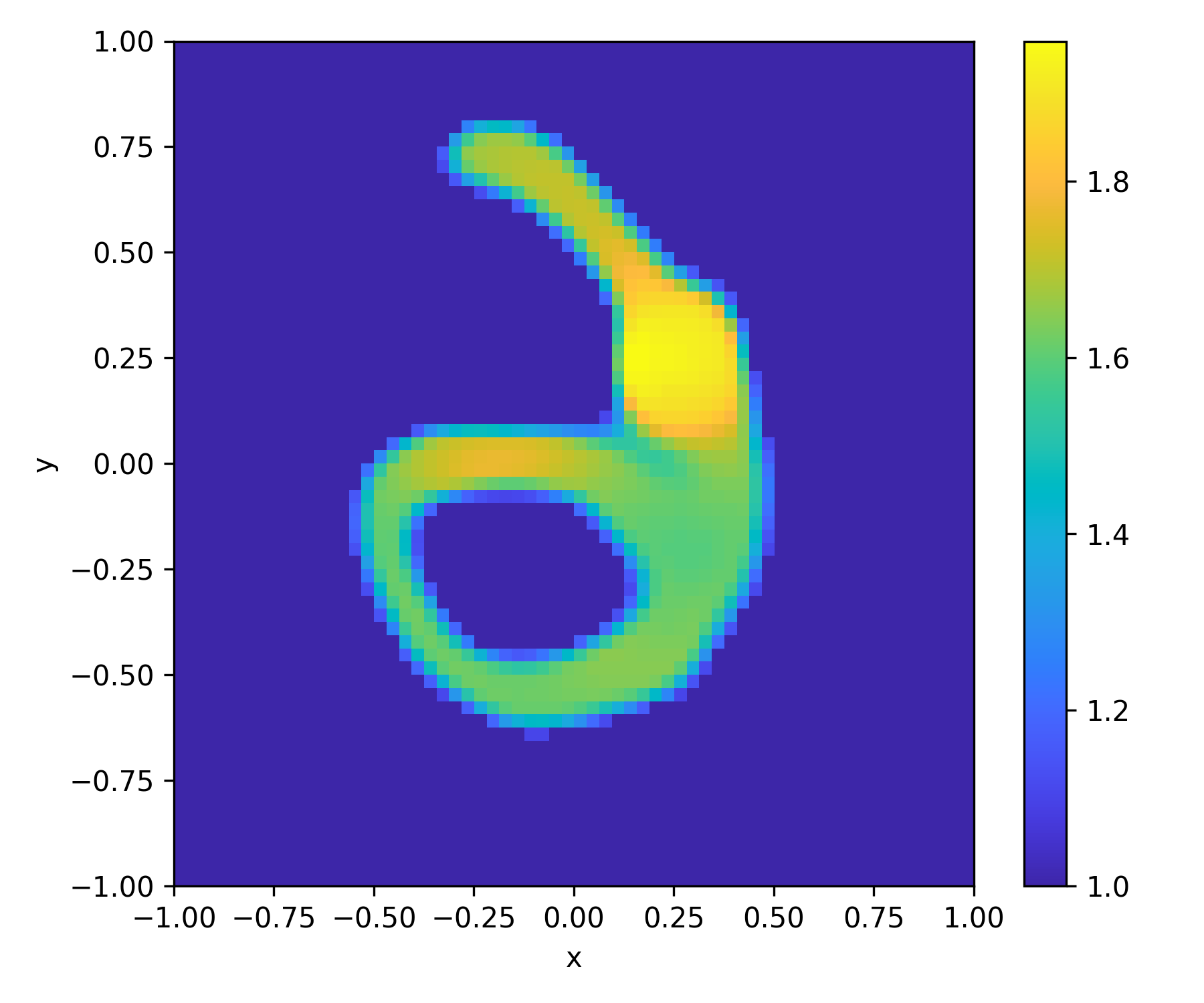}&
				\includegraphics[width=0.15\textwidth]{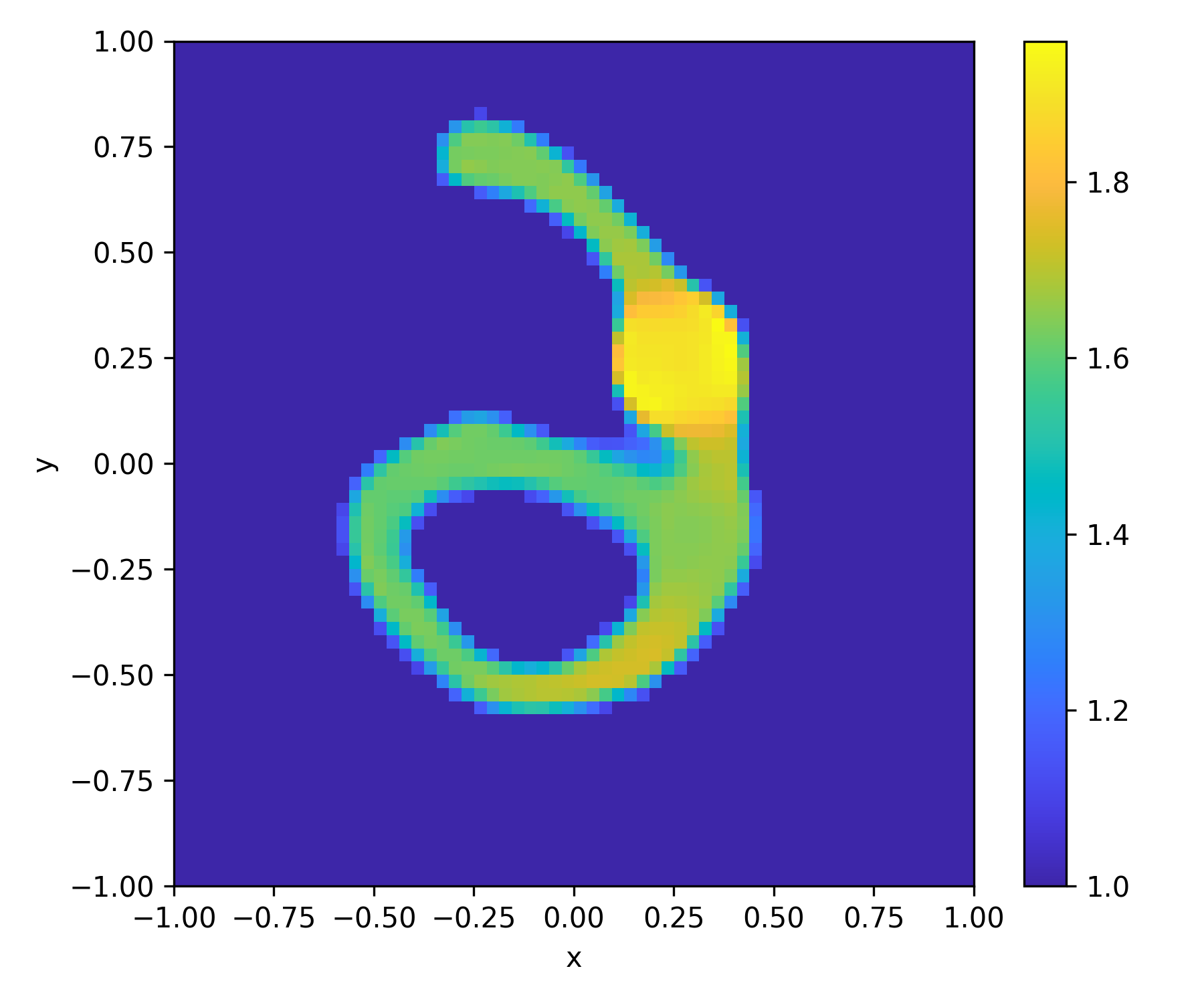}&
				\includegraphics[width=0.15\textwidth]{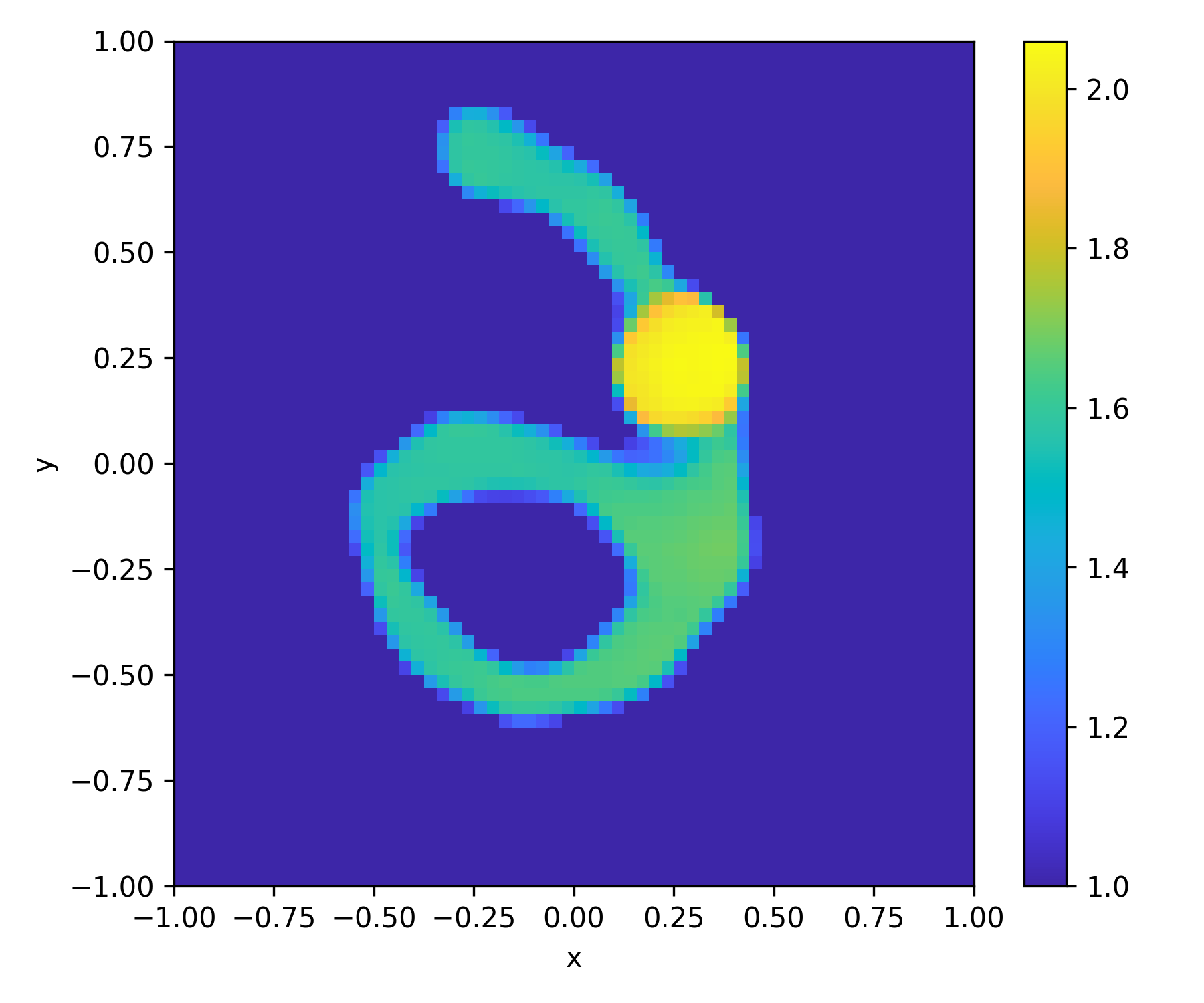}
				\\
				40\%& &
				\includegraphics[width=0.15\textwidth]{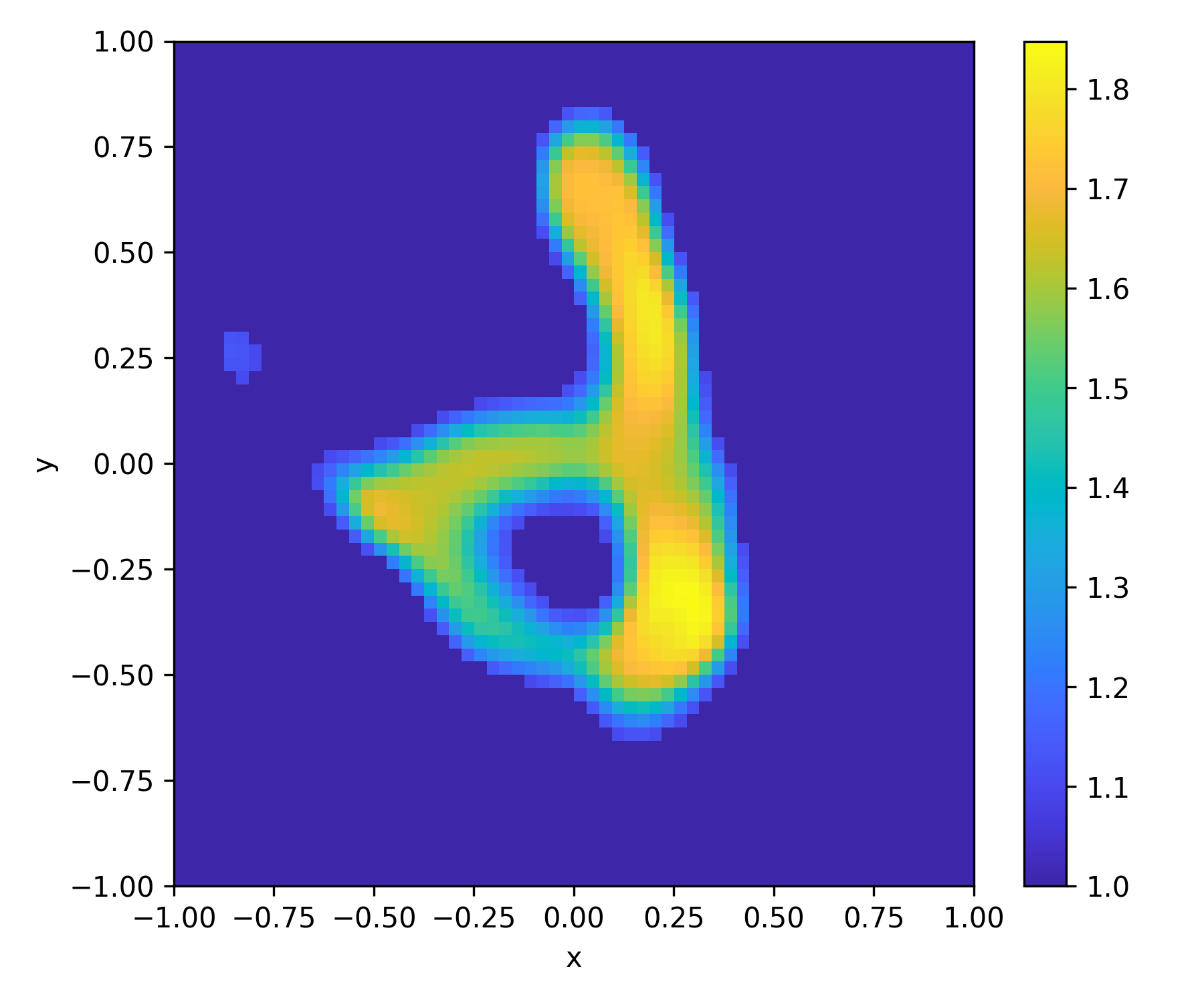}&
				\includegraphics[width=0.15\textwidth]{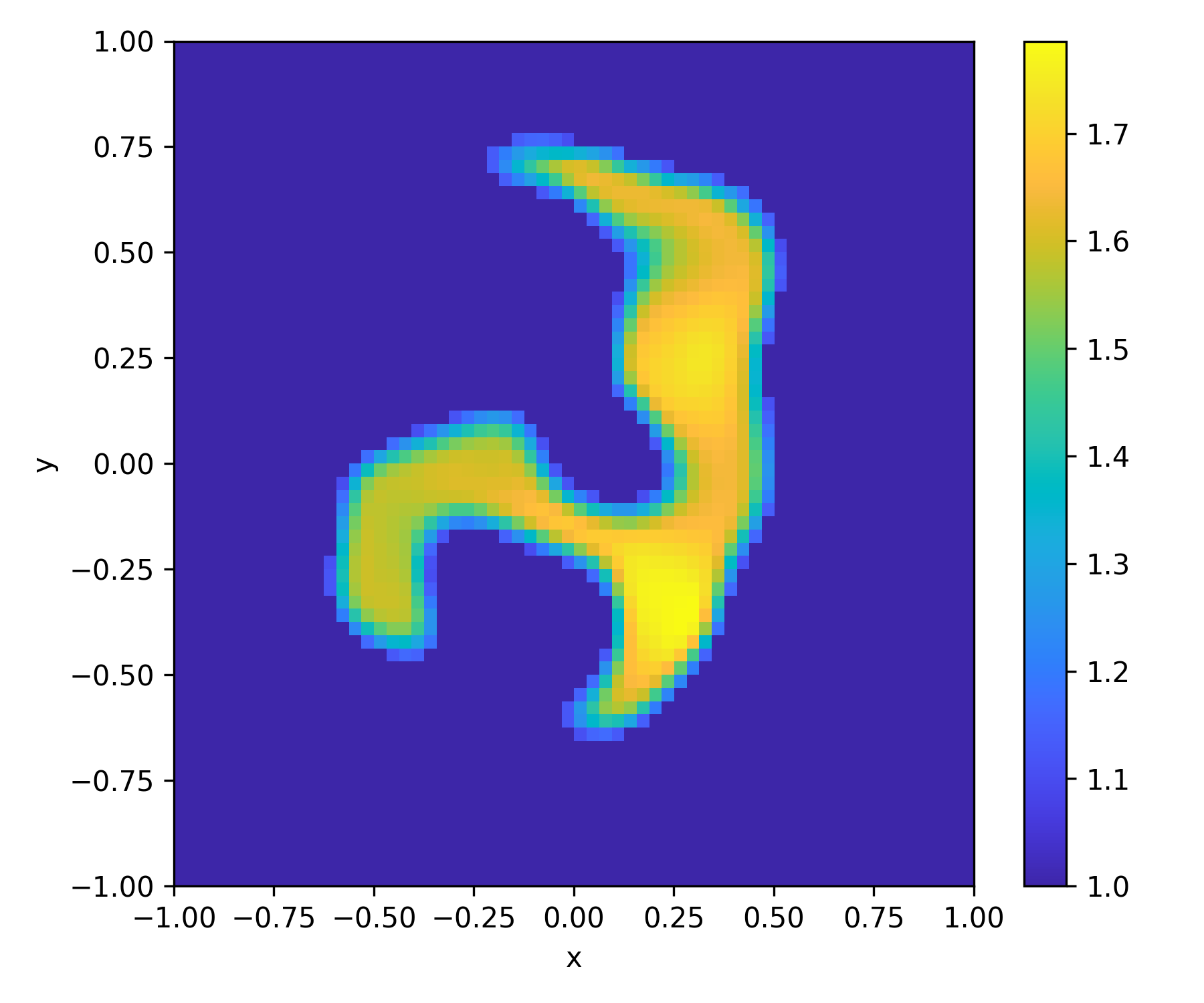}&
				\includegraphics[width=0.15\textwidth]{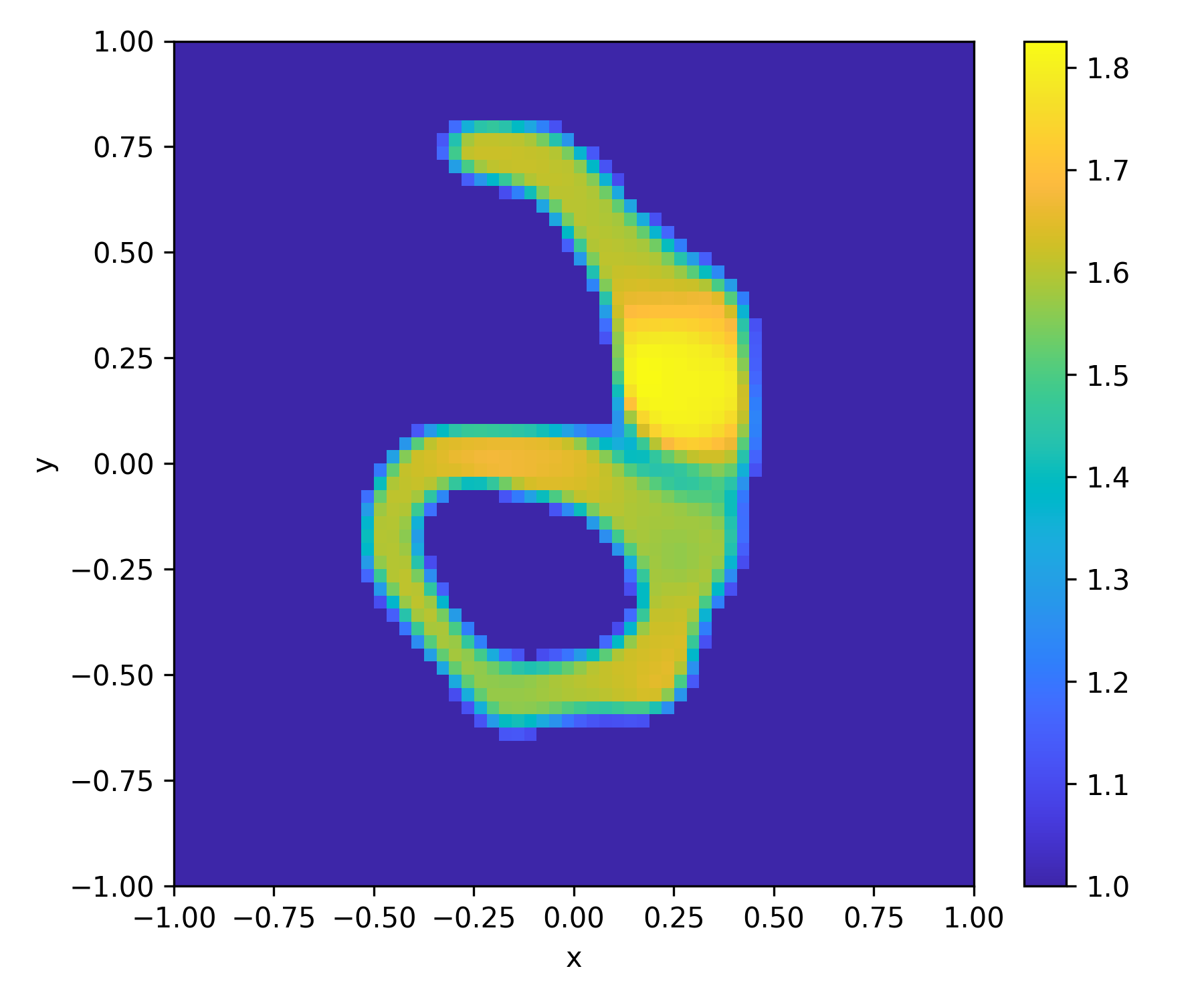}&
				\includegraphics[width=0.15\textwidth]{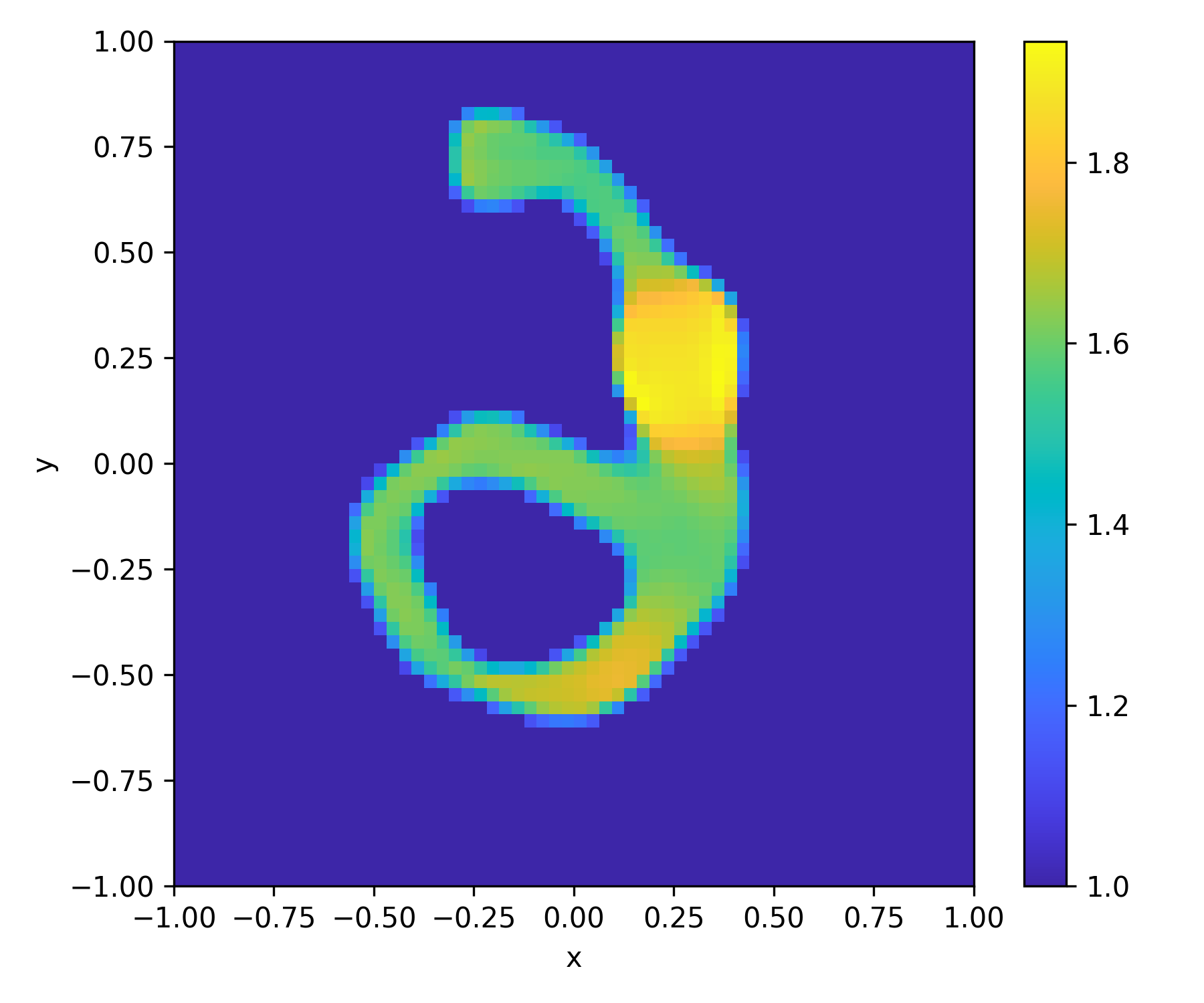}&
				\includegraphics[width=0.15\textwidth]{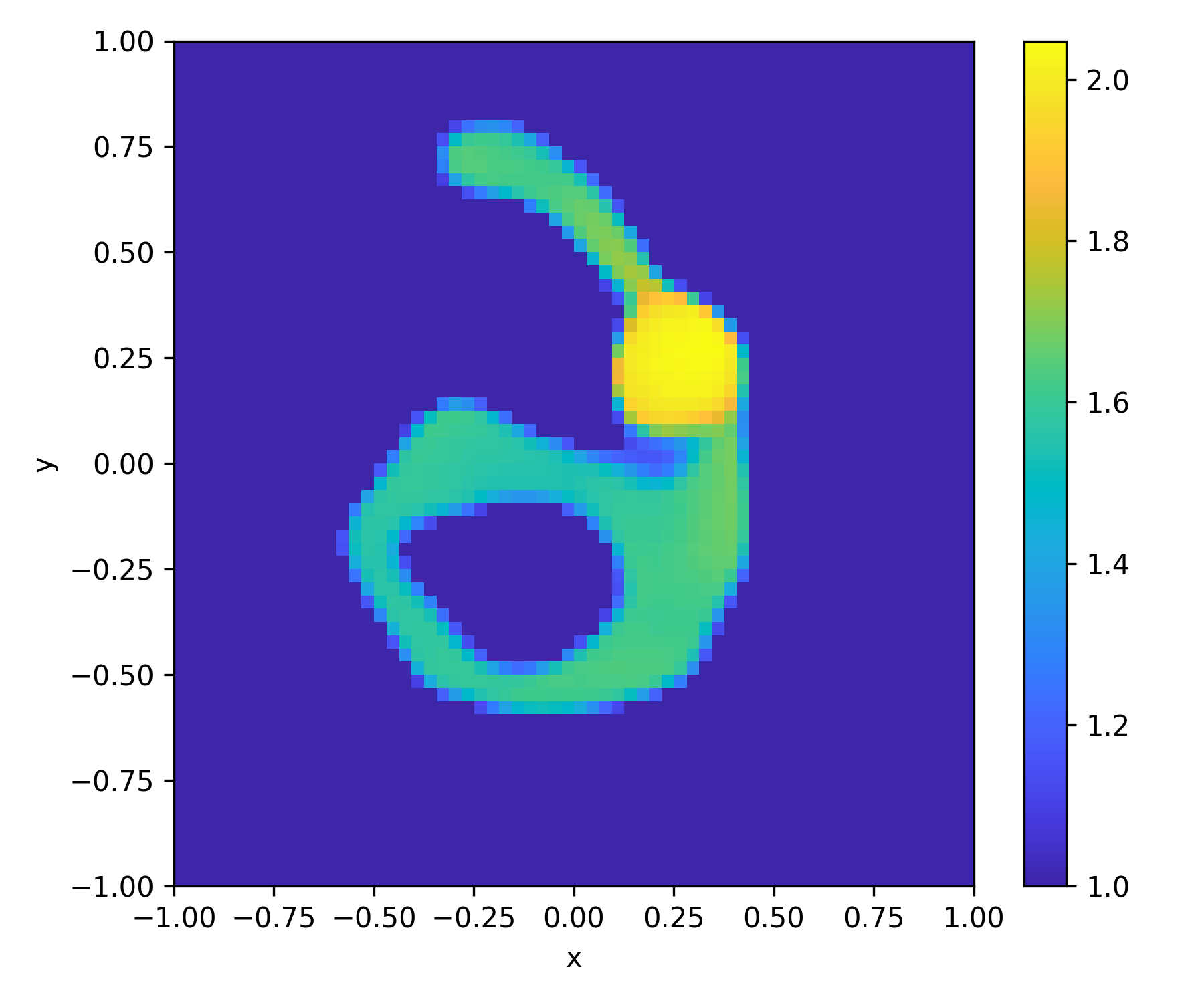}
				\\ 15\%&\SetCell[r=2]{c} \includegraphics[width=0.15\textwidth]{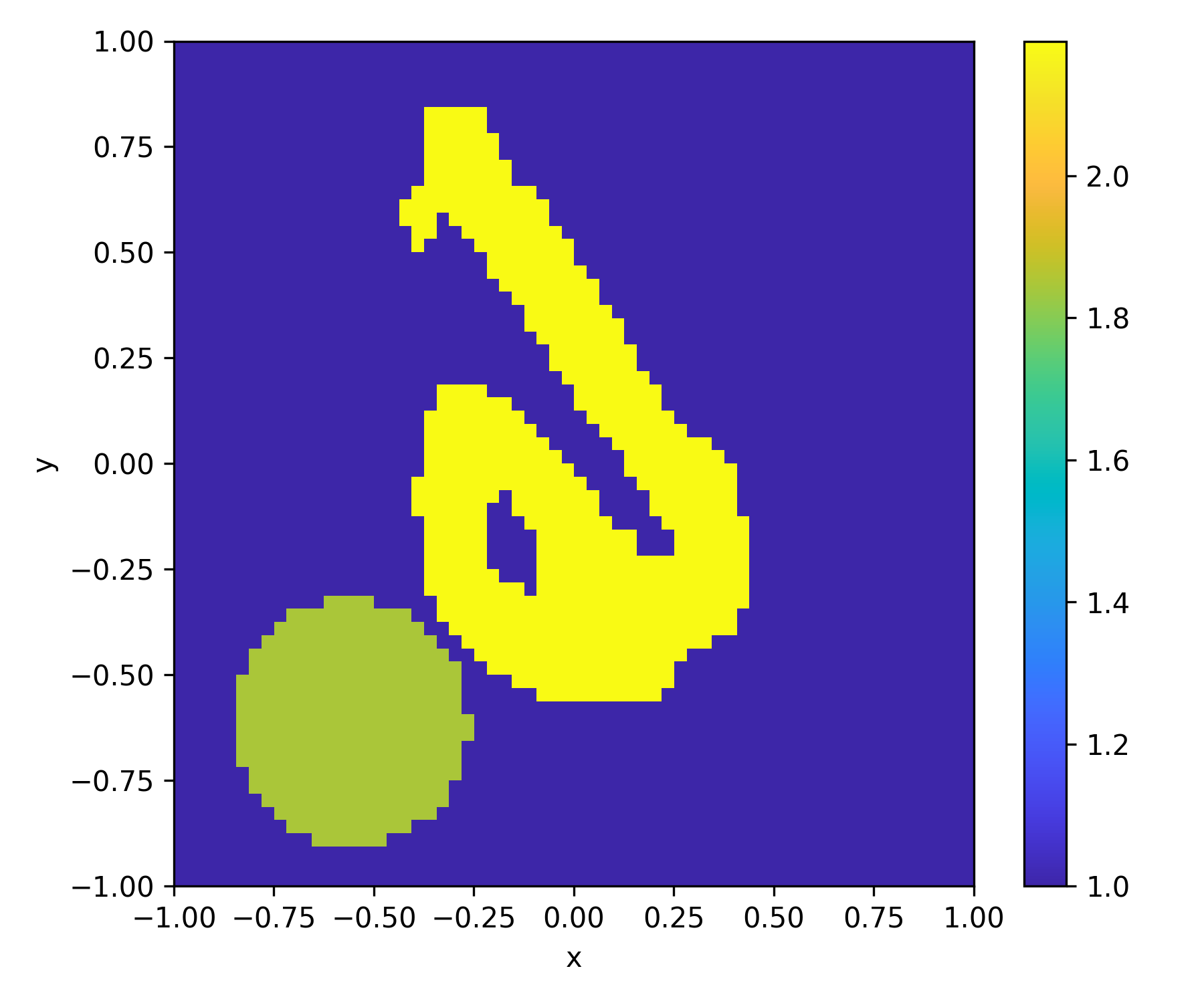} &
				\includegraphics[width=0.15\textwidth]{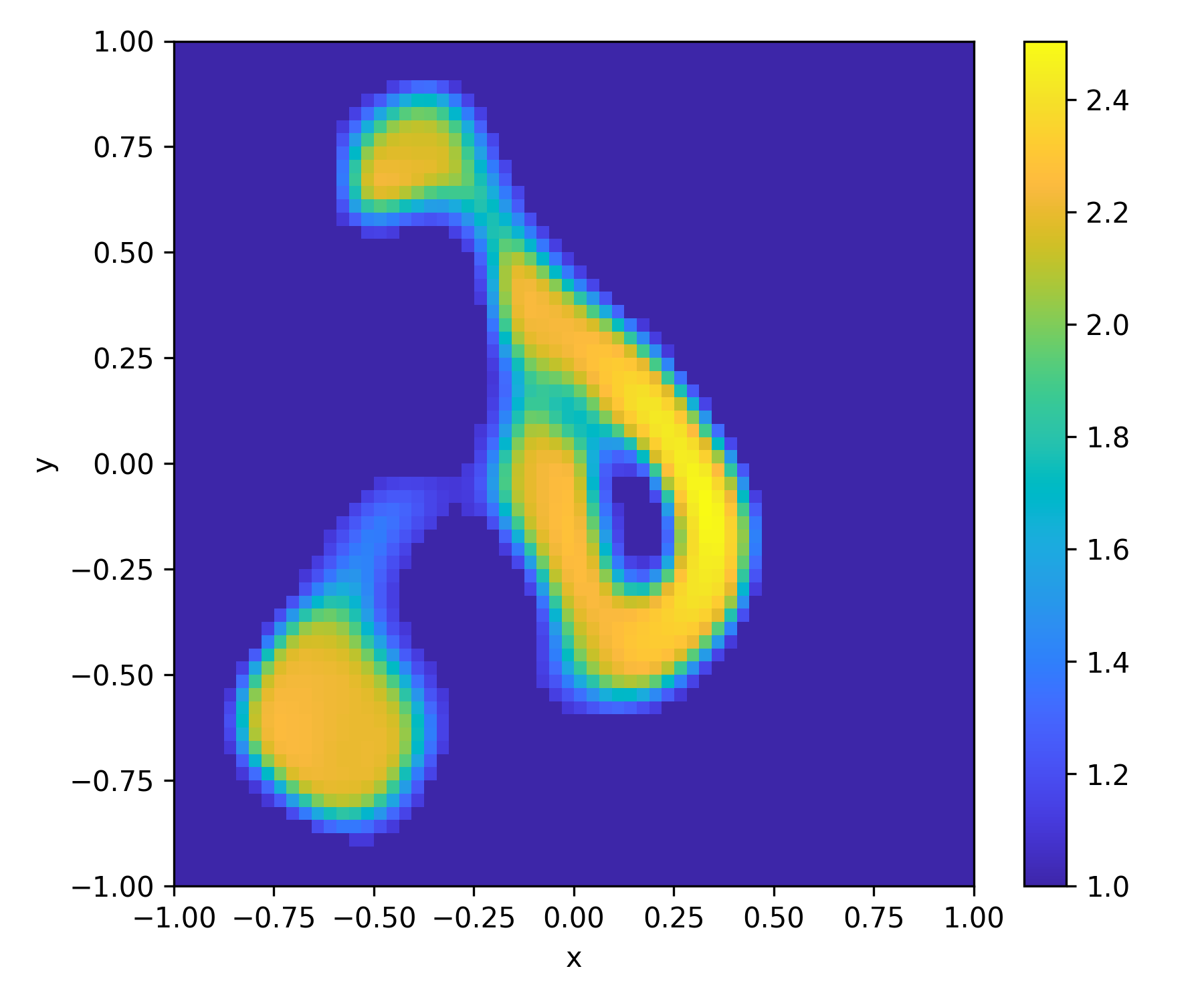}&
				\includegraphics[width=0.15\textwidth]{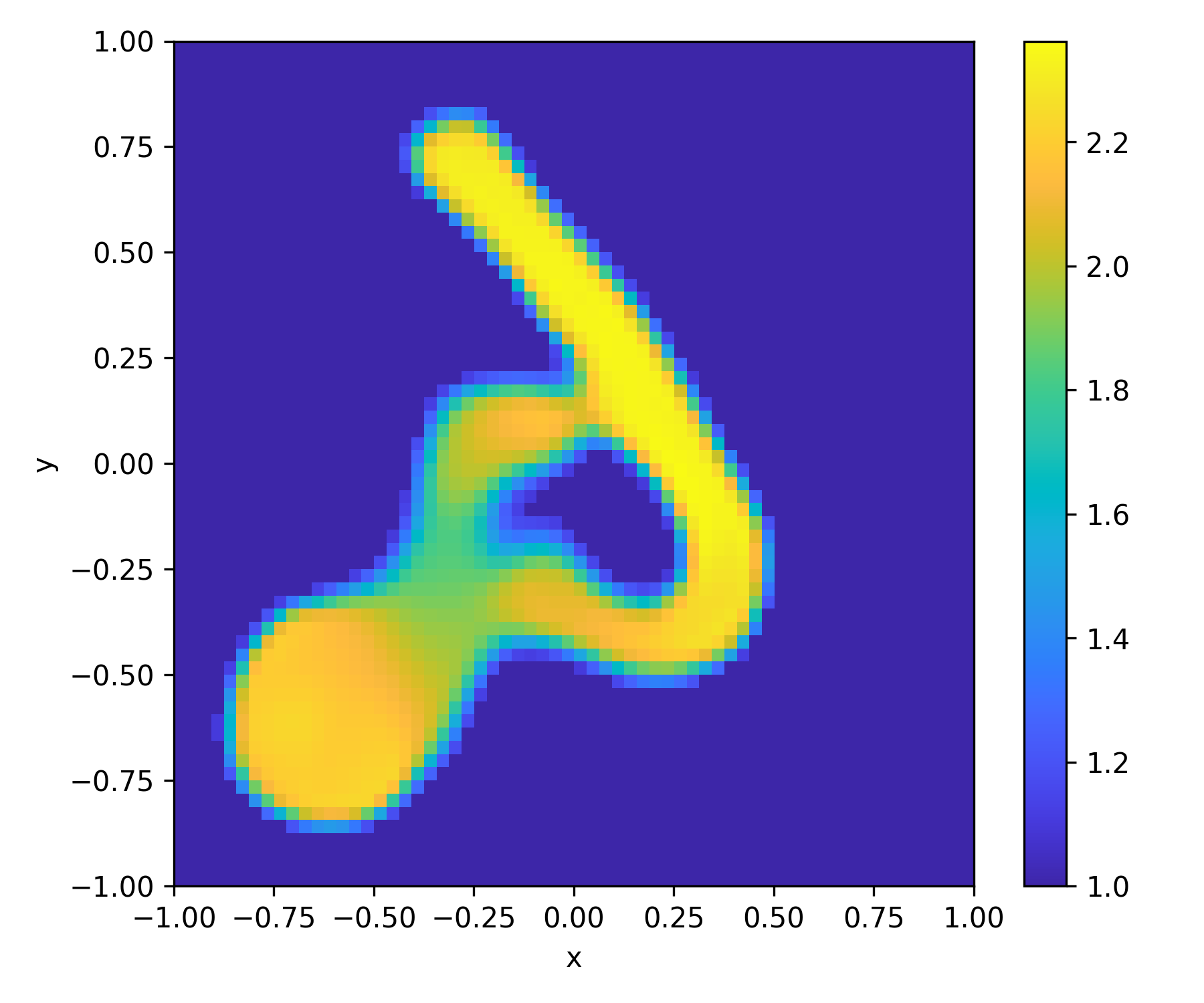}&
				\includegraphics[width=0.15\textwidth]{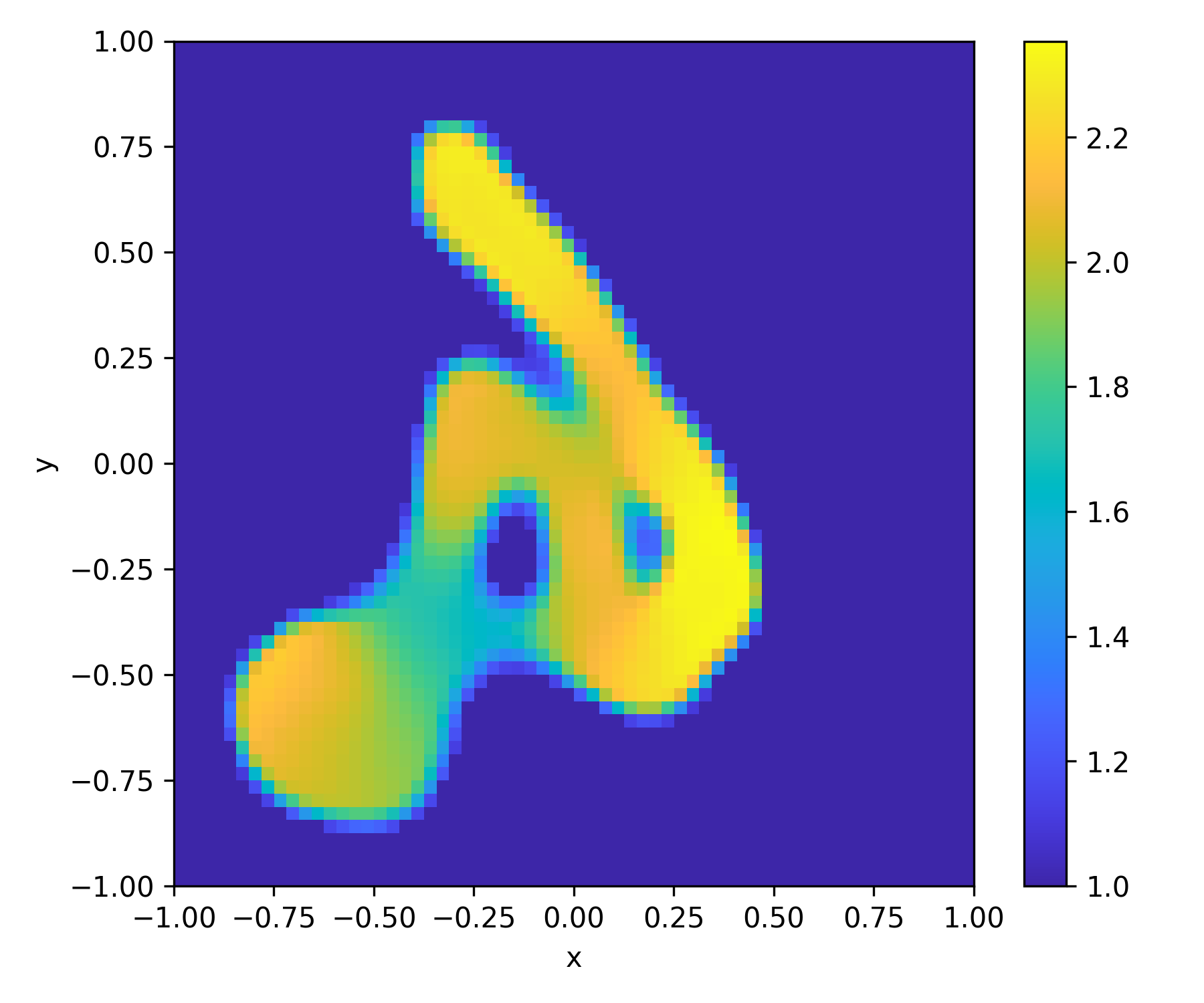}&
				\includegraphics[width=0.15\textwidth]{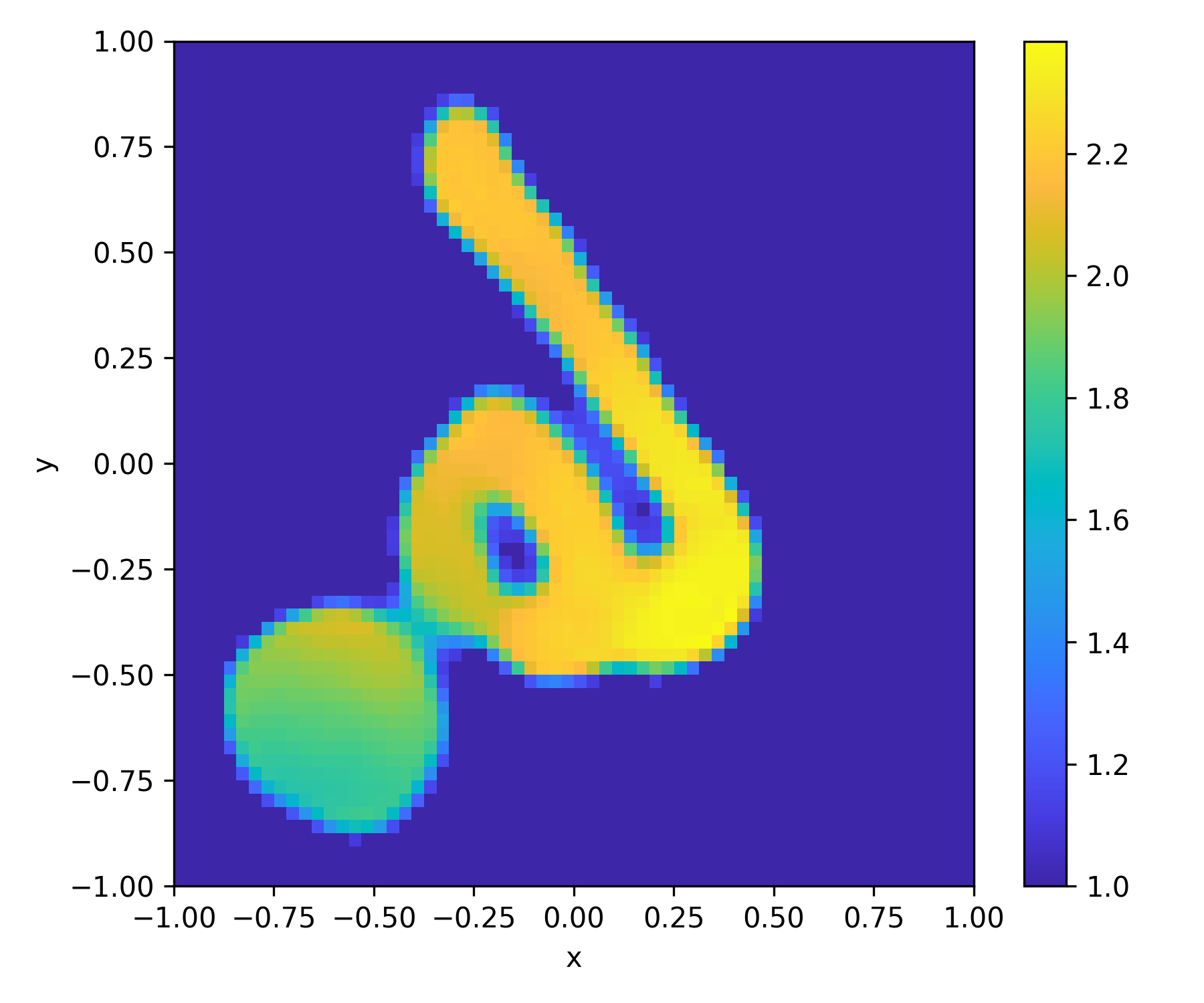}&
				\includegraphics[width=0.15\textwidth]{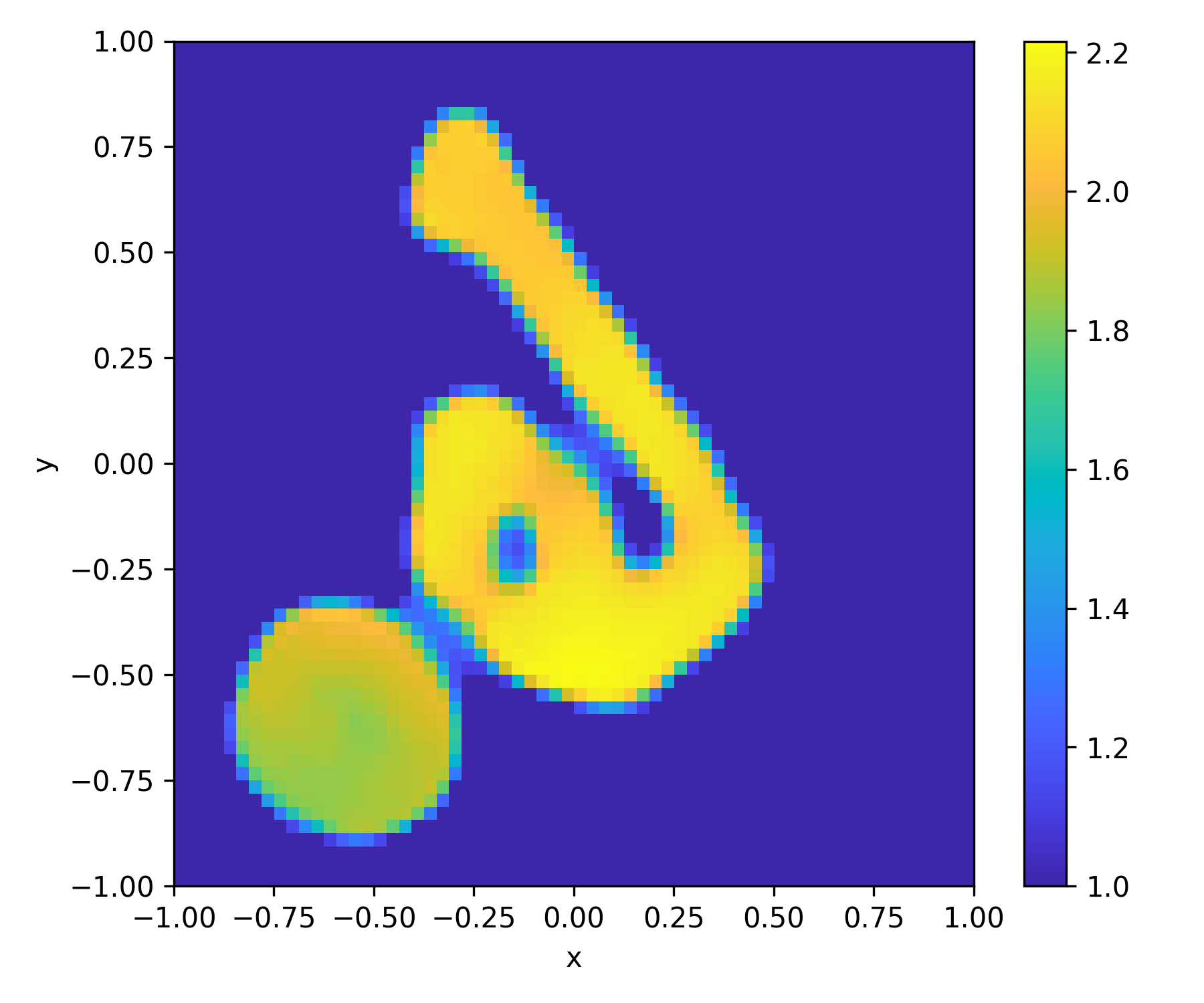}
				\\
				40\%& &
				\includegraphics[width=0.15\textwidth]{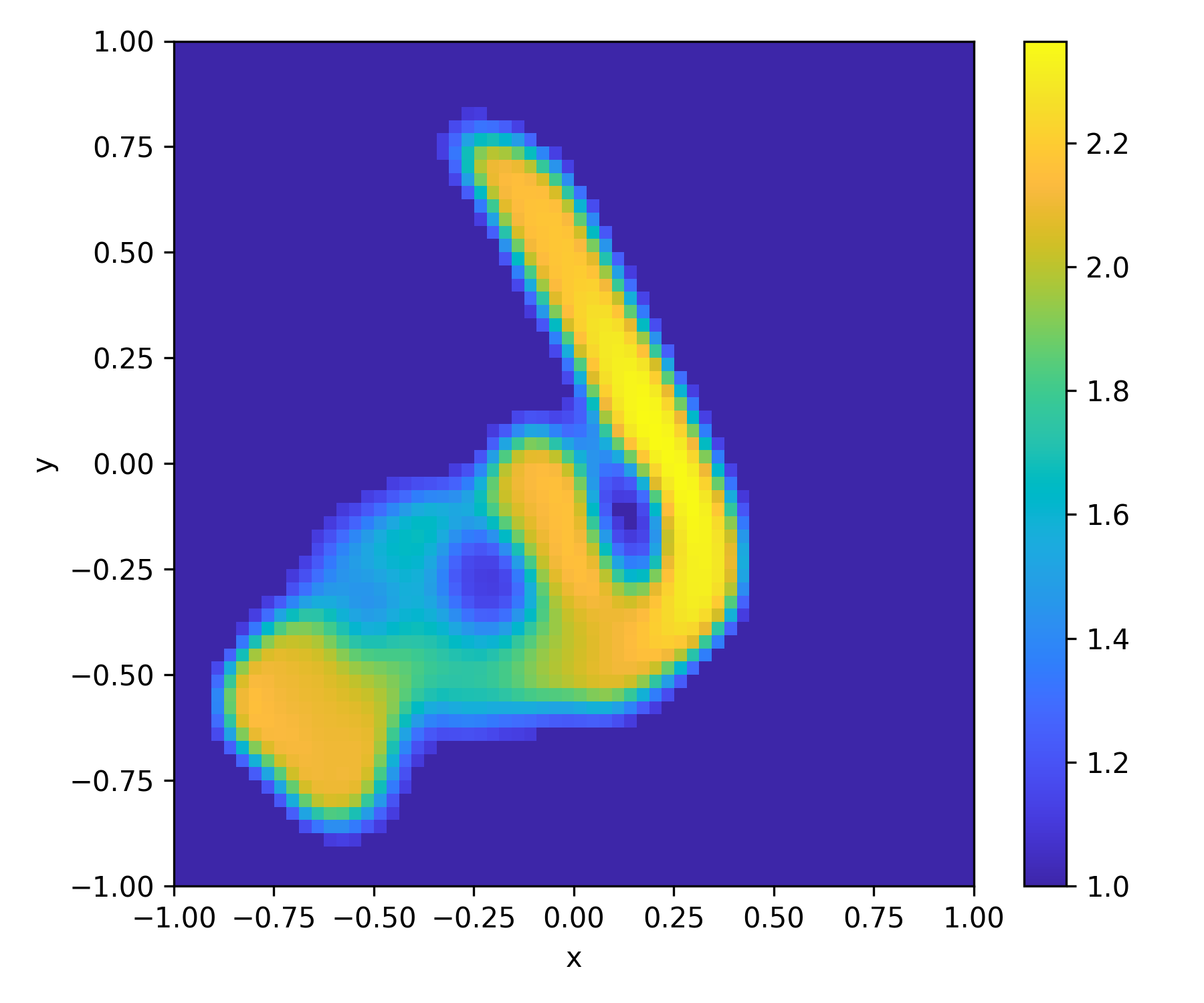}&
				\includegraphics[width=0.15\textwidth]{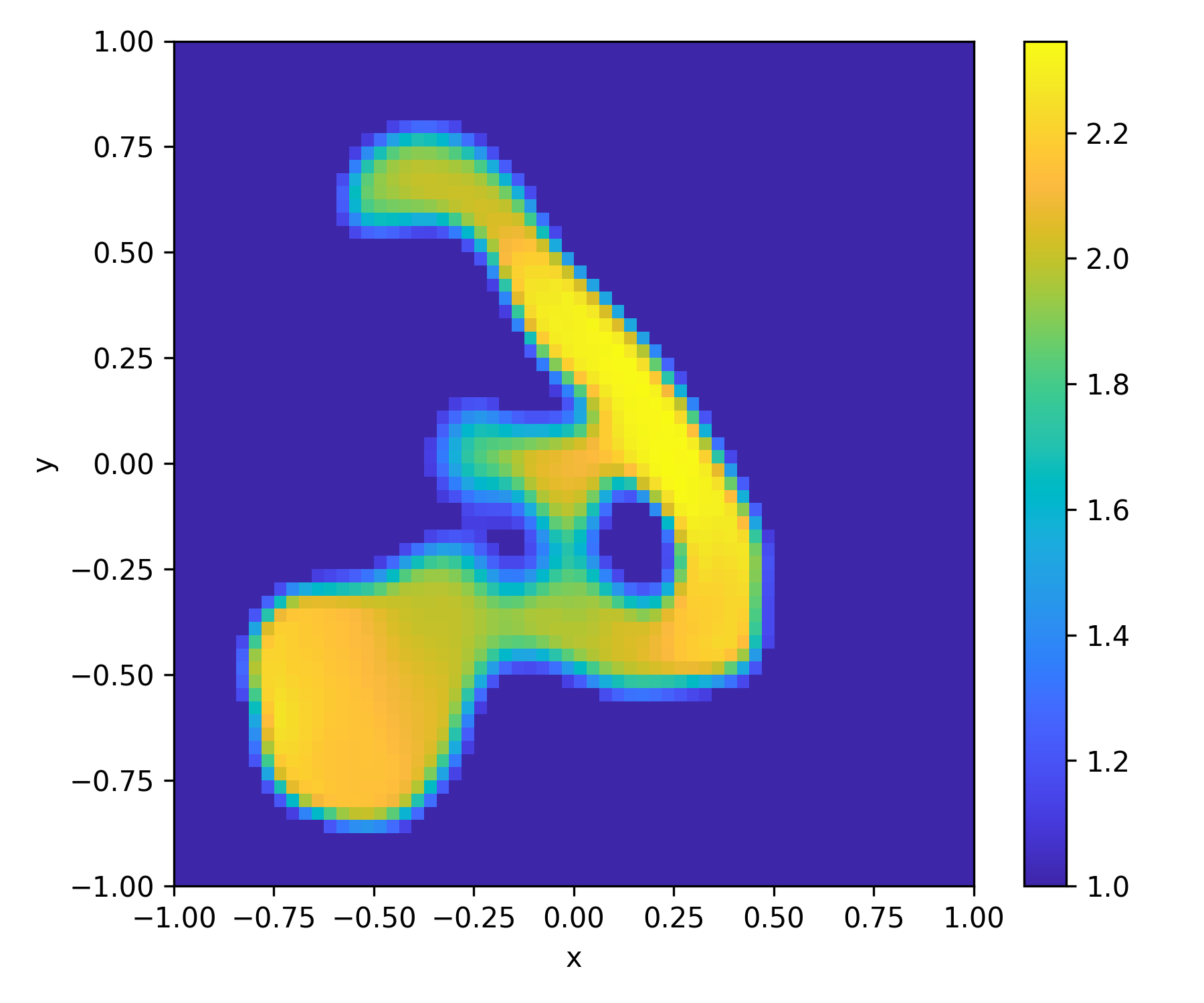}&
				\includegraphics[width=0.15\textwidth]{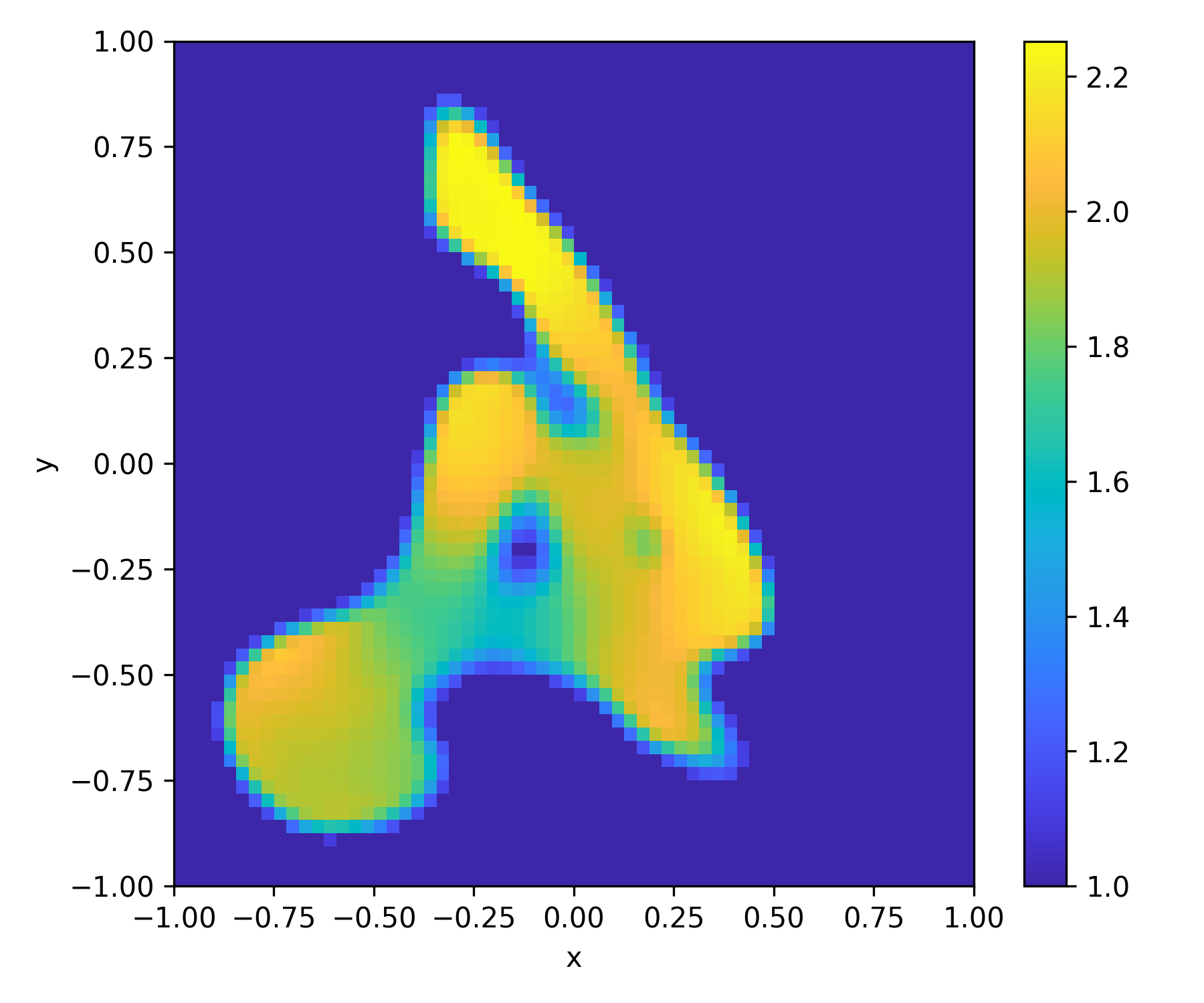}&
				\includegraphics[width=0.15\textwidth]{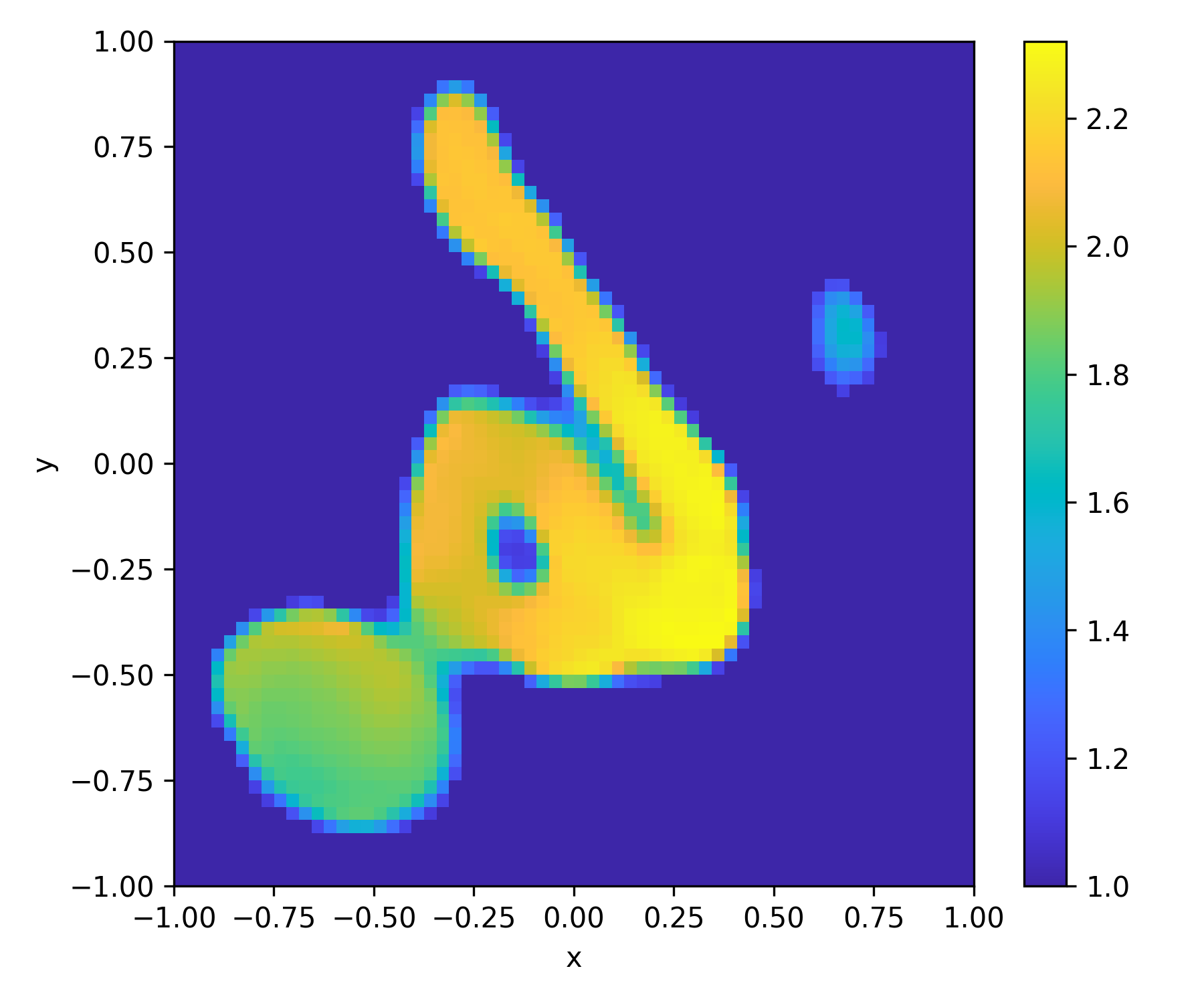}&
				\includegraphics[width=0.15\textwidth]{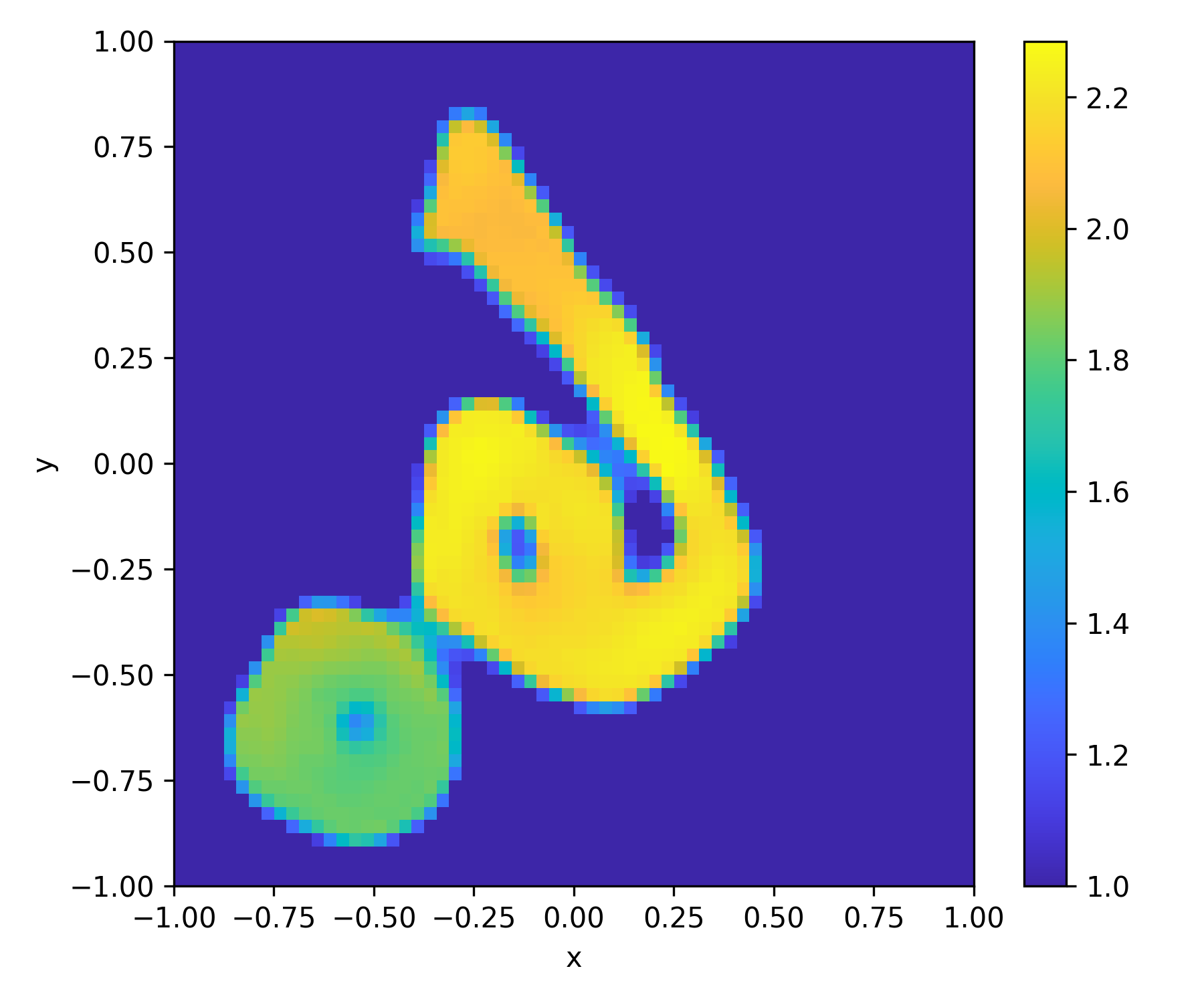}
				\\
				15\%&\SetCell[r=2]{c}
				\includegraphics[width=0.15\textwidth]{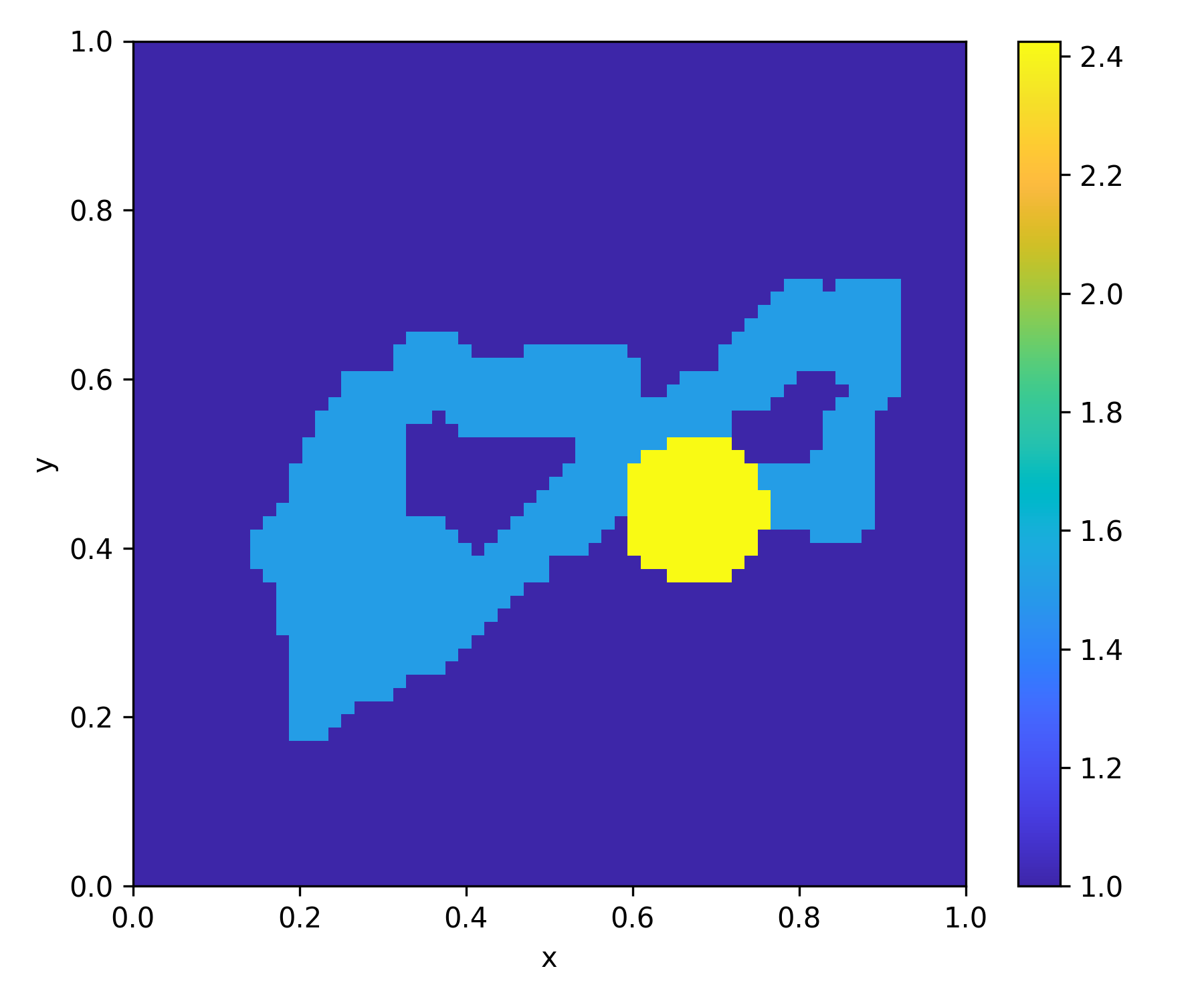} &
				\includegraphics[width=0.15\textwidth]{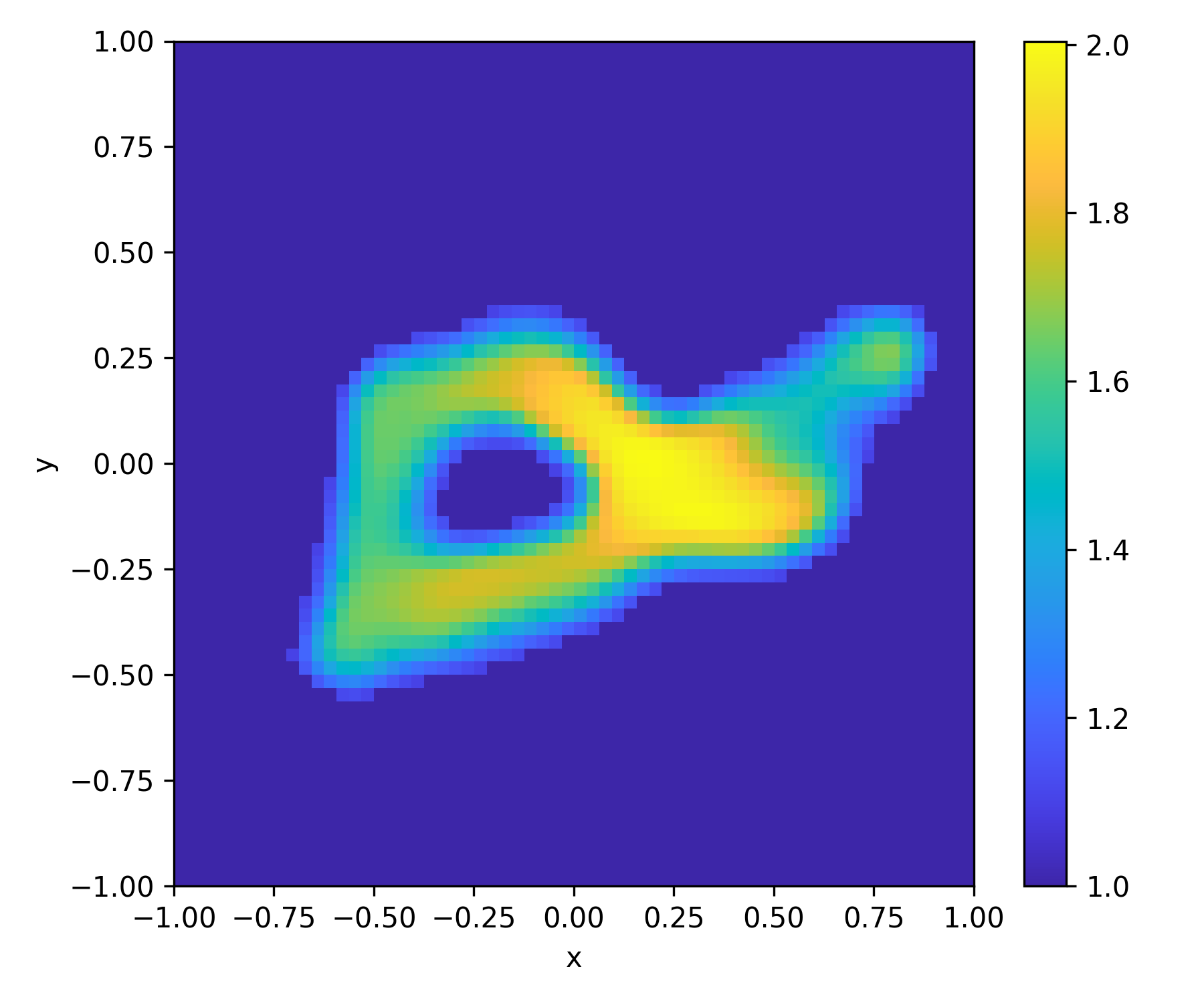}&
				\includegraphics[width=0.15\textwidth]{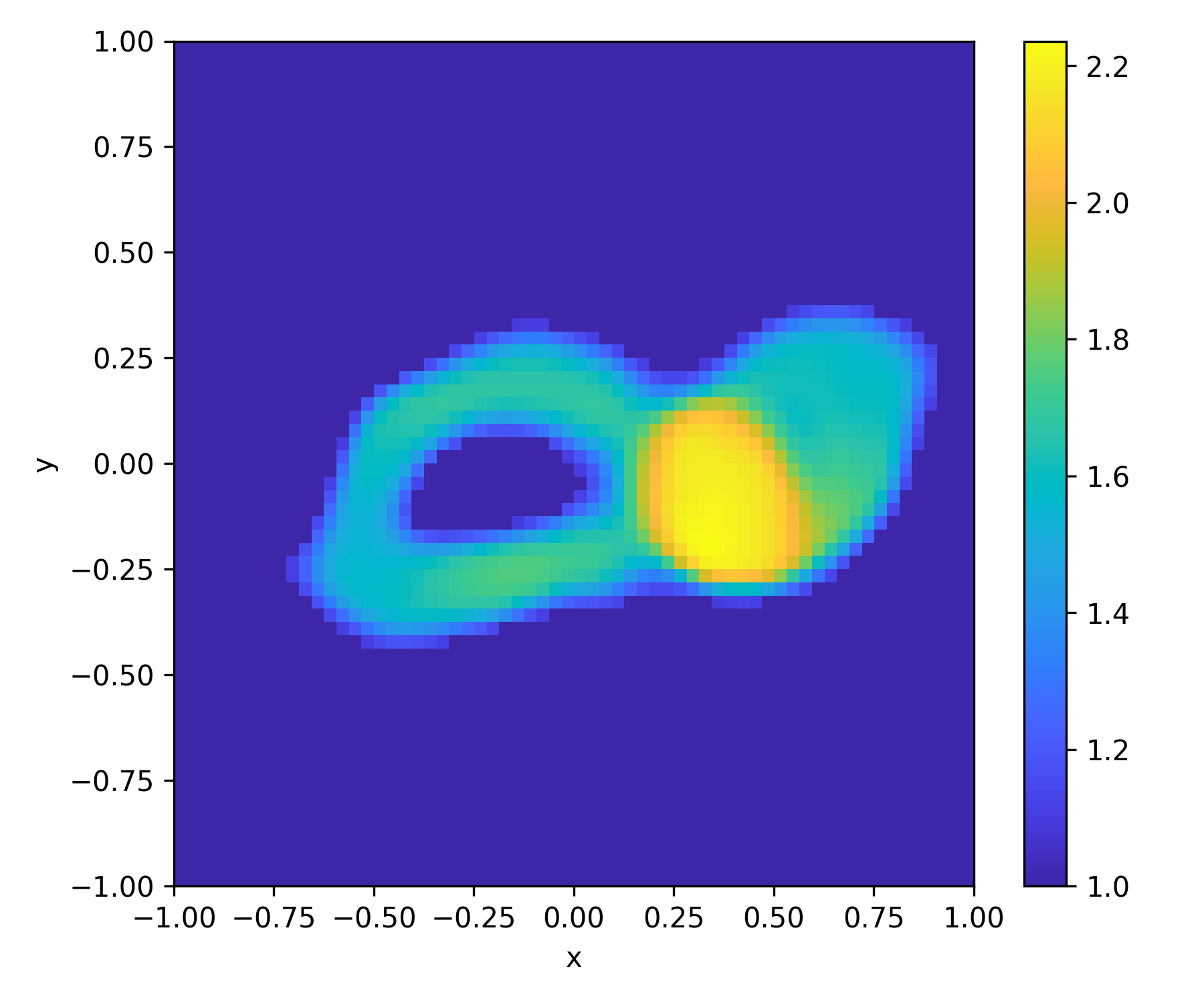}&
				\includegraphics[width=0.15\textwidth]{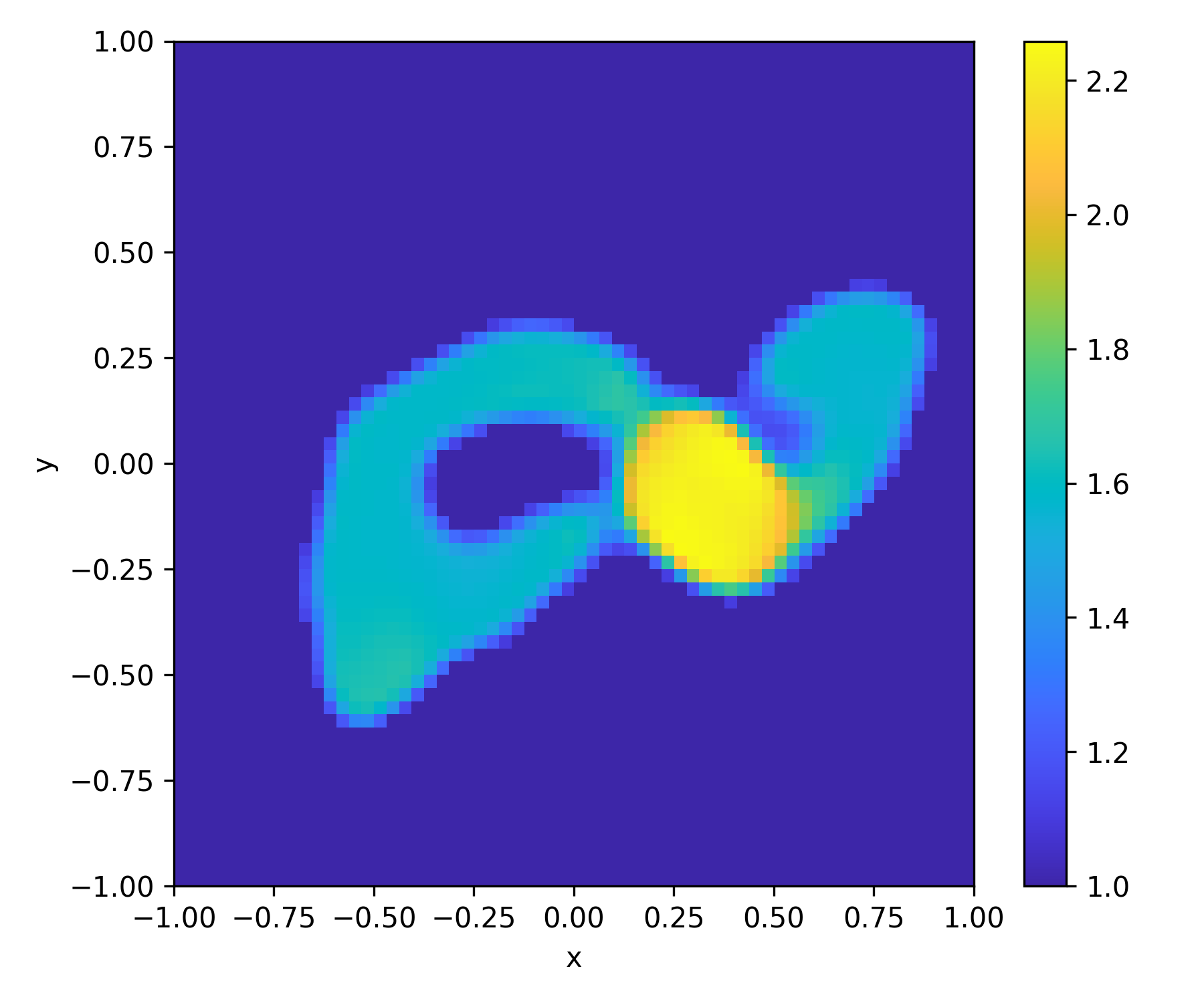}&
				\includegraphics[width=0.15\textwidth]{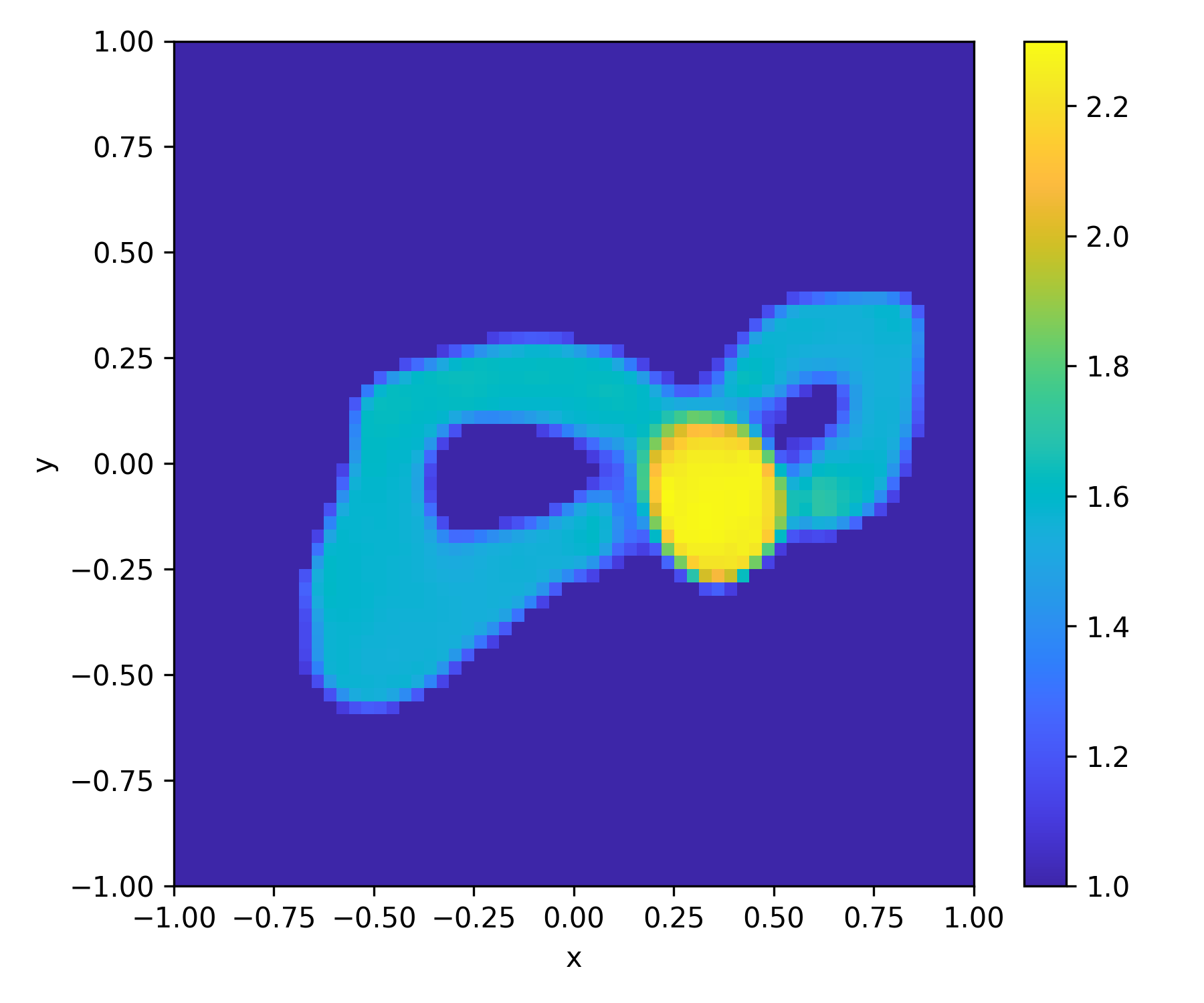}&
				\includegraphics[width=0.15\textwidth]{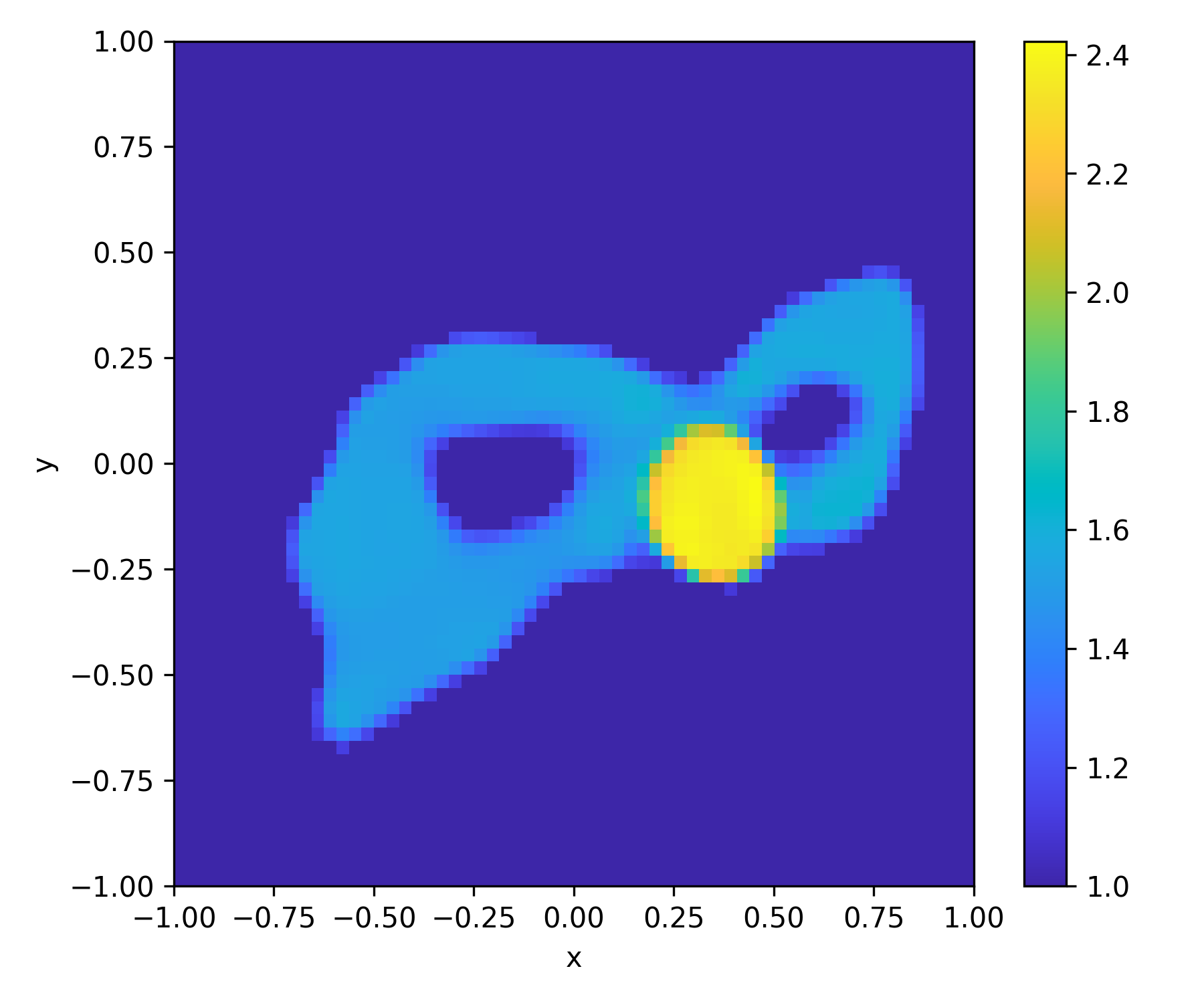}
				\\
				40\%& &
				\includegraphics[width=0.15\textwidth]{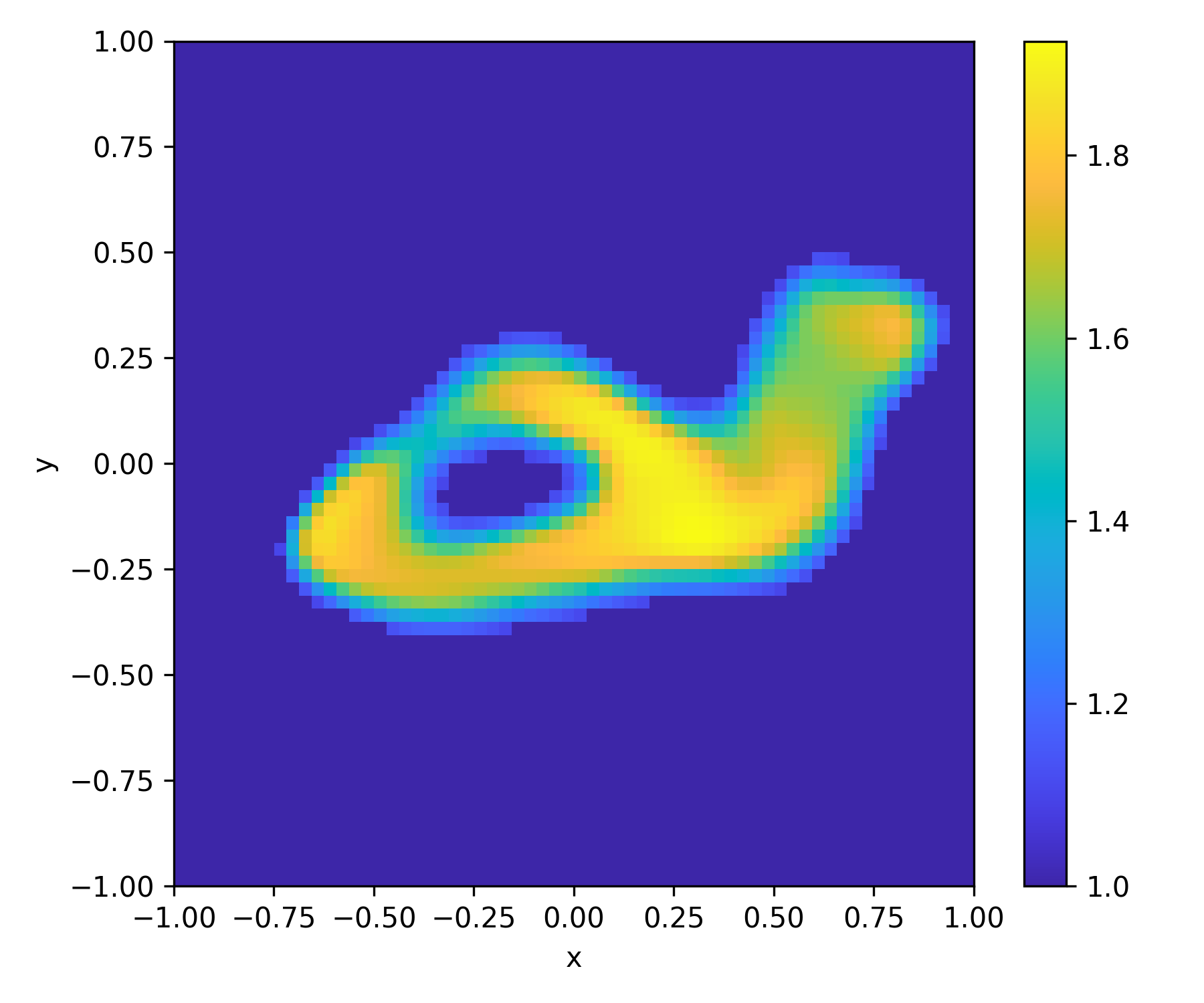}&
				\includegraphics[width=0.15\textwidth]{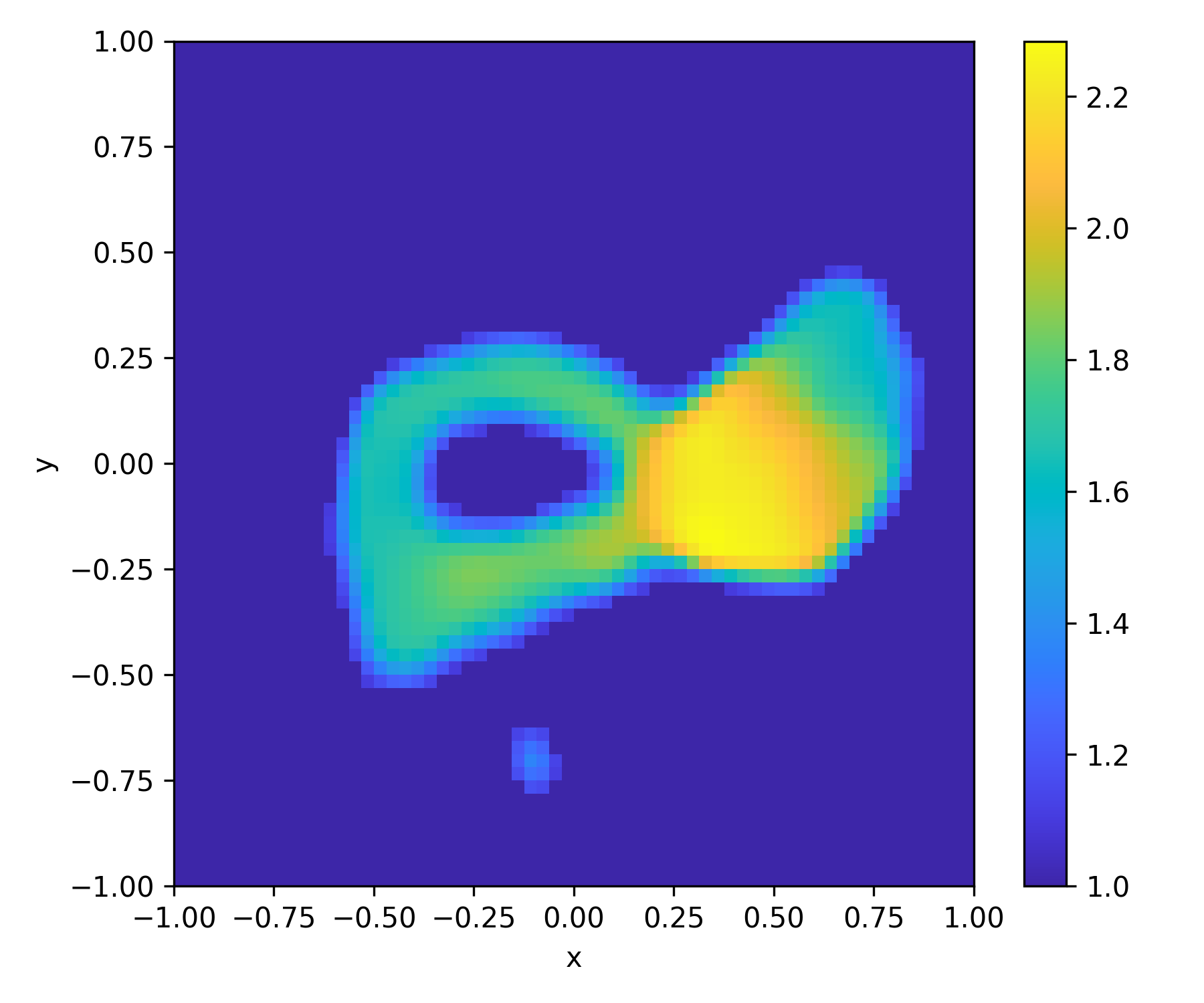}&
				\includegraphics[width=0.15\textwidth]{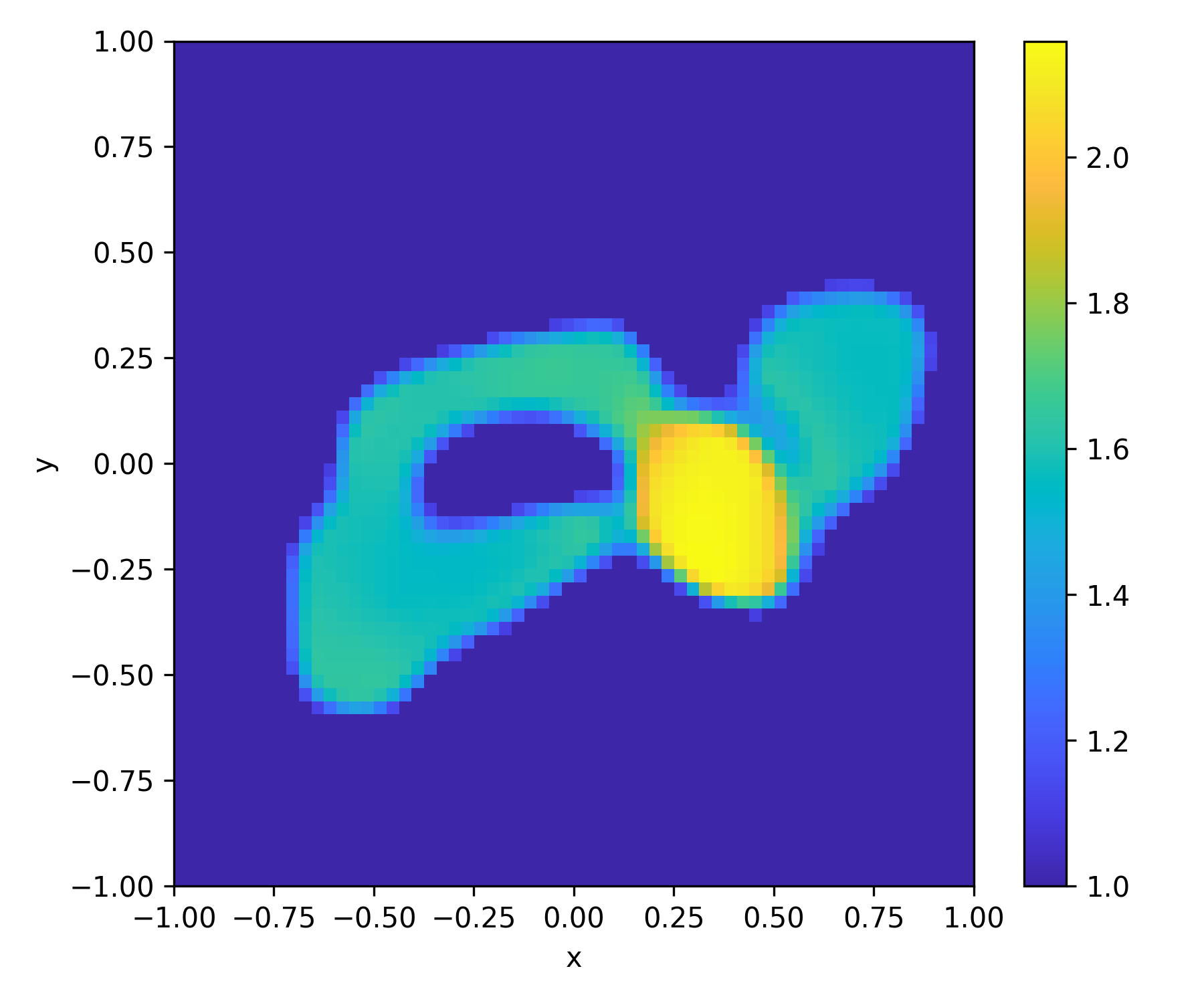}&
				\includegraphics[width=0.15\textwidth]{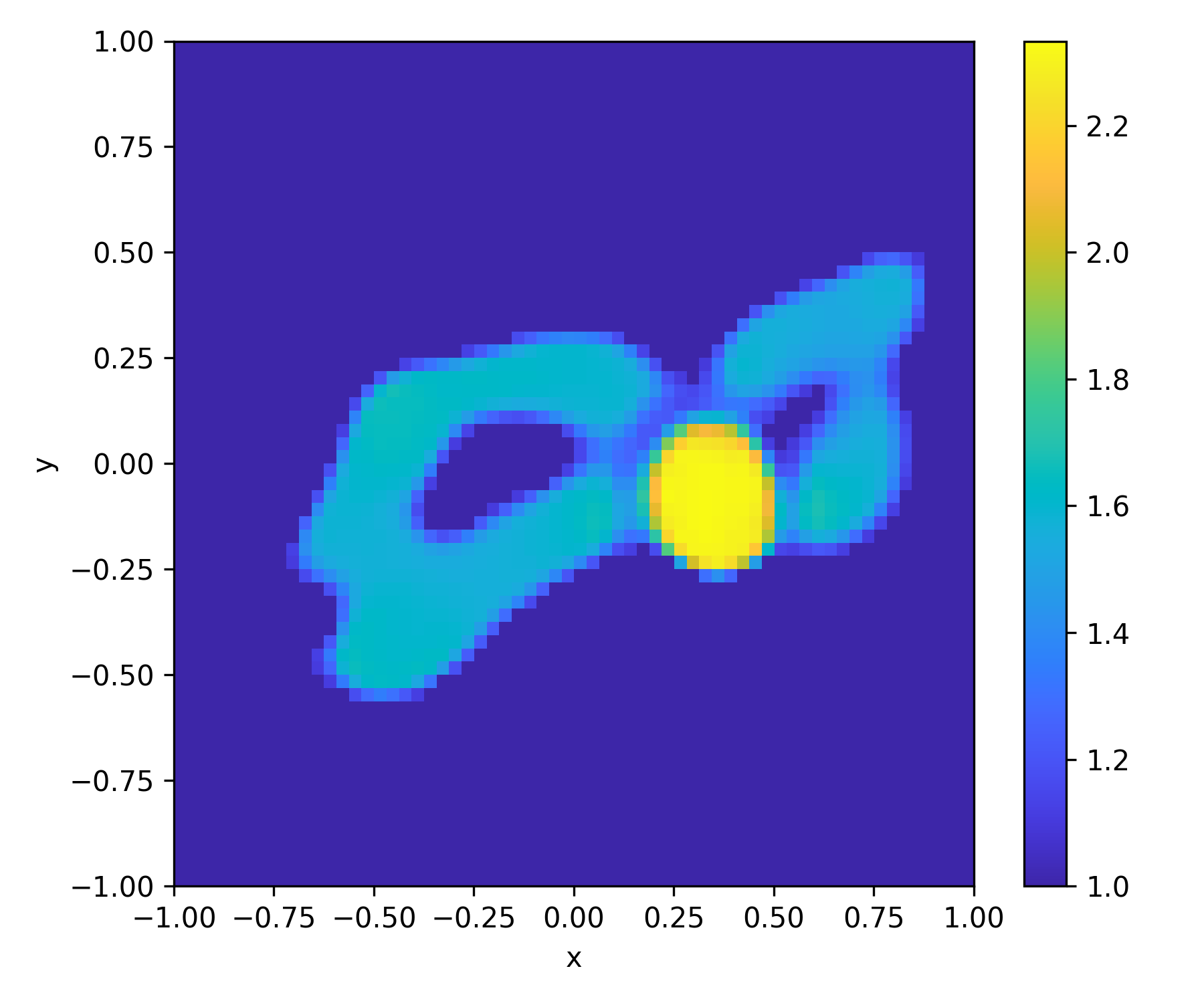}&
				\includegraphics[width=0.15\textwidth]{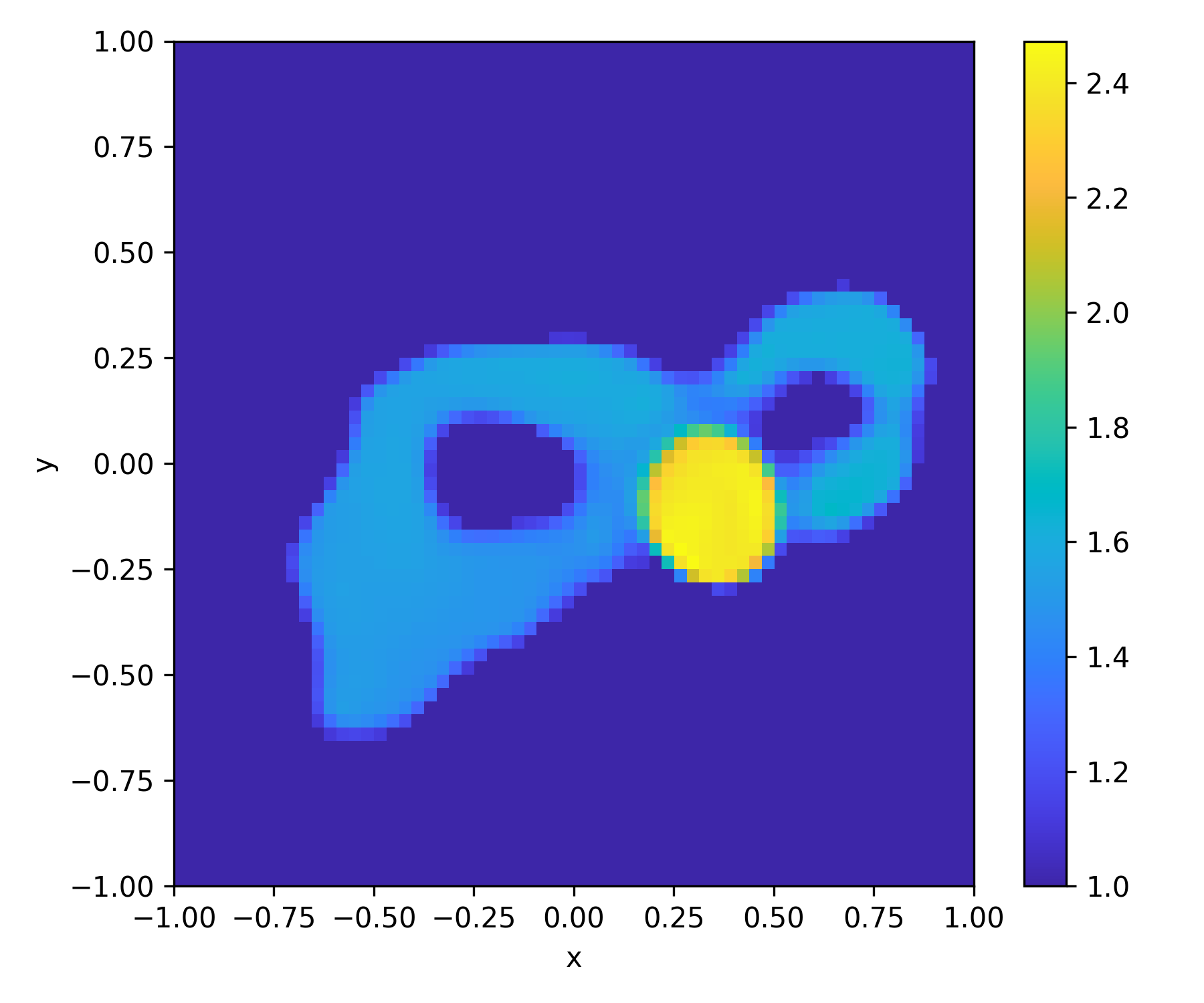}
			\end{tblr}
			\caption{Reconstructed images from scattered fields with $15\%$ and $40\%$  Gaussian noises by using the networks trained by the MNIST dataset, where the relative permittivity is between 1.5 and 2.5. From left to right: the ground-truth images, the reconstruction with 1,2,4,8, and 16 incident fields.}
			\label{tab:fig-Mnist}
		\end{center}
	\end{figure}
    
\begin{figure}[htp]
	\centering
	\includegraphics[width=1.0\linewidth]{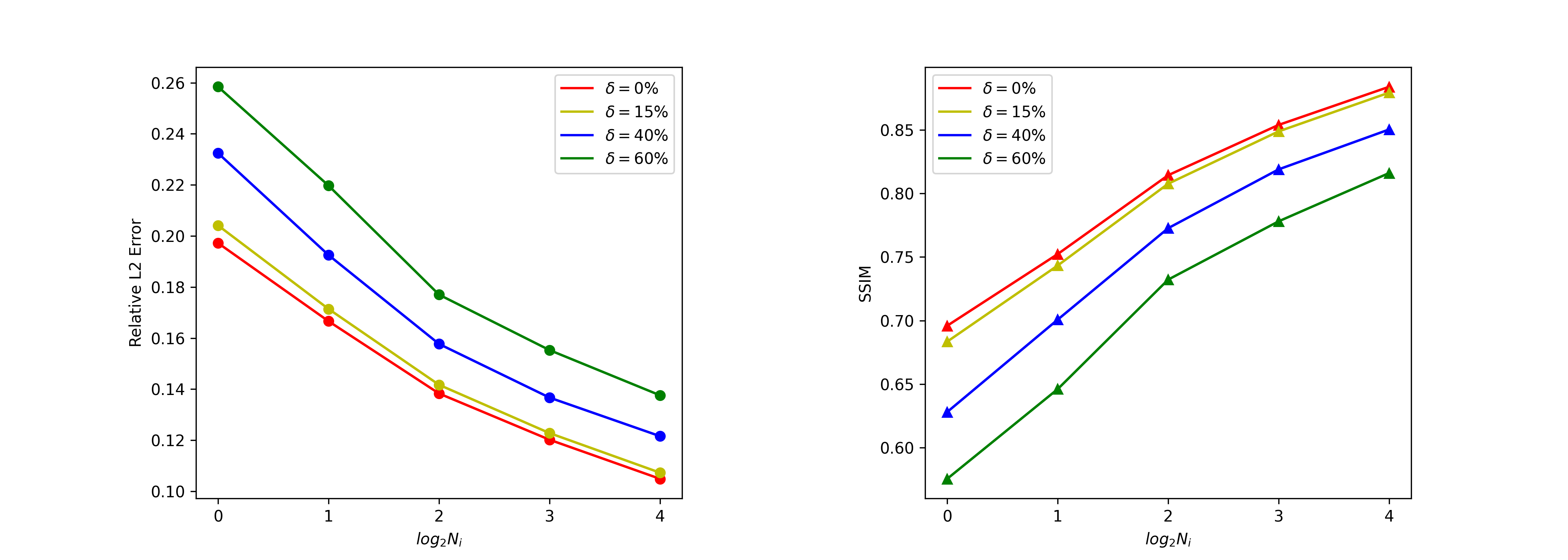}
	\caption{The relative L2 testing error and SSIM for the trained networks from the MNIST dataset with different noise levels and numbers of incidences.}
	\label{fig:Error_Mnist}
\end{figure}
	
	\subsubsection{Tests with “Austria ring”}
	
	We further test the generalization ability of the models using two different “Austria profiles” which are out of the distribution of the training parameters. For the first  “Austria profile”, the relative permittivity of the two small circles is set to be $3.0$ (Note that the range in the training data is from $1.5$ to $2.5$) and the relative permittivity of the ring is set to be $1.5$. For the second “Austria profile”, the relative permittivities of the three scatters are $1.5$, $2.0$ and $2.5$, respectively, while for each example in the training data, the scatterers only take two different relative permittivities. Thus, the two “Austria profiles” are out of the distribution of the training parameters, and are very challenging for the trained networks. The results are shown in Fig.\,\ref{tab:fig-Mnist-Austria}, which also show the benefits of using more incident fields. It can be also observed that when multiple scatterers exist, the DSM-DL can provide reasonable reconstruction for the strong scatterers (the yellow small circles), while it is distorted for weak scatterers with small $N_i$. This is mainly because the strong scatterers contribute to the dominant part of the scattered field. In addition, by comparing Fig.\,\ref{tab:fig-Austria_Circle} and Fig.\,\ref{tab:fig-Mnist-Austria}, we can see that the trained models from the MNIST dataset have better generalization than the trained models from the circle dataset; this is reasonable since the MNIST dataset is more diverse and we use more training data for the MNIST example.  
	
	\begin{figure}[htp]\small
		\begin{center}
			\begin{tblr}
				{colspec = {X[-1]X[c]X[c,h]X[c,h]X[c,h]X[c,h]X[c,h]},
					stretch = 0,
					rowsep = 0pt,}
				Noise Level& Ground truth& $N_{i}$=1& $N_{i}$=2 &$N_{i}$=4&$N_{i}$=8& $N_{i}$=16\\
				15\%&\SetCell[r=2]{c}
				\includegraphics[width=0.15\textwidth]{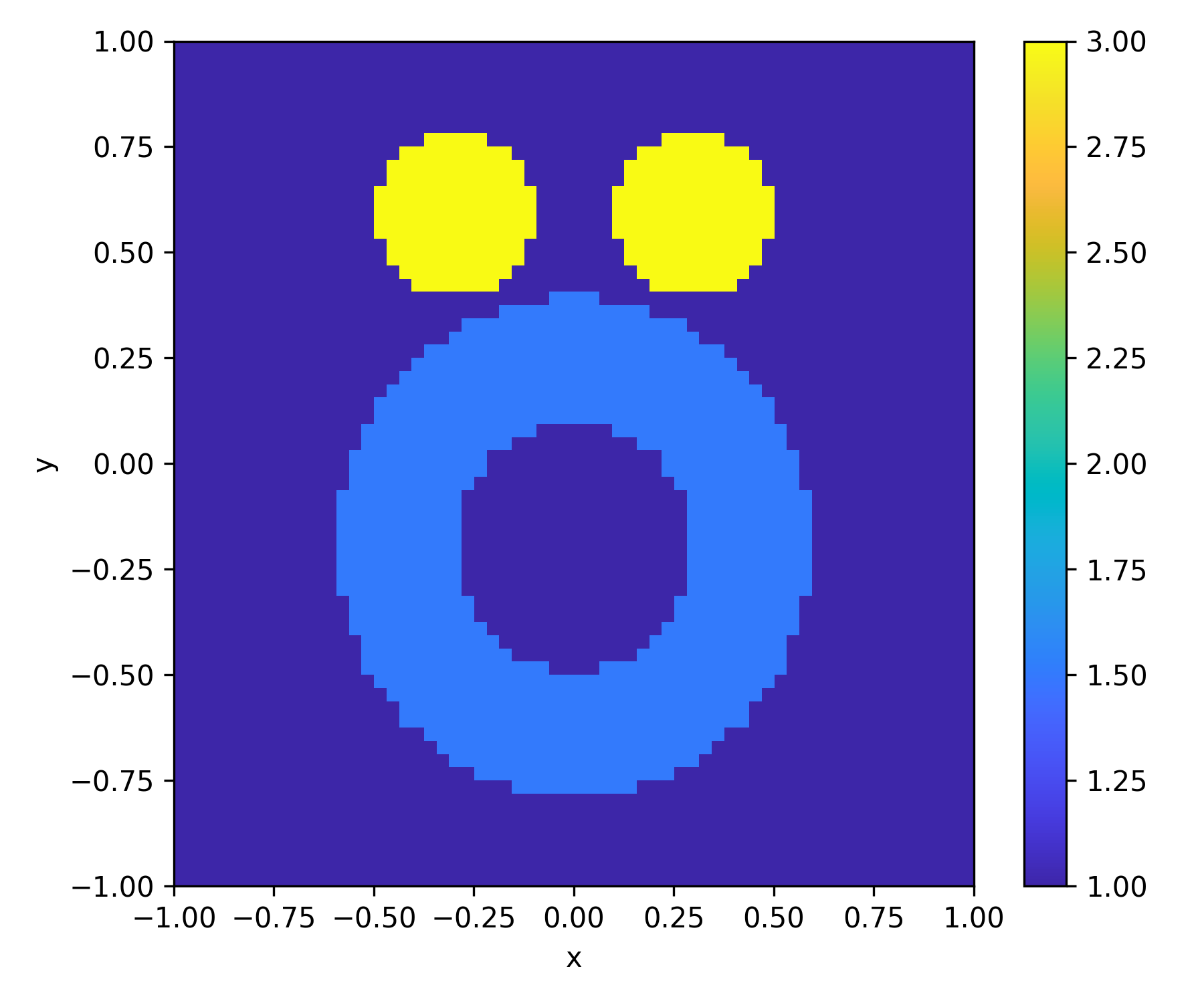}& 
				\includegraphics[width=0.15\textwidth]{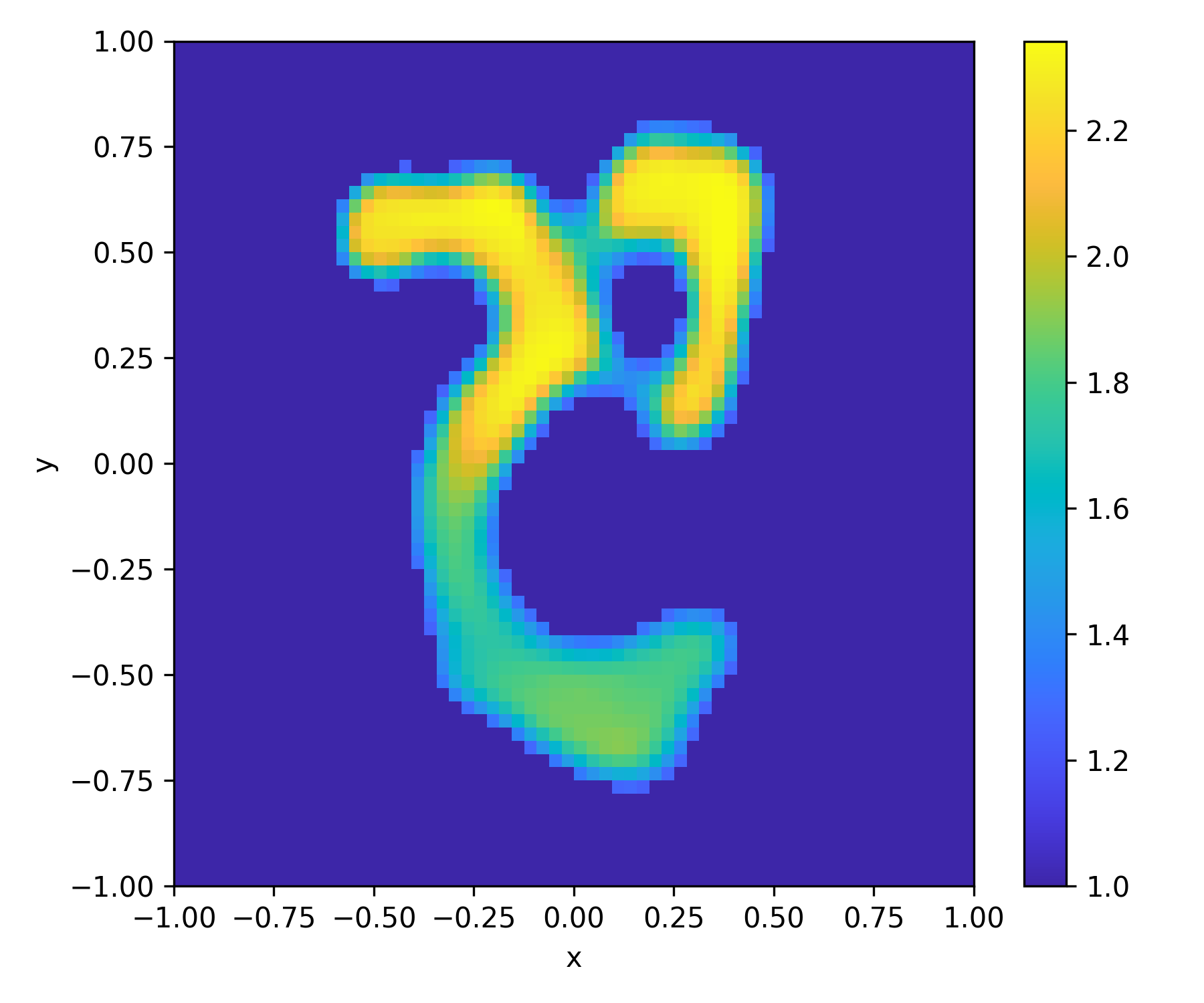}&
				\includegraphics[width=0.15\textwidth]{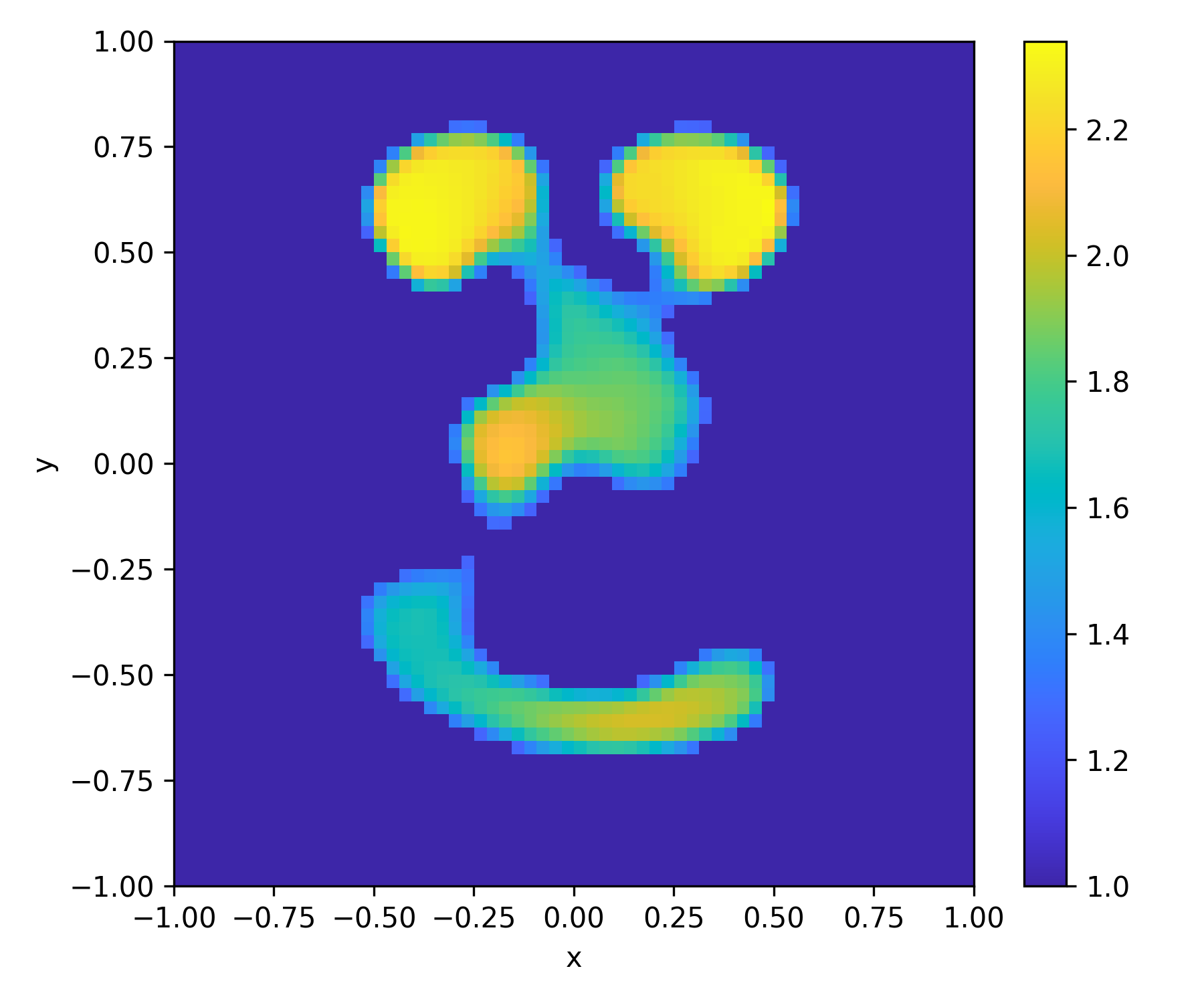}&
				\includegraphics[width=0.15\textwidth]{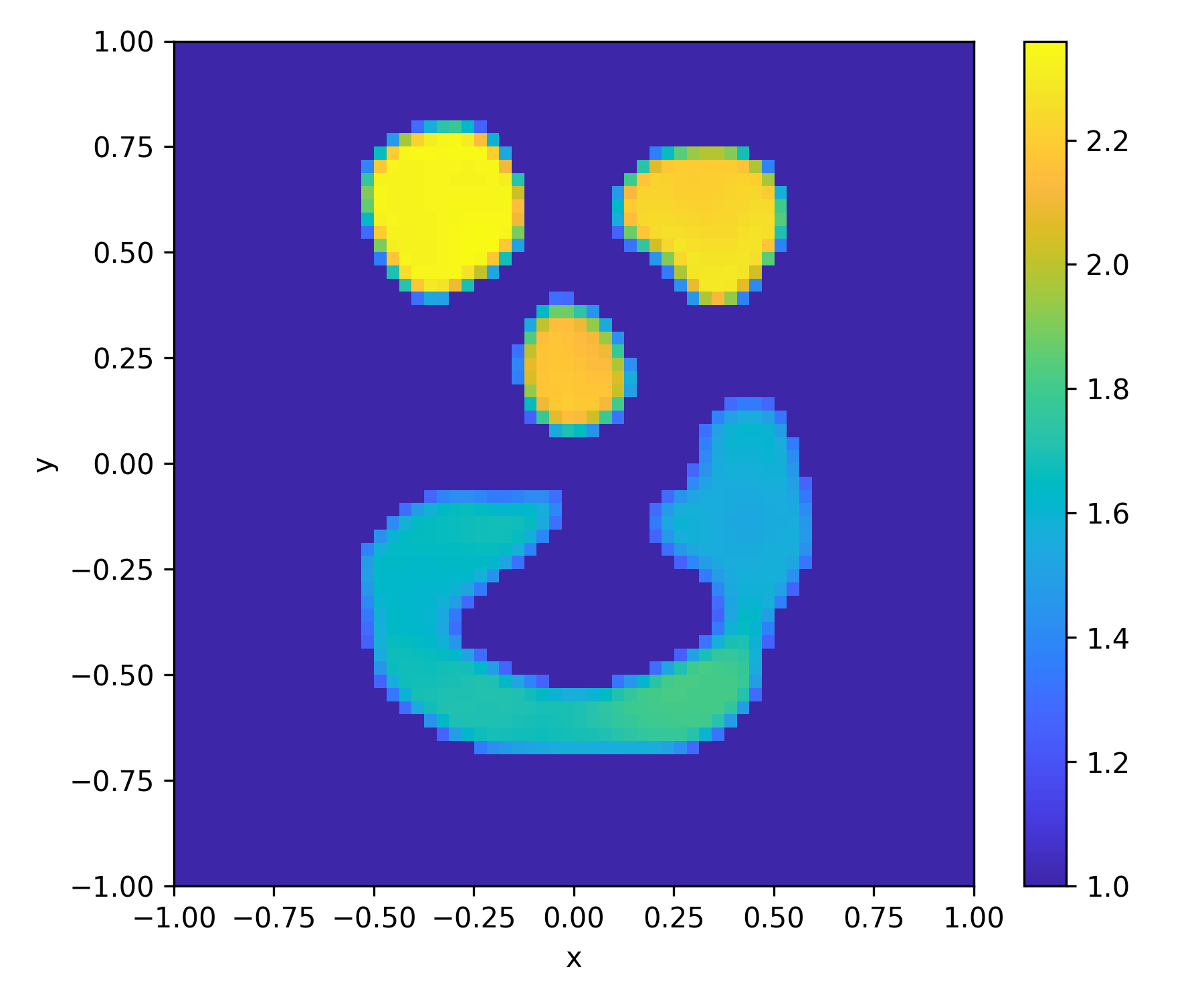}&
				\includegraphics[width=0.15\textwidth]{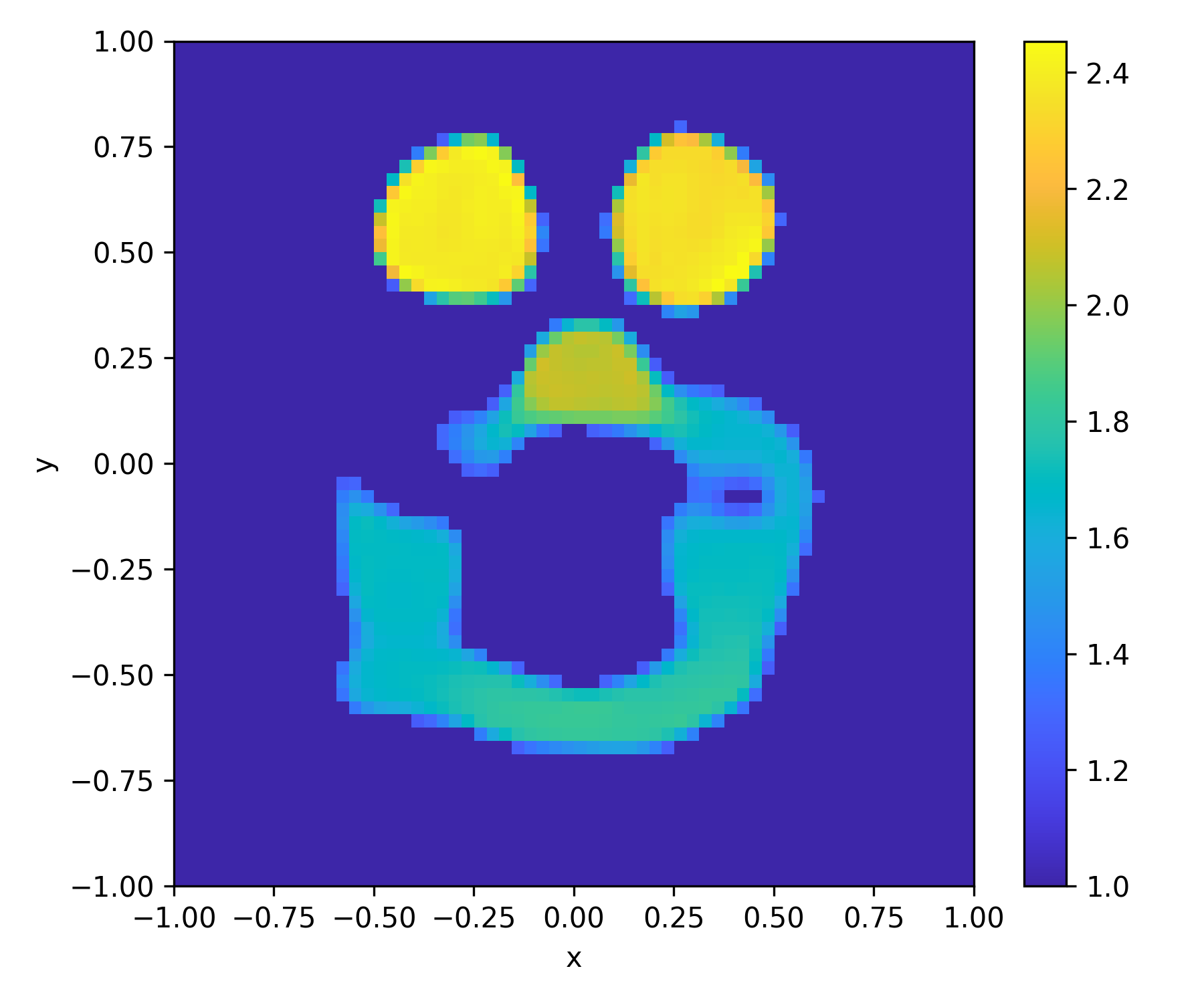}&
				\includegraphics[width=0.15\textwidth]{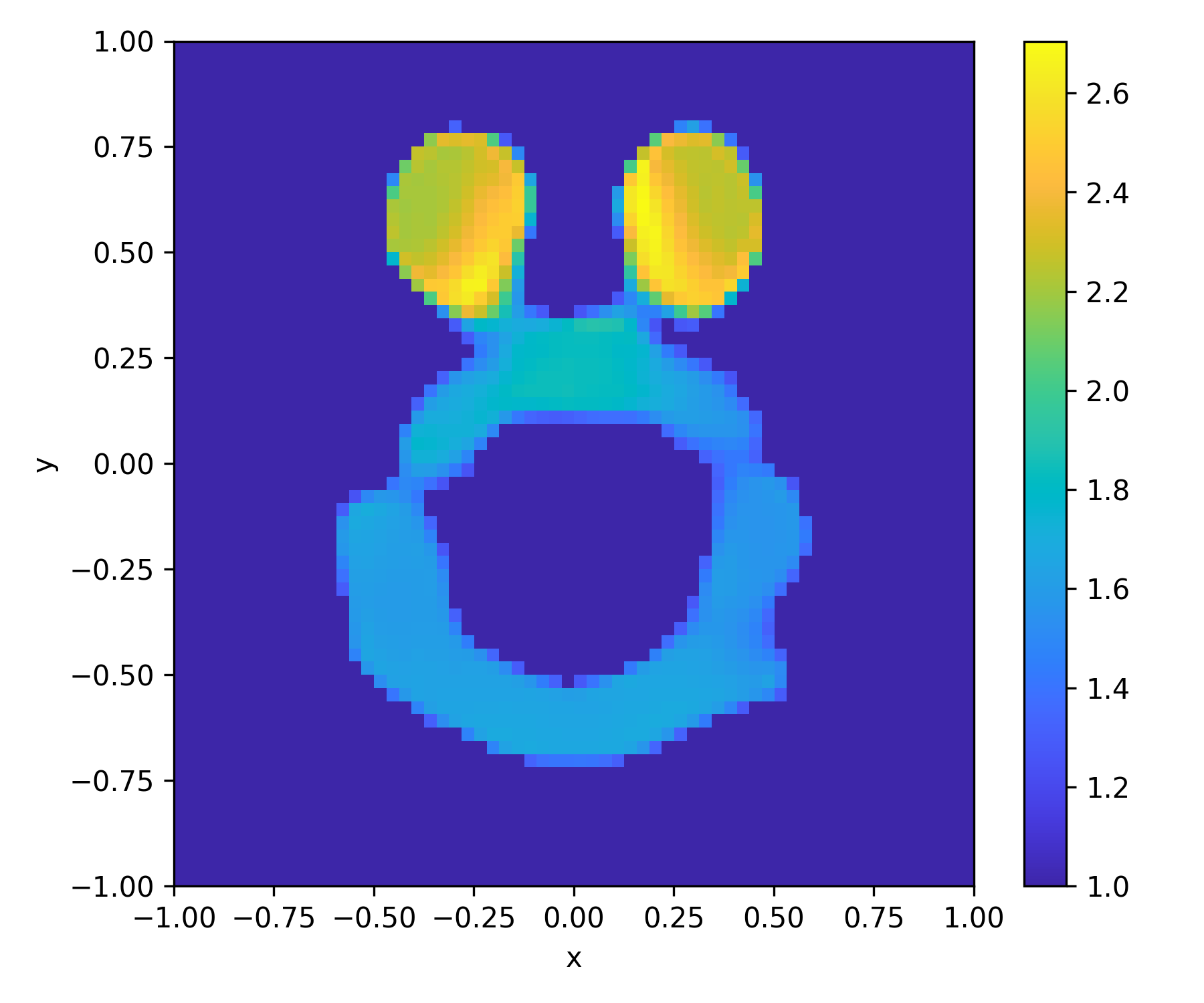}&
				\\
				40\%& & 
				\includegraphics[width=0.15\textwidth]{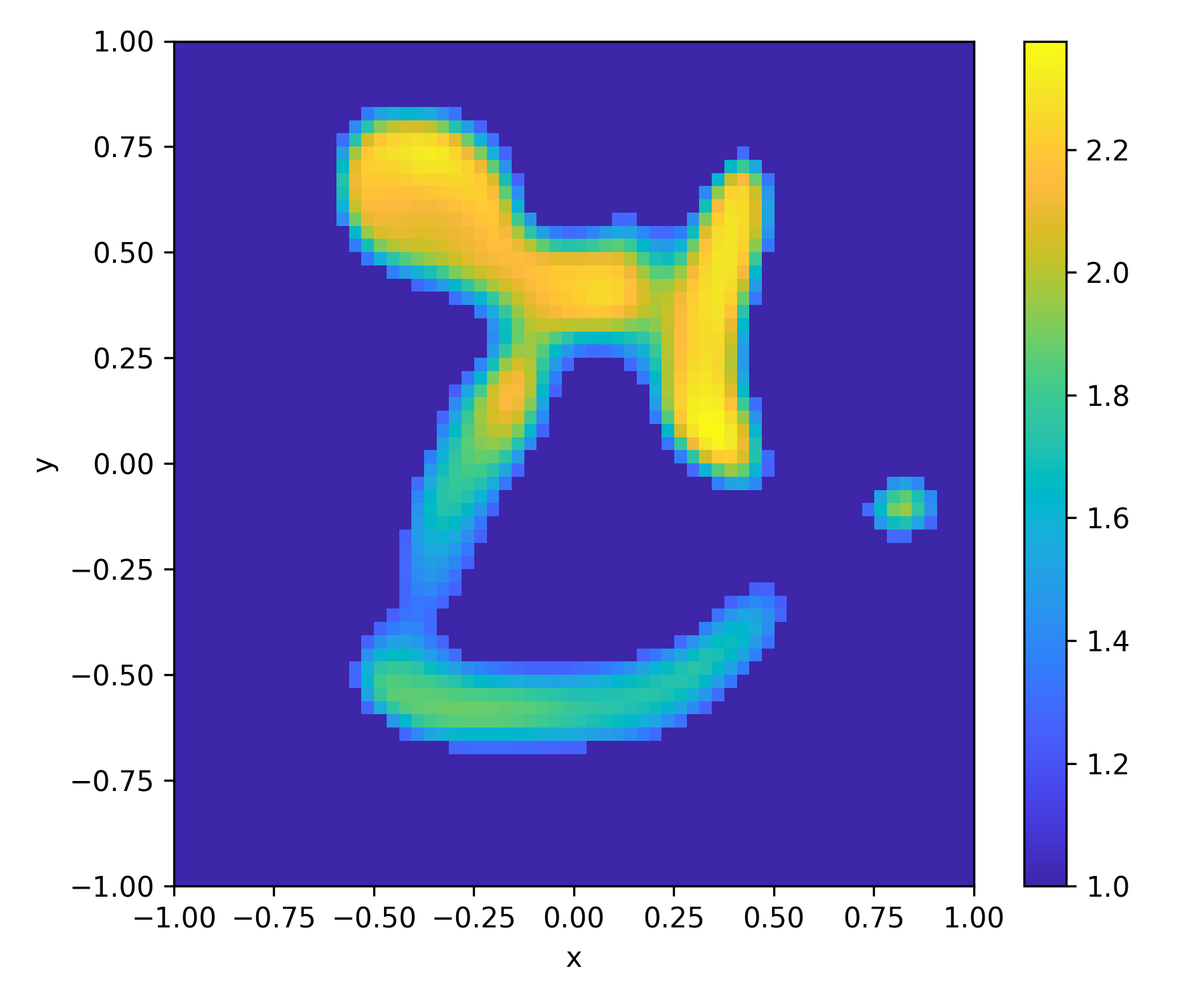}&
				\includegraphics[width=0.15\textwidth]{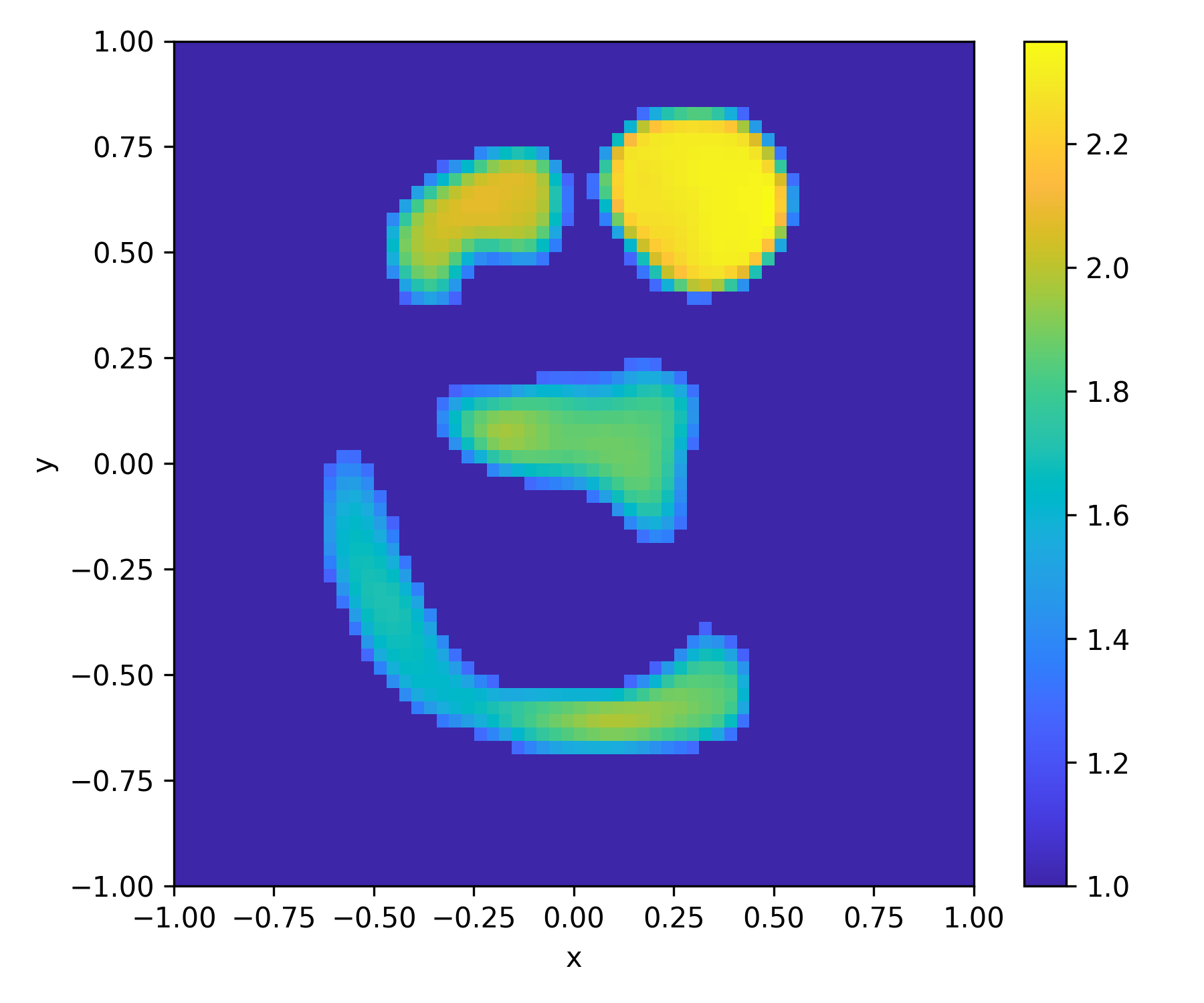}&
				\includegraphics[width=0.15\textwidth]{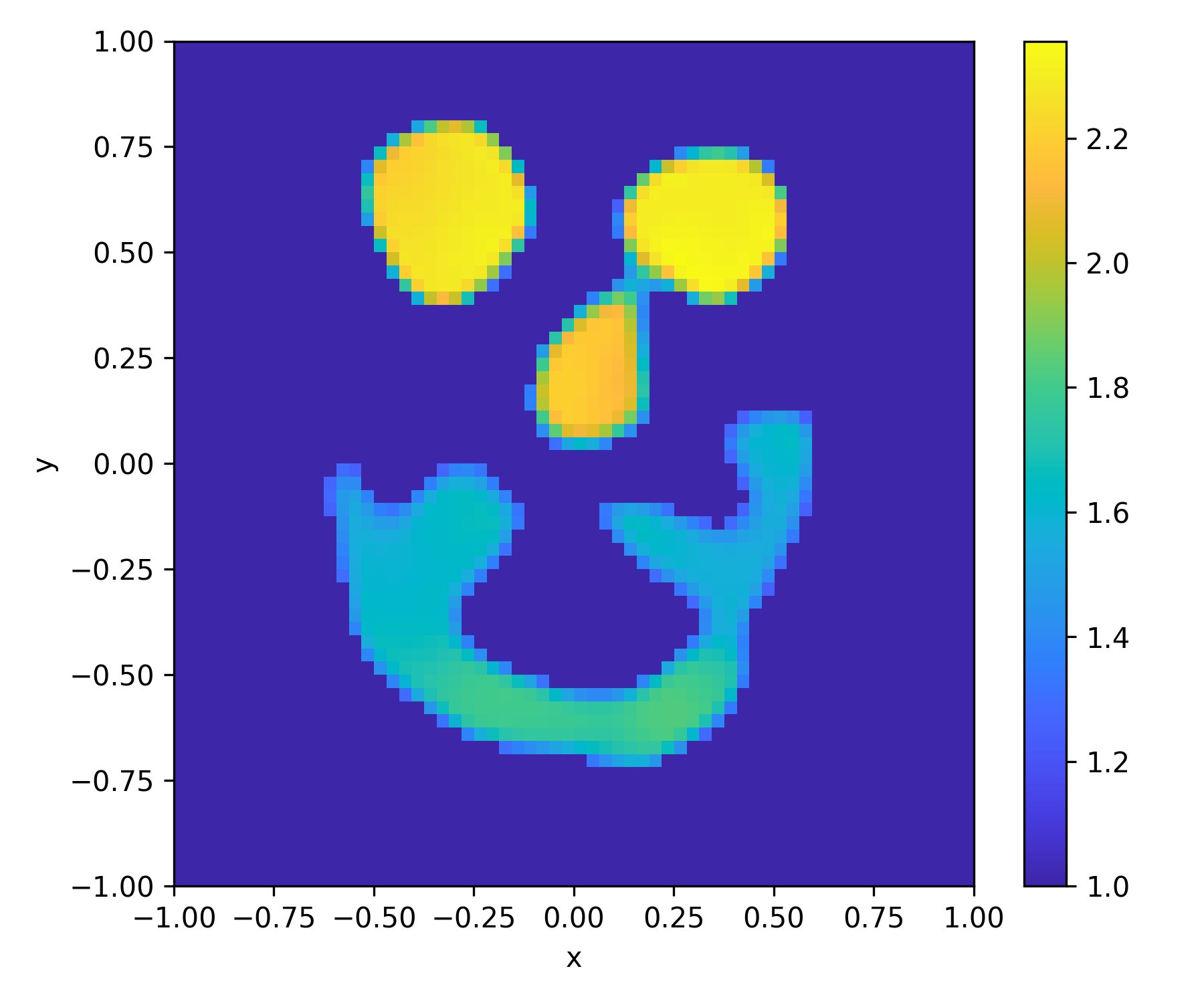}&
				\includegraphics[width=0.15\textwidth]{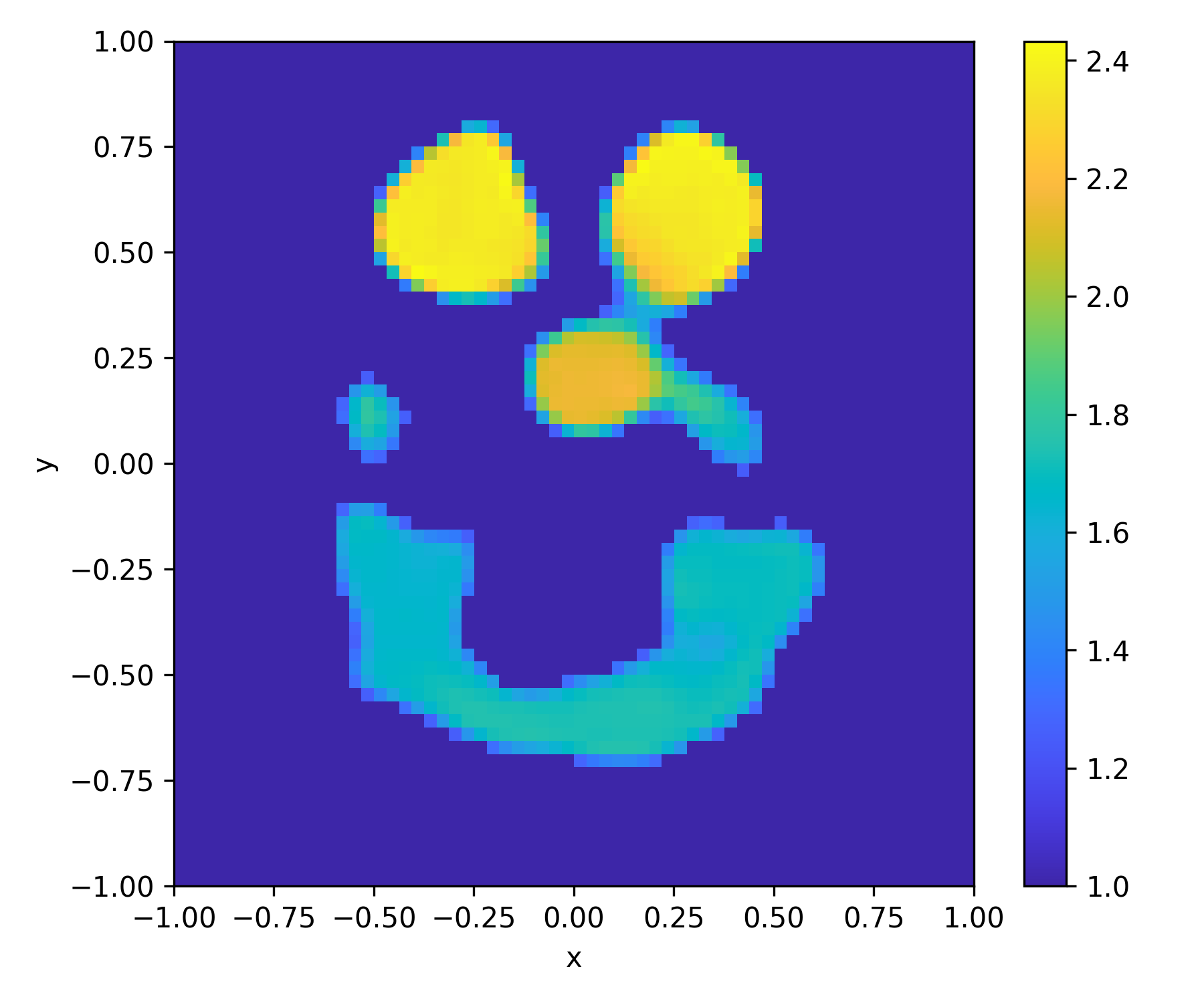}&
				\includegraphics[width=0.15\textwidth]{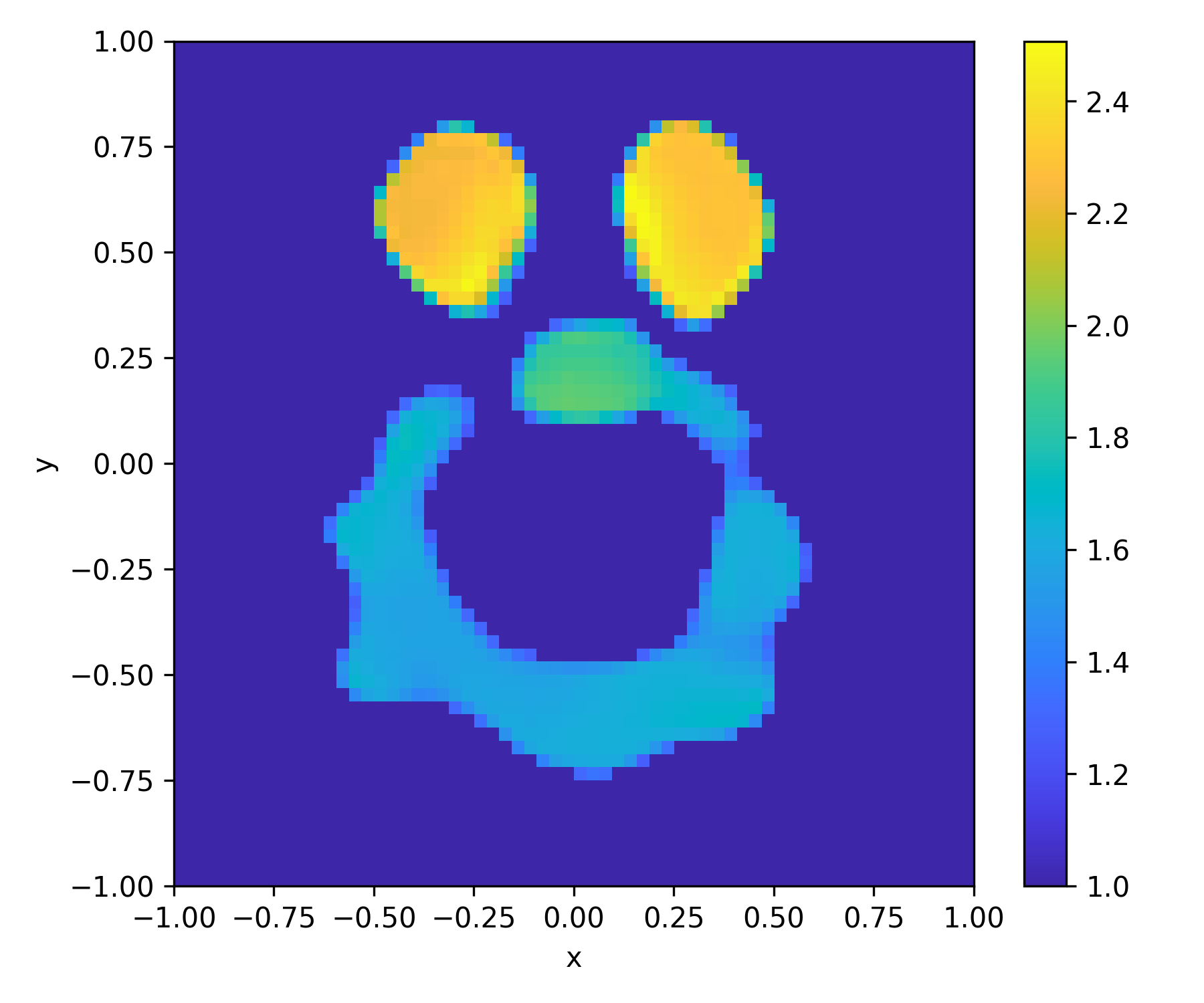}&
				\\
				
				15\%&\SetCell[r=2]{c}\includegraphics[width=0.15\textwidth]{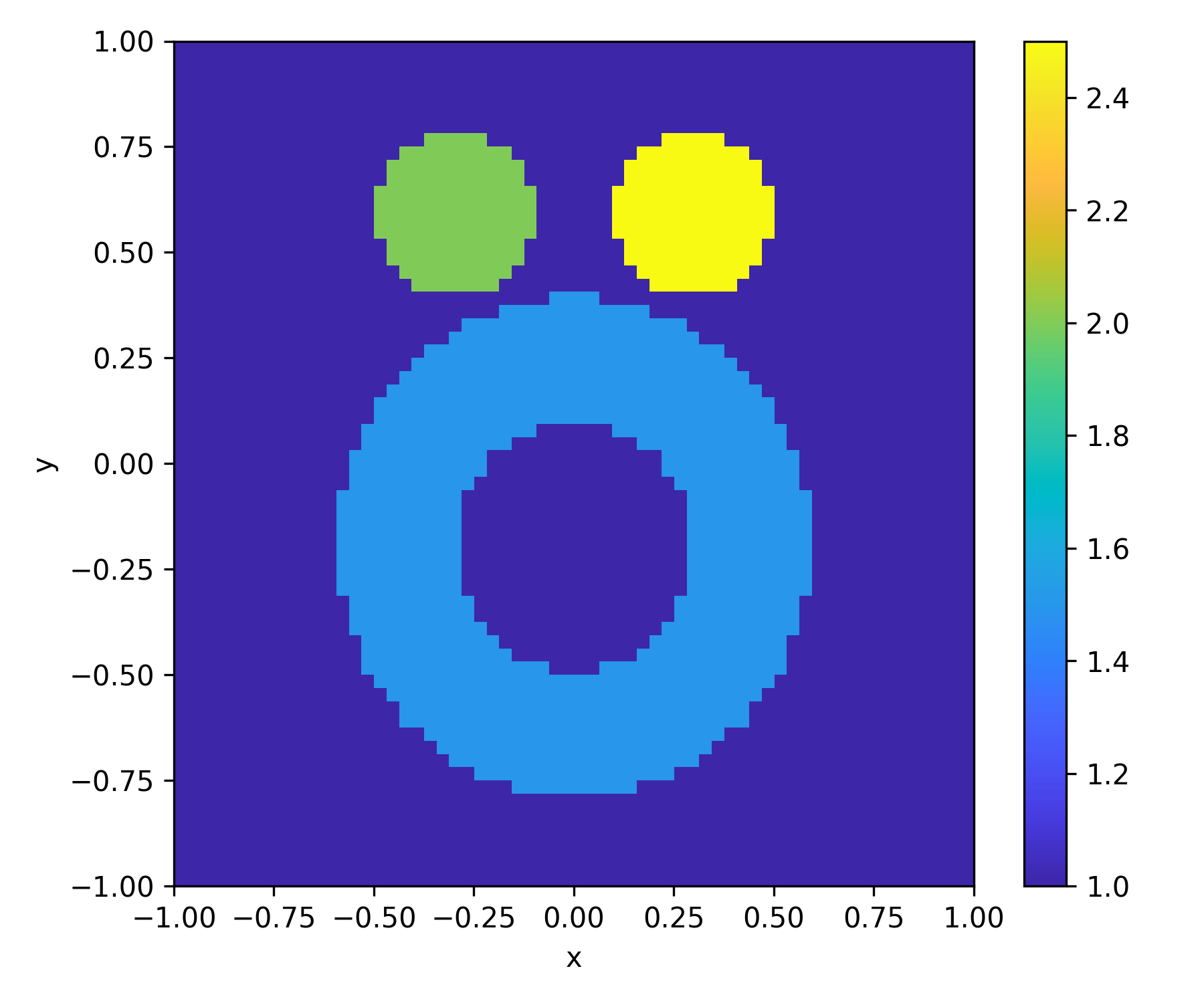}&
				\includegraphics[width=0.15\textwidth]{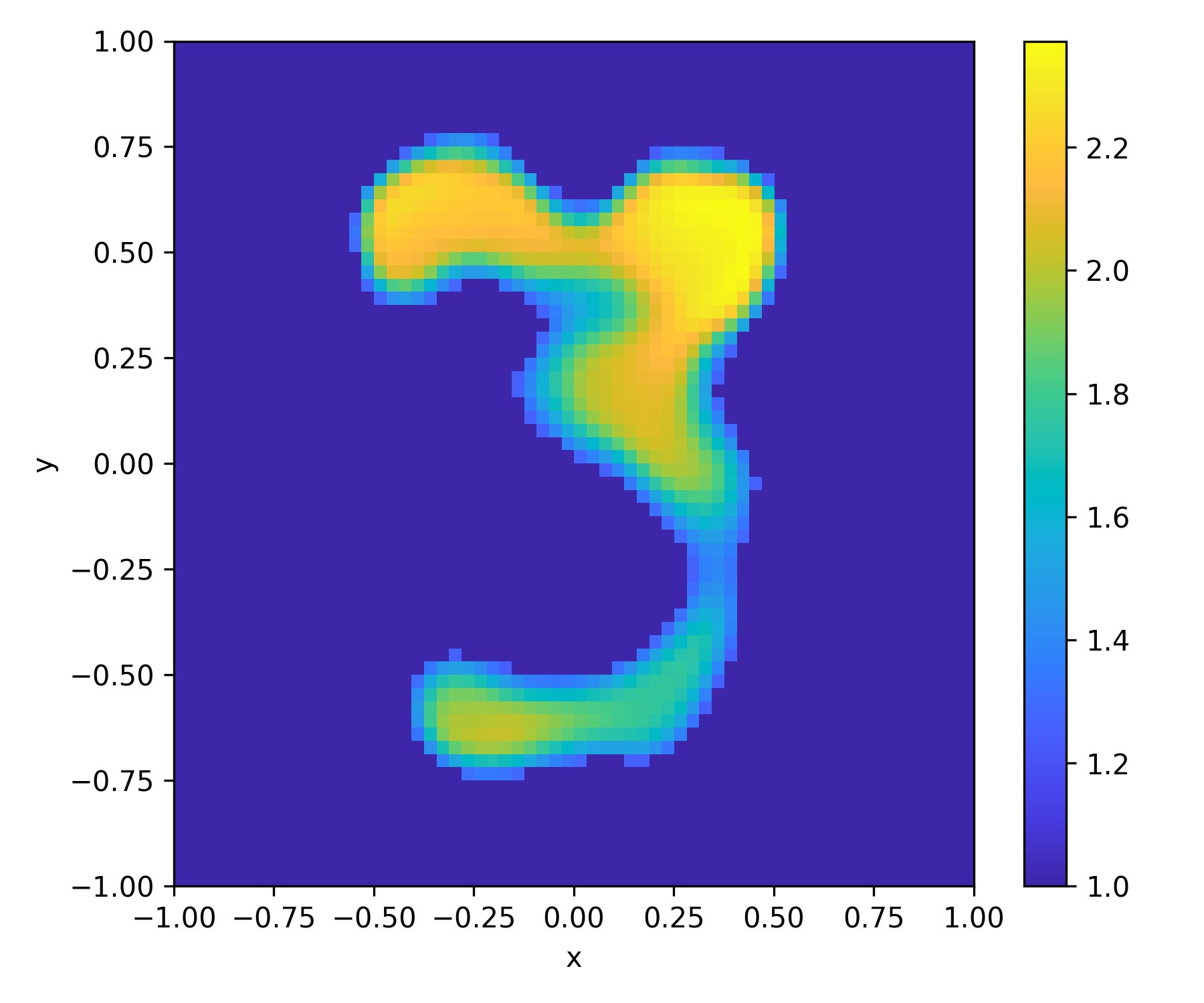}&
				\includegraphics[width=0.15\textwidth]{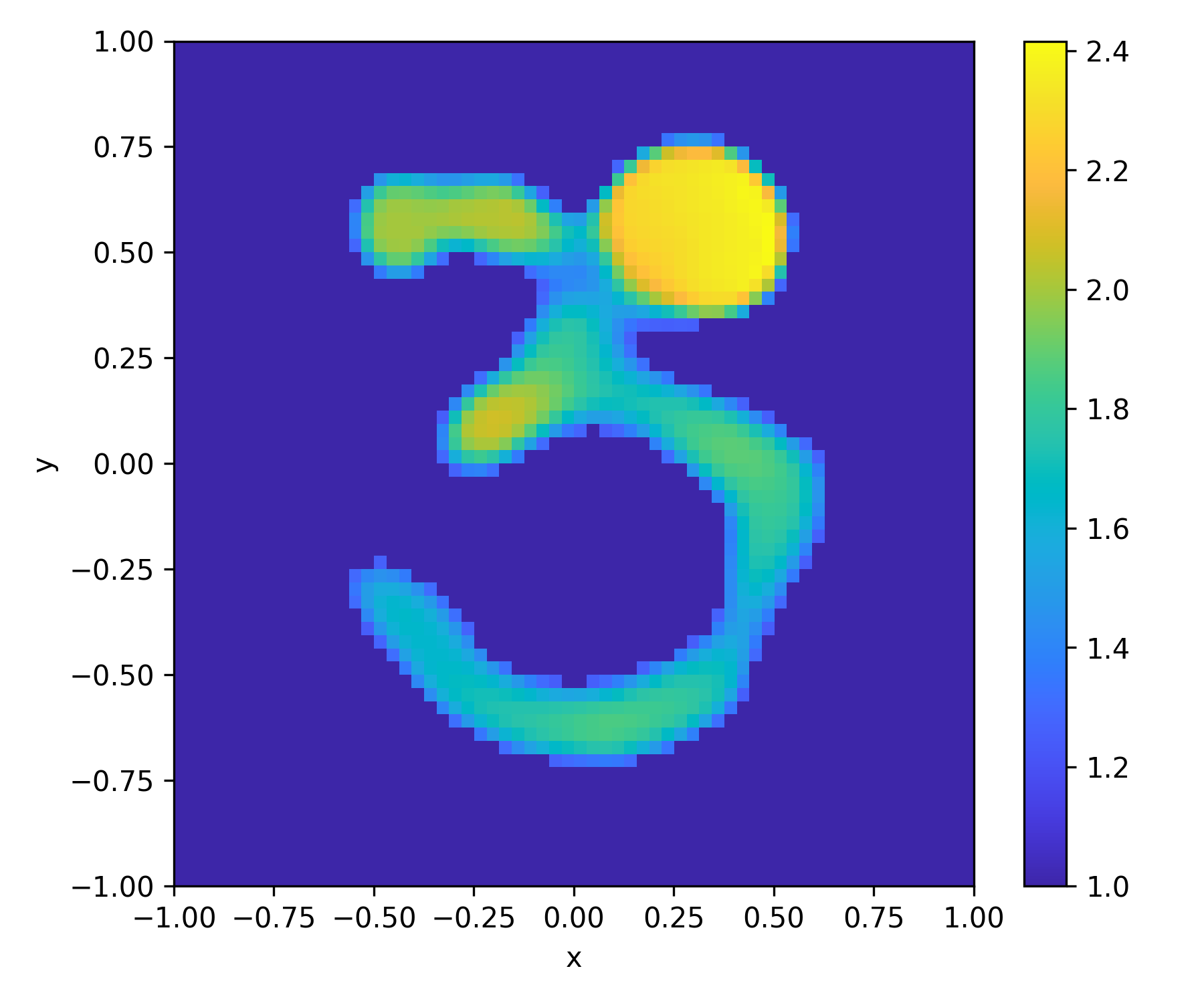}&
				\includegraphics[width=0.15\textwidth]{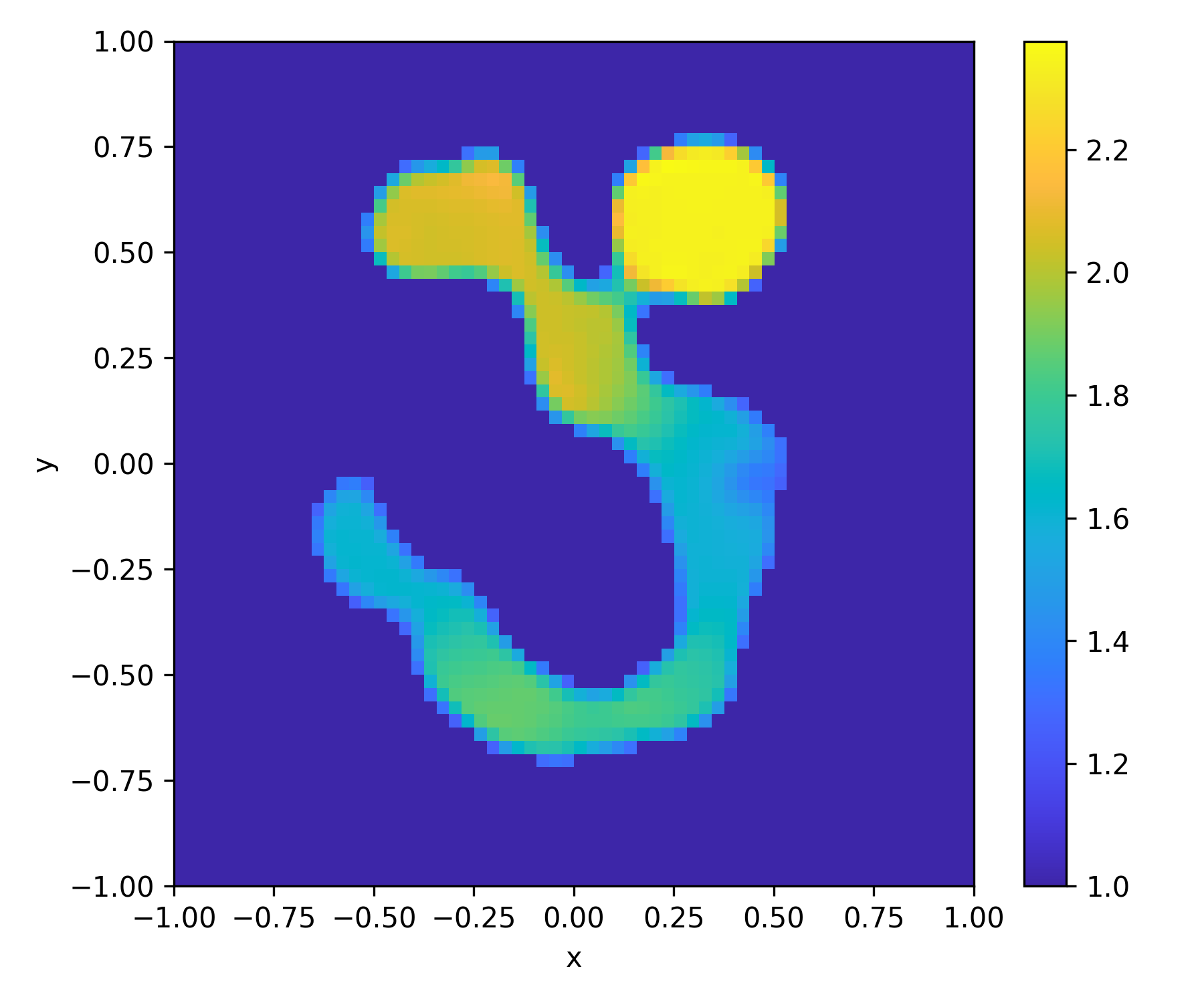}&
				\includegraphics[width=0.15\textwidth]{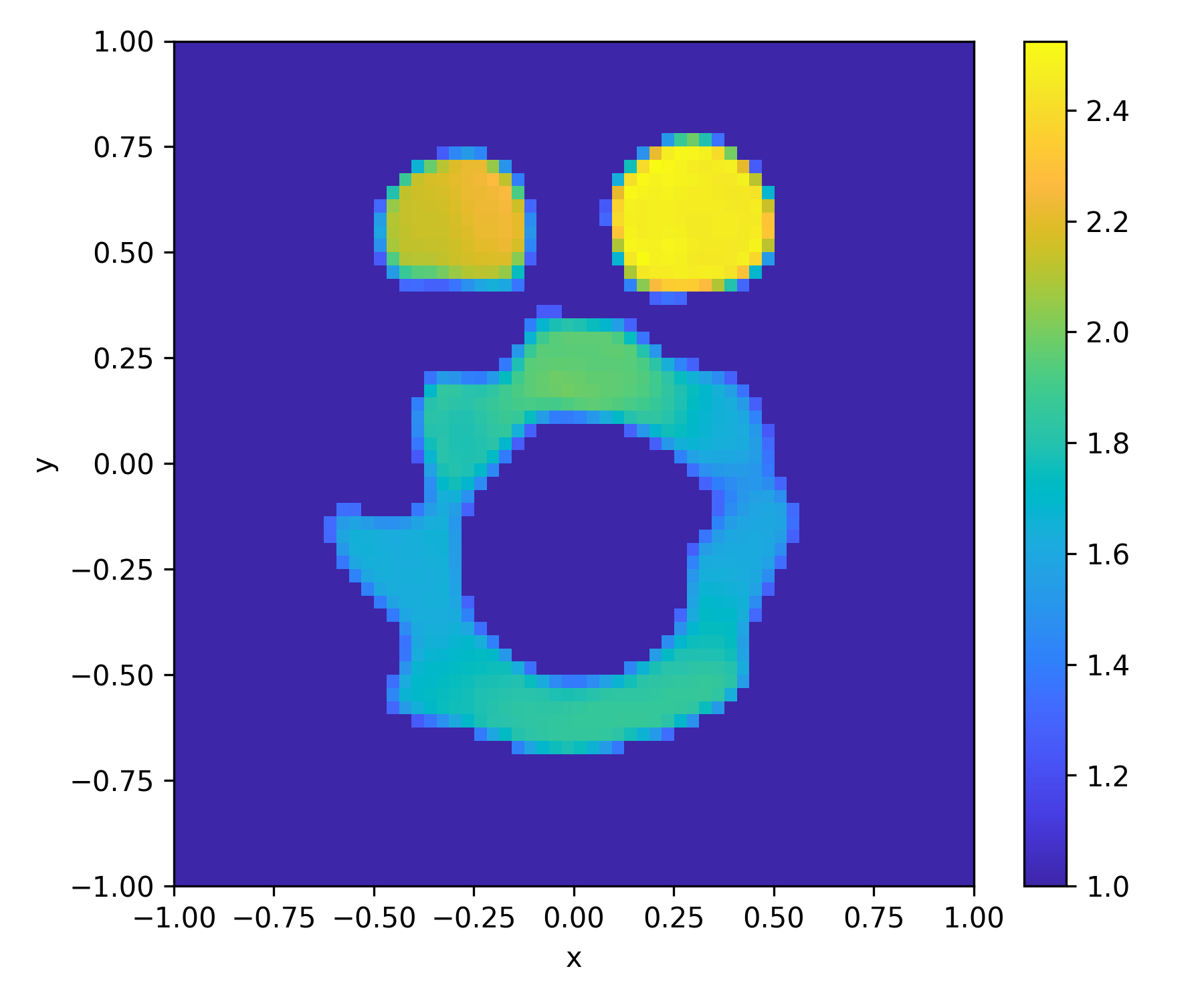}&
				\includegraphics[width=0.15\textwidth]{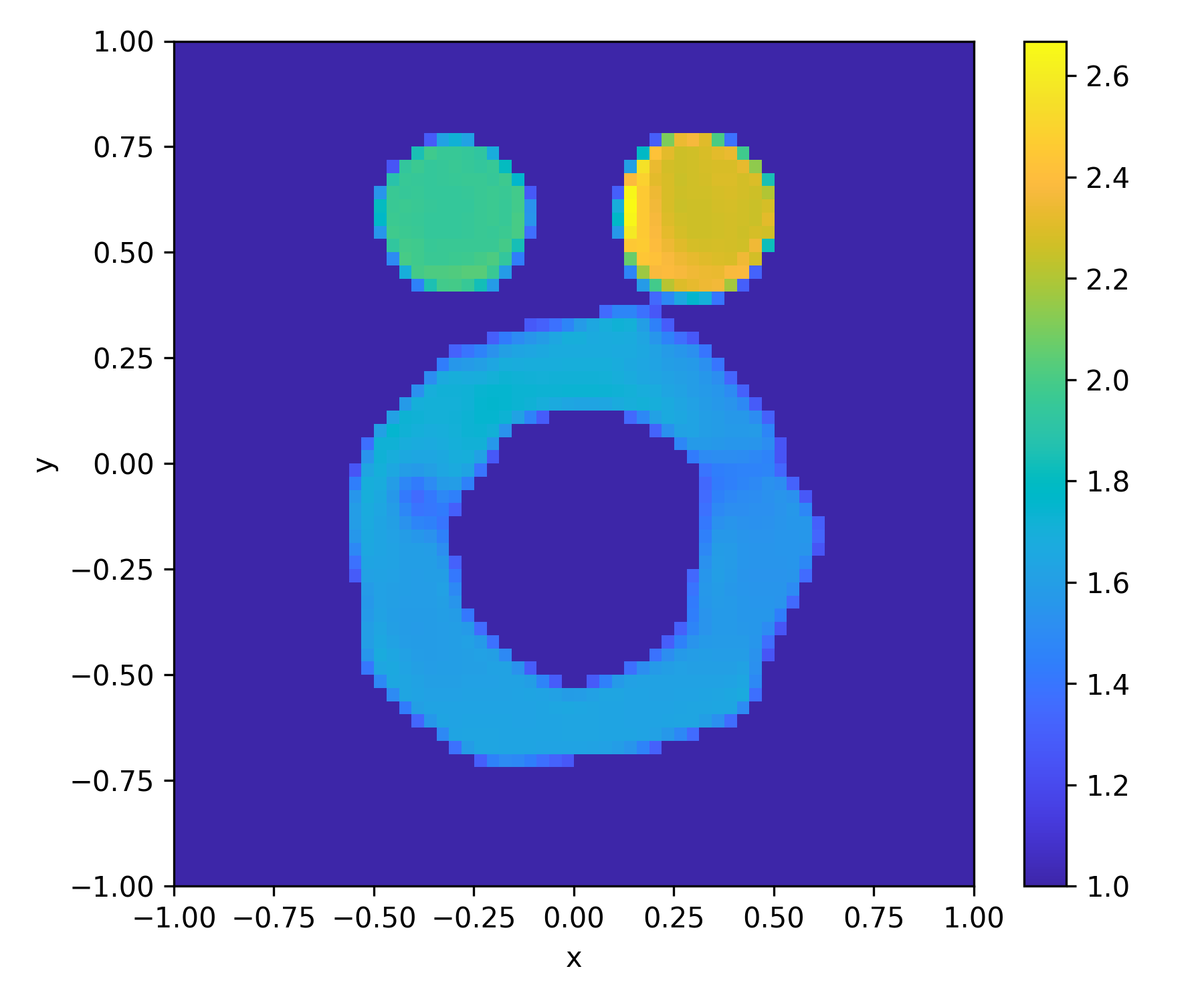}		
				\\
				40\%& &
				\includegraphics[width=0.15\textwidth]{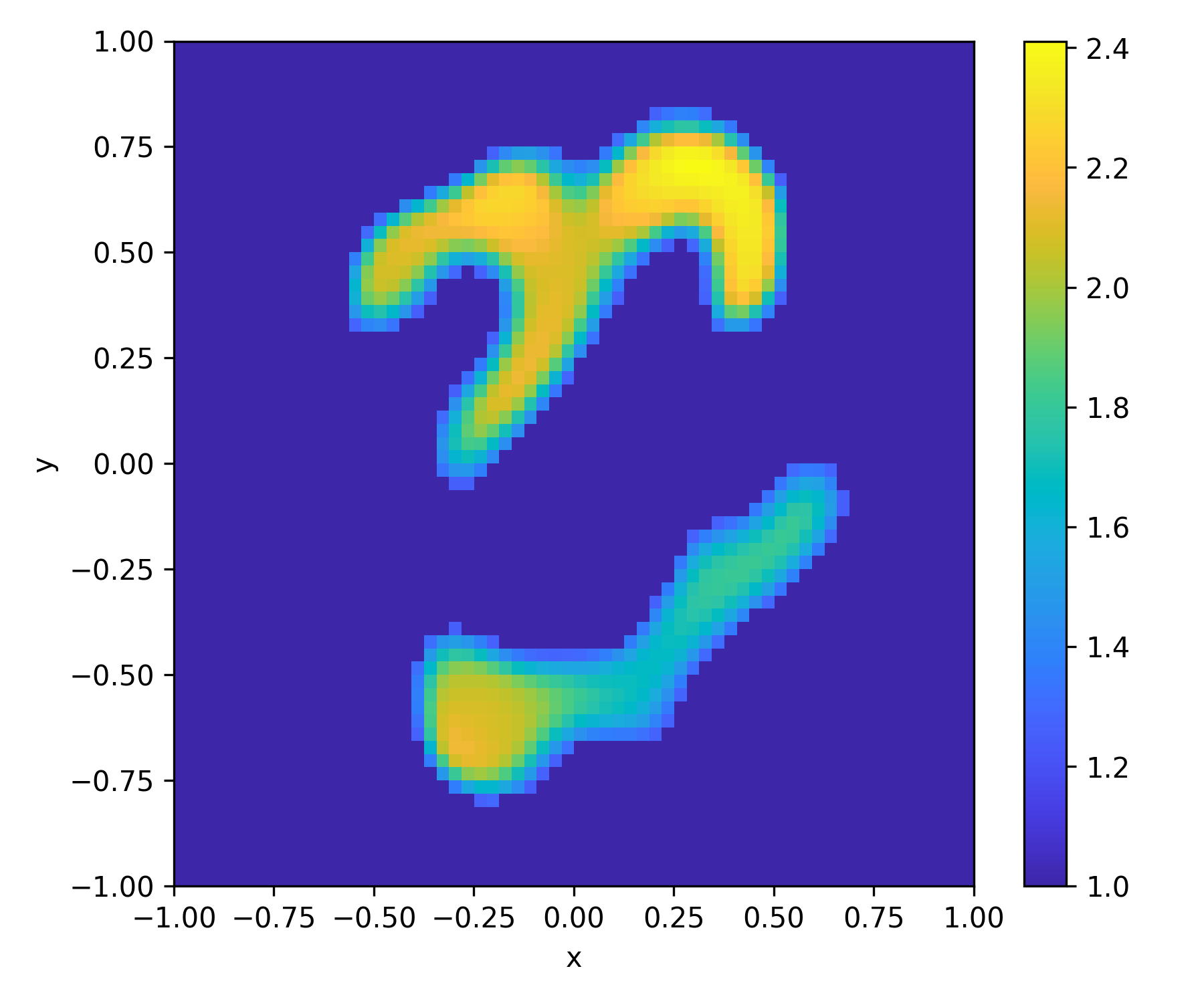}&
				\includegraphics[width=0.15\textwidth]{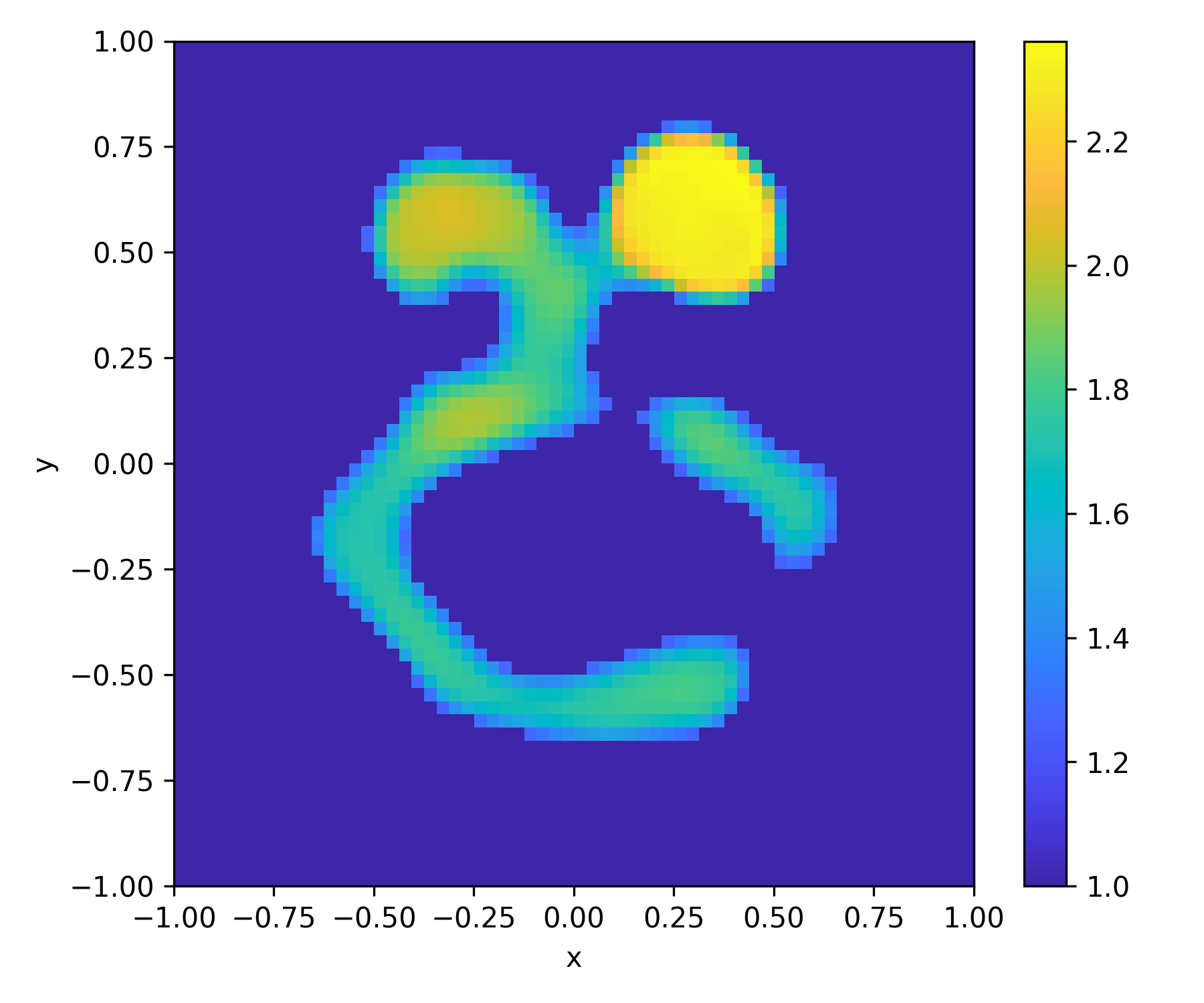}&
				\includegraphics[width=0.15\textwidth]{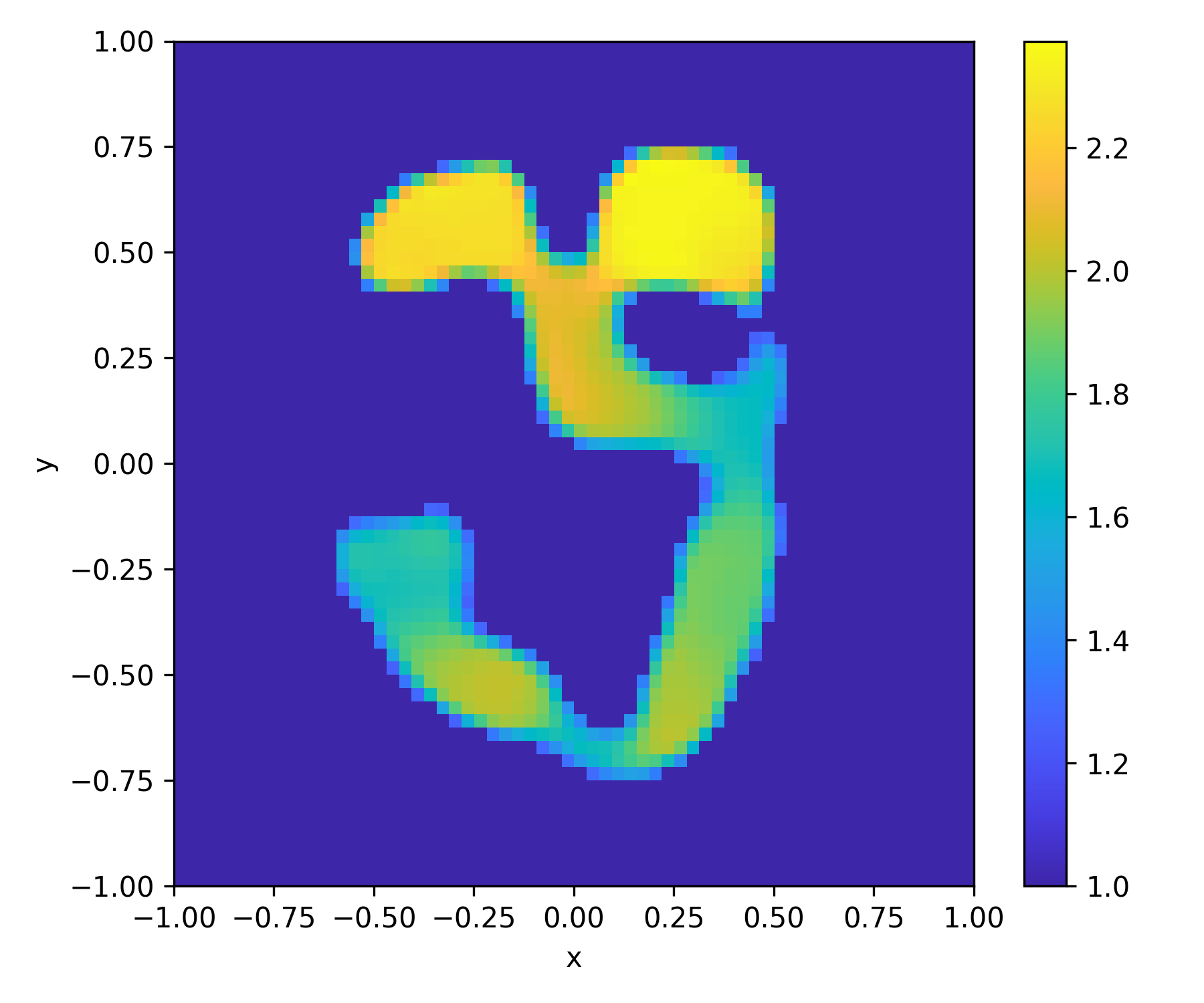}&
				\includegraphics[width=0.15\textwidth]{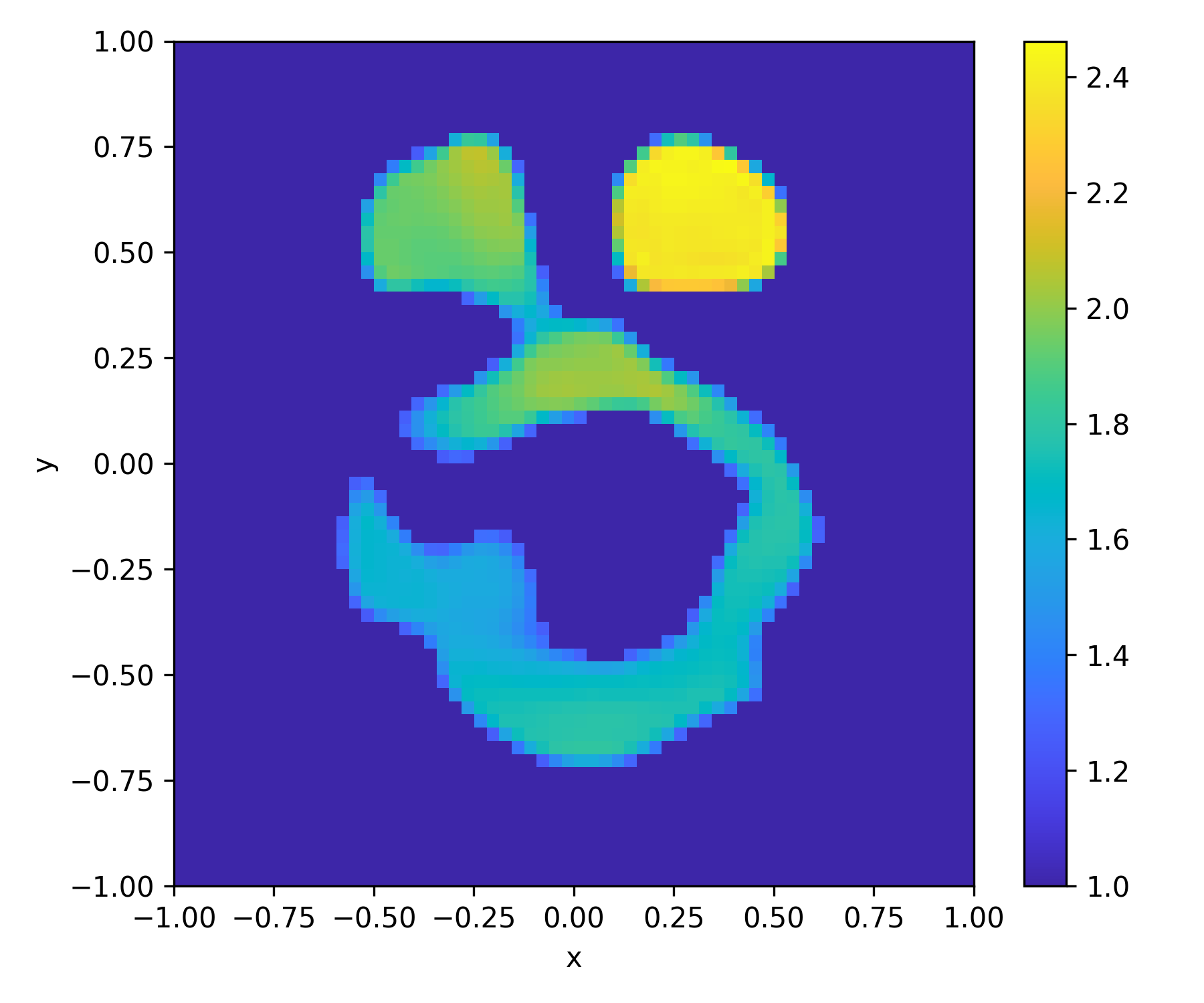}&
				\includegraphics[width=0.15\textwidth]{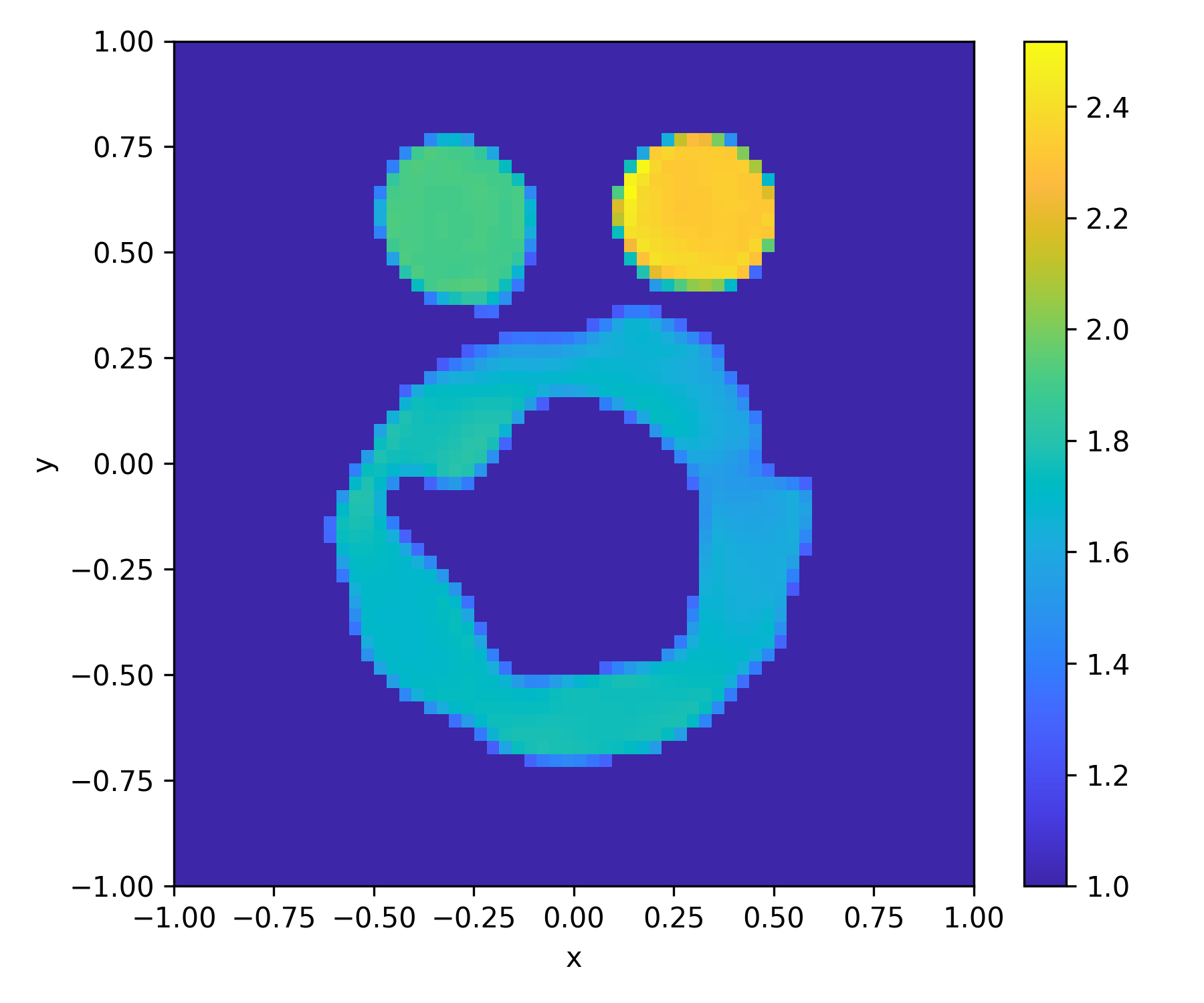}	 
			\end{tblr}
			\caption{Image reconstructions of two “Austria profiles” with $15\%$ and $40\%$  Gaussian noises by using the networks trained by the MNIST dataset.  From left to right: the ground-truth images, the reconstruction with 1,2,4,8, and 16 incident fields.}
			\label{tab:fig-Mnist-Austria}
		\end{center}
	\end{figure}	
 
	\subsubsection{Tests with Latin letters “D”, “S” and “M” with high noise levels}
	In this example,  we use the network trained by the MNIST dataset with $N_i=16$ to reconstruct Latin letters “D”, “S” and “M” with high noise levels $50\%$ and $100\%$. In addition, different from the configuration for the training data, we use 60 receivers equally distributed on a circular curve with a radius $5$ centered at the origin to collect the data for the scattered field. As shown in equation (\ref{conver_Ind}), the index function is convergent as the radius $R$ tends to infinity, which explains the new configuration can still achieve good reconstructions. Based on the results shown in Fig.\,\ref{tab:fig-DSM-HighNoise}, it can be seen that the proposed DSM-DL is very robust to high-level noise as the noise is smoothed in the DSM process. We can also observe that the reconstruction of the thinner part of the letters is more likely to be distorted, which is mainly because the thinner part contributes less to the scattered field. 
    	\begin{figure}[htp]\small
    	\begin{center}
    		\begin{tblr}
    			{width=0.8\linewidth,
    				colspec = {X[-1]X[c,h]X[c,h]X[c,h]},
    				stretch = 0,
    				rowsep = 0pt,}
    			True Contrast &
    			\includegraphics[width=0.15\textwidth]{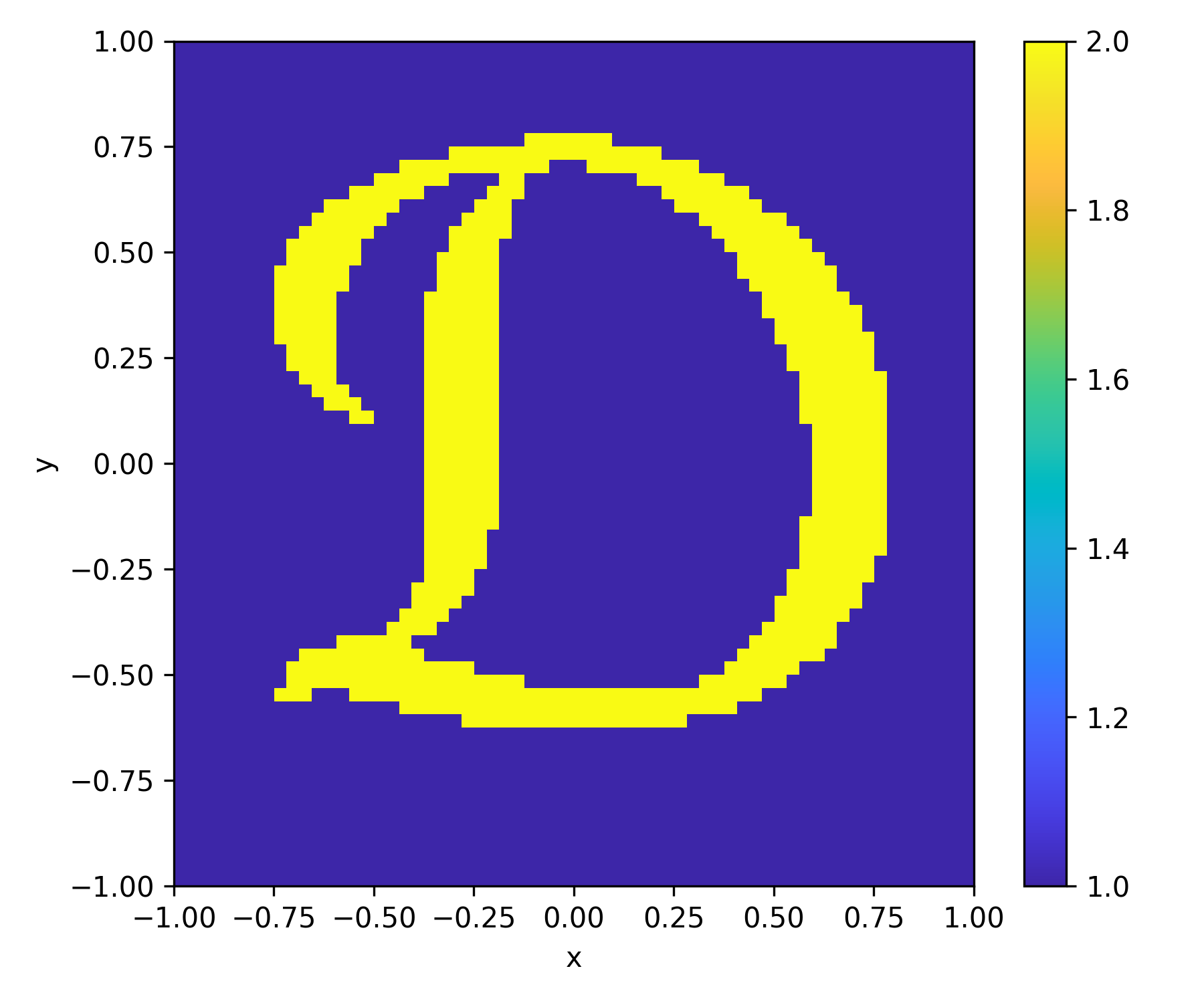}& 
    			\includegraphics[width=0.15\textwidth]{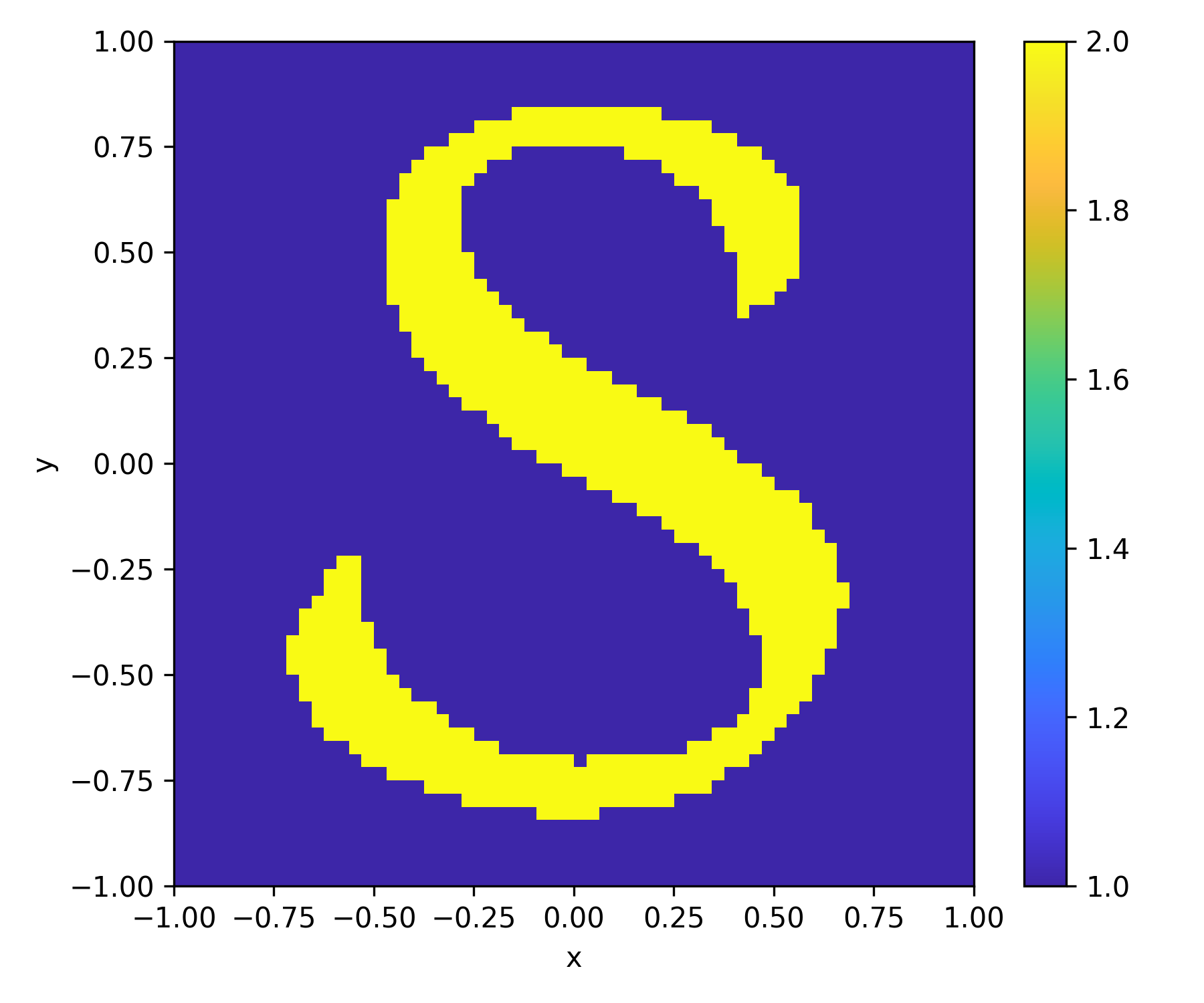}&
    			\includegraphics[width=0.15\textwidth]{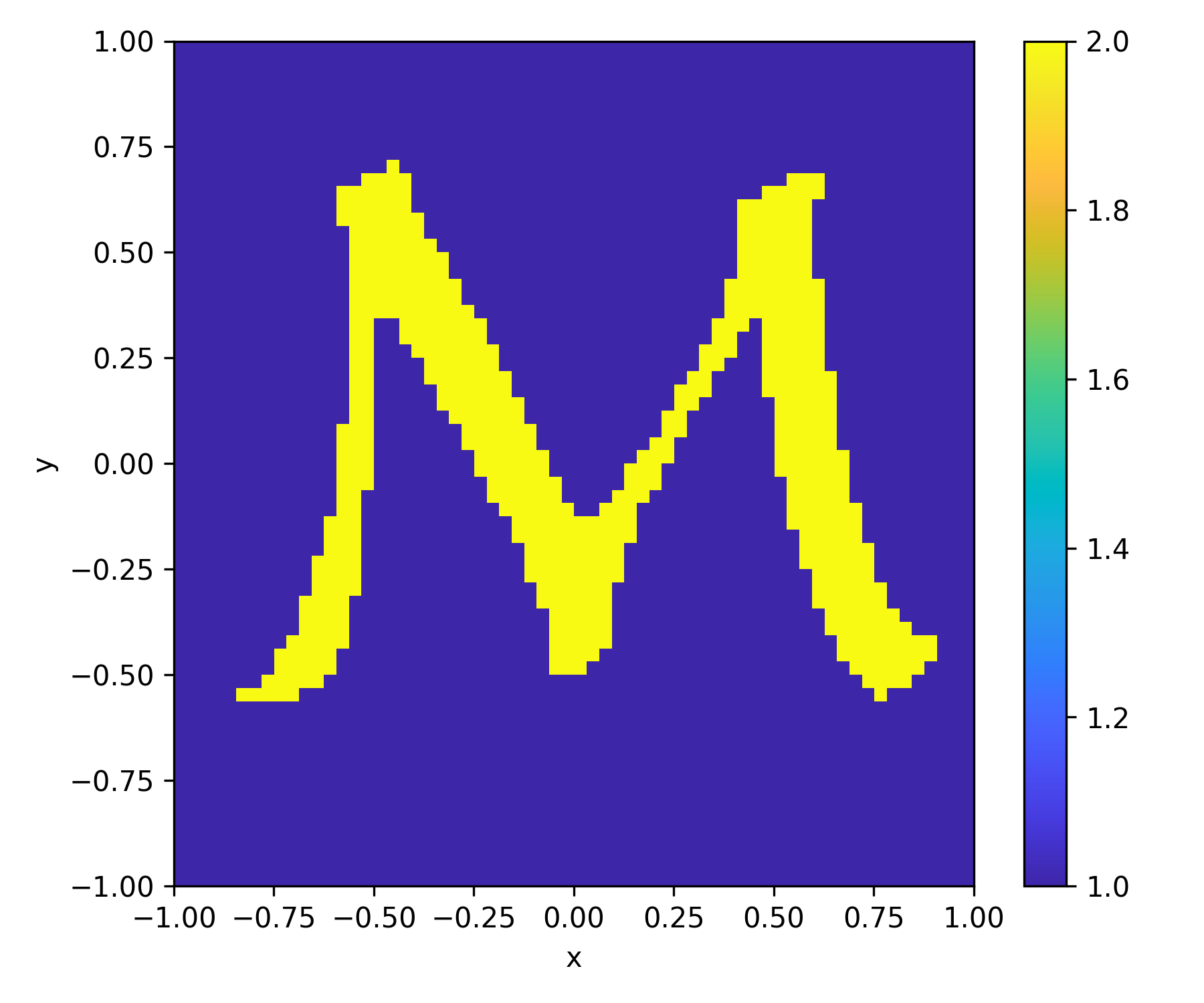}&
    			\\
    			$\delta=50\%, N_i = 16$&
    			\includegraphics[width=0.15\textwidth]{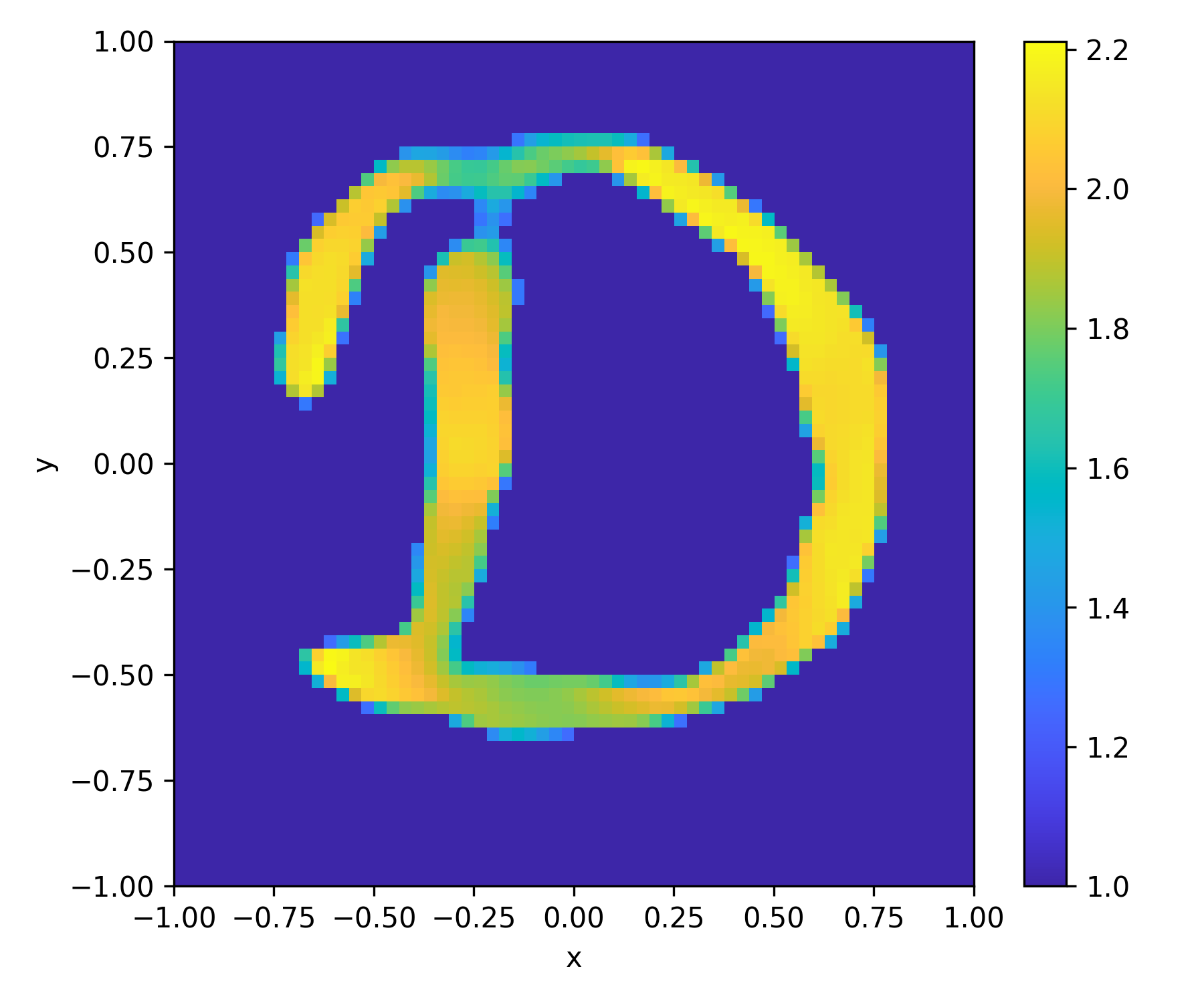}& 
    			\includegraphics[width=0.15\textwidth]{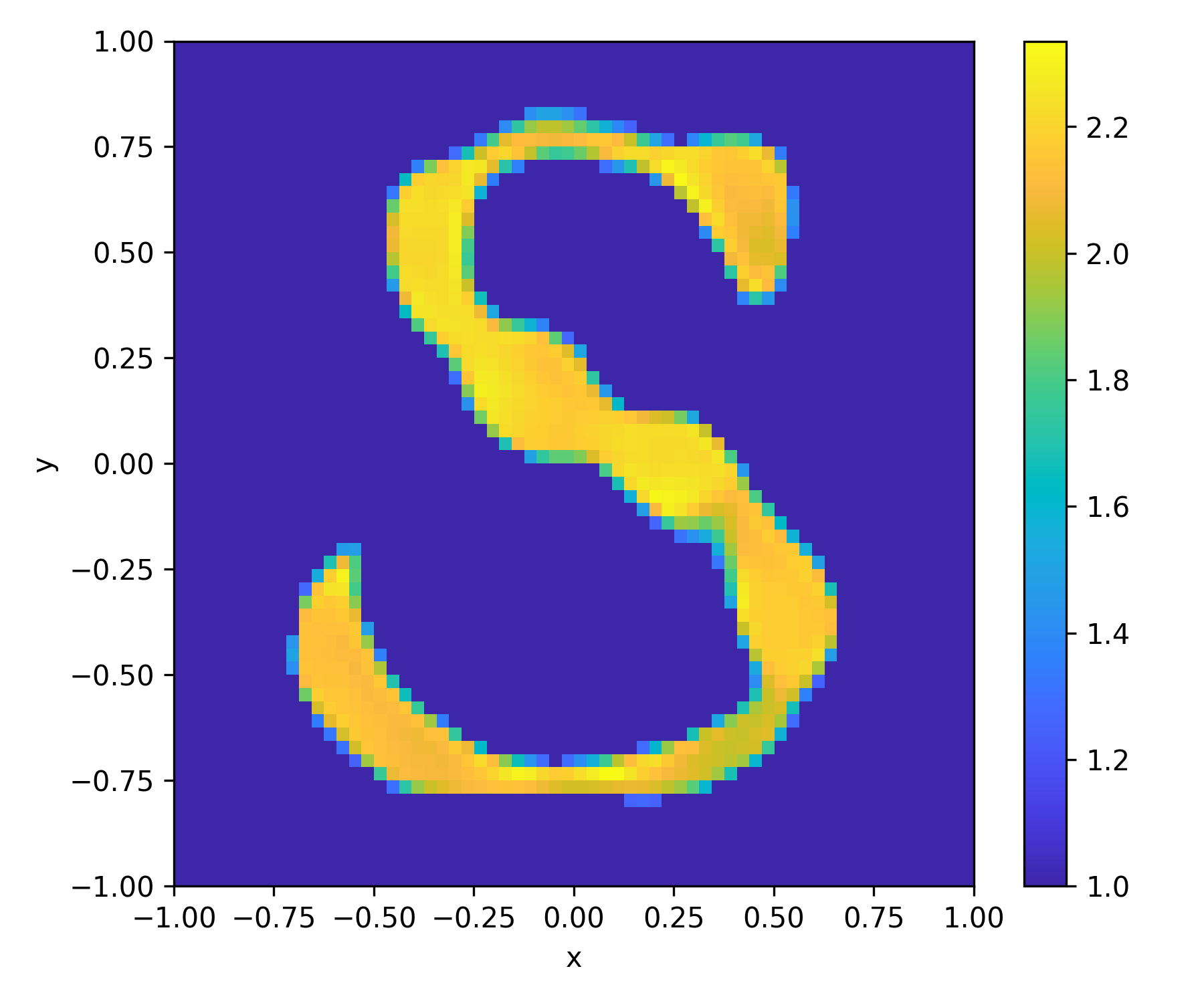}&
    			\includegraphics[width=0.15\textwidth]{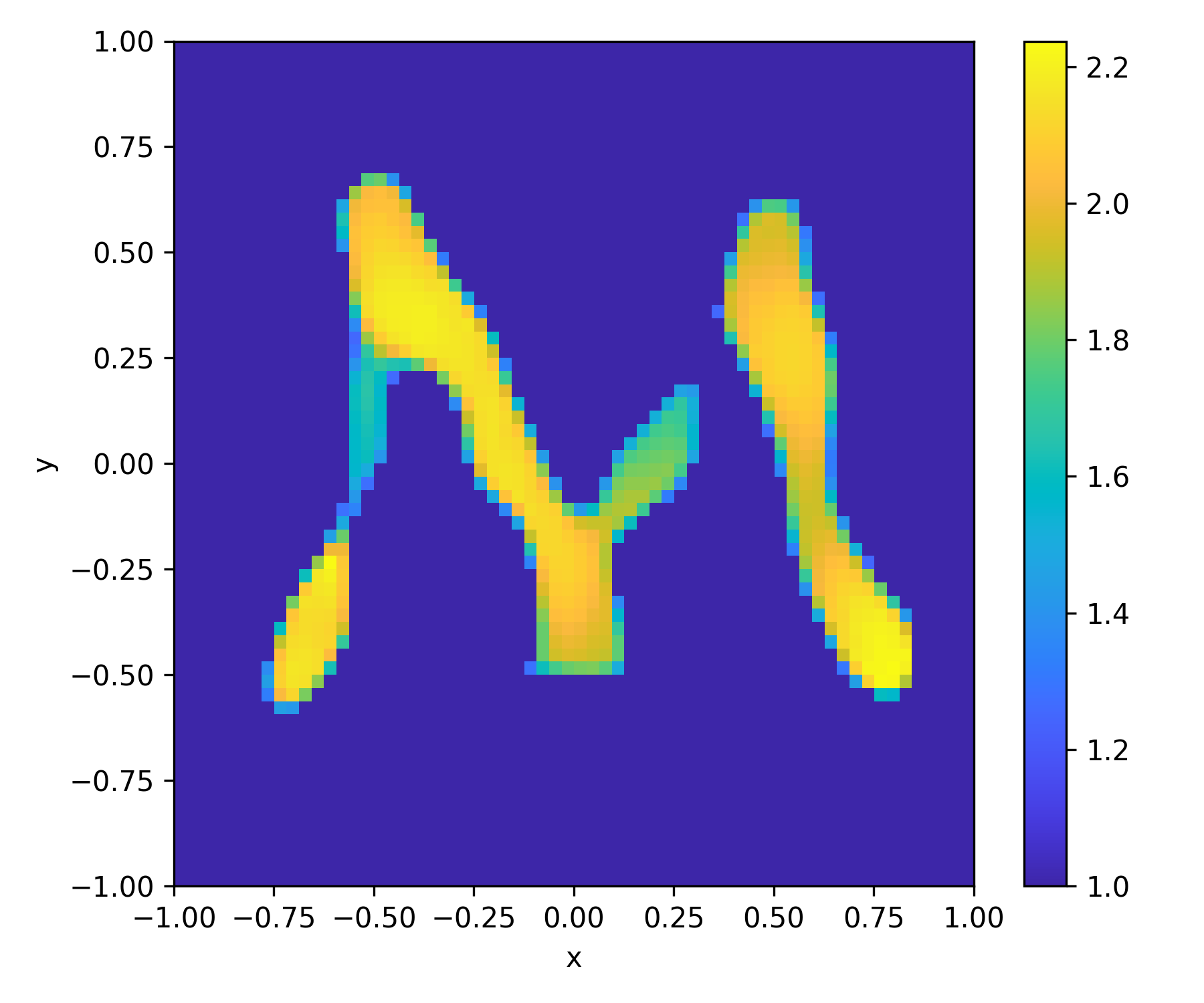}&
    			\\ 
    			$\delta=100\%, N_i=16$&
    		    \includegraphics[width=0.15\textwidth]{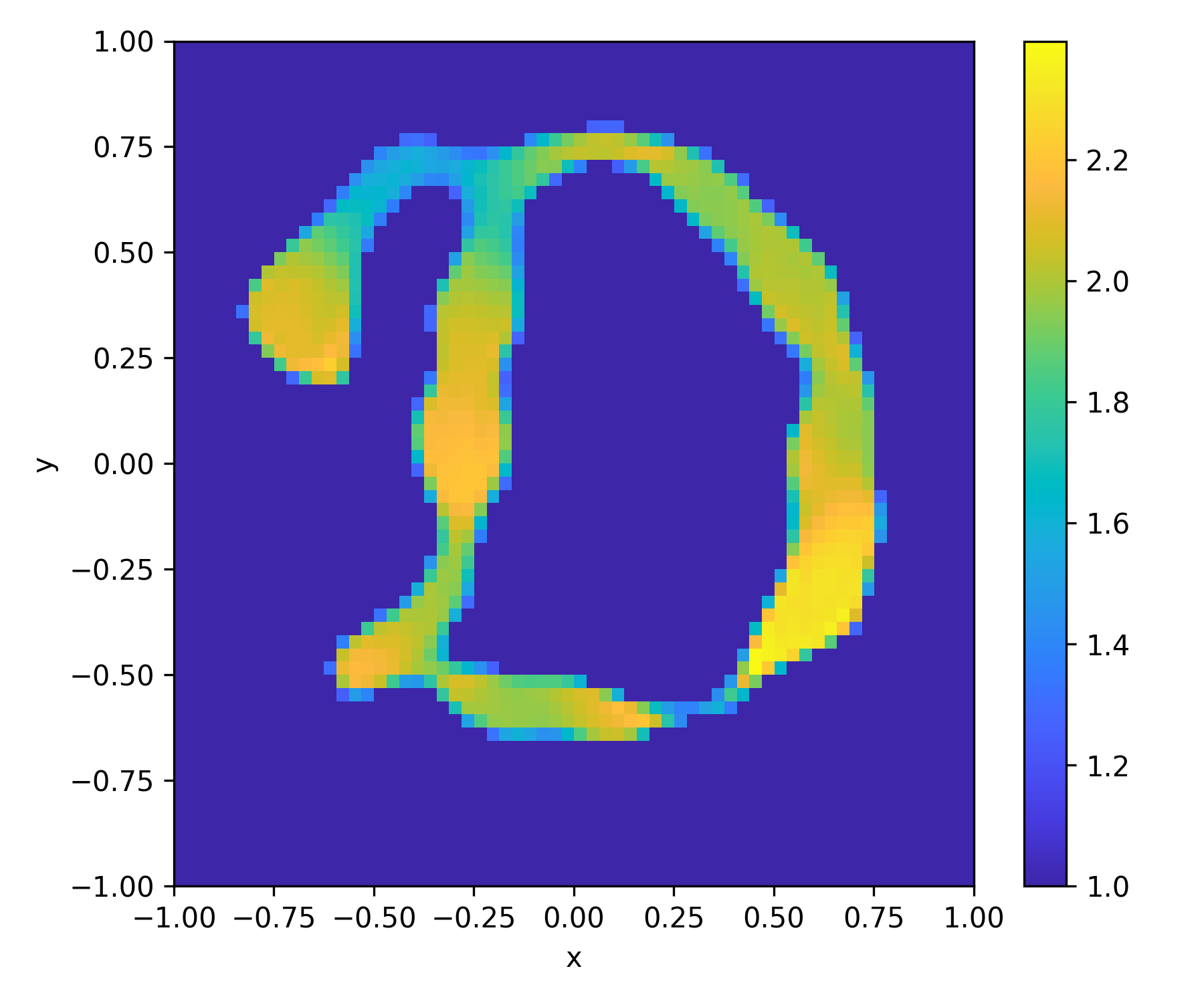}& 
    	    	\includegraphics[width=0.15\textwidth]{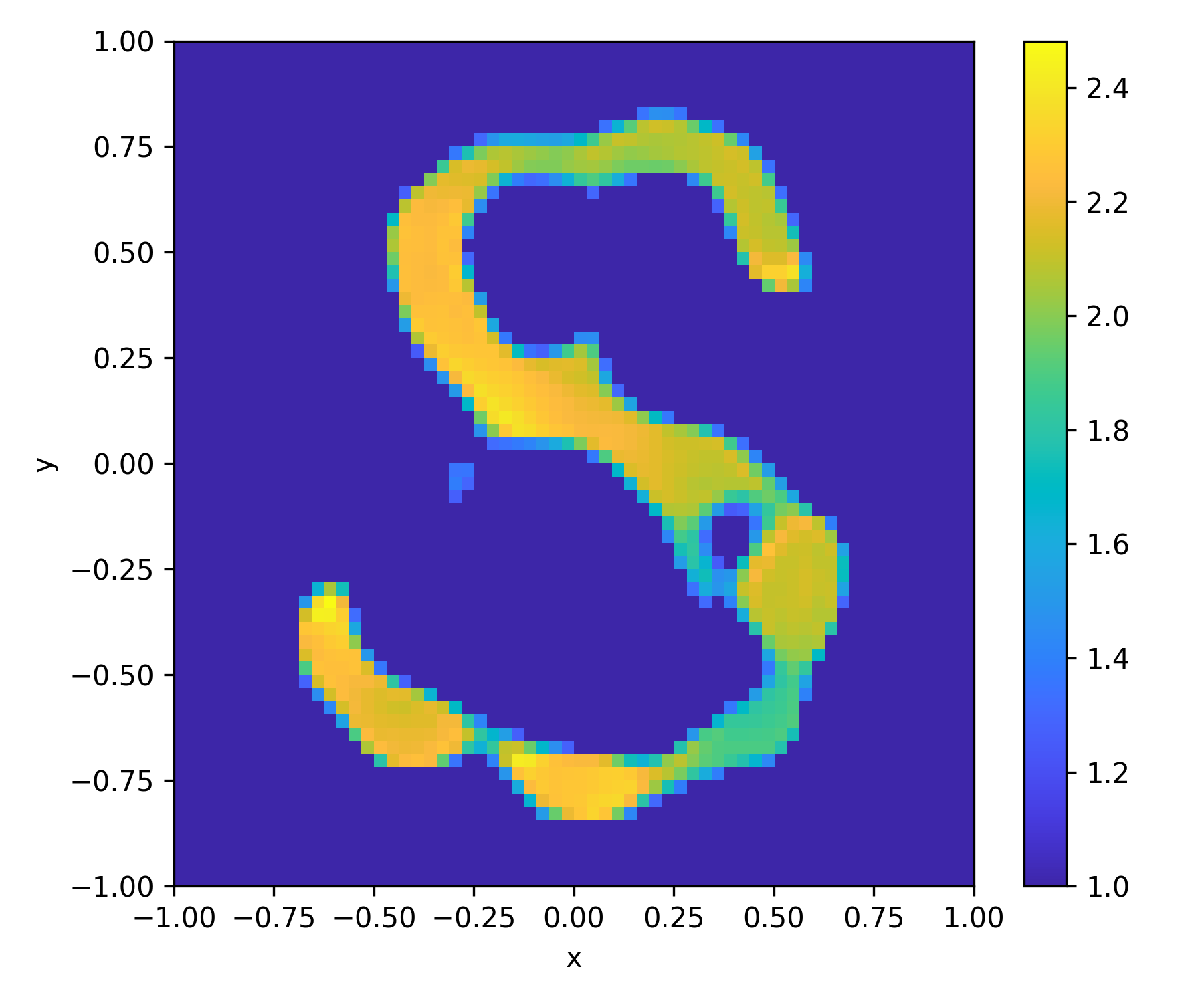}&
    	    	\includegraphics[width=0.15\textwidth]{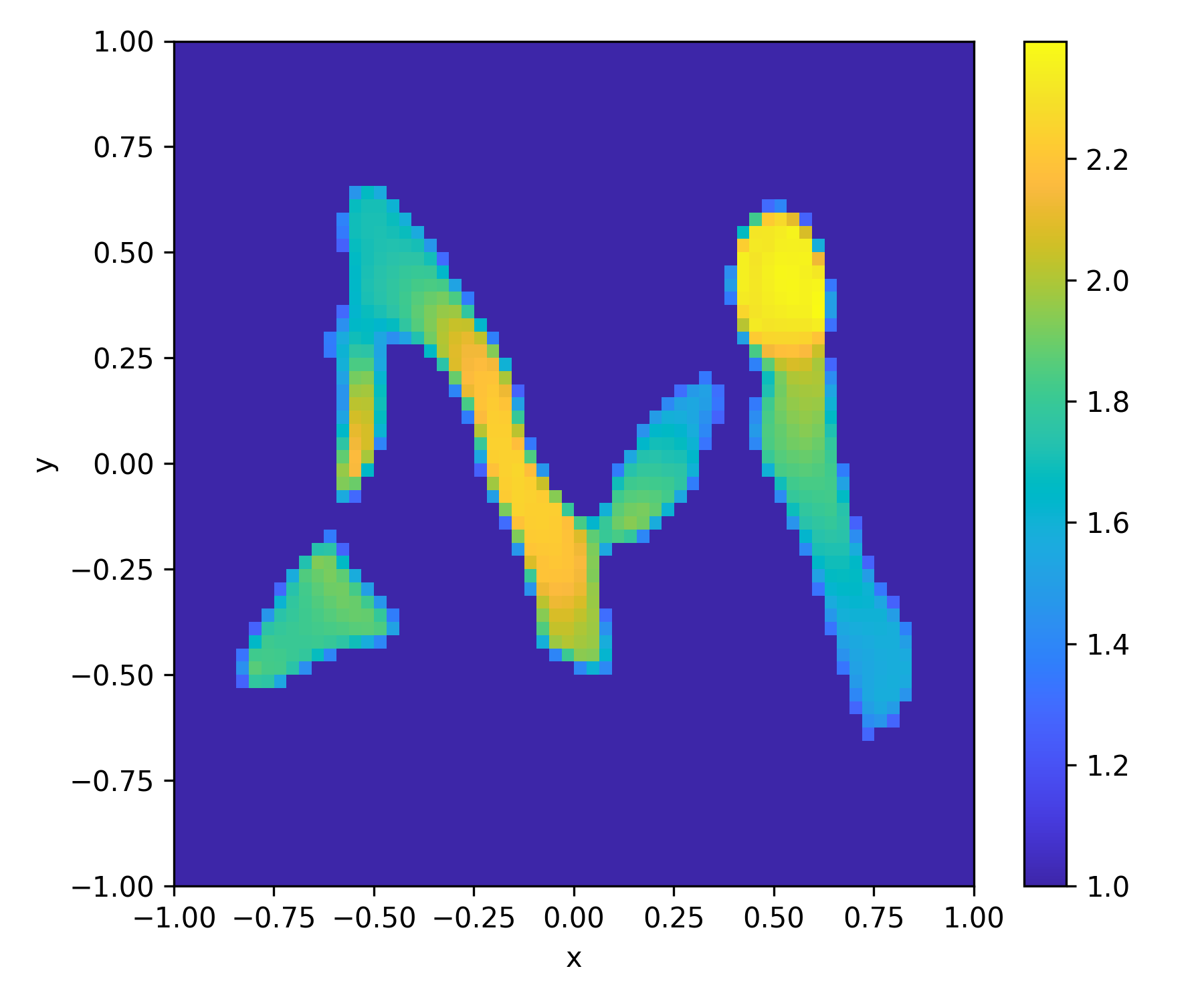}& 
    	    	\\
    	    	
    		\end{tblr}
    		\caption{Image reconstructions of Latin letters “D”, “S” and “M” with  $50\%$ and $100\%$ Gaussian noises by the networks trained by the MNIST dataset with $N_i=16$. We use 60 receivers equally distributed on a circular curve with a radius $5$ centered at the origin to collect the scattered field data.}
    		\label{tab:fig-DSM-HighNoise}
    	\end{center}
    \end{figure}

	\subsubsection{Tests with measured data from half circular curve}
	We recall that the inner product in equation (\ref{indexfunction}) is defined on the whole circular curve $S$, thus the original DSM requires enough receivers and it is not applicable if the receivers are all located on a half circular curve $\Gamma:=\{(r,\theta)\big\vert r=3; 0\le\theta<\pi\}\subset S$. However, based on the upcoming numerical experiments, it is exciting to see that the DSM-DL can be applied to the limited aperture case by considering the following function
	\begin{equation}
		\Phi^{\Gamma}_{p}(x):=|\langle u^{s}_{p},G(x,\cdot)\rangle_{L^{2}(\Gamma)}|, \quad p=1,2,\cdots,N_{i}. 
		\label{halfInd}
	\end{equation}
	In this numerical experiment, there are $16$ receivers located on $\Gamma$, and we employ the functions (\ref{halfInd}) as the inputs of the neural networks for both training and testing. The data augmentations are not used in this example as the rotation and symmetry properties of the index functions do not hold for this case. $20000$ images are used as training data and the other settings are the same as those in the previous MNIST example. The reconstructions for an image from the testing data and an “Austria Ring” are presented in Fig.\,\ref{tab:fig-Mnist-Half}, the relative L2 testing error and SSIM are presented in Table\,\ref{tab:error}. The results indicate that although the original DSM is not applicable in this case, the DSM-DL can still achieve very satisfactory results. However, the performance is not as good as that of using full measurement on the whole surface, this is reasonable since fewer measurement data is used.

	\begin{table}[htp]\small
		\begin{center}
			\begin{tabular}{ |c|c|c|c|c|c|c|c|}
				\hline
				Example&Metric&Noise Level& $N_{i}=1$& $N_{i}=2$ & $N_{i}=4$ & $N_{i}=8$ & $N_{i}=16$ \\
				\hline
				\multirow{4}*{MNIST}&\multirow{2}*{Relative L2 error}&15\% &0.2041& 0.1713 & 0.1417 & 0.1228& 0.1073 \\
				\cline{3-8}
				& &40\% &0.2324& 0.1925 & 0.1577 & 0.1367& 0.1216 \\
				\cline{2-8}
				&\multirow{2}*{SSIM}&15\% &0.6834& 0.7433 & 0.8077 & 0.8489& 0.8794\\
				\cline{3-8}
				& &40\% &0.6280& 0.7007& 0.7727 & 0.8190& 0.8504 \\
				\hline
				
				\multirow{4}*{MNIST$\big |_{\Gamma}$}&\multirow{2}*{Relative L2 error}&15\% &0.2384& 0.2138 & 0.1728 & 0.1472& 0.1250 \\
				\cline{3-8}
				& &40\% &0.2518& 0.2284 & 0.1965 & 0.1681& 0.1437 \\
				\cline{2-8}
				&\multirow{2}*{SSIM}&15\% &0.5865& 0.6391& 0.7444 & 0.7980& 0.8416\\
				\cline{3-8}
				& &40\% &0.5580& 0.6065& 0.6968 & 0.7541& 0.8042 \\
				\hline
			\end{tabular}
			\caption{Relative L2 testing error and SSIM for different examples with different noise levels and number of incidences, where the example MNIST$\big |_{\Gamma}$ refers to that the scattered fields are measured on the half circular curve $\Gamma$.}
			\label{tab:error}
		\end{center}
	\end{table}

   \begin{figure}[htp]\small
   	\begin{center}
   		\begin{tblr}
   			{colspec = {X[-1]X[c]X[c,h]X[c,h]X[c,h]X[c,h]X[c,h]},
   				stretch = 0,
   				rowsep = 0pt,}
   			Noise Level& Ground truth& $N_{i}$=1& $N_{i}$=2 &$N_{i}$=4&$N_{i}$=8& $N_{i}$=16\\
   			15\%&\SetCell[r=2]{c}
   			\includegraphics[width=0.15\textwidth]{figures/10-Mnist_exact1.png}& 
   			\includegraphics[width=0.15\textwidth]{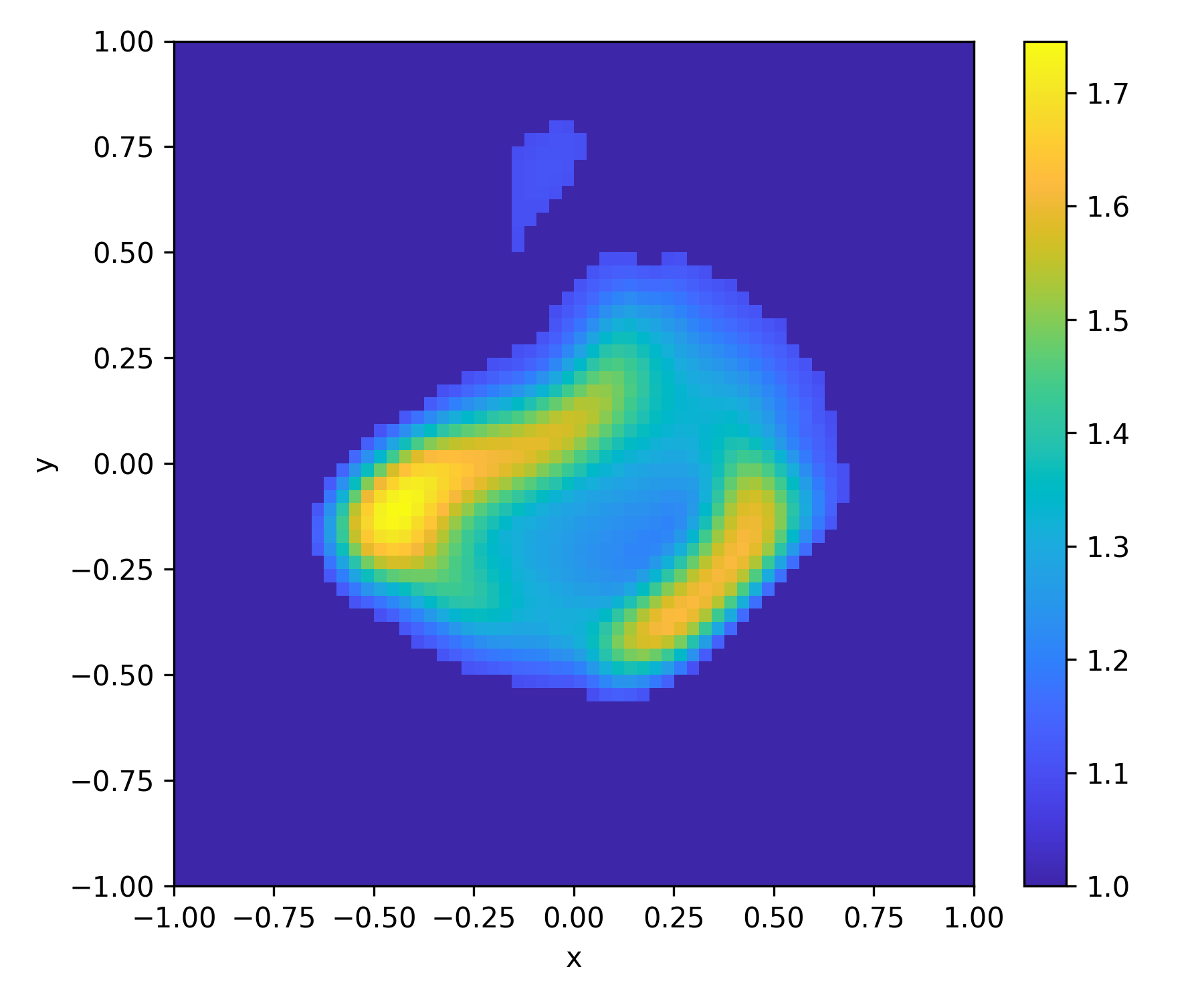}&
   			\includegraphics[width=0.15\textwidth]{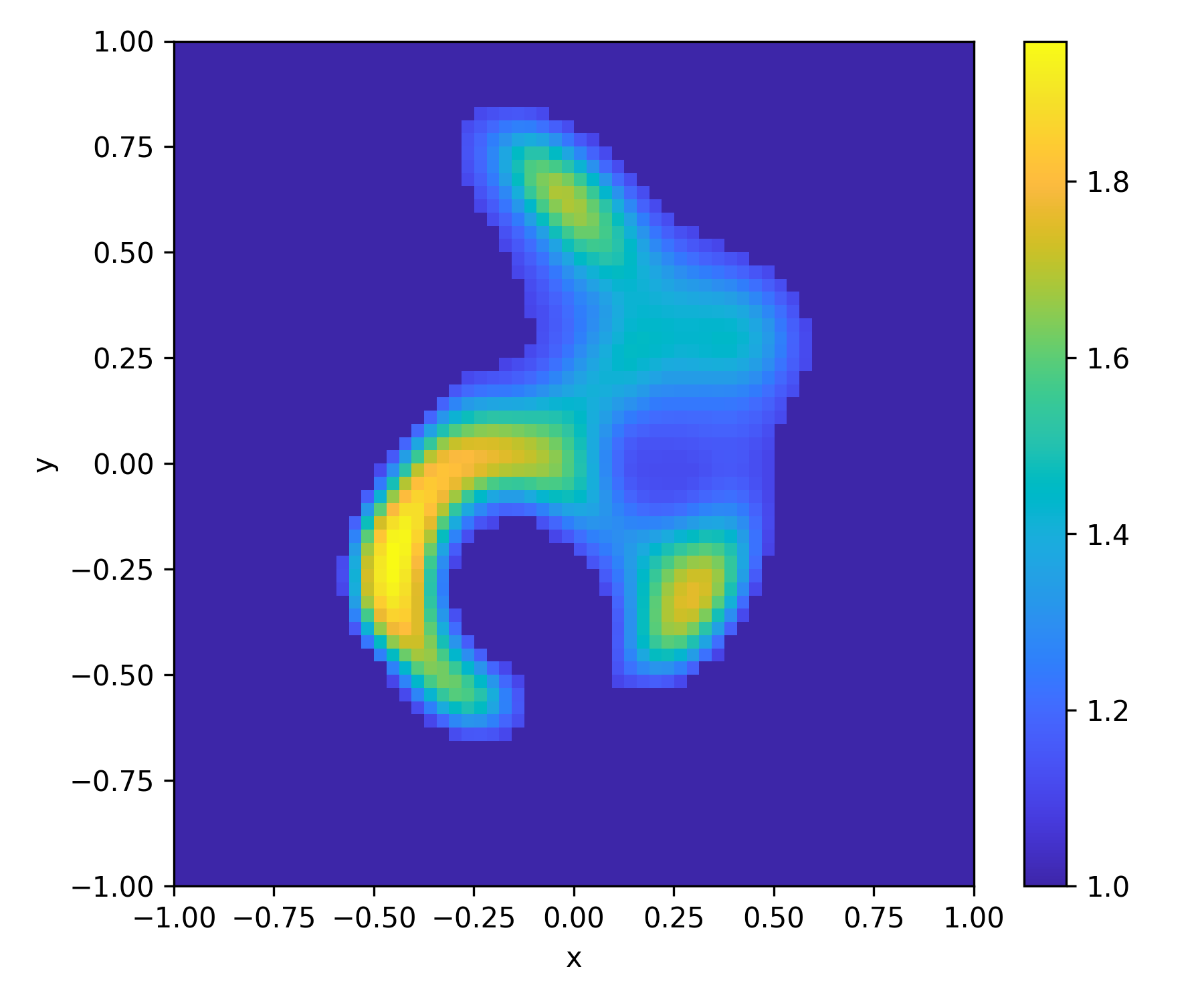}&
   			\includegraphics[width=0.15\textwidth]{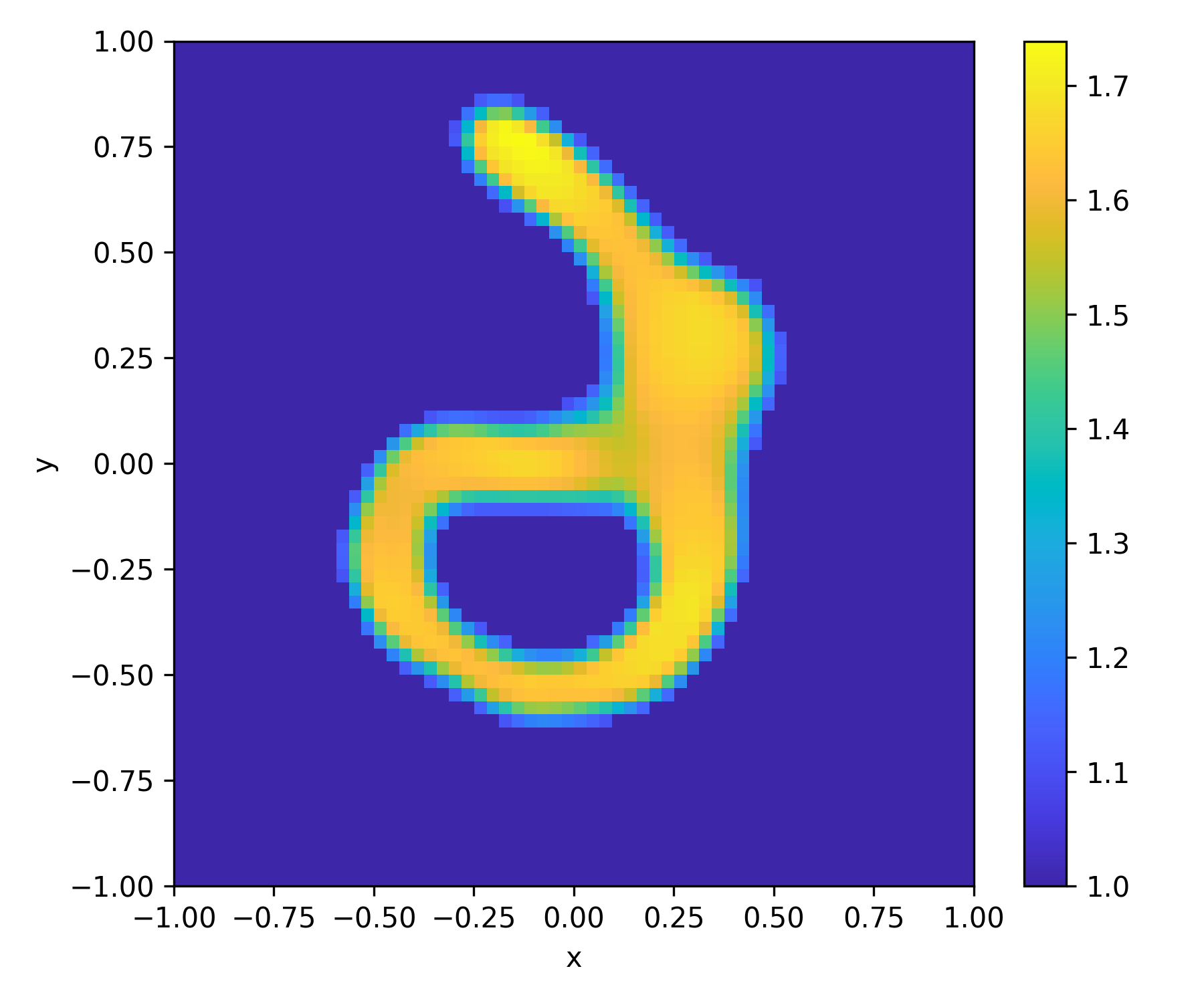}&
   			\includegraphics[width=0.15\textwidth]{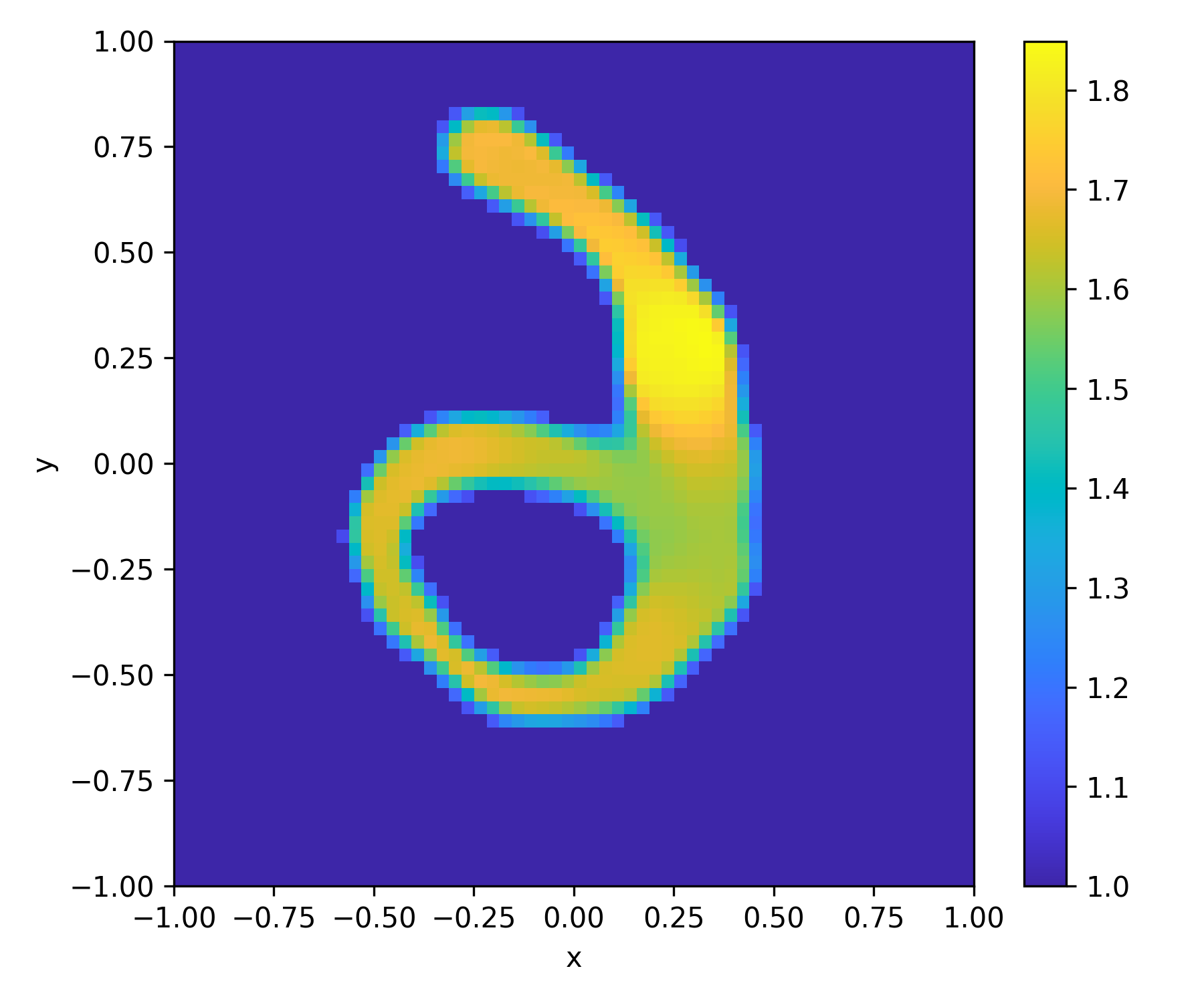}&
   			\includegraphics[width=0.15\textwidth]{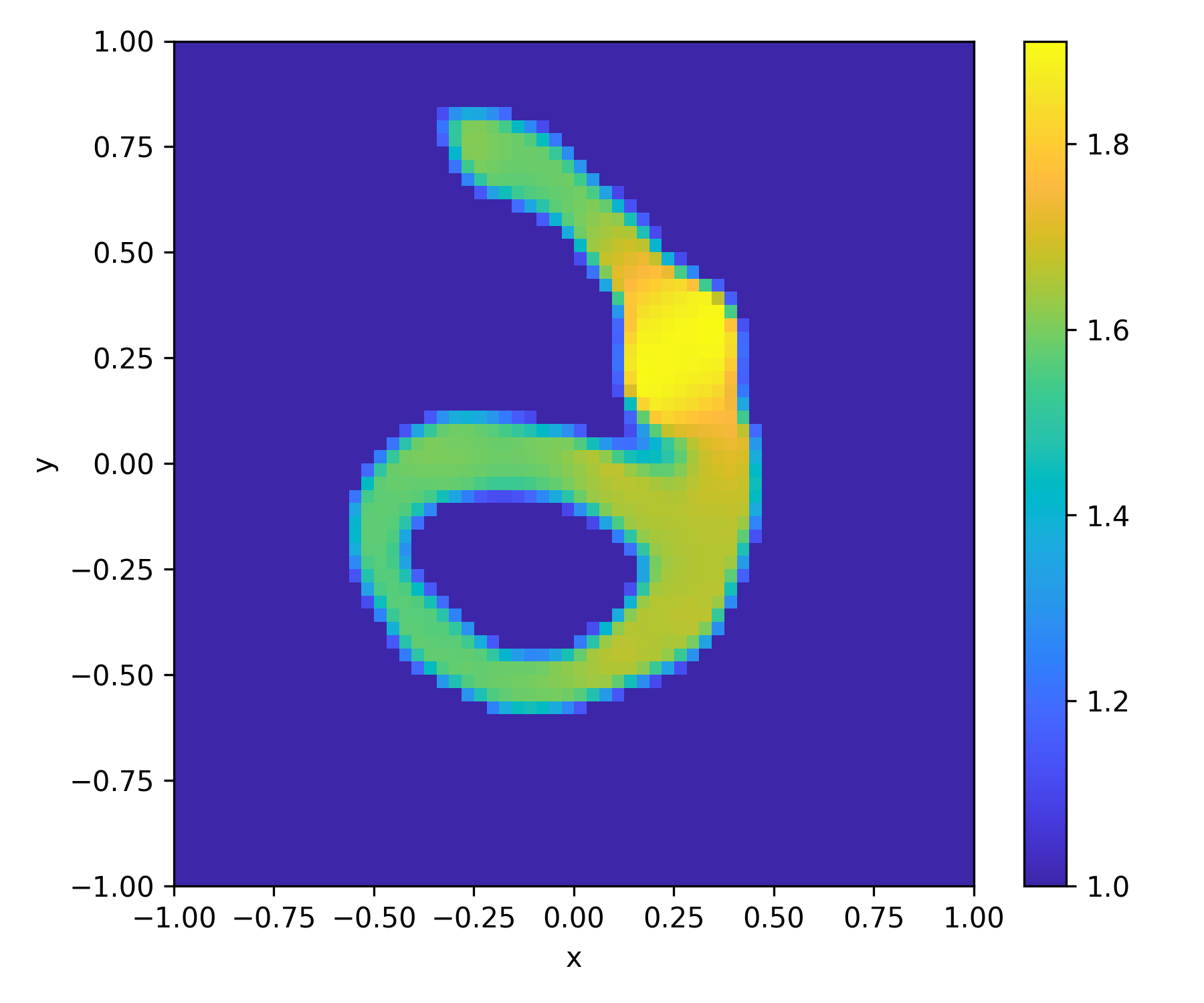}&
   			\\
   			40\%& & 
   			\includegraphics[width=0.15\textwidth]{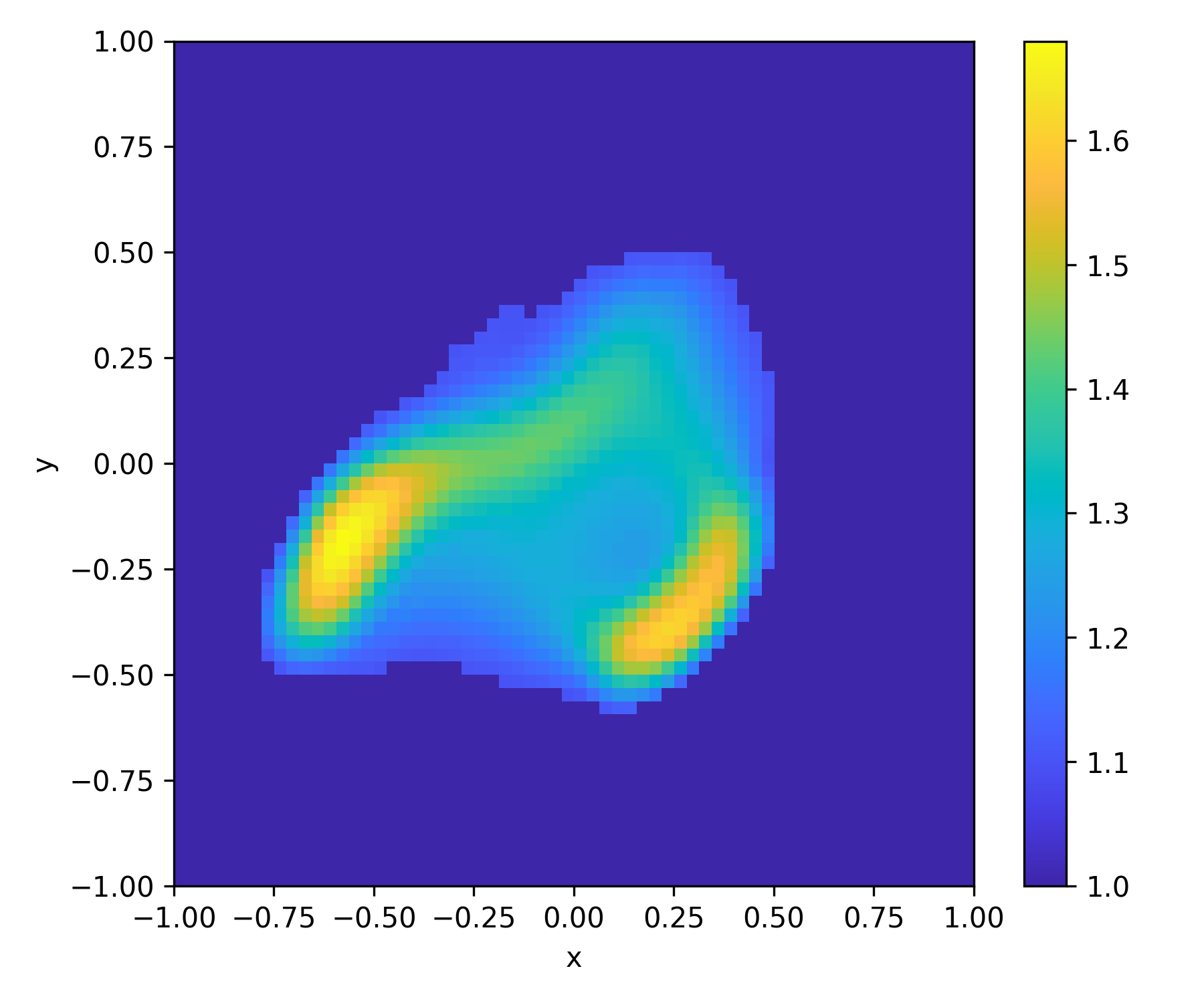}&
   		    \includegraphics[width=0.15\textwidth]{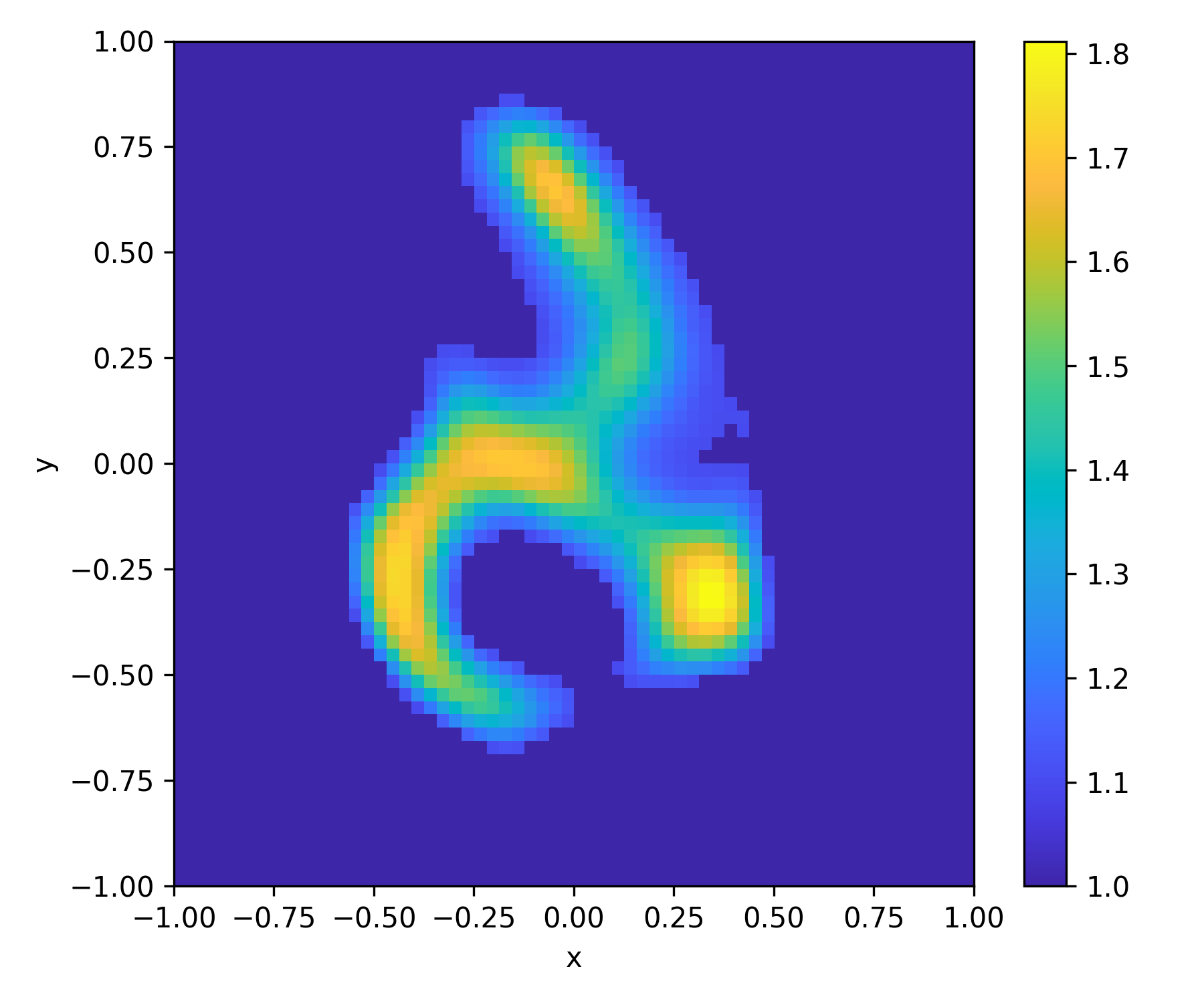}&
   	    	\includegraphics[width=0.15\textwidth]{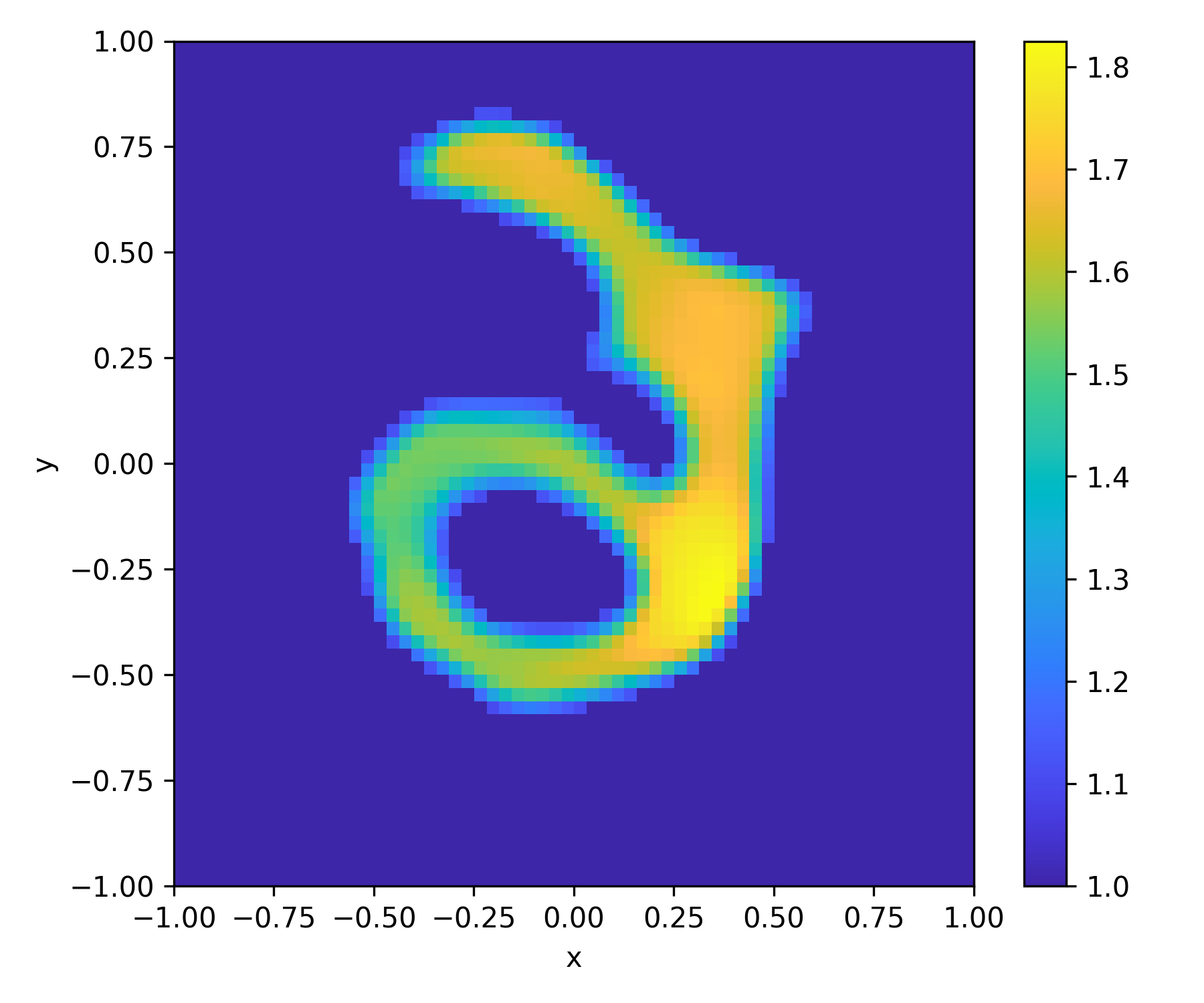}&
   	    	\includegraphics[width=0.15\textwidth]{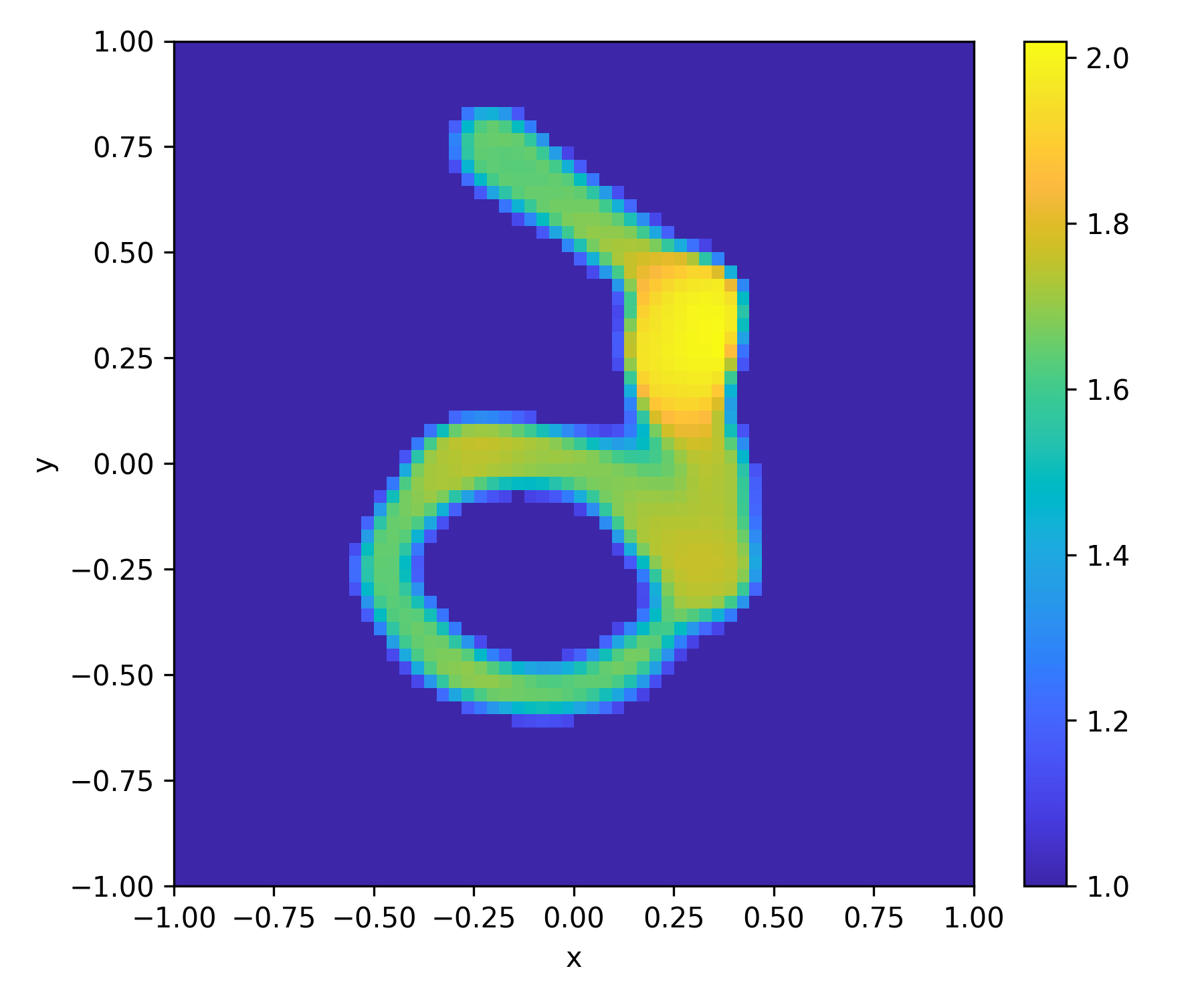}&
   	    	\includegraphics[width=0.15\textwidth]{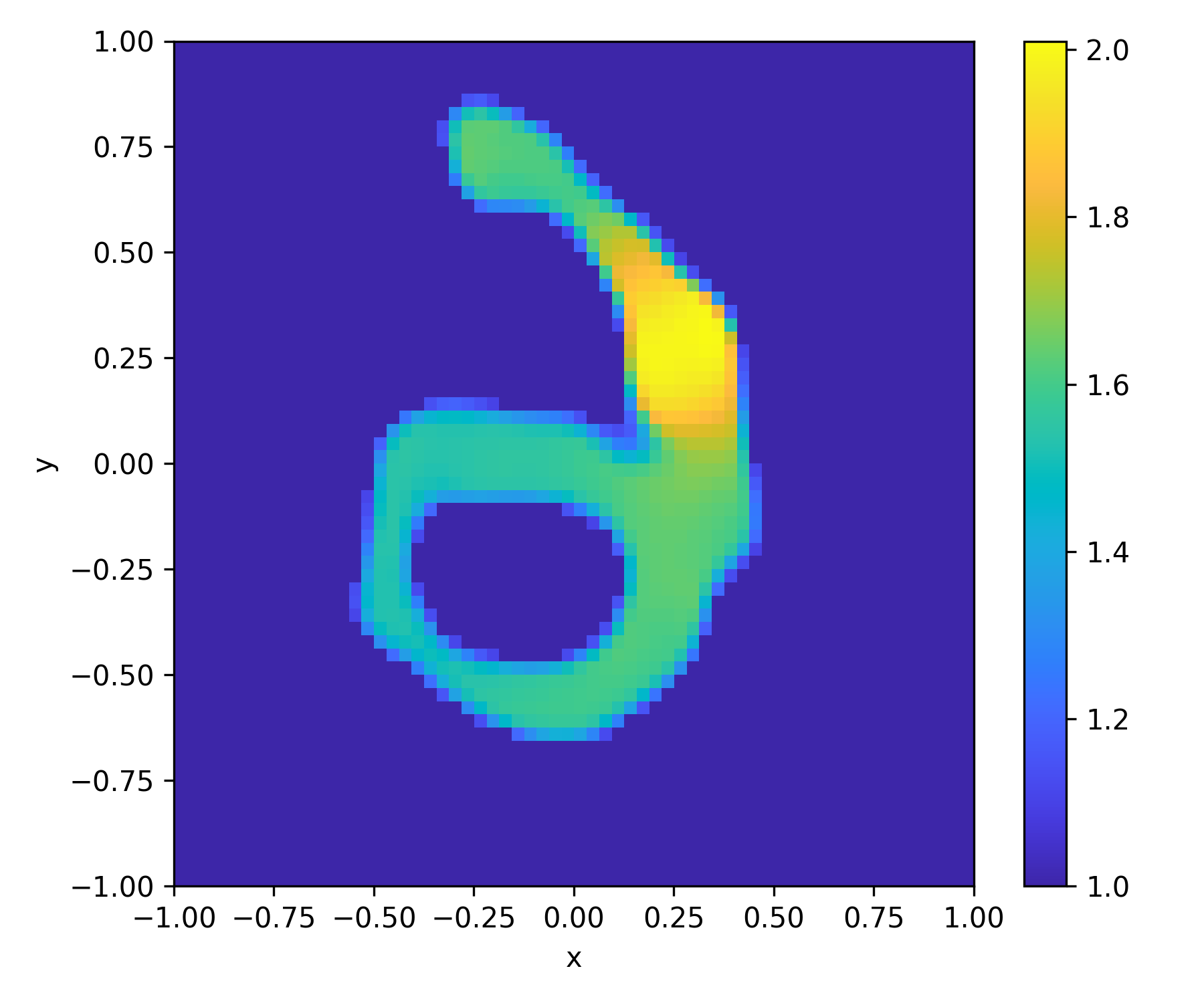}& 
   			\\
   			
   			15\%&\SetCell[r=2]{c}\includegraphics[width=0.15\textwidth]{figures/Austria_exact1.png}&
   			\includegraphics[width=0.15\textwidth]{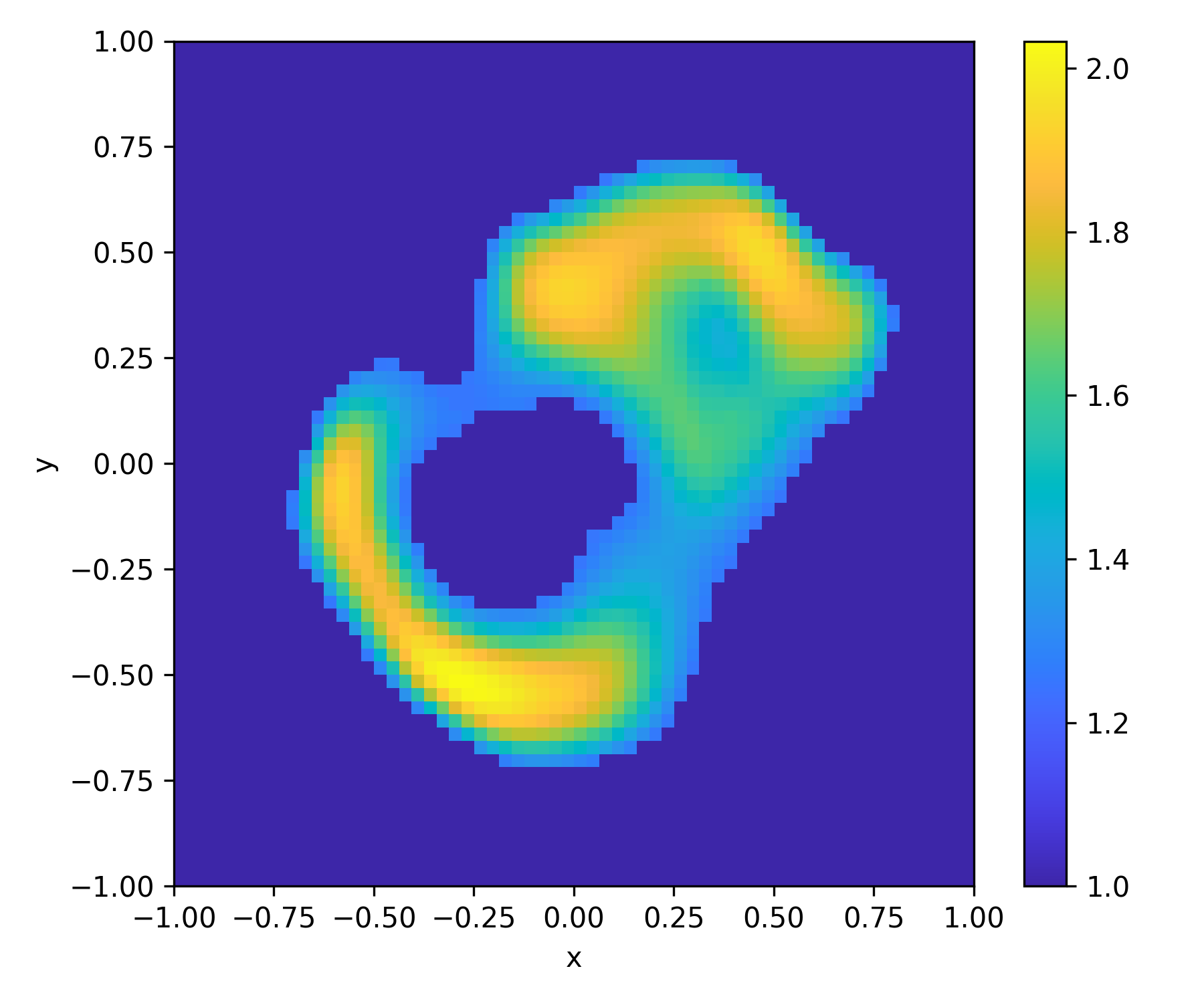}&
   			\includegraphics[width=0.15\textwidth]{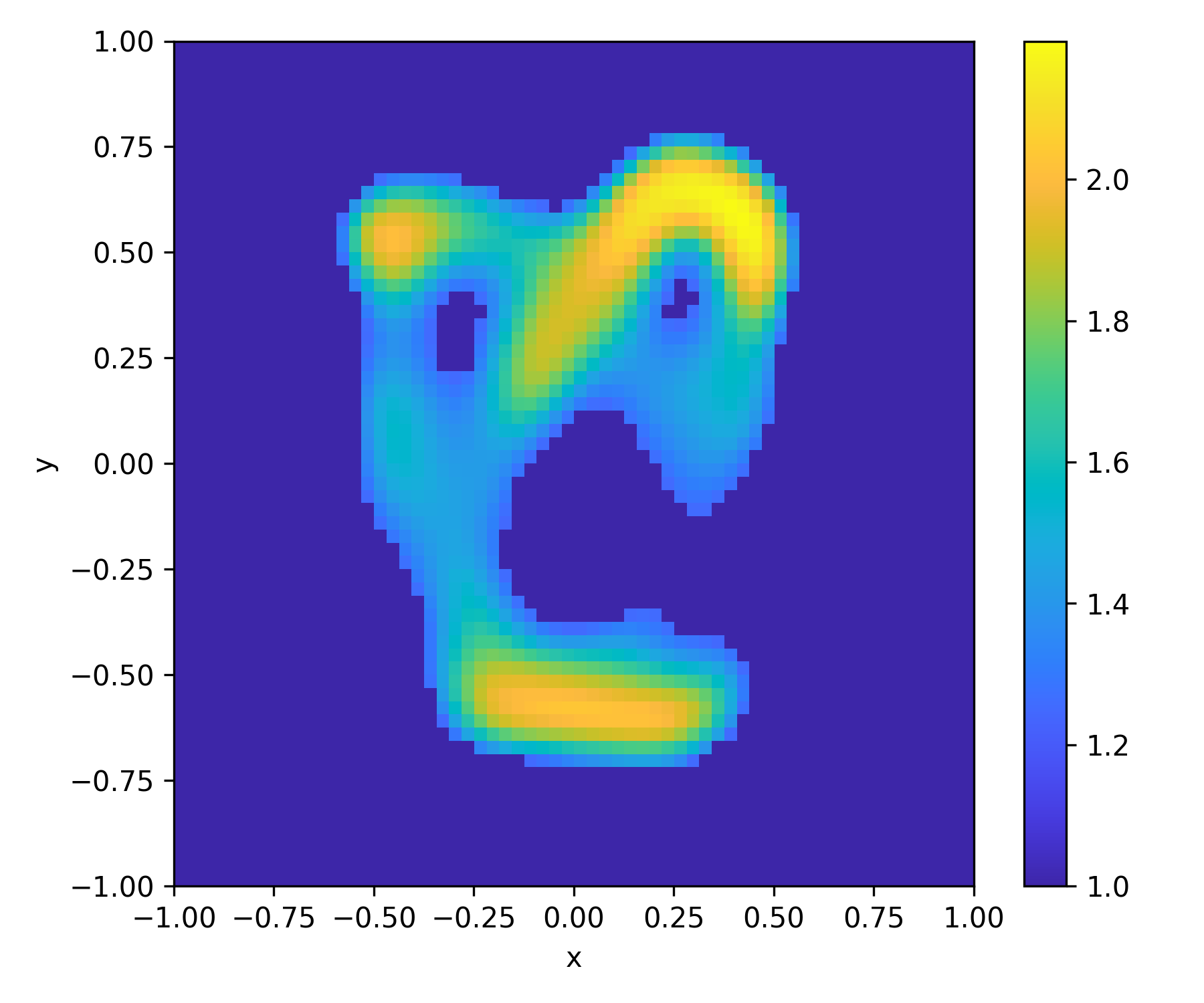}&
   			\includegraphics[width=0.15\textwidth]{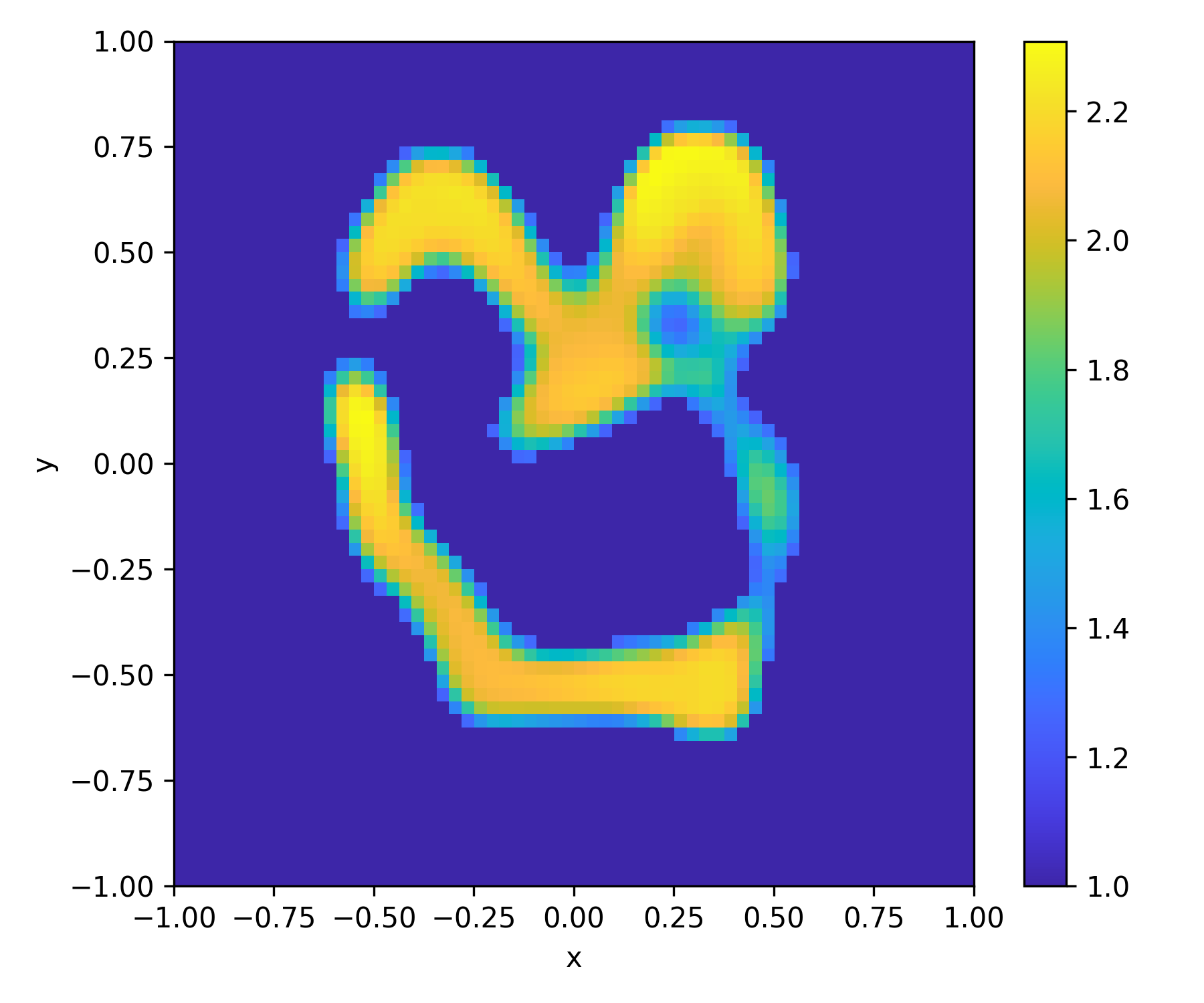}&
   			\includegraphics[width=0.15\textwidth]{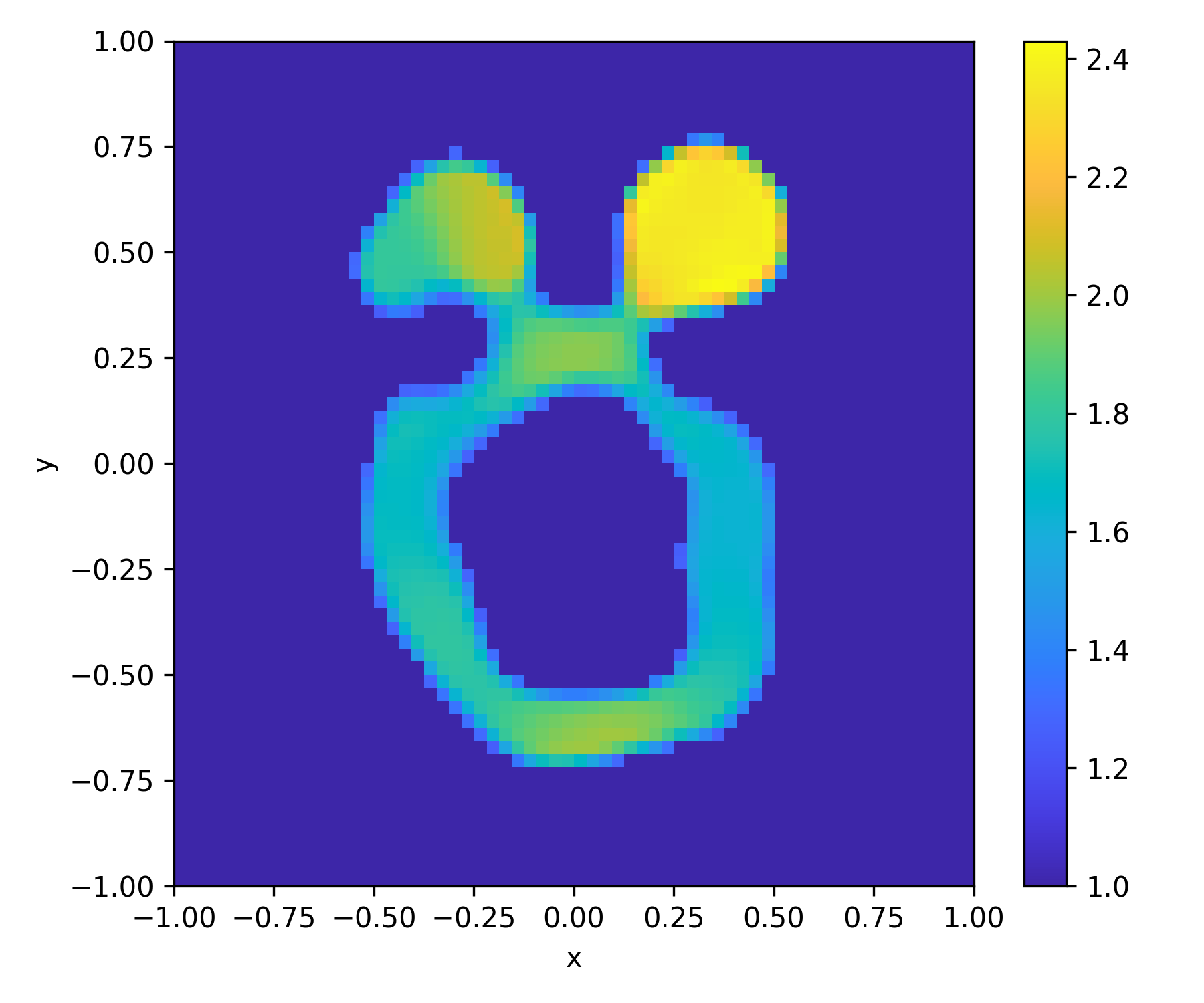}&
   			\includegraphics[width=0.15\textwidth]{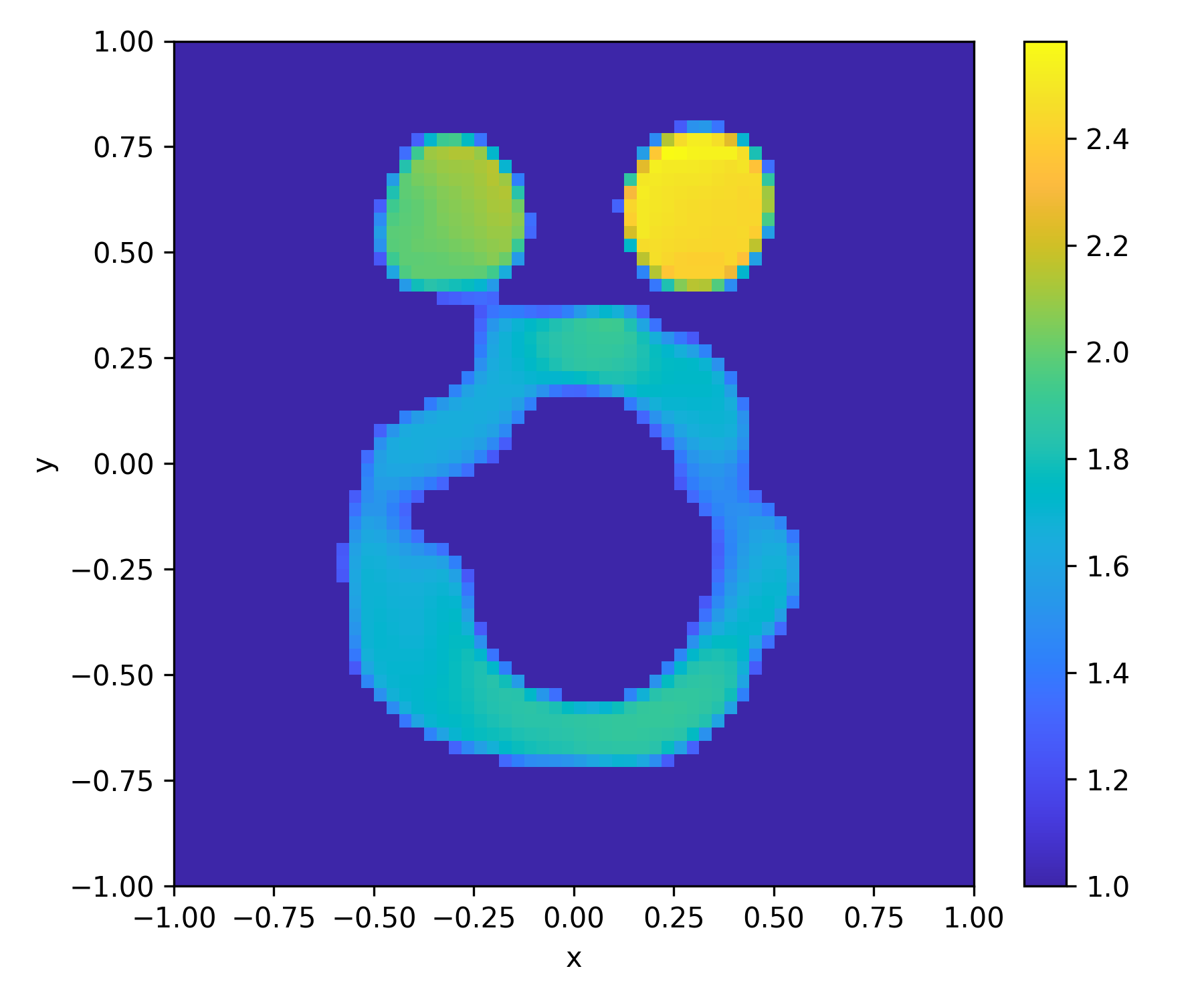}		
   			\\
   			40\%& &
   			\includegraphics[width=0.15\textwidth]{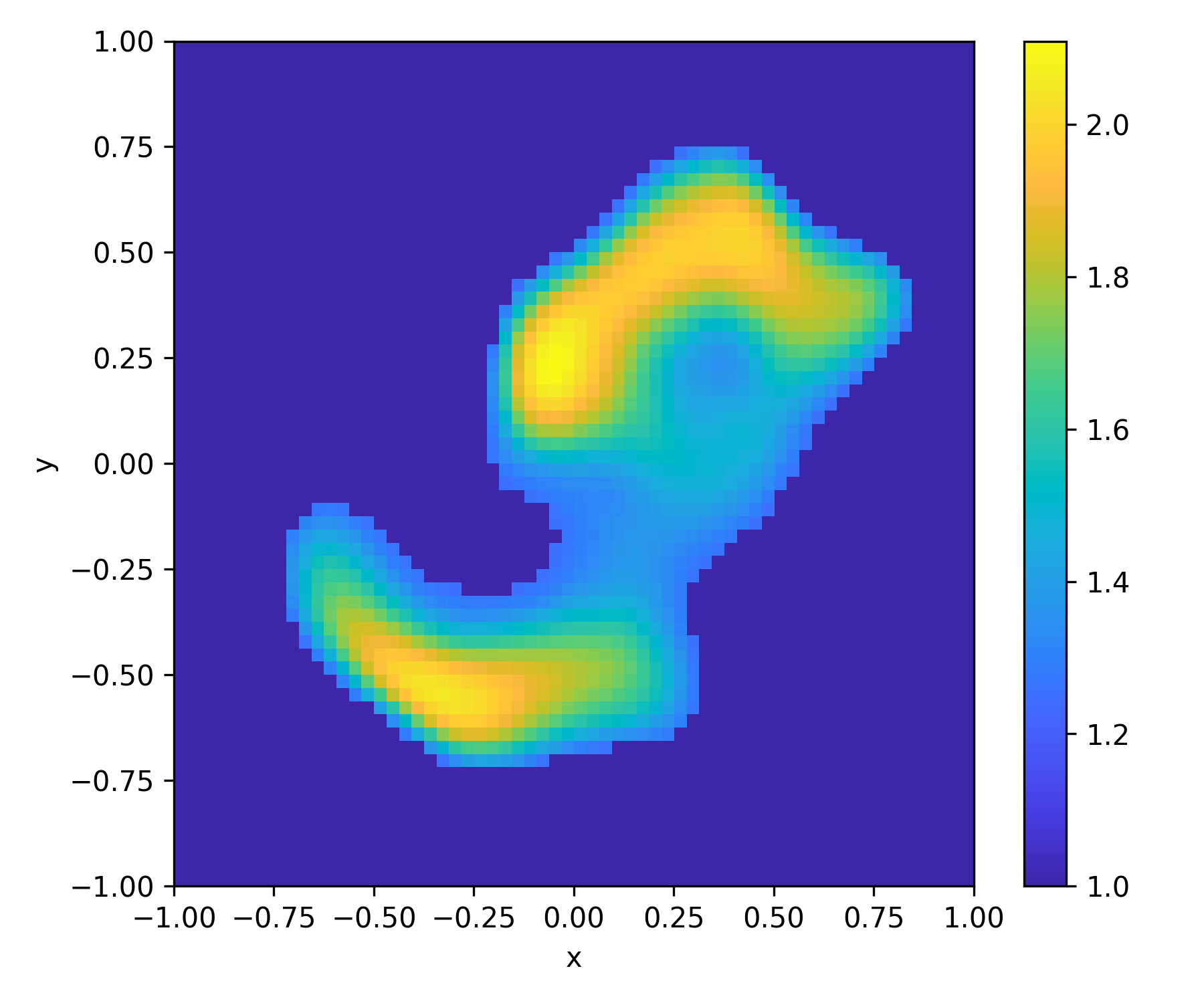}&
   			\includegraphics[width=0.15\textwidth]{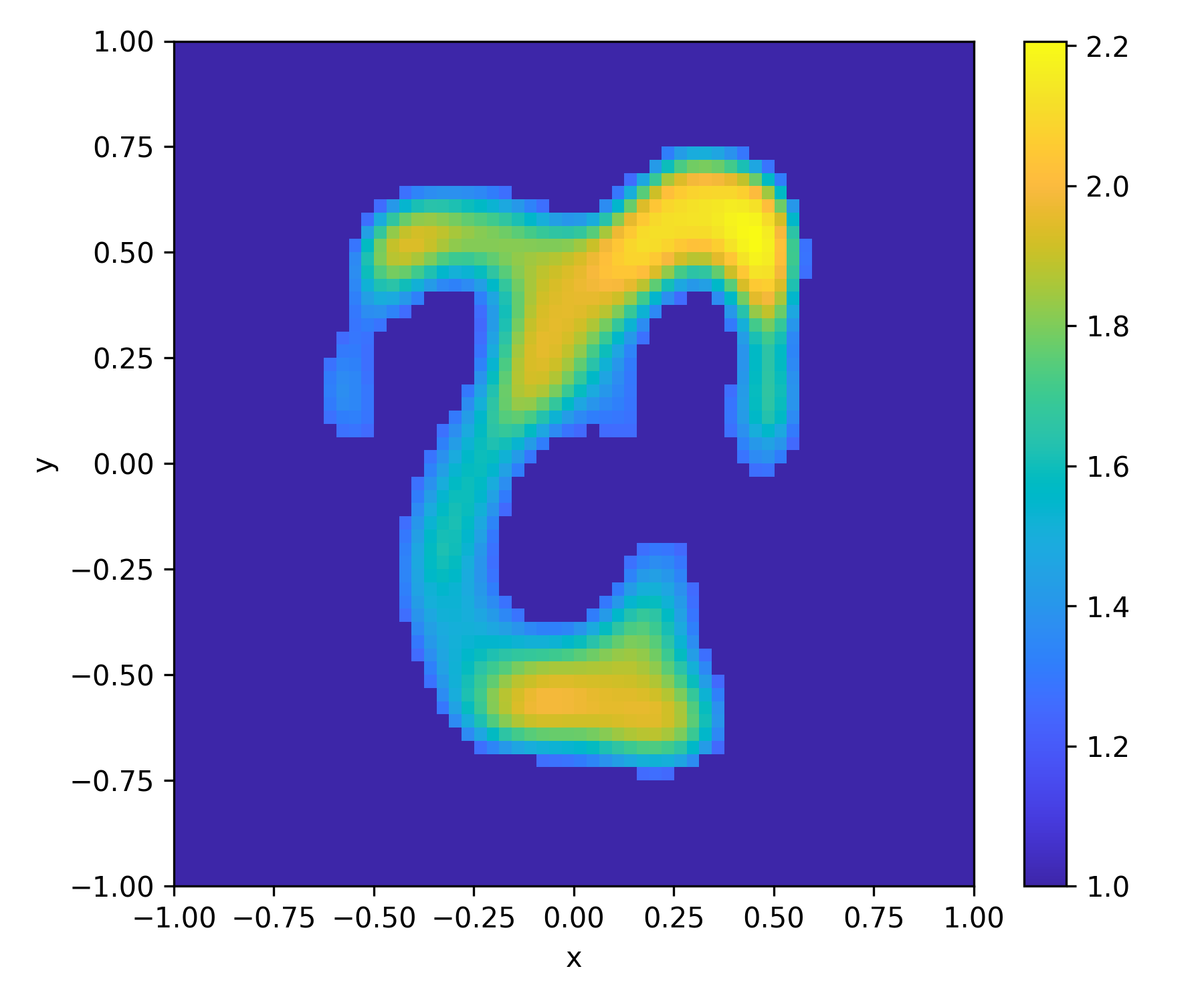}&
   			\includegraphics[width=0.15\textwidth]{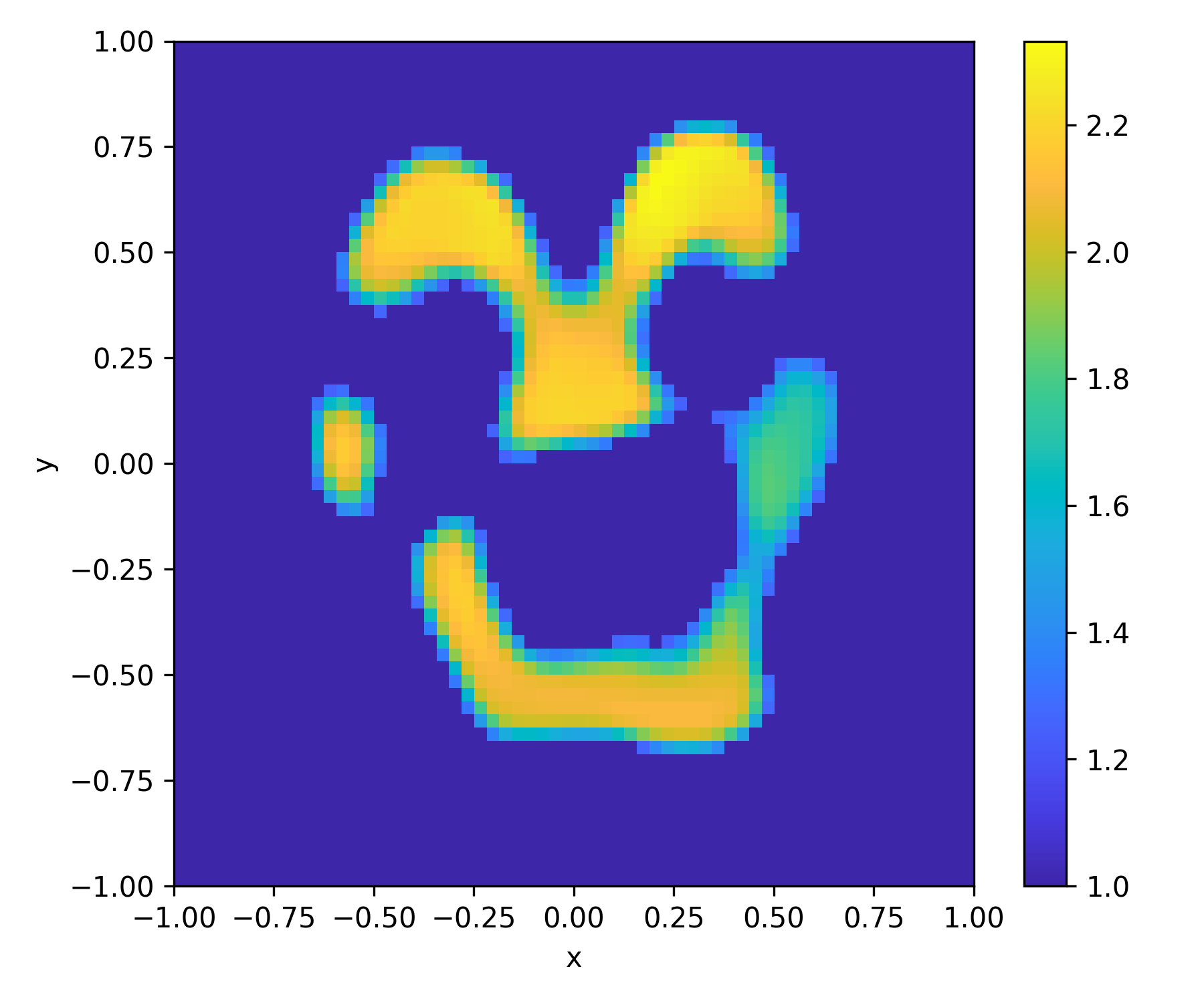}&
   			\includegraphics[width=0.15\textwidth]{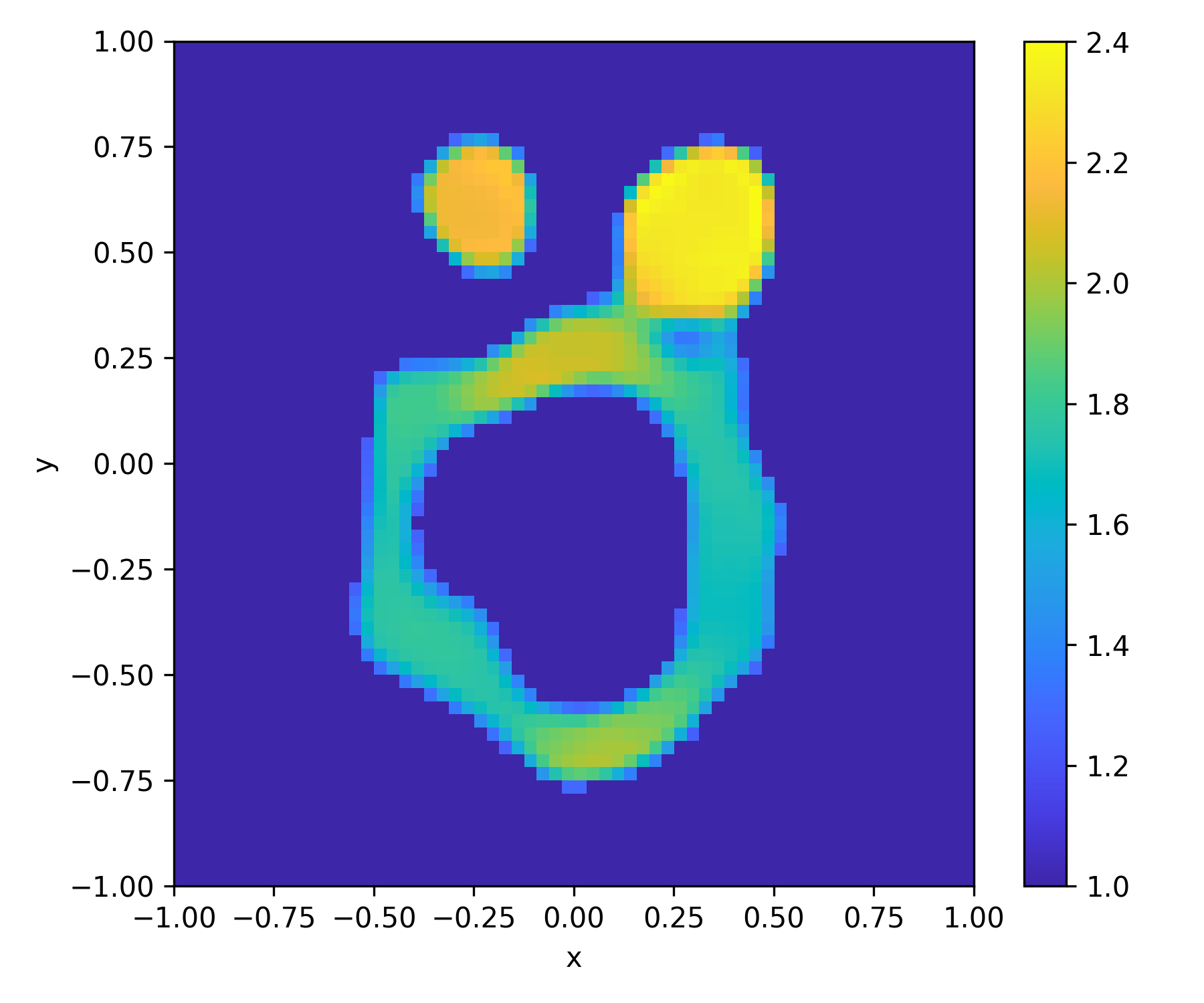}&
   			\includegraphics[width=0.15\textwidth]{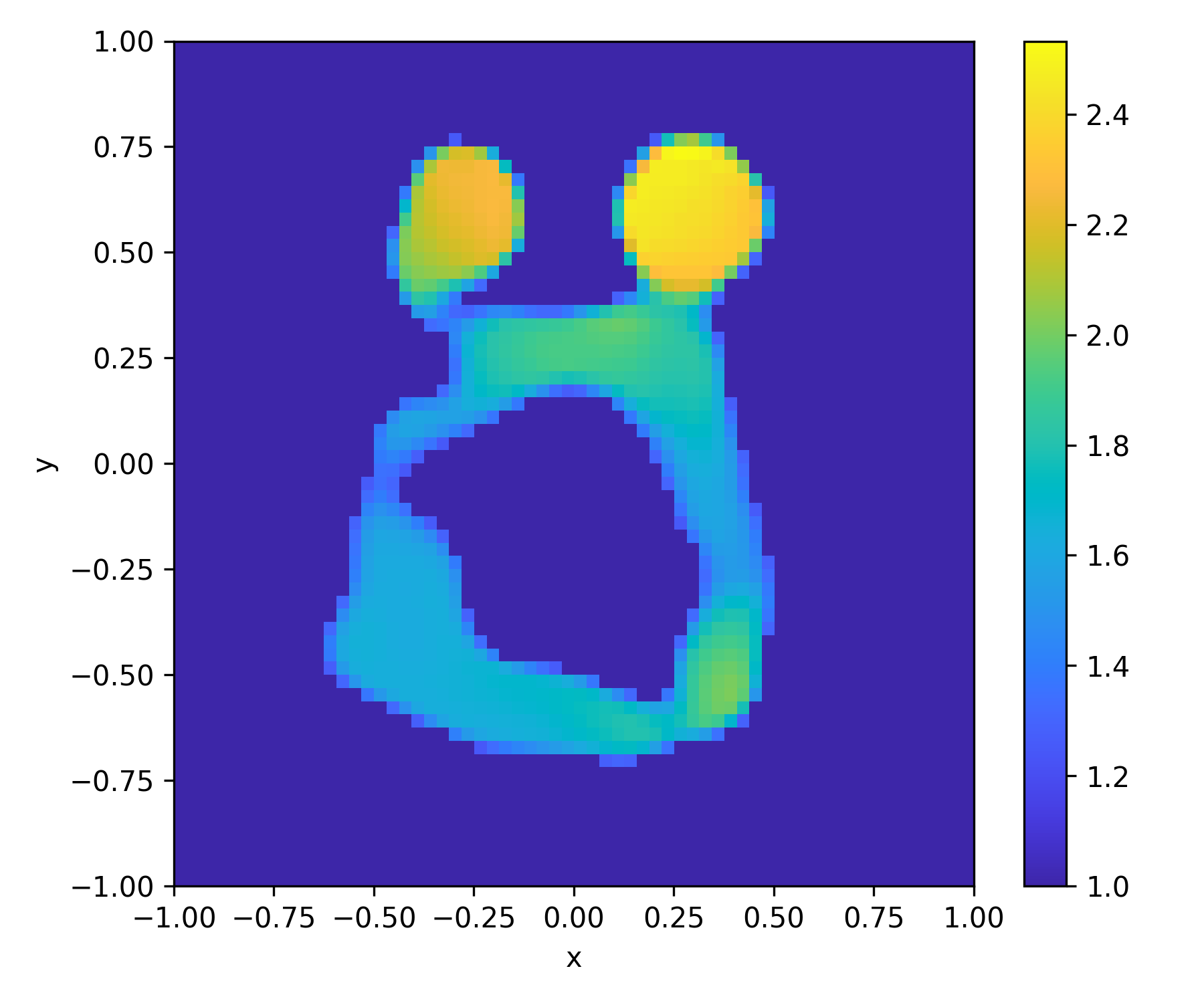}
   		\end{tblr}
   		\caption{Image reconstructions of an example from testing data and an "Austria Ring" with $15\%$ and $40\%$ Gaussian noises in the scattered fields by the networks trained by MNIST dataset. From left to right: the ground-truth images, the reconstruction with 1,2,4,8, and 16 incident fields.}
   		\label{tab:fig-Mnist-Half}
   	\end{center}
   \end{figure}

	\section{Conclusions}\label{secConcl}
 In this work, we propose a novel approach for solving inverse medium scattering problems, building upon the classical DSM method introduced in \cite{ito2012direct}. Our method combines deep learning with direct sampling to approximate and learn the relationship between index functions and true contrasts. The proposed DSM-DL has several attractive features. Firstly, our approach is highly efficient and parallelizable, while being robust to noise. Secondly, our deep neural network is able to fully leverage multiple scattering data, leading to significant improvements in reconstruction quality and noise resilience. Thirdly, even with only one incident wave, our method can produce satisfactory results, particularly for simple scatterer distributions. Numerical experiments confirm these features.

With the framework of the proposed DSM-DL in this paper, the DSM-DL can also be generalized to many other inverse problems for which proper DSMs have been developed, for example, in inverse obstacle scattering, inverse electromagnetic scattering, inverse elastic scattering, electrical impedance tomography, and inversion of radon transform.  In general, DSM-DL offers a promising methodology for tackling various inverse problems. 

	\bibliography{bibfile}
	
\end{document}